\newtheorem{theorem}{Theorem}[section]
\newtheorem{lemma}[theorem]{Lemma}
\theoremstyle{definition}
\newtheorem{definition}{Definition}[section]
\newenvironment{sketch}{%
  \proof}{\endproof}
\DeclarePairedDelimiter{\nint}\lfloor\rceil
\newcommand{\cmark}{\ding{51}}%
\newcommand{\xmark}{\ding{55}}%
\newdimen \jot \jot=5mm
\newenvironment{Tabular}[2][1]
  {\tabular{#2}}
  {\endtabular}
\def\maxwidth{ %
  \ifdim\Gin@nat@width>\linewidth
    \linewidth
  \else
    \Gin@nat@width
  \fi
}
\definecolor{fgcolor}{rgb}{0.345, 0.345, 0.345}
\definecolor{shadecolor}{rgb}{.97, .97, .97}
\definecolor{messagecolor}{rgb}{0, 0, 0}
\definecolor{warningcolor}{rgb}{1, 0, 1}
\definecolor{errorcolor}{rgb}{1, 0, 0}
\newcommand\gobblepars{%
    \@ifnextchar\par%
        {\expandafter\gobblepars\@gobble}%
        {}}
\definecolor{mygreen}{rgb}{0,0.6,0}
\definecolor{mygray}{rgb}{0.5,0.5,0.5}
\definecolor{mymauve}{rgb}{0.58,0,0.82}
\itshape\color{mygreen},    
\begin{document}

\newcommand{\bm}[1]{ \mbox{\boldmath $ #1 $} }
\newcommand{\bin}[2]{\left(\begin{array}{@{}c@{}} #1 \\ #2
             \end{array}\right) }
\renewcommand{\contentsname}{Table of Contents}
\baselineskip=24pt
 
\pagenumbering{roman}
\thispagestyle{empty}
\begin{center}
\vspace*{.25in}
{\bf\LARGE{ LISTENING TO MULTI-TALKER CONVERSATIONS: MODULAR AND END-TO-END PERSPECTIVES }}\\ 
\vspace*{.75in}
{\bf by} \\*[18pt]
\vspace*{.2in}
{\bf Desh Raj}\\ 
\vspace*{1in}
{\bf A dissertation submitted to Johns Hopkins University\\
in conformity with the requirements for the degree of\\
Doctor of Philosophy }\\
\vspace*{.75in}
{\bf Baltimore, Maryland} \\
{\bf February, 2024} \\     
\vspace*{.5in}
\begin{small}
{\bf \copyright{ }2024 Desh Raj} \\ 
{\bf All rights reserved}
\end{small}
\end{center}
\newpage 

\pagestyle{plain}
\pagenumbering{roman}
\setcounter{page}{2}

\chapter*{Abstract}
\addcontentsline{toc}{chapter}{Abstract}%

Since the first speech recognition systems were built more than 30 years ago, improvement in voice technology has enabled applications such as smart assistants and automated customer support. 
Conversational intelligence of the future is expected to move beyond single-user applications of voice technologies to actively participate in human conversations, including in scenarios such as note-taking, fact-checking, or collaborative learning in peer groups. 
For such systems, recognizing free-flowing multi-party conversations is a crucial and challenging component that still remains unsolved.
In this dissertation, we focus on this problem of \textit{speaker-attributed multi-talker speech recognition} for the meeting transcription task, and propose two perspectives which result from its probabilistic formulation.

In the modular perspective, speaker-attributed transcription is performed through a pipeline of sub-tasks involving speaker diarization, target speaker extraction, and speech recognition.
Our first contribution is a novel method to perform overlap-aware speaker diarization by reformulating spectral clustering as a constrained optimization problem.
We also describe an algorithm to ensemble diarization outputs, and show that it can be used to either combine several overlap-aware systems, or to perform multi-channel diarization by late fusion.
Once speaker segments are identified, we robustly extract single-speaker utterances from the mixture using a GPU-accelerated implementation of guided source separation.
This eventually allows us to use an off-the-shelf ASR system to obtain speaker-attributed transcripts.

Since the modular approach suffers from error propagation, we propose an alternate ``end-to-end'' perspective on the problem.
For this, we describe the Streaming Unmixing and Recognition Transducer (SURT) which extends neural transducers for multi-talker ASR by incorporating an unmixing component.
We show how to train SURT models efficiently by carefully designing the network architecture, objective functions, and mixture simulation techniques.
Finally, we add an auxiliary speaker branch to enable joint prediction of speaker labels synchronized with the speech tokens, and propose a novel speaker prefixing approach for ensuring label consistency through the recording.
We demonstrate that training on synthetic mixtures and adapting with real data helps these models transfer well for streaming transcription of real meeting sessions.

\vspace*{\fill}

\noindent \textbf{Primary reader}: Sanjeev Khudanpur\\

\subsection*{Thesis Committee}

\noindent
Sanjeev Khudanpur \\
Daniel Povey (Xiaomi Inc., Beijing) \\
Jinyu Li (Microsoft Corp., Redmond)
\chapter*{Acknowledgments}
\addcontentsline{toc}{chapter}{Acknowledgements}%

When I started on this Ph.D. journey in 2018, I had no inkling of how much of a life-altering process it would turn out to be.
These last five years have been absolutely exceptional, and as with all things magical, I split my acknowledgments for this dissertation into seven parts.

First, I am indebted to my parents for making immense sacrifices such that I get the best education possible, which in turn enabled me to come further than I had imagined.
I thank my grand-parents for always pushing me to dream bigger, even if it came at the cost of being away for years on end.
I thank my brother, Abhishek Raj, who has been an inspiration my whole life --- if I have been able to sail in high winds, it's because I know I have a light to guide me home.
I thank my sister, Shanu Amit Srivastava, who was my first teacher and instilled in me a love and thirst for learning.

Second, I am infinitely grateful to my supervisor, Sanjeev Khudanpur, for the constant support and advice over the last several years.
Besides teaching me how to think critically about research questions, I have also learnt from you the importance of going out of my comfort zone, making strong connections, and enjoying the process of research.
I thank my second advisor, Dan Povey, from whom I learnt the importance of well-written, open-source code, and which has since become a guiding principle of my research.
I will also forever be grateful to Dan for offering me a PhD position in his group at a time when I had no background in speech research. 
I thank Jinyu Li for not just being a great mentor, but also for taking out time from managing a large industry group to be a part of my qualification and thesis committees.

I have been fortunate to have had several unofficial advisors during my time at JHU, each of whom taught me several things.
Shinji Watanabe showed me the importance of asking big picture questions (in fact, my dissertation research branched off of working with Shinji on the CHiME-6 project) and of staying humble in success.
From Paola Garcia, I learnt that caring about the researcher is just as important as caring about their research.
Paola also taught me the fundamentals of speaker diarization, which forms several chapters of this dissertation.
I thank Jan ``Yenda'' Trmal for his expertise with Kaldi, and Piotr Zelasko for helping me become a better Python developer.
During this Ph.D., I spent two summers at Microsoft and Meta, respectively.
I thank Jinyu Li, Liang Lu, Zhuo Chen, and Naoyuki Kanda at Microsoft for their guidance in initiating me into the ``end-to-end'' multi-talker methods.
At Meta, I thank Ozlem Kalinli for hiring me into the amazing speech team, and Junteng Jia, Chunyang Wu, and Jay Mahadeokar for giving me the freedom to try and fail.

My fourth acknowledgment is reserved for friends, the ones I had coming in and the ones I made along the way.
I thank my ``B3 family'' for always having my back --- many a Sunday afternoon was spent Zoom-ing with this band of brothers spread across 5 countries even before the pandemic made it cool.
I thank my labmates in the ``Kaldi group'' for the fruitful discussions and gossip we shared in Hackerman 322.
I had zero knowledge about speech processing when I joined the program, and much of what I have learnt I owe to David Snyder, Vimal Manohar, Matthew Wiesner, Matt Maciejewski, Yiming Wang, and Ke Li.
I am indebted to Hainan Xu and Xiaohui Zhang who were incredibly helpful this past year as I was looking for full-time positions.
I have also gained much from the friendships of Ashish Arora, Dongji Gao, Zili Huang, Ruizhe Huang, Yiwen Shao, Fei Wu, Amir Hussein, Jonathan Chang, Henry Li, and Cihan Xiao.
For Jinyi Ondel Yang, I reserve a special place in my heart --- I cherish our long walks that made the pandemic a little more sufferable, our expeditions for Indian, Chinese, and Vietnamese food, and our mutual love for the French language (and its speakers).
Outside the Kaldi group, I thank Elias Stengel-Eskin, David Mueller, Aaron Mueller, Kelly Marchisio, Suzanna Sia, Mitchell Gordon, Jacob Buckman, and Craig Guo for (often beverage-infused) conversations about big ideas in machine learning and beyond.
I thank Samik Sadhu for hosting a number of potluck parties, and Xuan Zhang for accompanying me to them.

My fifth vote of thanks is owed to the CS and CLSP administrative staff who ensured that all the technicalities were met so that I could focus on the research.
For this, I tip my hat to Ruth Scally, Lauren Meek, and Kim Franklin for their continued support with all things admin.
I also thank Joe McKnight at the HLTCOE for always being proactive about resource allocation on that cluster, without which these experiments could not have been performed.

This dissertation is the result of several publications which would not have seen the light of day without the efforts of all my amazing co-authors.
Beside the individuals mentioned earlier, I thank Aswin Subramanian, Jesus Villalba, Pavel Denisov, Hakan Erdogan, Mao-kui He, Takuya Yoshioka, Andreas Stolcke, Katerina Zmolikova, Marc Delcroix, Yashesh Gaur, Samuele Cornell, Xuankai Chang, and Niko Moritz.
If I have seen further, it is by standing on the shoulders of these giants.

Finally, I am grateful to Marie-Philippe Gill, whose love and support has stayed strong through the hills and valleys of this journey.
I will always treasure these years at Johns Hopkins, not just because they made me a better person and a better researcher, but also because it was here that I met the love of my life.
When I count my blessings, I count you twice.

\renewcommand{\contentsname}{Table of Contents}
\baselineskip=24pt

\pagestyle{plain}
\baselineskip=24pt
\tableofcontents

\listoftables
\listoffigures

\cleardoublepage 
\pagenumbering{arabic}


\cleardoublepage

\setstretch{2.0}



\chapter{The ``Who Spoke What`` Problem}
\label{chap:intro}

\section{Motivation}
\label{sec:intro_motivation}

Advances in artificial intelligence (AI) in the last several decades have been limited to task-specific improvements that were strictly categorized into different modalities. 
For instance, methods have been developed to detect and segment objects in images and video (vision), to analyze sentiments present in textual extracts (language), and to transcribe an audio recording in noisy conditions (speech). 
With the rise of deep neural networks and the convergence of modeling strategies used to address these diverse tasks and modalities, the next version of AI is expected to comprise systems that can learn simultaneously from several sensors, similar to how humans learn. 
Neuro-symbolic learning methods are expected to be complemented by extensive knowledge graphs to enable common sense reasoning in complex scenarios, such as participating in human conversations.

Since the first speech recognition systems were built more than 50 years ago, improvement in voice technology has enabled applications such as voice assistants on smartphones, semi-automated customer support, and embedded systems. 
Through years of research on speech enhancement and robust speech processing, these systems are now deployed in diverse settings such as on smart home speakers and vehicle controls. 
Nevertheless, present systems are passive listeners which transcribe single-speaker utterances and feed into downstream language understanding components. 
Conversational intelligence of the future is expected to comprise systems that can actively participate in human conversations, including scenarios such as note-taking or fact-checking in meetings, collaborative learning in education, or simply recommending grocery items in households. 
While such systems would require intelligence in diverse modalities --- dialog systems for context handling, emotion recognition from speech and video, common sense reasoning, to name a few --- their ability to recognize free-flowing multi-party conversations is a crucial and complex task that needs to be solved.

\section{Background}
\label{sec:intro_bg}

Multi-talker speech recognition of free-flowing conversations is a well-known problem. 
In the offline setting, a long-form audio recording (ranging between several minutes up to a few hours) is provided, and the expected output is a speaker-attributed transcription with time marks. 
When deployed online, streaming audio is provided as input with the same transcription requirements. 
In addition to their use in conversational agents, these systems have several other applications --- such as real-time meeting transcription for hearing-impaired participants, and generating automatic subtitles for movies or video streams, to name a few. 
When used in conjunction with language understanding or dialog systems, they also enable real conversational AI. 
However, systems for solving this problem are still far from human parity, often achieving between 30\% and 50\% error rates on the task. 
As such, the problem is rewarding both in its technical difficulty as well as its ramifications on real-world applications. 

In the 2000s, several advances were made as a result of NIST evaluations~\cite{Fiscus2007TheRT} and the AMI project~\cite{Carletta2005TheAM} that were aimed at tackling the multi-talker recognition problem, primarily in the offline setting.
More recently, challenges such as DIHARD~\cite{Ryant2019TheSD} and CHiME~\cite{Watanabe2020CHiME6CT} have focused on diarization and speech recognition tasks in very challenging scenarios. 
As a result, we are closer today to solving the multi-talker conversation transcription problem than we have ever been.

Nevertheless, there are several challenges yet to be addressed.
While recognition of clean, read speech is claimed to have surpassed human parity\footnote{Such claims must always be taken with a grain of salt, for they are made on specific data sets, and their generality is rarely (if ever) tested~\cite{Amodei2016DS2,Xiong2017TowardHP}}, the same cannot be said of conversations in the wild.
A careful selection of deep learning advances in acoustic modeling, language modeling, and system combination was shown to reach professional transcription levels in conversational speech~\cite{Xiong2016AchievingHP}, but this evaluation was limited to telephonic conversations between two speakers.
Although this is an important development, real multi-talker conversations raise many additional challenges, most notably the case of overlapping speech. 
Studies have shown that meetings can contain up to 20\% overlapping speech~\cite{Carletta2005TheAM}, which has implications for both diarization and ASR --- diarization systems which make single-speaker assumptions miss the interfering speaker completely, and ASR systems trained on clean utterances are more error-prone on these overlapped regions. 
Combined with the effect of non-stationary noise and reverberation in real recordings, the error rates in these settings may be degraded by up to 86\%~\cite{Yoshioka2019MeetingTU} in meetings, and 52\% in dinner-party settings~\cite{Watanabe2020CHiME6CT}. 
Real-time recognition adds an extra layer of difficulty to this problem.

In popular literature, the challenge posed by overlapped speech is often referred to as the ``cocktail party problem'', and is challenging enough that it encompasses several modules that are entire fields in speech processing research. 
The conventional approach for multi-talker ASR is through a cascade of front-end and back-end components, where a separation module feeds into a single-talker ASR. 
While this is an appealing solution, and several advances have indeed been made in speech enhancement (including separation), most of these techniques are designed for (and evaluated on) short, fully overlapping (and often simulated) mixtures.
In several cases, they also make unrealistic assumptions, such as prior knowledge of the number of speakers in the mixture. 
It may further be preferable from an application perspective to have fewer independent components in the pipeline, since cascaded modules tend to compound the overall latency in deployed systems. 
Recently, there have been efforts towards ``continuous speech separation,'' which seeks to situate separation techniques in more realistic settings of long-form conversations containing partially overlapped speech~\cite{Chen2020ContinuousSS}, such as that found in multi-talker conversations. 

The diarization and ASR communities have independently sought to develop methods that handle multi-talker overlapping speech.
For diarization, existing approaches to solve the overlap problem involve using externally trained overlap detectors to identify frames in the recording which contain overlapping speech. 
Once overlaps are detected, an ``overlap assignment'' stage assigns additional speaker labels to the overlapping frames~\cite{Bullock2019OverlapawareDR}. 
A second class of methods uses end-to-end neural systems to perform overlapping diarization in a supervised setting~\cite{Fujita2020EndtoEndND}. 
For multi-talker ASR, permutation-invariant training (PIT), which was first proposed for speech separation, has successfully been employed~\cite{Yu2017RecognizingMS}; however, the transcriptions produced by such a method are unordered \textit{across} utterances, which makes them dependent on an external speaker tracking or diarization module. 
A new framework called serialized output training~\cite{Kanda2020SerializedOT} aims to mitigate some of the issues with PIT, and has been used to jointly perform ASR, speaker identification, and counting~\cite{Kanda2020JointSC,Kanda2020InvestigationOE}.
Similarly, noise robust ASR and diarization continues to be an important direction of research~\cite{Zhang2018DeepLF}. 
In addition to improvements in speech enhancement~\cite{Kinoshita2020ImprovingNR}, existing research has explored noise-aware and multi-condition training for ASR~\cite{Seltzer2013AnIO}, and feature mapping for diarization~\cite{Zhu2017FeatureMF}. 

With these advancements in diarization and ASR systems, there is increasing interest in developing systems that perform multi-talker speech recognition for unsegmented recordings. 
The latest edition of the CHiME challenge~\cite{Watanabe2020CHiME6CT} included a track for evaluating systems for dinner-party conversations. 
An iterative training strategy combining clustering-based diarization with target-speaker ASR has been proposed recently~\cite{Kanda2019SimultaneousSR}. 
The need for controlled but realistic data to investigate such systems has resulted in several new datasets such as LibriCSS~\cite{Chen2020ContinuousSS} and LibriMix~\cite{Cosentino2020LibriMixAO}. 
The research agenda at the JSALT 2020 workshop\footnote{\url{https://www.clsp.jhu.edu/workshops/20-workshop/}} included a project aimed at ``building fully contained multi-talker audio transcription systems based on speech separation and extraction''. 

Due to the amount of research conducted on the cocktail-party problem and its various subproblems, it is near impossible to provide a comprehensive review of past work.
Instead, we will discuss the relevant background work for each chapter as we proceed through the dissertation. 
In the remainder of this chapter, we will define the core problem that is addressed in this dissertation, along with the relevant datasets and evaluation metrics.
We will also provide a brief outline of the dissertation, including a summary of its two parts.

\section{Problem definition}
\label{sec:intro_problem}

Our objective in this work is to solve the problem of speaker-attributed multi-talker speech recognition.
Given a single or multi-channel recording $\mathcal{R}$ (such as that from a meeting), the goal is to transcribe all the speech in the recording and attribute the words to the corresponding speakers.
Formally, if the recording consists of $K$ speakers, and $\mathbf{w}_k$ denotes the sequence of words uttered by speaker $k$, we aim to find a mapping $W$ such that
\begin{equation}
    W(R) = \{\mathbf{w}_1,\ldots,\mathbf{w}_K\}.
\end{equation}

As we will see throughout this dissertation, it is possible to estimate $W$ indirectly by combining several components (which we call the ``modular'' approach), or to estimate it directly in an ``end-to-end'' manner.
The former decomposition results in several problems which are related to the task of speaker-attributed transcription, as summarized in Table~\ref{tab:intro_tasks}.
In the following chapters, we will focus on the problems of speaker diarization and target-speaker extraction as components for one such modular pipeline.
Other combinations of components can be used for different pipelines, as described in \citet{Raj2020IntegrationOS}.
In Chapter~\ref{chap:modular}, we will show how our modular pipeline falls out from a probabilistic formulation of the above problem.

\begin{table}[tp]
    \centering
    \adjustbox{max width=\linewidth}{
    \begin{tabular}{@{}lll@{}}
    \toprule
    \textbf{Task} & \textbf{Input} & \textbf{Output} \\
    \midrule
    Speech enhancement & Mixed recording $\mathcal{R}$ & Enhanced recording $\mathcal{R}^{\ast}$ \\
    Speech separation & Mixed recording $\mathcal{R}$ & Separated audio signals $\mathcal{R}_1,\ldots,\mathcal{R}_K$ \\
    Target-speaker extraction & 
    \begin{Tabular}[1]{@{}l@{}}
    Mixed recording $\mathcal{R}$ \\
    Speaker identity $k$
    \end{Tabular} & Speaker-specific audio $\mathcal{R}_k$ \\
    
    \midrule
    Speaker diarization & Mixed recording $\mathcal{R}$ & 
    \begin{Tabular}[1]{@{}l@{}}
    Homogeneous speaker segments \\ 
    $\{(\Delta_j, u_j): 1\leq j \leq N\}$
    \end{Tabular}
    \\
    Speech recognition & Segmented utterance $\mathbf{X}_j$ & Segment transcript $\mathbf{y}_j$ \\
    Target-speaker ASR & \begin{Tabular}[1]{@{}l@{}}
    Mixed recording $\mathcal{R}$ \\
    Speaker identity $k$
    \end{Tabular} & Speaker-specific transcript $\mathbf{w}_k$ \\
    \bottomrule
    \end{tabular}}
    \caption{A summary of tasks related to the problem of speaker-attributed multi-talker speech recognition. The two sections denote a categorization of tasks into ``front-end'' and ``back-end''.}
    \label{tab:intro_tasks}
\end{table}

\section{Data}
\label{sec:intro_data}

We focus on meeting transcription as an application of multi-talker speaker-attributed ASR.
Throughout this dissertation, we will demonstrate the performance of our methods on several publicly available meeting benchmarks: LibriCSS, AMI, ICSI, and AliMeeting.
These benchmarks are detailed in Appendix~\ref{chap:appendix_data}, and their summary statistics are shown in Table~\ref{tab:intro_stats}.
We have selected these corpora because they provide distant microphone recordings as well as close-talk recordings for the meeting sessions, which makes it convenient to perform controlled evaluations, as well as for training models.
In this dissertation, we will often refer to different microphone settings for these corpora:
\begin{enumerate}
    \item \textbf{IHM}: These are individual close-talk microphone recordings, one per participant, recorded on either headset or lapel microphones.
    \item \textbf{IHM-Mix}: These are digitally summed versions of the IHM recordings, which provides overlapped speech conditions without far-field and background noise artifacts.
    \item \textbf{SDM}: This refers to single distant microphone, i.e., a recording taken from one of the far-field microphones.
    \item \textbf{MDM}: This refers to multiple distant microphones, and the recordings may or may not be beamformed.
\end{enumerate}

\begin{table}[t]
\centering
\caption{Statistics of datasets used for evaluations. The $k$-speaker durations are in terms of fraction of total speaking time.}
\label{tab:intro_stats}
\adjustbox{max width=\linewidth}{
\begin{tabular}{@{}lrrrrrrrrrrr@{}}
\toprule
\multirow{2}{*}{} & \multicolumn{2}{c}{\textbf{LibriCSS}} & \multicolumn{3}{c}{\textbf{AMI}} & \multicolumn{3}{c}{\textbf{ICSI}} & \multicolumn{3}{c}{\textbf{AliMeeting}} \\
\cmidrule(r{5pt}){2-3} \cmidrule(l{4pt}){4-6} \cmidrule(l{4pt}){7-9} \cmidrule(l{4pt}){10-12}
 & \textbf{Dev} & \textbf{Test} & \textbf{Train} & \textbf{Dev} & \textbf{Test} & \textbf{Train} & \textbf{Dev} & \textbf{Test} & \textbf{Train} & \textbf{Dev} & \textbf{Test} \\ 
\midrule
\textbf{Duration (h:m)} & 1:00 & 9:05 & 79:23 & 9:40 & 9:03 & 66:38 & 2:16 & 2:45 & 111:21 & 4:12 & 10:46 \\
\textbf{Num. sessions} & 6 & 54 & 133 & 18 & 16 & 70 & 2 & 3 & 209 & 8 & 20 \\
\textbf{Silence (\%)} & 6.2 & 6.7 & 18.1 & 21.5 & 19.6 & 55.2 & 25.9 & 25.9 & 7.11 & 7.7 & 8.0 \\
\textbf{1-speaker (\%)} & 81.3 & 81.2 & 75.5 & 74.3 & 73.0 & 82.1 & 90.3 & 84.9 & 52.5 & 62.1 & 63.4  \\
\textbf{2-speaker (\%)} & 18.6 & 18.5 & 21.1 & 22.2 & 21.0 & 15.7 & 9.0 & 13.6 & 32.8 & 27.6 & 24.9 \\
\textbf{>2-speaker (\%)} & 0.1 & 0.4 & 3.4 & 3.5 & 6.0 & 2.2 & 0.7 & 1.4 & 14.7 & 10.2 & 11.7 \\
\bottomrule
\end{tabular}}
\end{table}

\section{Evaluation metrics}
\label{sec:intro_metrics}

Our objective in this dissertation is to demonstrate the performance of modular and end-to-end systems for speaker-attributed multi-talker ASR.
Modular systems comprise a pipeline of speaker diarization, target speaker extraction, and ASR components, while end-to-end systems seek to directly estimate speaker-labeled transcripts.
In this section, we describe the metrics that we will use to evaluate the individual components or the final system in the subsequent chapters.

\subsection{Diarization metrics}

We use \textbf{diarization error rate (DER)}, as described in \citet{Mir2006RobustSD}, to evaluate the speaker diarization systems proposed in this dissertation.
DER comprises the sum of three different kinds of error rates --- missed speech (MS), false alarms (FA), and speaker confusion (SC) --- and is given as
\begin{equation}
    DER = MS + FA + SC.
\end{equation}
Since speaker diarization systems produce relative speaker labels (and not absolute labels), we first find the optimal mapping of reference and hypothesis speakers which would minimize the DER.
This is usually done by formulating the problem as a linear sum assignment problem, and solved using the Hungarian algorithm~\cite{Munkres1957AlgorithmsFT}.

\subsection{ASR metrics}

We use the \textbf{word error rate (WER)} metric to evaluate single-speaker ASR systems.
WER is computed as the Levenshtein distance~\cite{Levenshtein1965BinaryCC} between the reference $\mathbf{w}^{\ast}$ and hypothesis $\hat{\mathbf{w}}$, and is the most popular metric for reporting ASR performance.
Informally, the Levenshtein distance between two strings is the minimum number of edits, (insertions, deletions, and substitutions), needed to transform one string into the other.
The process of alignment can be described as the application of these edits to $\hat{\mathbf{w}}$, and the best alignment is the one that uses the fewest number of edits. 
Note that the Levenshtein distance considers only monotonic, i.e. left-to-right, alignment of the hypothesis transcript $\hat{\mathbf{w}}$, with reference transcript, $\mathbf{w}^{\ast}$, and is efficiently computed via dynamic programming.

\subsection{Signal-level metrics}

In Chapter~\ref{chap:gss}, we will describe target speaker extraction (TSE) using guided source separation~\cite{Boeddeker2018FrontendPF}.
Since the objective of TSE is to estimate clean, single-speaker signal, given a noisy multi-talker signal as input, we will additionally use signal-level metrics as an intrinsic measure of system performance.
In particular, we will use the following metrics popular in speech enhancement literature:

\begin{itemize}
    \item \textbf{PESQ~\cite{Rix2001PerceptualEO}:} PESQ (Perceptual Evaluation of Speech Quality), originally developed to assess the quality of speech codecs, measures the similarity between a reference (original) and a processed (separated) audio signal in terms of perceived quality. It takes into account various factors related to human auditory perception, such as loudness, sharpness, and distortion. The output of PESQ is a single value, typically ranging from -0.5 to 4.5, with higher values indicating better perceived quality.
    \item \textbf{SI-SDR~\cite{LeRoux2018SDRH}:} SI-SDR (Scale Invariant Signal-to-Distortion Ratio) measures the quality of a separated audio signal by quantifying the ratio between the target (clean) signal and the unwanted interference or distortion caused by the separation process. The ``scale-invariant'' aspect of the metric means that it is not sensitive to the scaling of the signals, making it more robust in scenarios where the amplitude of the signals may vary. Mathematically, it is defined as
    \begin{equation}
        \text{SI-SDR} = 10 \cdot \log_{10}\left(\frac{{\|s_{\text{target}}\|^2}}{{\|s_{\text{distortion}}\|^2}}\right),
    \end{equation}
    where $s_{\text{target}}$ is the clean signal and $s_{\text{distortion}}$ is the distortion caused by the enhancement process.
    \item \textbf{STOI~\cite{Taal2010ASO}:} STOI (Short-Time Objective Intelligibility) specifically evaluates the intelligibility of speech by measuring the similarity between a reference  and a processed audio signal in terms of how well the speech can be understood. STOI takes into account factors related to human auditory perception, such as the modulation spectrum and the presence of noise. It operates on short-time frames, which allows it to capture variations in intelligibility over time. The output of STOI is a value between 0 and 1, where 1 indicates perfect intelligibility (i.e., the separated speech is identical to the original), and 0 indicates no intelligibility (i.e., the separated speech is completely unintelligible).
\end{itemize}

\subsection{Multi-talker ASR metrics}

Our primary metric for evaluating the final speaker-attributed transcription systems is the \textbf{concatenated minimum-permutation word error rate (cpWER)}~\cite{Watanabe2020CHiME6CT}.
The key idea in cpWER is to find the best permutation between reference and hypothesis speakers that minimizes the average WER between the speaker's concatenated transcripts.
Given reference transcripts for a session and the corresponding hypothesis, the computation follows three steps:
\begin{itemize}
    \item Concatenate all utterances of each speaker for both reference and hypothesis files.
    \item Compute the WER between the reference and all possible speaker permutations of the hypothesis.
    \item Pick the lowest WER among them.
\end{itemize}

In some cases, we may also want to evaluate multi-talker ASR systems without speaker labels.
For example, the SURT model that we describe in Chapter~\ref{chap:surt} transcribes all the speech in a mixture without attributing the words to speakers.
In this case, we will use the \textbf{optimal reference combination word error rate (ORC-WER)} metric, proposed concurrently in \citet{Sklyar2021MultiTurnRF} and \citet{Raj2021ContinuousSM}.
ORC-WER is used for evaluating systems where the reference contains a list of (possibly overlapping) segments ordered by start time, and the hypothesis contains multiple streams of output transcript.
Since there is no obvious mapping from reference segments to the output streams, we find the best permutation of reference to stream that minimizes the overall word error rate.
\citet{vonNeumann2022OnWE} proposed a polynomial-time implementation of ORC-WER using multi-dimensional Levenshtein distance, and released it through the \texttt{meeteval}\footnote{\url{https://github.com/fgnt/meeteval}} toolkit.
They showed that ORC-WER is a lower bound on cpWER and becomes equal to the cpWER when no speaker errors are present.
We used their open-source implementation for evaluating our SURT models in this dissertation.
The overall algorithm for ORC-WER computation is given in Algorithm~\ref{alg:orc-wer}.
We assume that the references $\mathcal{R}_n$ are ordered according to the start times of utterances.

\begin{algorithm}[tp]
\DontPrintSemicolon
  
  \KwInput{Ref.: $\mathcal{R}_1,\ldots,\mathcal{R}_N$; hyp.: $\mathcal{H}_1,\ldots,\mathcal{H}_C$}
  \KwOutput{$W_{\text{orc}}$}

  $\mathcal{R} \leftarrow \mathcal{R}_1 \diamond \ldots \diamond \mathcal{R}_N$ \tcp*[r]{Insert channel change token}

  $\xi \leftarrow \{\}$ \hfill \tcp*[r]{Memoize costs to avoid recomputing}

  \SetKwFunction{FMain}{levenshtein}
  \SetKwProg{Fn}{Function}{:}{}
  \Fn{\FMain{$R$, $H$, $i=\phi$}}{
        $K = (R,H,i)$\; 
        \If{$K$ in $\xi$}{
            \KwRet $\xi[K]$\;
        }
        \If{$i\neq \phi$ and $\mathcal{R}[R-1] == \diamond$}{
            \tcp*[l]{Reached a channel change token}
            $\xi[K] \leftarrow$ \FMain{$R-1$,$H$,$\phi$} \;
            \KwRet $\xi[K]$\;
        }
        \eIf{$i=\phi$}{
            $C \leftarrow \emptyset$ \tcp*[r]{Find minimum over all hypotheses}
            \For{$i$ in range($|\mathcal{H}|$)}{
                $C[i] \leftarrow$ \FMain{$R$,$H$,$i$}\;
            }
            $\xi[K] \leftarrow \min(C)$\;
            \KwRet $\xi[K]$\;
        }{
            \tcp*[l]{Regular Levenshtein update}
            \uIf{$R = 0$}{
                $c \leftarrow \sum_{h\in H} h$\;
            }
            \uElseIf{$H[i]=0$}{
                $c \leftarrow 1$ + \FMain{$R-1$,$H$,$i$}\;
            }
            \uElseIf{$\mathcal{R}[R-1]=\mathcal{H}[i][H[i]-1]$}{
                $H[i] \leftarrow H[i]-1$\;
                $c \leftarrow$ \FMain{$R-1$,$H$,$i$}\;
                $H[i] \leftarrow H[i]+1$\;
            }
            \uElse{
                $ins \leftarrow$ \FMain{$R-1$,$H$,$i$}\;
                $H[i] \leftarrow H[i]-1$\;
                $del \leftarrow$ \FMain{$R$,$H$,$i$}\;
                $sub \leftarrow$ \FMain{$R-1$,$H$,$i$}\;
                $H[i] \leftarrow H[i]+1$\;

                $c \leftarrow 1 + \min(ins,del,sub)$\;
            }
            $\xi[K] \leftarrow c$\;
            \KwRet $\xi[K]$\;
        }
  }

  \KwRet \FMain{$|\mathcal{R}|$,$\{|h|:h\in \mathcal{H}\}$}\;
  
\caption{ORC-WER computation}
\label{alg:orc-wer}
\end{algorithm} 

Fig.~\ref{fig:sawer_types} shows the difference between cpWER and ORC-WER using some toy examples.
For the example shown in Fig.~\ref{fig:intro_wer1}, the hypothesis assigns all words to the same channel (or speaker).
Since ORC-WER is speaker-agnostic, the resulting error is 0, whereas for cpWER, there are 2 insertions and 2 deletions, resulting in 4 errors.
In the example shown in Fig.~\ref{fig:intro_wer2}, both the metrics result in the same absolute error.

\begin{figure}[t]
\begin{subfigure}{0.49\linewidth}
\centering
\includegraphics[width=\linewidth]{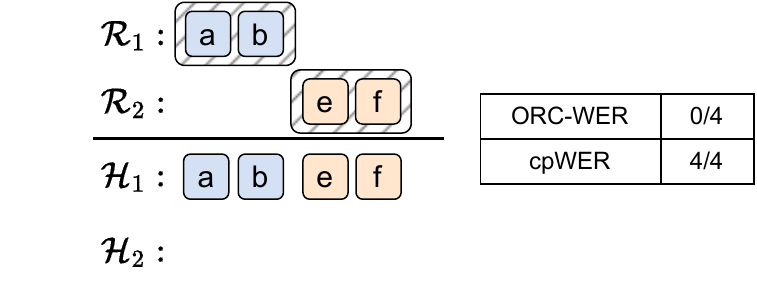}
\caption{}
\label{fig:intro_wer1}
\end{subfigure}
\begin{subfigure}{0.49\linewidth}
\includegraphics[width=\linewidth]{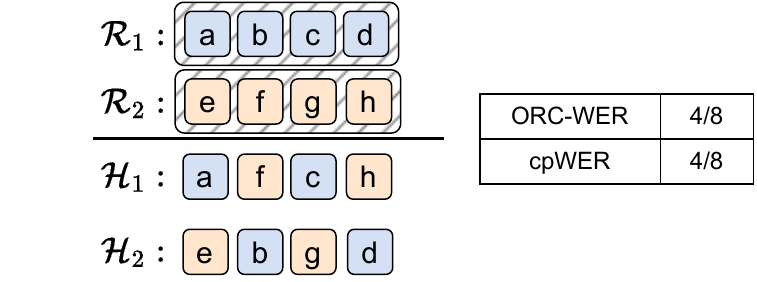}
\caption{}
\label{fig:intro_wer2}
\end{subfigure}
\caption[Toy examples to demonstrate differences of multi-talker WER definitions.]{Toy examples to demonstrate differences of multi-talker WER definitions, based on Figure 1 in \citet{vonNeumann2022OnWE}. Each solid box is a word, and grey hatched box is an utterance. Error counts for ORC-WER and cpWER are shown in the tables.}
\label{fig:sawer_types}
\end{figure}

\section{Outline of the dissertation}
\label{sec:intro_outline}

In the previous sections, we have described the problem of speaker-attributed multi-talker ASR, and the corresponding datasets and evaluation criteria.
We tackle this problem from two perspectives, and this dissertation is also organized along the same lines.

The first part (i.e., the modular approach) comprises chapters \ref{chap:oasc}, \ref{chap:doverlap}, \ref{chap:gss}, and \ref{chap:modular}.
Since overlapping speech is a major challenge for multi-talker ASR, we begin in Chapter~\ref{chap:oasc} by proposing a new method for overlap-aware speaker diarization using spectral clustering.
We show that our method is easy to integrate into existing clustering-based pipelines and provides significant DER improvements compared to methods that ignore overlapping speech.
In the last few years, there have been tremendous advances in overlap-aware diarization, with different approaches having their own advantages and limitations.
In Chapter~\ref{chap:doverlap}, we describe the DOVER-Lap algorithm to ensemble the outputs of these systems, resulting in significant gains over single-best systems.
In the multi-channel setting, DOVER-Lap can also be used to combine the outputs from different channels as a late fusion strategy.
Once the homogeneous speaker segments have been identified, we use the guided source separation (GSS) method to extract speaker-specific signals from the mixture.
In Chapter~\ref{chap:gss}, we describe a GPU-accelerated GSS implementation that provides close to 300x speed-up for inference, thus providing large RTF improvements for the pipeline.
Finally, we present the full pipeline, along with its probabilistic formulation, in Chapter~\ref{chap:modular}.
Qualitative and quantitative analyses of the pipeline throws light on the limitations of the modular approach.

The second part of the dissertation describes an alternate end-to-end perspective in Chapters \ref{chap:surt} and \ref{chap:surt2}.
In Chapter~\ref{chap:surt}, we first introduce the Streaming Unmixing and Recognition Transducer (SURT), an extension of the popular neural transducers for handling multi-talker overlapping speech.
SURT combines ``unmixing'' and ``recognition'' components trained end-to-end with an ASR loss.
By using a fixed ordering of references instead of permutation invariant training, and through various choices in model design, architecture, training mixture simulation, and objective functions, we show how to train the SURT models efficiently for speaker-agnostic transcription of real meetings.
Once we have an efficient training pipeline, we extend SURT for speaker-attributed transcription by adding an auxiliary speaker branch to the recognition component.
For this, we propose blank factorization to synchronize the emission from the branches, and speaker prefixing to ensure speaker label consistency through the recording.
Our final SURT model shows promising results on AMI, although there still exists a gap compared to offline, modular systems.

\section{Publications and Code}
\label{sec:intro_papers}

The work described in this dissertation has been published at the following venues.

\begin{itemize}[noitemsep]
\item Chapter~\ref{chap:oasc}: \citet{Raj2020MulticlassSC}
\item Chapter~\ref{chap:doverlap}: \citet{Raj2020DOVERLapAM,Raj2021ReformulatingDL}
\item Chapter~\ref{chap:gss}: \citet{Raj2022GPUacceleratedGS}
\item Chapter~\ref{chap:modular}: \citet{Arora2020TheJM}
\item Chapter~\ref{chap:surt}: \citet{Raj2021ContinuousSM,Raj2023Surt20}
\item Chapter~\ref{chap:surt2}: \citet{Raj2024OnSAS}
\end{itemize}

Additionally, this dissertation has also resulted in open-source software contributions, as summarized in Table~\ref{tab:intro_software}.
In several cases such as Chapters \ref{chap:doverlap} and \ref{chap:gss}, we implemented stand-alone Python packages which can be integrated into existing frameworks or recipes.
For example, our GSS implementation has recently been used as part of the community baseline in the CHiME-7 DASR challenge~\cite{Cornell2023TheCD}.
For others, such as the SURT model described in Chapters \ref{chap:surt} and \ref{chap:surt2}, the open-source implementation was done as part of a popular framework such as \texttt{icefall}.
In such cases, the recipe was also accompanied by development and contributions in the \texttt{k2} and \texttt{Lhotse} toolkits.\footnote{This dissertation was partially supported by an NSF CIRC grant for the development of next-generation speech tools (Lhotse, k2, and icefall.)}

\begin{table}[t]
\centering
\adjustbox{max width=\linewidth}{
\begin{tabular}{lll}
\toprule
\textbf{Chapter} & \textbf{GitHub repositories} & \textbf{Remarks} \\
\midrule

\ref{chap:oasc} & 
\begin{tabular}{l}
\texttt{desh2608/diarizer} \\
\texttt{desh2608/spyder}
\end{tabular} & 
\begin{tabular}{@{}l@{}}
Recipes for clustering-based diarization \\
Python package for fast DER computation
\end{tabular} \\

\ref{chap:doverlap} & \texttt{desh2608/dover-lap} & Python implementation of DOVER-Lap \\

\ref{chap:gss} & \texttt{desh2608/gss} & CuPy-based fast GSS with recipes \\

\ref{chap:modular} & \texttt{desh2608/icefall/tree/multi\_talker} & Icefall recipes for multi-talker ASR \\

\ref{chap:surt} \& \ref{chap:surt2} & \texttt{k2-fsa/icefall} & LibriCSS and AMI SURT recipes \\
\bottomrule
\end{tabular}}
\caption{Summary of software contributions as a result of this dissertation.}
\label{tab:intro_software}
\end{table}


\chapter{Overlap-aware Speaker Diarization using Spectral Clustering}
\label{chap:oasc}

One of the challenges of multi-talker speaker-attributed ASR is to identify and demarcate unique speakers in the recording.
In this chapter, we will begin by formalizing this ``speaker diarization'' problem, and then describe a popular clustering-based paradigm to solve it.
We will identify how the naive clustering solution fails to address overlapping speech, resulting in high error rates for meeting-like scenarios.
Finally, we will show that an alternative formulation of multi-class spectral clustering allows us to incorporate an external overlap detection module into the process, effectively reducing overlap related errors in the diarization output.

\section{A background in speaker diarization}
\label{sec:oasc_background}


Speaker diarization (or ``who spoke when'') is defined as the task of segmenting speech into speaker-homogeneous regions~\citep{Mir2012SpeakerDA,Tranter2006AnOO}. 
The input in this task is an audio recording (single or multi-channel) which may extend from a few minutes up to several hours (or days, such as in scenarios like child language acquisition). 
The desired output is a set of (possibly overlapping) segments with associated speaker labels. 
Formally, given an audio recording $R$, the diarization system is a function $f$ that estimates a set of speaker-labeled time segments, i.e.,
\begin{equation}
\label{eq:oasc_1}
f(R) = \{(\Delta_j, u_j): 1\leq j \leq N\},
\end{equation}
where $\Delta_j = (t_j^{\mathrm{st}}, t_j^{\mathrm{en}})$ denotes the segment boundaries and $u_j \in [K]$ denotes the speaker label. 
$K$ and $N$ are the estimated number of speakers and speaker-homogeneous segments in $R$, respectively. 
Depending on the use-case, the true value of $K$ may or may not be known beforehand. 

\begin{figure}[t]
    \centering
    \includegraphics[width=0.8\linewidth]{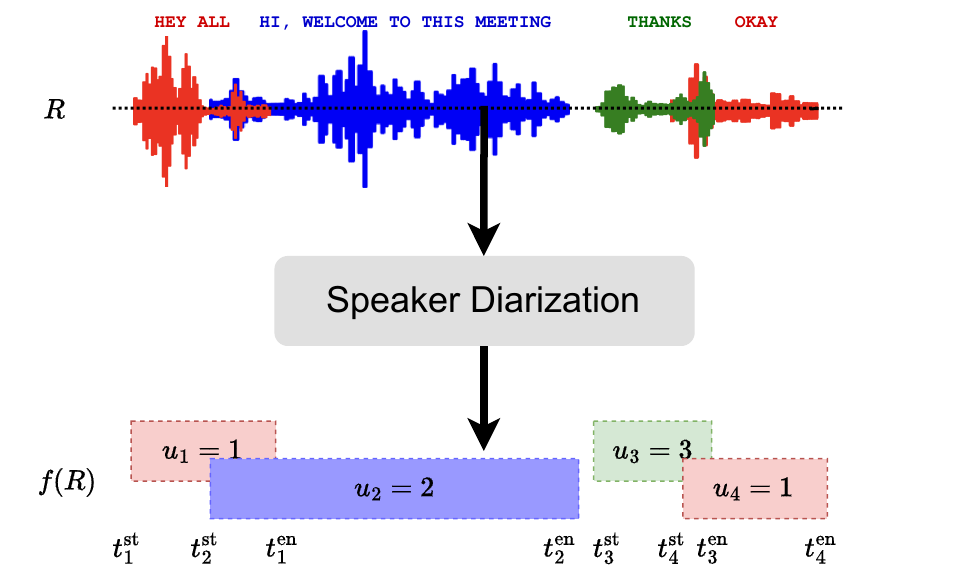}
    \caption{An illustration of the speaker diarization task. The system estimates $N=3$ speakers in the recording.}
    \label{fig:diarization}
\end{figure}

The speaker labels $u_j$ are \textit{relative} labels, i.e., they are only consistent within the recording. 
Fig.~\ref{fig:diarization} illustrates an example of the diarization task where the system predicts $N=4$ segments containing $K=3$ speakers. 
Note that some parts of the audio may be ``non-speech,'' i.e., they do not have any assigned speaker (e.g., the segment between $t_2^{\mathrm{en}}$ and $t_3^{\mathrm{st}}$, whereas some other parts may have ``overlapping speech'' and are assigned to multiple speakers (e.g., the segment between $t_2^{\mathrm{st}}$ and $t_1^{\mathrm{en}}$).

\section{The clustering paradigm}
\label{sec:oasc_clustering}

The conventional solution to speaker diarization is based on the ``clustering'' principle.
This is a pipeline consisting of four major components, as shown in Fig.~\ref{fig:clustering}.
These components are described in the following sections.

\begin{sidewaysfigure}
    \centering
    \includegraphics[width=0.95\linewidth]{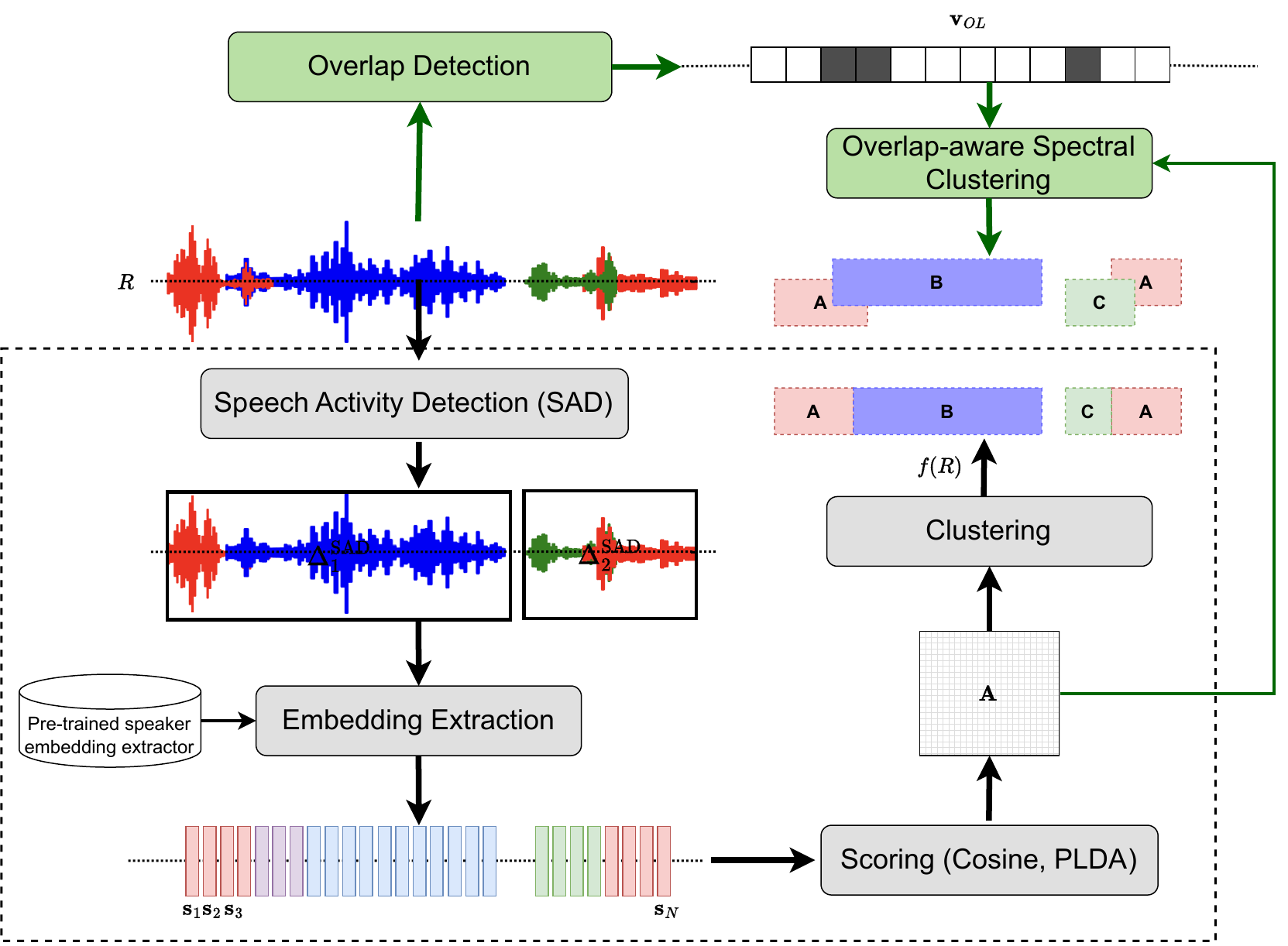}
    \caption{Components of the clustering-based diarization system (in dotted box). Green arrows and boxes show the new components for overlap-aware diarization (Section~\ref{sec:oasc_method}).}
    \label{fig:clustering}
\end{sidewaysfigure}

\subsection{Speech activity detection}

First, a speech activity detection (SAD) module is used to filter out the non-speech regions of the recording.
Traditional methods in SAD often rely on signal processing techniques, such as energy-based thresholding~\citep{Davis2006StatisticalVA}, analysis of the frequency modulation including harmonics~\citep{Hsu2015RobustVA} and formants~\citep{Yoo2015FormantBasedRV}, and statistical modeling~\citep{Cho2001AnalysisAI,Tan2010VoiceAD}. 
These methods analyze the characteristics of the audio signal to distinguish between speech and non-speech segments. 
While effective in controlled environments, they may face challenges in noisy or dynamic settings.
Machine learning and deep learning techniques have significantly advanced the field of SAD.
Supervised learning approaches, using annotated datasets, allow models to learn discriminative features for differentiating speech from non-speech segments. 
Popular algorithms include support vector machines (SVMs)~\citep{Enqing2002ApplyingSV}, Gaussian mixture models (GMMs)~\citep{Fukuda2010LongTermSA}, hidden Markov models (HMMs)~\citep{Sohn1999ASM}, and more recently, deep learning architectures like convolutional~\citep{Lin2019OptimizingVA,Vafeiadis2019TwoDimensionalCR} and recurrent neural networks~\citep{Hughes2013RecurrentNN,Eyben2013ReallifeVA}.
Unsupervised and semi-supervised techniques have also emerged to address scenarios where labeled data is scarce~\citep{Ying2011VoiceAD,Sholokhov2018SemisupervisedSA}. 
Additionally, context-aware approaches have gained prominence, leveraging additional information like contextual cues, speaker characteristics, or linguistic features to enhance SAD performance~\citep{Segbroeck2013ARF,Kim2018VoiceAD}.

In this chapter, since our objective is to improve the clustering stage using external overlap information, we will use an oracle SAD to detect speech segments.
Later, in Chapter~\ref{chap:modular}, we will replace this oracle with a neural network based SAD to evaluate the full pipeline of speaker-attributed ASR.
Regardless of the SAD model used, we can mathematically formulate the problem as
\begin{equation}
    \mathrm{SAD}(R) = \{\Delta_n^{\mathrm{SAD}}:1\leq n \leq N_{\mathrm{SAD}}\},
\end{equation}
where $\Delta_n^{\mathrm{SAD}}$, as defined in \eqref{eq:oasc_1}, denotes a \textit{speech} segment in $R$, and $N_{\mathrm{SAD}}$ is the number of speech segments identified.
Note that each of these speech segments may contain multiple speakers at this point.

\subsection{Embedding extraction}

Once the non-speech regions have been discarded, the speech regions of the recording, i.e., $\{\Delta_1^{\mathrm{SAD}},\ldots,\Delta_{N_{\mathrm{SAD}}}^{\mathrm{SAD}}\}$, are divided into small, overlapping subsegments (or window), under the assumption that no speaker changes would happen within these subsegments.
Conventionally, a sliding window of size 1.5s and shift 0.75s is used to obtain these subsegments.
%
A pre-trained speaker embedding extractor is then used to obtain fixed-dimensional embeddings, which we denote as $\mathbf{s}_1,\ldots,\mathbf{s}_N$, where $\mathbf{s}_n \in \mathbb{R}^d$ is the embedding for one such window.
The diarization process assigns one or more speaker labels to each of these $N$ windows.

Originally, unsupervised GMM-based methods, such as i-vectors~\citep{Dehak2011FrontEndFA} were used to obtain $\{\mathbf{s}_n\}$, the fixed-dimensional embeddings.
However, deep neural network (DNN) based techniques, such as d-vectors~\citep{Variani2014DeepNN} and x-vectors~\citep{Snyder2018XVectorsRD} have become more popular in the last few years due to their better speaker discriminative capabilities.
Researchers have also explored using attention-based methods~\citep{Okabe2018AttentiveSP,Wang2020MultiResolutionMA}, triplet loss~\citep{Novoselov2018TripletLB}, and domain adaptation~\citep{Wang2018UnsupervisedDA} to improve the speaker embeddings, which in turn improves their applicability in diarization.
In this chapter, we use the x-vector based method to obtain embeddings.

The x-vector model~\citep{Snyder2018XVectorsRD} is based on time-delay neural networks (TDNNs)~\citep{Peddinti2015ATD}, which model temporal dependencies in speech, usually better than RNNs or multi-layer perceptrons (MLP) with context windows. 
The architecture consists of TDNN layers operating at the frame level, followed by statistics pooling layers operating at the segment level.
These pooling layers mitigate the dependency on the utterance length, and are especially advantageous for speaker diarization since these systems process segments that are shorter than the typical window length for speech recognition.
The x-vector neural network is trained in a supervised manner with a speaker classification objective.
In Chapter~\ref{chap:modular}, we will replace the TDNN-based x-vector with a wide ResNet architecture that has been shown to outperform the original model on speaker recognition tasks~\citep{Landini2020BayesianHC}.

\subsection{Scoring}

Once we have the embeddings $\mathbf{s}_n$, we score them pair-wise using some distance metric. 
The simplest of such metrics is the cosine similarity between two vectors, which is defined as
\begin{equation}
    \cos(\theta) = \frac{\mathbf{s}_1 \cdot \mathbf{s}_2}{\lVert\mathbf{s}_1\rVert \lVert\mathbf{s}_2\rVert}. 
\end{equation}

Cosine similarity is a popular metric in speaker diarization systems which use spectral clustering or mean shift clustering algorithms. 
It does not require any training or parameter tuning; however, it cannot project or weight the embedding vectors to enhance the similarity measurement.

Another distance metric often used in speaker recognition literature is probabilistic linear
discriminant analysis (PLDA)~\cite{Ioffe2006ProbabilisticLD}. 
PLDA produces a comparison score, which is the log of the ratio of the probability that the embeddings were produced by the same speaker, versus the probability that they were produced by different speakers. 
If the comparison score is above a decision threshold $\theta$, we posit that the embeddings belong to the same speaker, otherwise, we declare that they belong to different speakers.
Since these PLDA scores may be negative and are asymmetric by definition, they cannot directly be used for spectral clustering, since the method requires a positive, symmetric affinity matrix.
Although these limitations can be bypassed using shifting and symmetrization steps, we just use cosine similarity in this chapter as our distance metric.
This results in an affinity matrix, $\mathbf{A} \in \mathbb{R}^{N\times N}$, given as
\begin{equation}
    \mathbf{A}_{ij} = \frac{\mathbf{s}_i \cdot \mathbf{s}_j}{\lVert\mathbf{s}_i\rVert \lVert\mathbf{s}_j\rVert}.
\end{equation}

\subsection{Clustering}

The final step in the diarization process, and the one we are most interested in, is clustering performed on the resulting affinity matrix.
Among the most widely used clustering methods in speaker diarization are agglomerative hierarchical clustering (AHC) and spectral clustering~\citep{Luque2012OnTU}.

\subsubsection{Agglomerative hierarchical clustering}

AHC is based on an iterative process of merging the existing clusters of speech segments until the distance of the closest cluster pair meets a predetermined stopping criterion~\citep{GarciaRomero2017SpeakerDU}. 
Given the affinity matrix $\mathbf{A}$, which provides pair-wise similarity scores between the $N$ sub-segment embeddings, an iterative process is followed until some stopping criterion is reached.
We start by assigning each of the $N$ embeddings to a unique cluster (or speaker).
At each step, the two closest clusters are merged based on a linkage criterion, such as the distance between their closest (single linkage), farthest (complete linkage), or average (average linkage) data points.
After merging the clusters, $\mathbf{A}$ is updated to reflect the distances between the new clusters and the remaining clusters.
This iterative merging process is often visualized as a dendrogram.

For speaker diarization tasks, the merging process can be stopped using either a threshold for similarity or a target number of speakers.
Usually, the stopping criterion is adjusted based on a development set to get an accurate number of clusters. 
If the number of speakers is known or estimated in advance, the AHC process can be stopped when the number of merged clusters reaches the predetermined number of speakers.

\subsubsection{Spectral clustering}
\label{sec:oasc_njw}

Spectral clustering was first applied to speaker diarization in \citet{Ning2006ASC} using the Ng-Jordan-Weiss (NJW) algorithm~\citep{Ng2001OnSC}. 
\citet{Bassiou2010SpeakerDE} extended this to the case of an unknown number of speakers by using the eigengap criterion. 
Agglomerative and spectral clustering methods for meeting diarization were compared in \citet{Luque2012OnTU}. 
After i-vectors were proposed for speaker recognition~\citep{Dehak2011FrontEndFA}, they were combined with cosine scoring and spectral clustering to perform diarization in \citet{Shum2012OnTU}.
It was further observed that spectral clustering was more robust to non-stationary environmental noise compared to other clustering methods~\citep{Tawara2015ACS}.
More recently, with the ubiquitousness of deep neural networks, several researchers have proposed methods to incorporate DNNs with spectral clustering. 
\citet{Lin2019LSTMBS} proposed a supervised method to measure the similarity matrix between all segments of an audio recording with BLSTMs, and applied spectral clustering on top of the similarity matrix. 
Other approaches use DNN-based speaker embeddings, such as x-vectors~\citep{Snyder2018XVectorsRD}, to compute the similarty matrix between segment pairs~\citep{Park2020AutoTuningSC,Medennikov2020TargetSpeakerVA}. 
Additionally, \citet{Park2020AutoTuningSC} introduced $p$-binarization and normalized maximum eigengap (NME) techniques to automatically estimate the number of speakers in the recording.

The traditional approach to spectral clustering is using the algorithm presented in \citet{Ng2001OnSC}.
This algorithm consists of the following steps:
\begin{enumerate}
    \item The Laplacian, $\mathbf{L}$, of the affinity matrix $\mathbf{A}$ is computed as $\mathbf{L} = \mathbf{D} - \mathbf{A}$ (for the case of unnormalized Laplacian), where $\mathbf{D}$ is the degree matrix.
    \item Eigen-decomposition of $\mathbf{L}$ is used to obtain the eigenvectors and eigenvalues as $\mathbf{L} = \mathbf{X}\Lambda \mathbf{X}^T$.
    \item The optimal number of clusters, $K$, is estimated by finding the maximum eigengap in the eigengap vector, i.e., 
    $$
    \mathbf{e}_{\mathrm{gap}} = \{\lambda_2-\lambda_1,\lambda_3-\lambda_2,\ldots,\lambda_n-\lambda_{n-1}\}.
    $$
    \item The eigenvectors $\mathbf{v}_1,\ldots,\mathbf{v}_K$ corresponding to the $K$ smallest eigenvalues are collected into a $\mathbf{U}\in \mathbb{R}^{N\times K}$ matrix, and the rows $\mathbf{u}_n$ of $\mathbf{U}$ are clustered into $K$ clusters using K-means clustering. 
\end{enumerate}

In Section~\ref{sec:oasc_method}, we will describe an alternative formulation of spectral clustering based on solving a constrained optimization problem, which is more amenable for overlap-aware speaker diarization.

\section{Overlap-aware diarization}
\label{sec:oasc_diar}

Although the clustering paradigm described above has proved to be effective through the use of DNN-based speaker embeddings, it cannot handle overlapping speech by design, since the clustering process assigns each segment to exactly one speaker. 
Existing approaches to solve the overlap problem fall into two categories. 
In the first category, an externally trained overlap detection module identifies frames in the recording which contain overlapping speech. 
This ``overlap detection'' may be performed using HMMs~\citep{boakye2008overlapped, Huijbregts2009SpeechOD, Yella2012SpeakerDO} or neural networks~\citep{Geiger2013DetectingOS, Andrei2017DetectingOS, Hagerer2017EnhancingLR, Kunesov2019DetectionOO}. 
Once overlaps are detected, an ``overlap assignment'' stage assigns additional speaker labels to the overlapping frames. 
\citet{Bullock2019OverlapawareDR} proposed overlap-aware resegmentation, which leverages the variational Bayes (VB)-HMM method used originally for diarization in \citet{Dez2018SpeakerDB}, and applied to resegmentation in \citet{Sell2015DiarizationRI}. 
In the second framework, end-to-end systems~\citep{Fujita2020EndtoEndND,Huang2020SpeakerDW} are used to perform overlapping diarization in a supervised setting.
While speaker diarization in overlapping settings has been studied extensively, there is no prior work, to the best of our knowledge, on incorporating overlap awareness into spectral clustering based diarization. 
%

In this chapter, we extend spectral clustering for overlap-aware speaker diarization.
Specifically, we train an external overlap detector, and use its classification decision during clustering of the segment-level embeddings. 
Our method relies on the two-step clustering formulation proposed in~\citet{Yu2003MulticlassSC}. 
The remainder of this chapter is organized as follows. 
We start by giving a detailed description of our method in Section~\ref{sec:oasc_method}, where we discuss $p$-binarization and NME for estimating the number of speakers, followed by the mathematical formulation of multi-class spectral clustering. 
We then introduce our modification of the method to perform overlap-aware diarization. 
This is followed by a description of our experimental setup and results in Sections~\ref{sec:oasc_experiments} and \ref{sec:oasc_results}, respectively. 
We present results on the AMI meeting corpus and the LibriCSS dataset, with detailed analysis of the performance of the method on different overlap conditions.

\section{Methodology}
\label{sec:oasc_method}

Our diarization system follows the clustering paradigm outlined in Section~\ref{sec:oasc_clustering}. 
We focus on the final clustering stage, and specifically on how to make the clustering process overlap-aware. 
In particular, given a sequence of windowed embeddings $S = (\mathbf{s}_1,\ldots,\mathbf{s}_N)$, the objective is to compute a label sequence $L = (\boldsymbol{\ell}_1,\ldots,\boldsymbol{\ell}_N)$, where $\boldsymbol{\ell}_n$ may be a single label or, in case of overlapping segment, multiple labels. 
For convenience, we assume that overlaps can occur between at most two speakers, so $\boldsymbol{\ell}_n$ will be a 2-tuple for an overlapping segment. 

In this section, we will assume that we have an overlap detector which decides, for each window, whether or not it contains overlapping speech. 
We denote the as
\begin{equation}
\label{eqn:overlap}
\mathrm{OVL}(R) = \mathbf{v}_{OL},
\end{equation}
where $\mathbf{v}_{OL} \in \{0,1\}^N$, and $\mathbf{v}_{OL}^t = 1$ indicates that ${\ell}_n$ must be a 2-tuple. 
Later, in Section~\ref{sec:oasc_overlap}, we will describe the formulation of one such method for computing $\mathrm{OVL}(R)$, using a hybrid HMM-DNN approach.

Given $S$ and $\mathbf{v}_{OL}$, overlap-aware diarization seeks to compute an optimal label sequence $L$ which minimizes the diarization error. 
Since we do not additionally have information about the number of speakers $K$ in the recording, we first estimate it using the heuristic described in~\citet{Park2020AutoTuningSC}. 
Subsequently, we perform multi-class spectral clustering to group the $N$ windowed subsegments into the estimated $\widehat{K}$ clusters using the optimal discretization procedure proposed in \citet{Yu2003MulticlassSC}, where we make a key modification to constrain the optimization process on the output $\mathbf{v}_{OL}$ of our overlap detector.

\subsection{Estimating number of speakers}

Given $S$, we compute the affinity matrix $\mathbf{A} \in [-1,1]^{N\times N}$ of raw cosine similarity values. 
Then, $p$-binarization is performed on this matrix by replacing the $p$ highest similarity values in each row with 1, and the rest with 0, followed by a symmetrization operation, 
\begin{equation}
\bar{\mathbf{A}}_p = \frac{1}{2}(\mathbf{A}_p + \mathbf{A}_p^T).
\end{equation}

We compute the unnormalized Laplacian for this matrix,
\begin{equation}
\mathbf{L}_p = \mathbf{D}_p - \bar{\mathbf{A}}_p,
\end{equation}
where $\mathbf{D}_p = \mathrm{diag}\{d_1,\ldots,d_N\}$, $d_i = \sum_{n=1}^N a_{i,n}$, also known as the ``degree'' of node $i$.

The properties of the unnormalized Laplacian of the affinity matrix have been studied extensively~\citep{Luxburg2007ATO}, and it is known that $\mathbf{L}_p$ has $N$ non-negative, real eigenvalues $0=\lambda_1 \leq \lambda_2 \leq \ldots \leq \lambda_N$.
Furthermore, an implication of the Davis-Kahan perturbation theory~\citep{Stewart1990MatrixPT} proposes an eigengap heuristic for the optimal number of clusters. 
Specifically, let $\mathbf{e}_p$ denote the vector of differences in consecutive eigenvalues (in increasing order). We compute the quantities
\begin{equation}
g_p = \frac{\max(\mathbf{e}_p)}{\lambda_{p,N}+\epsilon},\quad \text{and} \quad r(p) = \frac{p}{g_p}.
\end{equation}

Then, the optimal number of clusters is given as
\begin{equation}
    \widehat{K} = \arg\max(\mathbf{e}_{\hat{p}}), \quad \mathrm{where}~~\hat{p} = \mathrm{arg}\min_p r(p).
\end{equation}

In the multi-class spectral clustering procedure below, we will use this estimate $\hat{K}$ for the number of clusters, and drop the subscript $p$ from the matrices like $\mathbf{L}_p$, $\mathbf{D}_p$ and $\bar{\mathbf{A}}_p$ for brevity.

\subsection{Multi-class spectral clustering}

Bipartite graph partitioning using the affinity matrix Laplacian $\mathbf{L}$ is solved by node assignment based on the underlying Fiedler vector (eigenvector corresponding to the second smallest eigenvalue)~\citep{Fiedler1973AlgebraicCO}. 
The Ng-Jordan-Weiss algorithm~\citep{Ng2001OnSC} is a popular extension of this principle for multi-way partitioning of the graph. 
It applies K-means clustering on the first $K$ eigenvectors of $L$, i.e., in the $K$-eigenspace of the Laplacian, as described in Section~\ref{sec:oasc_njw}. 
It is known that if the original samples are separable into $K$ groups using some transformation, then their projection on the $K$-eigenspace can be easily grouped using K-means clustering. 
Recent work on speaker diarization through spectral clustering of x-vectors, as in \citet{Park2020AutoTuningSC} and \citet{Medennikov2020TargetSpeakerVA}, has employed this algorithm. 
However, there are two major limitations of this approach. 
First, the K-means clustering process may get stuck in bad local optima, particularly when the affinity matrix is noisy. 
Second, and particularly relevant for our case, it is difficult to extend this method to handle overlaps. 
To remedy these issues, we use an alternative formulation of spectral clustering, proposed in \citet{Yu2003MulticlassSC}. 

Given $\mathbf{A}$ and $\mathbf{D}$ as defined earlier, the clustering problem requires estimating the assignment matrix $X$. 
In graph partitioning terms, this can be represented as
\begin{equation}
\label{eqn:pncx}
\begin{aligned}
\mathrm{maximize} \quad \epsilon(X) &= \frac{1}{K}\sum_{k=1}^K \frac{X_k^T \mathbf{A} X_k}{X_k^T \mathbf{D} X_k} \\
\mathrm{subject~to}\,\,\,\,\,\, \quad X &\in \{0,1\}^{N\times K}, \\
    X \boldsymbol{1}_K &= \boldsymbol{1}_N.
\end{aligned}
\end{equation}

Intuitively, the objective function $\epsilon(X)$ seeks to maximize the average ``link-ratio'', i.e., the fraction of all link weights in a group that stay within the group. 
The constraint $X \boldsymbol{1}_K = \boldsymbol{1}_N$ enforces the condition that each sample can belong to exactly 1 cluster. 
We will see later (cf. Section~\ref{sec:oasc_new}) how this constraint can be modified for our overlap-aware scenario.

The optimization problem in~\eqref{eqn:pncx} is NP-complete due to the discrete constraints on $X$. 
Instead of solving this original problem, we solve a relaxed version of this problem which ignores the constraints. 
Let 
\begin{equation}
\label{eqn:z_def}
Z = f(X) = X(X^T\mathbf{D}X)^{-\frac{1}{2}}. 
\end{equation}

It is easy to verify that $Z^T\mathbf{D}Z = I_K$. 
We can rewrite the above problem~\eqref{eqn:pncx}, by ignoring the constraints, as

\begin{equation}
\label{eqn:pncz}
\begin{aligned}
\mathrm{maximize}\,\,\,\,\,\,\, \quad \epsilon(Z) &= \text{tr}(Z^T\mathbf{A}Z) \\
\mathrm{subject~to} \quad Z^T\mathbf{D}Z &= I_K.
\end{aligned}
\end{equation}

Since $Z$ has been relaxed into the continuous domain, the new optimization problem becomes tractable. 
Let 
\begin{equation}
P = \mathbf{D}^{-1}\mathbf{A},
\end{equation}
and suppose the eigen-decomposition of $P$ is given as $PV = VS$. 
Let $\Lambda^{\ast} = \mathrm{diag}(s_1,\ldots,s_K)$ and $Z^{\ast}$ contain the first $K$ columns of $V$. 
Then, the global optimum of the problem described in \eqref{eqn:pncz} occurs at
\begin{equation}
\label{eqn:z_sol}
    \{Z^{\ast}R: R^TR = I_K, PZ^{\ast} = Z^{\ast}\Lambda^{\ast}\}.
\end{equation}

This implies that the global optimum is not unique; rather, it is a subspace spanned by the first $K$ eigenvectors of $P$ through orthonormal matrices. 
The matrix $Z$ is a continuous solution to our clustering problem. 
To obtain a discrete solution, we solve for a discrete approximation for $Z$. 
First, we note from \eqref{eqn:z_def} that
\begin{equation}
X = f^{-1}(Z) = \operatorname{Diag}\left({\mathrm{diag}}^{-\frac{1}{2}}\left(Z Z^{T}\right)\right) Z.
\end{equation}

Using this transformation, we can characterize the solution obtained in \eqref{eqn:z_sol} as
\begin{equation}
     \{\Tilde{X}^{\ast}R: R^TR = I_K, \Tilde{X}^{\ast} = f^{-1}(Z^{\ast})\}.
\end{equation}

Now, our discretization problem is to find an $X$ which approximates $\Tilde{X}^{\ast}R$ for some orthonormal $R$, such that $X$ obeys the discrete constraints from problem~\eqref{eqn:pncx}. 
Mathematically, this is formulated as
\begin{equation}
\label{eqn:pod}
\begin{aligned}
\mathrm{minimize} \quad \phi(X,R) &= \left\lVert X - \Tilde{X}^{\ast}R \right\rVert^2 \\
\mathrm{subject~to}\, \quad\quad\quad X &\in \{0,1\}^{N\times K}, \\
    X \boldsymbol{1}_K &= \boldsymbol{1}_N, \\
    R^TR &= I_K.
\end{aligned}
\end{equation}

It is difficult to minimize $\phi(X,R)$ jointly in $X$ and $R$, so we optimize it alternately in $X$ and $R$. Suppose we are given some $R^{\ast}$, then the problem~\eqref{eqn:pod} reduces to
\begin{equation}
\label{eqn:podx}
\begin{aligned}
\mathrm{minimize} \quad \phi(X) &= \left\lVert X - \Tilde{X}^{\ast}R^{\ast} \right\rVert^2 \\
\mathrm{subject~to} \quad\quad X &\in \{0,1\}^{N\times K}, \\
    X \boldsymbol{1}_K &= \boldsymbol{1}_N.
\end{aligned}
\end{equation}

The solution to this problem is given by non-maximal suppression, i.e.,
\begin{equation}
\label{eqn:podx_sol}
    X^{{\ast}}(i, l)=\left\langle l=\arg \max _{k \in[K]} \tilde{X}(i, k)\right\rangle, \quad i \in \{1,\ldots,N\}.
\end{equation}
Intuitively, we set the largest entry in each row as one and zero out all the others. This ensures that each sample belongs to exactly 1 cluster. 
In the next section, we will see how to reformulate problem~\eqref{eqn:podx} for the case when some samples can belong to more than one clusters.
Next, we fix $X^{\ast}$ and solve the following problem for $R^{\ast}$:
\begin{equation}
\label{eqn:podr}
\begin{aligned}
\mathrm{minimize} \quad \phi(R) &= \left\lVert X^{\ast} - \Tilde{X}^{\ast}R \right\rVert^2 \\
\mathrm{subject~to} \quad R^TR &= I_K.
\end{aligned}
\end{equation}

The solution to this problem is given by
\begin{equation}
    R^{\ast} = \Tilde{U}U^T,
\end{equation}
where $(U,\Omega,\tilde{U})$ is a singular value decomposition of $X^{{\ast}T}\tilde{X}^{\ast}$.

We solve the two problems~\eqref{eqn:podx} and \eqref{eqn:podr} iteratively until convergence, and finally return $X^{\ast}$ as the output of the clustering procedure.

\subsection{Extension to overlap-aware clustering}
\label{sec:oasc_new}

From \eqref{eqn:overlap}, suppose the output of our overlap detector is given by $\mathbf{v}_{OL} \in \{0,1\}^N$, where $\mathbf{v}_{OL}^t = 1$ indicates that ${\ell}_n$ must be a 2-tuple. 
Then, we can reformulate problem~\eqref{eqn:podx} to include the overlap constraint as
\begin{equation}
\label{eqn:podx_ol}
\begin{aligned}
\mathrm{minimize} \quad  \phi(X) &= \left\lVert X - \Tilde{X}^{\ast}R^{\ast} \right\rVert^2 \\
\mathrm{subject~to} \quad \quad X &\in \{0,1\}^{N\times K}, \\
    X \boldsymbol{1}_K &= \boldsymbol{1}_N + \mathbf{v}_{OL}.
\end{aligned}
\end{equation}

Intuitively, this solves the same optimum discretization problem, but the cluster exclusivity constraint has been modified to represent the condition that some samples may belong to more than one cluster. 
Similar to how we used non-maximal suppression in \eqref{eqn:podx_sol} to solve the problem previously, the solution to the modified problem is again given by non-maximal suppression, with the exception that for samples belonging to more than one group, we set the largest two entries to 1, while zeroing out the others. 
Mathematically,
\begin{align}
\label{eqn:podx_newsol}
    \hat{X}^{{\ast}}(i, l)=X^{{\ast}}(i, l) + \mathbf{v}_{OL}^{(i)} \times \left\langle l=k^{\prime}_i\right\rangle, 
\end{align}
$~\forall i \in \{1,\ldots,N\}$, where $X^{{\ast}}(i, l)$ is defined in \eqref{eqn:podx_sol}, and $k^{\prime}_i$ is the index of the second largest element of $\Tilde{X}_i$. 
Although we only consider the case of 2-speaker overlaps here, it is easy to extend this method to the case of an arbitrary number of overlapping speakers.

\section{Overlap detection}
\label{sec:oasc_overlap}

Our proposed overlap-aware diarization method relies heavily on the performance of $\mathrm{OVL}(R)$, the overlap detector. 
In this section, we detail an HMM-DNN based overlap detector. 
Our model is similar to the speech activity detector previously used in the CHiME-6 baseline system~\citep{Watanabe2020CHiME6CT}.

We first trained a neural network classifier to assign each frame in an utterance a label from $\cal C$ = \{\textit{silence}, \textit{single}, \textit{overlap}\}, denoting silence, single speaker, or overlapping regions, respectively. 
We used the architecture shown in Figure~\ref{fig:od_nnet}, consisting of time-delay neural network (TDNN) layers to capture long temporal contexts \citep{Peddinti2015ATD}, interleaved with bidirectional long short term memory (BLSTM) layers with projection, to incorporate utterance-level statistics.

The posteriors obtained from the classifier were scaled with an external bias parameter tuned on the development data to reduce the false alarm rate. 
We then post-processed the per-frame classifier outputs to enforce minimum and maximum silence/single/overlap durations, by constructing a simple HMM whose state transition diagram encodes these constraints. 
Fig.~\ref{fig:oasc_fst} shows an example of a simplified HMM where these durations are set to small quantities.
Treating the per-frame posteriors like emission probabilities, we performed Viterbi decoding to obtain the most likely label-sequence. 
Furthermore, state transitions between the silence and overlap states were prohibited, mimicking real-world observations where it is highly unlikely for two speakers to start or stop speaking simultaneously. 

\begin{figure}[t]
\centering
\includegraphics[width=0.6\linewidth]{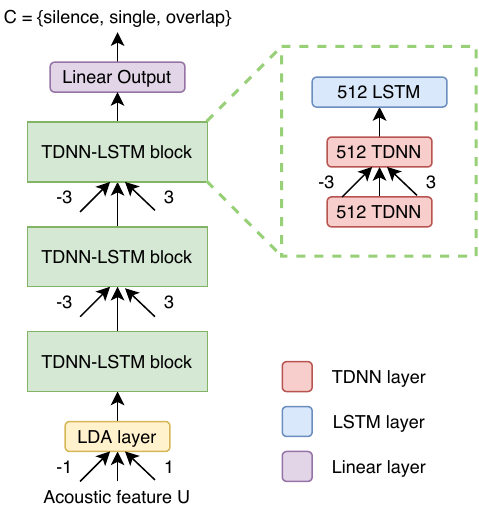}
\caption{Architecture of the neural network used for frame-level classification for overlap detection.}
\label{fig:od_nnet}
\end{figure}

\begin{sidewaysfigure}
    \centering
    \includegraphics[angle=270,width=\linewidth]{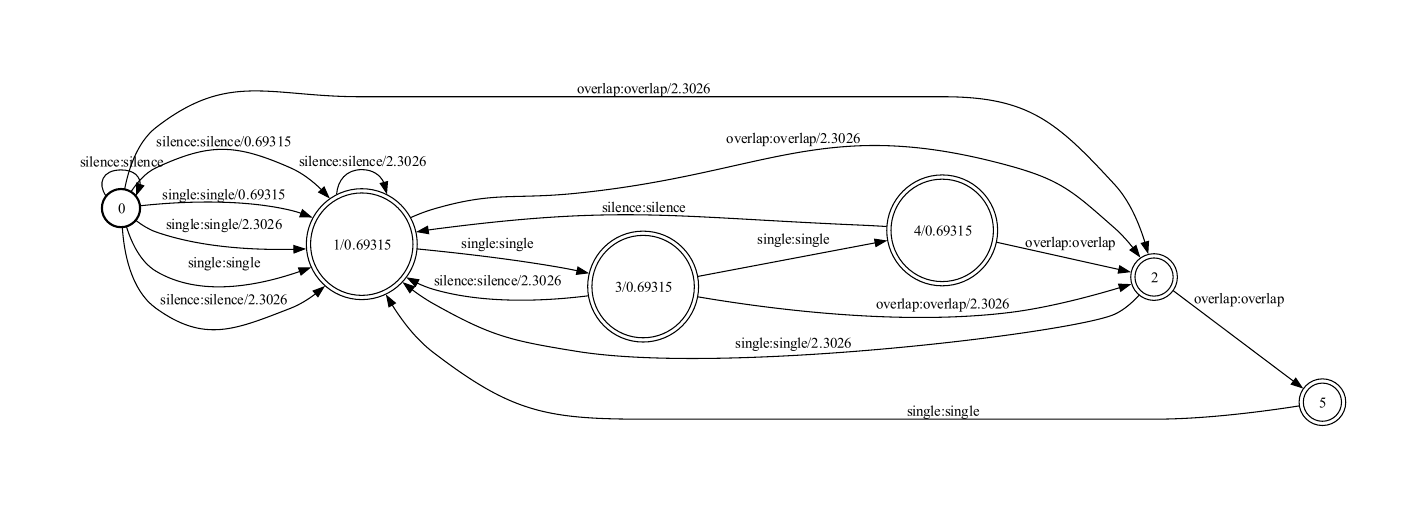}
    \caption{Simple HMM used for decoding graph construction. Here, we set the minimum silence/speech/overlap duration as 0.0s/0.0s/0.1s, and maximum speech/overlap duration as 0.03s/0.02s, respectively. Note that each transition correspond to one frame in the input, which has a frame shift of 0.01s in this example.}
    \label{fig:oasc_fst}
\end{sidewaysfigure}

\section{Experimental setup}
\label{sec:oasc_experiments}

\subsection{Datasets}

We performed experiments on the AMI meeting corpus and the LibriCSS data.
Recall from Section~\ref{sec:intro_data} that AMI contains 4-speaker meetings, while LibriCSS is a simulated corpus containing 8-speaker recordings.
For our AMI experiments, we used the mix-headset recordings, which are obtained by summing the individual headset signals from the participants in the meeting. 
For our LibriCSS experiments, we selected the recordings from the first channel of the array. 
We used this dataset to conduct a performance analysis of our proposed method on different overlap conditions. 

\subsection{Baselines}

We first have single-speaker baselines: (i) agglomerative hierarchical clustering (AHC) of x-vectors with probabilistic linear discriminant analysis (PLDA) scoring~\citep{Sell2018DiarizationIH}, (ii) spectral clustering of x-vectors with cosine scoring (using the Ng-Jordan-Weiss method)~\citep{Park2020AutoTuningSC}, and (iii) Bayesian HMM based x-vector clustering (VBx)~\citep{Diez2019BayesianHB,Dez2020OptimizingBH}. 
We used the same x-vector extractor for all the baselines (described in Section~\ref{sec:oasc_implementation}), so that the difference in their performance was only due to the clustering process. 
Furthermore, we used the same PLDA model (trained on a subset of the AMI training data) for the AHC and VBx baselines\footnote{The VBx diarization system has been shown to obtain significant gains with a PLDA interpolated between general data and in-domain data, but we did not use this method in this chapter.}. 
For both these baselines, hyperparameters were tuned on the development set. 
No hyperparameter selection is required for spectral clustering since it is auto-tuned. 
We used a ground-truth VAD for these baselines as well as for our proposed method.

We also compare our approach with diarization methods that do not ignore overlaps. 
These include: (i) overlap-aware VB resegmentation~\citep{Bullock2019OverlapawareDR} and (ii) region proposal networks (RPNs)~\citep{Huang2020SpeakerDW}. 
For the former, we report the results from the paper, which uses a neural VAD. For RPNs, we filtered out non-speech regions using the ground truth VAD.
All results are reported in terms of diarization error rate (DER), and we further break it down into missed speech (MS), false alarms (FA), and speaker confusion (Conf.) errors.

\subsection{Implementation details}
\label{sec:oasc_implementation}

\noindent
\textbf{X-vector extractor}. We used an x-vector extractor similar to the ones described in earlier studies~\citep{GarciaRomero2017SpeakerDU,Park2020AutoTuningSC}. 
The model consists of TDNN layers with statistics pooling, and we extracted 128-dim embeddings from the pre-final layer. 
It was trained on VoxCeleb data~\citep{nagrani2017voxceleb} with simulated room impulse responses~\citep{ko2017study} using the Kaldi toolkit~\citep{povey2011kaldi}, and released as part of the CHiME-6 baseline~\citep{Watanabe2020CHiME6CT}.

\noindent
\textbf{Overlap detector}. We trained an HMM-DNN overlap detector (described in Section~\ref{sec:oasc_overlap}) using Kaldi. 
We used 40-dim MFCCs features as input, and trained the classifier on in-domain training data. 
For AMI, we used targets obtained from annotations of the official training set. 
Since LibriCSS does not have corresponding training data, we generated simulated mixtures with reverberation using Librispeech training utterances~\citep{Panayotov2015LibrispeechAA} and used force-aligned targets for training our overlap detector. 
The decoding graph was created with additional constraints on the minimum and maximum durations: 0.03s and 10.0s for single speakers and 0.1s and 5.0s for overlapping regions, respectively.
Note that our clustering method itself is independent of the overlap detector used. 

\noindent
\textbf{Overlap-aware spectral clustering}. We extended the spectral clustering algorithm in scikit-learn~\citep{scikit-learn} for our implementation. 
Since the overlap detector provides frame-level classification decisions whereas x-vectors were extracted for 1.5s segments, we assumed that a segment is ``overlapping'' if at least half of it lies in overlapping regions. 
For estimating the number of speakers $\widehat{K}$, we swept the binarization factor $p$ in the range from 2 to 20, similar to what was done in \citet{Park2020AutoTuningSC}.   

\section{Results and discussion}
\label{sec:oasc_results}

\subsection{Overlap detection on AMI}

We present the results obtained by our overlap detector using 40-dim MFCC features on AMI mix-headset data in Table~\ref{tab:overlap_results}. 
We can see that the performance is comparable to previous studies on this dataset, without using waveform-level learned features. 
Furthermore, the results are similar regardless of the input features used for the overlap detection.
In general, we prefer an operating point with a relatively high precision, at the cost of low recall.
This is because if the overlap detector falsely identifies a sub-segment as an overlapping one, the clustering process would assign 2 speakers to it, resulting in increased false alarms.
On the other hand, if we fail to detect an overlapped sub-segment, it would not result in any extra errors compared with a regular (non-overlapping) diarization system.
For the experiments reported in the subsequent sections, we used the output from the above overlap detector using MFCC features.

\begin{table}[t]
\centering
\caption{Overlap detection results on AMI mix-headset data, in terms of Precision (\%) and Recall (\%).}
\label{tab:overlap_results}
\begin{adjustbox}{max width=\linewidth}
\begin{tabular}{@{}lcccc@{}}
\toprule
\multicolumn{1}{@{}l}{\multirow{2}{*}{\textbf{Model (feature type)}}} & \multicolumn{2}{c}{\textbf{Dev}} & \multicolumn{2}{c}{\textbf{Eval}} \\
\cmidrule(r{4pt}){2-3} \cmidrule(l){4-5}
\multicolumn{1}{c}{} & \textbf{Precision} & \textbf{Recall} & \textbf{Precision} & \textbf{Recall} \\
\midrule
ConvNet (Spectogram)~\citep{Kunesov2019DetectionOO} & 80.5 & 50.2 & 75.8 & 44.6 \\
E2E BLSTM (MFCC)~\citep{Bullock2019OverlapawareDR} & 90.0 & 52.5 & 91.9 & 48.4 \\
E2E BLSTM (SincNet)~\citep{Bullock2019OverlapawareDR} & 90.0 & 63.8 & 86.8 & 65.8 \\
\midrule
Our method (FBank) & 81.2 & 68.1 & 85.1 & 63.6 \\
Our method (MFCC) & 83.9 & 68.5 & 86.4 & 65.2 \\
Our method (MFCC+pitch) & 79.1 & 69.3 & 82.4 & 66.2 \\
\bottomrule
\end{tabular}
\end{adjustbox}
\end{table}

\subsection{Diarization results for AMI}
\label{sec:ami_result}

Table~\ref{tab:diar_results} shows our proposed overlap-aware spectral clustering method compared with baselines, evaluated on the AMI mix-headset eval data. 
First, the simple clustering-based baselines (AHC, spectral, and VBx) obtain a high MS error rate, since they cannot assign more than one speaker to any sub-segment.
Since we used an oracle VAD, the corresponding FA rate is 0\%, but this is likely to increase when using automatic VADs.
The speaker confusion varies depending on the clustering method, and VBx is found to be the most accurate in this setting.

Next, we see that the baselines with overlap assignment (such as VB resegmentation~\citep{Bullock2019OverlapawareDR} and RPN~\citep{Huang2020SpeakerDW}) improve the overall DER, mainly due to improvement in MS.
Nevertheless, RPN was found to significantly increase FAs, perhaps since it is trained on overlapping mixtures, and is prone to over-assignment of speakers.

Using our overlap detector trained on the AMI train set, our overlap-aware spectral clustering method was able to improve the DER from 28.3\% for the AHC/PLDA baseline, to 24.0\%, which is a relative improvement of 15.2\%.
This compares favorably with the performance of other diarization methods like overlap-aware VB resegmentation and RPNs. 
Furthermore, it is possible to reduce the DER to 21.5\% using an oracle overlap detector.
For all the recordings in the test set, the NME-based speaker counting approach estimated between 3 and 6 speakers, which is close to the oracle count of 4 speakers.

\begin{table}[t]
\centering
\caption{Diarization results on AMI mixed-headset eval set.}
\label{tab:diar_results}
\begin{adjustbox}{max width=\linewidth}
\begin{tabular}{@{}lcccc@{}}
\toprule
\multicolumn{1}{@{}l}{\textbf{Method}} & \textbf{MS} & \textbf{FA} & \textbf{Conf.} & \textbf{DER} \\ \midrule
AHC/PLDA & 19.9 & 0.0 & 8.4 & 28.3 \\
Spectral/cosine & 19.9 & 0.0 & 7.0 & 26.9 \\
VBx~\citep{Diez2019BayesianHB} & 19.9 & 0.0 & 6.3 & 26.2 \\ \midrule
VB resegmentation~\citep{Bullock2019OverlapawareDR} & 13.0 & 3.6 & 7.2 & 23.8 \\
RPN~\citep{Huang2020SpeakerDW} & 9.5 & 7.7 & 8.3 & 25.5 \\ \midrule
Our method & 11.3 & 2.2 & 10.5 & 24.0 \\
Our method + oracle OD & 7.4 & 1.8 & 12.3 & \underline{21.5} \\ 
Our method + noise aug. & 11.3 & 2.2 & 10.1 & \textbf{23.6} \\ \bottomrule
\end{tabular}
\end{adjustbox}
\end{table}

A detailed analysis of the results reveals that although the missed speech reduces substantially (from 19.9\% to 11.3\%) as a result of overlap detection, there is also a significant increase in speaker confusion errors (from 8.4\% to 10.5\%). 
We conjecture that since the x-vector extractor was trained only on single-speaker utterances, a mismatch in the overlap regions of the recording results in noisy samples. 
The speaker confusion improved by 0.4\% when we used an x-vector extractor trained with noise augmentation, using noises from the MUSAN corpus~\citep{Snyder2015MUSANAM}. 
To verify our hypothesis further, we show the T-SNE plots for the non-overlapping and overlapping segments in Fig.~\ref{fig:non_ovl} and \ref{fig:ovl}, respectively. 
We can see that while the embeddings for the non-overlapping segments are well separated, those for overlapping segments may often be noisy, leading to clustering errors.

\begin{figure}[t]
\begin{subfigure}{0.49\linewidth}
\centering
\includegraphics[width=\linewidth]{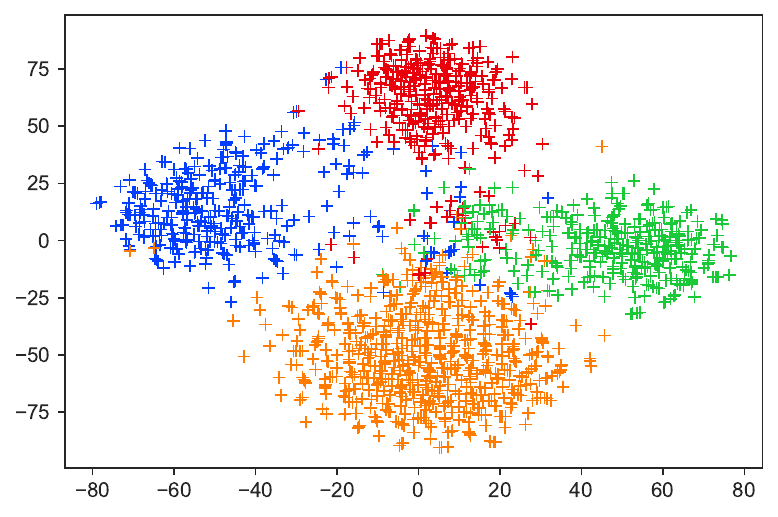}
\caption{}
\label{fig:non_ovl}
\end{subfigure}
\begin{subfigure}{0.49\linewidth}
\centering
\includegraphics[width=\linewidth]{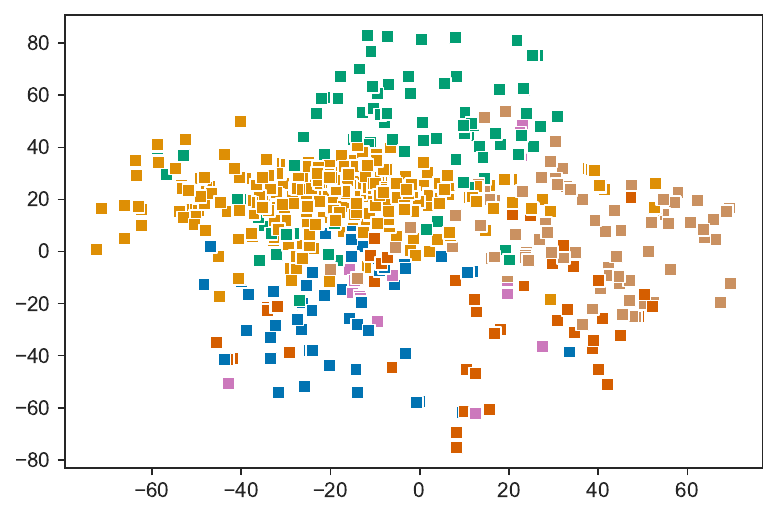}
\caption{}
\label{fig:ovl}
\end{subfigure}\hfill
\caption{T-SNE plots of x-vector embeddings for (a) non-overlapping, and (b) overlapping segments for the recording \texttt{EN2002a} in the AMI eval set (containing 4 speakers). Colors denote the speaker assigned to the segment. For (b), each color represents a distinct pair of speaker labels, resulting in 6 differently colored clusters.}
\label{fig:xvec_plot}
\end{figure}

\begin{table}[t]
\centering
\caption{Diarization performance on LibriCSS evaluation set (sessions 2-10), evaluated condition-wise, in terms of \% DER. 0S and 0L refer to 0\% overlap with short and long inter-utterance silences, respectively. Our overlap detector obtained 96.3\% precision and 83.8\% recall on this data. RPN does not use ground truth VAD.}
\label{tab:libricss}
\begin{adjustbox}{max width=\linewidth}
\begin{tabular}{@{}lccccccc@{}}
\toprule
\multicolumn{1}{c}{\multirow{2}{*}{\textbf{Method}}} & \multicolumn{6}{c}{\textbf{Overlap ratio in \%}} & \multicolumn{1}{c}{\multirow{2}{*}{\textbf{Average}}} \\
\cmidrule(r{4pt}){2-7}
\multicolumn{1}{c}{} & \multicolumn{1}{c}{\textbf{0L}} & \multicolumn{1}{c}{\textbf{0S}} & \multicolumn{1}{c}{\textbf{10}} & \multicolumn{1}{c}{\textbf{20}} & \multicolumn{1}{c}{\textbf{30}} & \multicolumn{1}{c}{\textbf{40}} & \multicolumn{1}{c}{} \\
\midrule
AHC/PLDA & 5.1 & 2.7 & 11.5 & 19.3 & 22.7 & 28.8 & 16.3 \\
Spectral/cosine & 2.2 & 1.9 & 8.5 & 13.5 & 19.0 & 23.4 & 12.6 \\
RPN & 4.5 & 9.1 & 8.3 & 6.7 & 11.6 & 14.2 & 9.5 \\
\midrule
Our method & 2.6 & 3.4 & 6.8 & 10.0 & 13.9 & 15.2 & 9.3 \\
 + oracle OD & 2.2 & 3.3 & 6.7 & 9.6 & 12.9 & 14.4 & 8.8 \\
\bottomrule
\end{tabular}
\end{adjustbox}
\end{table}

\subsection{Analysis on LibriCSS}

Table~\ref{tab:libricss} shows a breakdown of diarization errors obtained by our system, 
compared with some of the baselines. 
It is evident that as the overlap ratio increases from 0 to 40\%, the difference in performance becomes more significant. 
On average, our method provided a 42.9\% relative DER improvement compared to a baseline AHC system, and this increased to 46.0\% relative on using an oracle overlap detector. 
We note here that since LibriCSS does not have a corresponding training set, the PLDA was trained on Librispeech utterances. 
The mismatch between clean training data versus overlapping mixtures at test time may be particularly detrimental to the performance of the AHC system. 
As the overlap ratio increases, RPN performs better than our method. 
We again attribute this to the fact that our RPN model is trained on closely matched overlapping speech, whereas the x-vector extractor was trained on single-speaker utterances, which results in a higher speaker confusion.
Nevertheless, in real meetings, it is relatively rare for overlap statistics to be more than 30\% (for example, AMI has 19.9\% overlapping speech), and in such conditions, our method obtains similar performance to the discriminatively-trained RPN.

\section{Conclusion}
\label{sec:oasc_conclusion}

Speaker diarization is an important component of the multi-talker ASR problem, since it performs speaker attribution for the transcriptions.
In this chapter, we showed that the traditional clustering-based diarization system is inherently incapable of handling overlapping speech, which forms a significant fraction of multi-talker recordings.
We proposed a new method for overlap-aware speaker diarization using spectral clustering by leveraging an external overlap detector to identify the overlapping subsegments, and then assigning these segments to multiple speakers during clustering. 
Our method provided significant improvements over conventional single-speaker clustering models, and was competitive with other overlap-aware diarization methods.

As described in Section~\ref{sec:oasc_diar}, there are several newer approaches for overlap-aware diarization, such as those which reformulate the problem as multi-label classification.
These methods, which are based on supervised training on multi-speaker mixtures, are often more robust at handling overlapping speech.
Nevertheless, clustering-based approaches are still prominent because of their robustness across acoustic domains and number of speakers.
Given such complementary systems, we ask the question: \textit{can we benefit by ensembling them to benefit from their respective strengths?}
This question is addressed in the next chapter.


\chapter{Ensembles of Diarization Systems}
\label{chap:doverlap}

In the previous chapter, we described a method for overlap-aware speaker diarization using spectral clustering.
Considering the importance of diarization as a pre-processing step, it is not surprising that there is a large body of research tackling this problem with very different solutions.
These solutions have complementary strengths, such as performance on overlapping speech, accurate estimation of number of speakers, or robustness on out-of-domain acoustic environments.
In this chapter, we propose \textbf{DOVER-Lap}, an algorithm that can combine the outputs of these diverse systems, thereby improving error rates across a range of conditions.
In the context of our over-arching multi-talker ASR pipeline, we expect such ensembling methods to be fast and accurate for any number of diarization systems.
We will show how to adapt ideas from approximation algorithms to design a suite of such ensembling algorithms with trade-offs between computational cost and approximation ratios.

This chapter is organized as follows.
Section~\ref{sec:dl_intro} describes commonly used approaches for speaker diarization in literature, along with their advantages and disadvantages.
In Section~\ref{sec:dl_problem}, we will formalize the system combination problem and introduce DOVER-Lap, which decomposes the above problem into the sub-problems of ``label mapping'' and ``label voting.''
Section~\ref{sec:dl_mapping} reformulates label mapping as a graph partitioning problem, which allows us to describe a fast and accurate approximation algorithm for the task.
In Section~\ref{sec:dl_voting}, we present the relatively easier problem of voting, and describe a simple maximal voting strategy which solves it.
Finally, we present results for system combination in Section~\ref{sec:dl_results}, where we perform DOVER-Lap based ensembling for AMI and LibriCSS using a variety of speaker diarization systems.
We also experiment with multi-channel diarization through a late fusion of channel-level diarization outputs in Section~\ref{sec:dl_multi}, and show that this simple strategy is competitive with beamforming-based early fusion.

\section{Introduction}
\label{sec:dl_intro}

In the 2000s, NIST and DARPA organized several challenges to advance the state of speaker diarization research. 
Since statistical learning-based approaches for diarization were still fairly new at the time, the evaluations were suitably constrained --- participants were often provided oracle speech segmentation (i.e., non-speech regions were demarcated), the number of speakers was known beforehand, and the audio recordings were assumed to have zero overlapping speech. 
More recently, as the diarization task has matured, we have moved towards unconstrained evaluations. 
The CHiME-6~\cite{Watanabe2020CHiME6CT} and DIHARD~\cite{Ryant2019TheSD, Ryant2020TheTD} challenges, for example, contain challenging recordings with high noise and overlapping speech and a diverse number of speakers.

This shift in evaluation happened as a result of several advances in systems that perform speaker diarization. 
The traditional approach for speaker diarization, as described in Chapter~\ref{chap:oasc}, involved a clustering of segment-level speaker embeddings, optionally followed by resegmentation~\cite{GarciaRomero2017SpeakerDU, Park2020AutoTuningSC, Landini2020BayesianHC}. 
This approach requires separately optimized speech activity detection components, and also assumes that the recording does not contain overlapping speech. 
To alleviate the latter problem, several methods have been proposed which seek to employ overlap detection modules and use some heuristics of the clustering process to assign extraneous speakers to the overlapping segments~\cite{Raj2021MulticlassSC, Bullock2020OverlapawareDR}. 

More recently, supervised diarization methods such as region proposal networks (RPN), end-to-end neural diarization (EEND), and target-speaker voice activity detection (TS-VAD) have been proposed which inherently perform overlapping speaker assignment~\cite{Huang2020SpeakerDW,Fujita2020EndtoEndND,Medennikov2020TargetSpeakerVA}. 
An alternate paradigm for overlap-aware diarization involves using a continuous speech separation (CSS) module to first split the recording into 2 or 3 streams, followed by clustering of speakers across the streams~\cite{Raj2021IntegrationOS, Xiao2020MicrosoftSD}. 
%


Machine learning tasks usually benefit from an ensemble of different systems~\cite{Rokach2009EnsemblebasedC}.
ROVER~\cite{Fiscus1997APS} is a popular post-processing method for combining the outputs of speech recognition systems through weighted majority voting. 
This method was generalized to be used with n-best lists or lattices, and called confusion network combination (CNC)~\cite{evermann2000posterior,stolcke2000sri}.
Lattice combination has also proved to be effective in systems developed for community challenges such as the CHiME-6 challenge~\cite{arora2020jhu}. 
However, it has traditionally been difficult to combine diarization systems at the output level due to the varying system-dependent time-segmentations of the hypotheses. 

DOVER~\cite{Stolcke2019DoverAM} was the first algorithm proposed to perform such combination through weighted majority voting (similar to ROVER for ASR) on homogeneous single-speaker regions across a recording. 
Evaluation results showed that it improved over random (or even oracle) channel selection, albeit with several caveats.
While DOVER provided a convenient method to combine diarization outputs, it was limited since it could not handle outputs containing overlapping segments. 
This limitation is of particular significance if we consider the increasing application of diarization to real multi-talker conversations containing high overlaps (such as the AMI~\cite{Carletta2005TheAM} meeting corpus and the CHiME challenges~\cite{Barker2018TheF,Watanabe2020CHiME6CT}), and the availability of methods that can perform diarization in such settings~\cite{Bullock2019OverlapawareDR,Huang2020SpeakerDW,Fujita2020EndtoEndND}.

Our objective in this chapter is to devise an algorithm which can perform hypotheses combination in overlap-aware settings. 
Towards this objective, we propose \textbf{DOVER-Lap} (D\underline{OVER} + \underline{Over}lap), which generalizes and builds upon the label mapping and label voting paradigms suggested in DOVER.
We reformulate the label mapping task as graph partitioning, and propose fast approximation algorithms to solve it.
We also modify the voting mechanism in DOVER such that multiple speakers may be assigned to a region based on the voting decision. 

We demonstrate through our experiments conducted on AMI and LibriCSS that DOVER-Lap can efficiently combine hypotheses from very different systems, ranging from conventional clustering-based to more recent end-to-end neural methods, while improving the DER performance over the single best system. 
To demonstrate the wide applicability of our technique, we also apply it to multichannel diarization through late fusion of outputs from far-field array microphones. 
We show that late fusion using DOVER-Lap is competitive with early fusion using dereverberation and beamforming, without any need for enhancement. 


\section{System combination with DOVER-Lap}
\label{sec:dl_problem}

\subsection{Problem formulation}

We assume that we are given the ``hypotheses'' $U_k$ from $K$ diarization systems, i.e., $U_1,\ldots,U_K$, where each system can perform overlapping speaker assignment and may contain different number of speakers, $c_1,\ldots,c_K$.
Mathematically, $U_k$ can be written as a set of tuples containing time intervals and the corresponding speaker assignment. 
If $U_k$ has $c_k$ speakers $\{u_k^1,\ldots,u_k^{c_k}\}$ and $N_k$ speaker-homogeneous segments, then 

\begin{equation}
    U_k = \{(\Delta_k^n, u_k^n): n \in \{1,\ldots,N_k\}\},
\end{equation}
where $\Delta_k^n$ is the time interval $n$ as defined in \eqref{eq:oasc_1}, and $u_k^n \in \{u_k^1,\ldots,u_k^{c_k}\}$ is the corresponding speaker.

We define the problem of combining the hypothesis as finding a joint diarization hypothesis $\hat{U} = \mathcal{D}(U_1,\ldots,U_K)$ such that $\hat{U}$ minimizes the chosen diarization error metric with respect to an unknown reference. 

A straightforward approach for solving this problem involves dividing the input hypotheses into small time durations, and performing majority voting individually with each such region. 
However, there are two issues with this solution. 
First, to perform any kind of voting (within a region), the system outputs need to be in the same label space. 
Second, overlap-aware systems may contain different number of speakers within any such region, and for overlap-aware combination, the voting mechanism needs to account for this possibility. 
In the next section, we describe how DOVER~\cite{Stolcke2019DoverAM} solved this problem by decomposing it into label mapping and label voting stages.

\begin{table}[t]
\centering
\caption{Notations used in this chapter.}
\label{tab:notation}
\begin{tabular}{@{}rl@{}}
\toprule
\textbf{Symbol} & \textbf{Definition} \\
\midrule
$U_1,\ldots,U_K$ & Diarization hypotheses \\
$c_k$ & Number of speakers in hypotheses $k$ \\
$C$ & Maximum number of speakers in any hypothesis \\
$u_k^i$ & Speaker $i$ in hypothesis $k$ \\
$\Delta_k^n$ & Interval $n$ in hypotheses $k$ \\
$V$ & Set of all hypotheses speakers \\
$E$ & Set of all edges $\{(u_k^{i},u_{\ell}^{j})\}$ \\
$w_e$ & Weight on edge $e = \{(u_k^{i},u_{\ell}^{j})\}$ \\
$(V_1,\ldots,V_C)$ & Clique set (output of label mapping) \\
$\mathcal{G}$ & Graph formed by vertex set $V$ and edge set $E$ with weights $w$ \\
\bottomrule
\end{tabular}
\end{table}


\subsection{Preliminary: DOVER}
\label{sec:dl_dover}

DOVER (diarization output voting error reduction) was the first system combination method proposed for combining diarization outputs using a weighted majority voting technique~\cite{Stolcke2019DoverAM}.
It comprises two stages: label mapping, and label voting, where each stage addresses one of the problems outlined in the previous section.

\subsubsection{Label mapping}

In the \textbf{label mapping} stage, the objective is to find a set of $C$ speaker labels $\{\hat{u}^1,\ldots,\hat{u}^C\}$ for our combined output $\hat{U}$, and a mapping function $\mathcal{S}$ that takes as input one of the speaker labels from the given hypotheses and maps it to a label in the combined output, i.e., $\mathcal{S}(u_k^i) = \hat{u}^c$, where $c \in [C]$.

In DOVER, this mapping is done incrementally by considering the hypotheses pair-wise, with their order decided based on the average diarization error rate (DER) to all other hypotheses.
Suppose, for example, that the hypotheses have been ordered as $U_1,\ldots,U_K$. 
In this case, $U_1$ and $U_2$ are first mapped together using $U_1$ as reference to obtain $U_{1,2}$, which is then mapped together with $U_3$ to obtain $U_{1,2,3}$, and so on, until we have $U_{1,\ldots,K}$. 
At each iteration, the mapping is performed using the Hungarian method~\cite{Kuhn1955TheHM}, similar to how the reference and system outputs are mapped to a common space for evaluating DER. 
This is an instance of the weighted bipartite graph matching problem, and is poly-time solvable using linear sum assignment on the weight matrix. 
This incremental approach for label mapping treats it as an incremental assignment problem~\cite{Toroslu2007IncrementalAP}. 
Due to this treatment, it is unable to use the global pair-wise costs to map all the hypotheses to a common space simultaneously. 
Furthermore, the method is dependent on the order of hypotheses $H_k$, and choosing a bad initial pair may be detrimental to the final mapping. 
In Section~\ref{sec:dl_mapping}, we will reformulate this label mapping problem in terms of graph partitioning, and propose fast algorithms that seek to overcome these limitations with DOVER.

\subsubsection{Label voting}

\begin{figure}[t]
\centering
\includegraphics[width=0.7\linewidth]{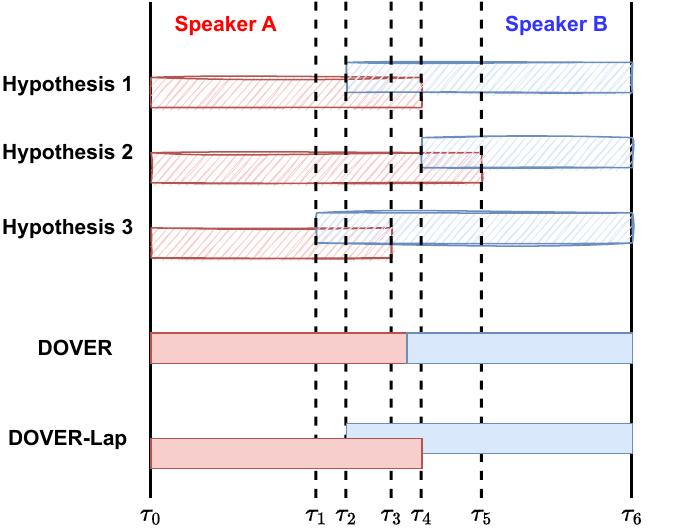}
\caption{An illustration of overlapping output produced by DOVER-Lap for overlapping hypotheses.}
\label{fig:dover_vs_dl}
\end{figure}

Once the hypotheses have been mapped to a common speaker label space, the \textbf{label voting} stage is performed. 
In DOVER, weighted majority voting is done on regions of input speech, where a region is defined as the maximal segment delimited by any of the original speaker boundaries from the input hypotheses.
Given the speaker-homogeneous intervals $\Delta_k^n$ for the $K$ hypothesis, we pool the start and end time-stamps for all the intervals and order them in monotonically increasing order.
Let this new sequence be denoted by $\{\tau_0,\ldots,\tau_{N-1}\}$, where $N = \sum_{k=1}^K 2N_k$; then, each interval $[\tau_n,\tau_{n+1}]$ is an example of a \textit{region} defined above.
For all illustration, see the intervals $[\tau_0,\tau_1],\ldots,[\tau_5,\tau_6]$ in Fig.~\ref{fig:dover_vs_dl}. 

Recall that overlap-aware diarization systems may contain multiple speakers in each such region. 
However, DOVER makes the single-speaker assumption in majority voting. 
Formally, if the audio is divided into $N$ regions, and $L_k = (\ell_k^1,\ldots,\ell_k^T)$ denotes the region-wise labels assigned by hypothesis $k \in K$, where $\ell_k^t \in \{\hat{u}^1, \ldots, \hat{u}^C\}$ (from label mapping), and $C$ is the number of speakers in combined label space, then the DOVER label voting stage computes, $\forall t \in T$,
\begin{equation}
\label{eqn:dover}
\ell_t = \arg \max_{\hat{u}^c \in \{\hat{u}^1, \ldots, \hat{u}^C\}} \left( \sum_{k \in K} w_k \mathbbm{1}(\ell_k^t = \hat{u}^c) \right),
\end{equation}
where $w_k \in [0,1]$ denotes a confidence weight assigned to hypothesis $k$. 
DOVER ranks the input hypotheses by their average DER to all other hypotheses, and applies a weight that decays slowly with rank: $w_k = \frac{1}{k^{0.1}}$. 
Since only a single speaker is assigned to every region, combination using DOVER may lead to high missed speech in the overlap case, as shown in Fig.~\ref{fig:dover_vs_dl}. 
We solve this problem through overlap-aware weighted majority voting, described in Section~\ref{sec:dl_voting}.

\section{Label mapping as a graph partitioning problem}
\label{sec:dl_mapping}

Again, consider $K$ diarization hypotheses $U_1,\ldots,U_K$, containing $c_1,\ldots,c_k$ speakers, respectively, such that $C = \max\{c_k, k\in [K]\}$. 
Let us denote each speaker as a node in a graph, i.e., $u_k^i$ is the node corresponding to the $i^{th}$ speaker in the $k^{th}$ hypothesis, and $V = \{u_k^i\}$ is the set of all speaker nodes. 
Let $E = \{(u_k^{i},u_{\ell}^{j}): \forall k,\ell \in [K], i \in U_k, j\in U_\ell, k\neq \ell\}$ denote the set of all edges. 
Informally, this means that there is an edge between any two nodes if the nodes belong to different hypotheses. 
Additionally, we have a weight function $w: e \rightarrow \mathbb{R}^+$, where $e$ denotes an edge. 
In practice, these edge weights are obtained by computing the relative overlap duration between the speakers in the recordings, i.e.,
\begin{equation}
\label{eq:edge_weight}
w(u_k^{i},u_{\ell}^{j}) = \frac{\Delta(u_k^{i})\cap \Delta(u_{\ell}^{j})}{\Delta(u_k^{i})\cup \Delta(u_{\ell}^{j})},
\end{equation}
where $\Delta(u)$ is the set of all segments where speaker $u$ is active in the recording. 
Clearly, $w: E \rightarrow [0,1]$, and a higher $w$ means that the corresponding speakers are more likely to occur in the same segments in the recording (i.e., they are more likely to represent the same speaker).

We define the graph as $\mathcal{G} = (V,E,w)$. 
It is easy to see that $\mathcal{G}$ is $K$-partite, and if $c_k = C, ~\forall k\in [K]$, then it is also complete. 
Fig.~\ref{fig:label_mapping_graph} illustrates this graphical formulation of the label mapping problem.
\begin{figure}[t]
    \centering
    \includegraphics[width=0.8\linewidth]{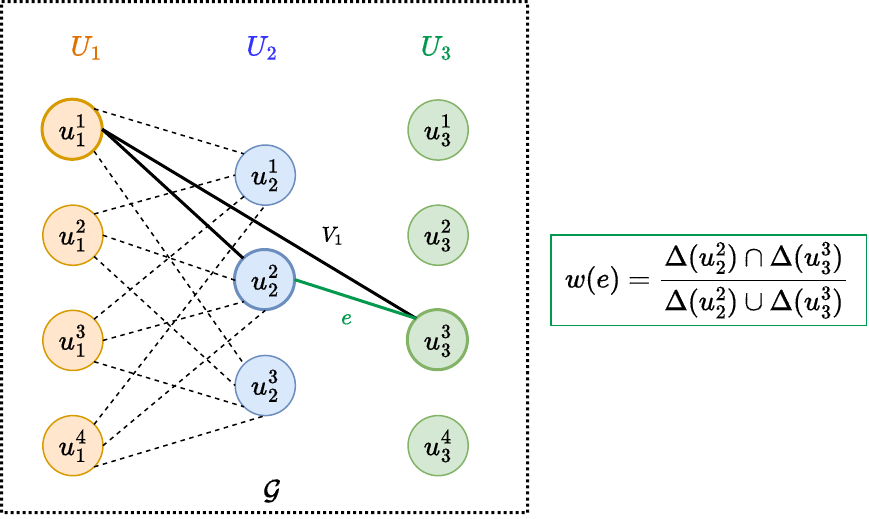}
    \caption{Illustration of the label mapping problem as a graph $\mathcal{G}$ for the case of $K=3$. $V_1$ denotes the clique formed by vertices $(u_1^1,u_2^2,u_3^3)$. $\Delta(u)$ represents the segments where speaker $u$ is active in the recording.}
    \label{fig:label_mapping_graph}
\end{figure}
Each $U_k$ in the graph is an \textit{independent set} (set of vertices with no edges between any pair), and the label mapping problem can be defined as: partition $V$ into $C$ vertex-disjoint cliques $\Phi = (V_1,\ldots,V_C)$, such that the partition maximizes $w(\Phi)$, i.e.,
\begin{equation}
\label{eq:objective}
    \widehat{\Phi} = \text{arg}\max_{\Phi} w(\Phi), ~~\text{where}~~ w(\Phi) = \sum_{c\in C} w(V_c) = \sum_{c\in C} \sum_{e\in E(V_c)} w(e),
\end{equation}
and $E(V_c)$ represents edges in the sub-graph induced by $V_c$. 
Intuitively, the objective maximizes the sum of all edge weights within the cliques.
The partition is \textit{orthogonal} since it may contain at most one vertex from every $U_k$.

It may not immediately be clear why maximizing the objective in \eqref{eq:objective} provides an optimal label mapping. 
Since the partition is orthogonal, each $V_c$ may represent a mapped speaker label.
By maximizing the total edge weights within cliques, we maximize the total relative overlap between speaker turns for speakers that are mapped to the same label. 
It is hard to demonstrate a theoretical correspondence between this objective and the DER metric that would eventually be used to evaluate the results of the system combination. 
However, we empirically demonstrate a correspondence using some recordings from the AMI dataset in Fig.~\ref{fig:weight_vs_der}. 
We note from the figure that as the objective (x-axis) improves, the DER decreases, suggesting that our proposed objective is a good proxy for minimizing the final DER.

\begin{figure}[t]
    \centering
    \includegraphics[width=0.8\linewidth]{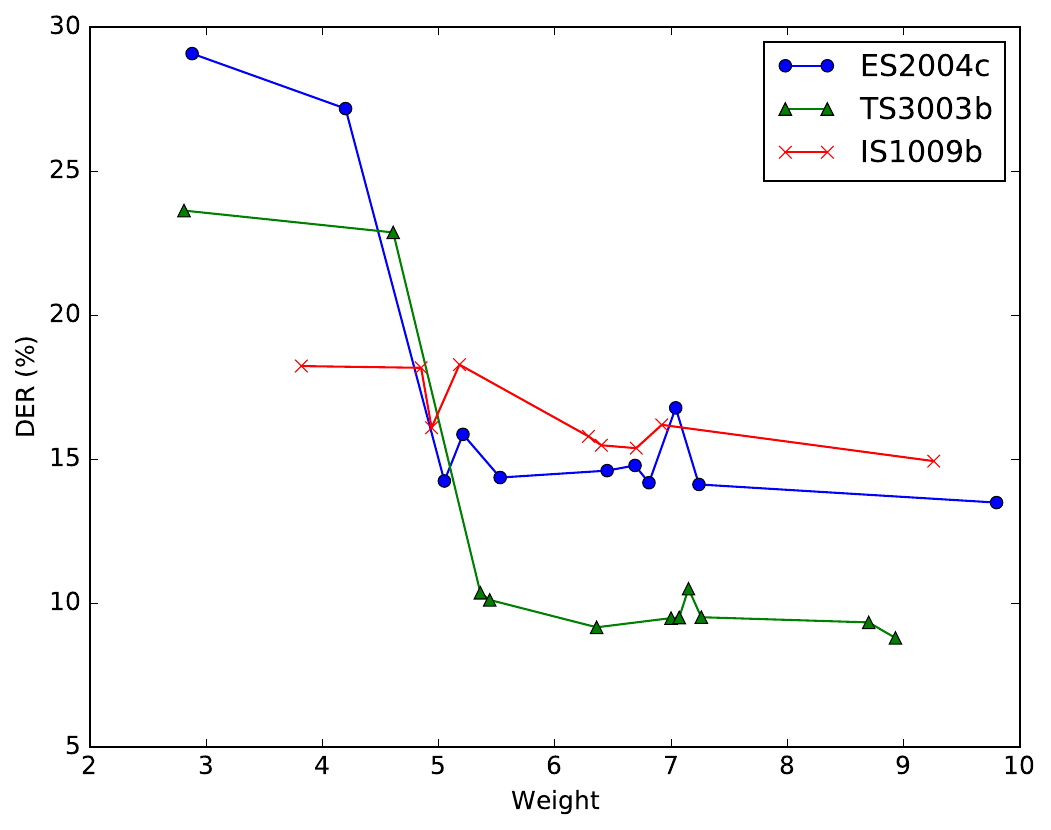}
    \caption{Partition weight $w(\Phi)$ versus diarization error rate (DER) for three arbitrarily chosen recordings from the AMI evaluation set, showing that DER tends to improve with weight.}
    \label{fig:weight_vs_der}
\end{figure}

\subsection{Some properties of graph \texorpdfstring{$\mathcal{G}$}{G}}

\begin{definition}[Tur\'an graph]
The Tur\'an graph $T(n,r)$ is a complete multi-partite graph formed by partitioning a set of $n$ vertices into $r$ subsets, with sizes as equal as possible, and connecting two vertices by an edge if and only if they belong to different subsets.
\end{definition}

\begin{lemma}
\label{lemma:turan}
$\mathcal{G}$ is equivalent to some $T(CK,K)$ Tur\'an graph, i.e., it is equivalent to a complete $K$-partite graph $K_{C,C,C,\ldots}$.
\end{lemma}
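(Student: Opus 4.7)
The plan is to establish the equivalence by unpacking the construction of $\mathcal{G}$ and then normalizing the part sizes. First I would note that by definition $E$ contains an edge $(u_k^i, u_\ell^j)$ for \emph{every} pair $k\neq \ell$, so $\mathcal{G}$ is already a complete multipartite graph whose parts are the hypothesis sets $U_1,\ldots,U_K$ (of sizes $c_1,\ldots,c_K$). The only gap between this and $T(CK,K)=K_{C,C,\ldots,C}$ is that some $c_k$ may be strictly less than $C$.

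To close this gap, I would pad each hypothesis $U_k$ with $C-c_k$ \emph{dummy} speaker nodes $\tilde{u}_k^{c_k+1},\ldots,\tilde{u}_k^{C}$, each with empty active-segment set $\Delta(\tilde{u}_k^{i})=\emptyset$. Plugging into the weight formula \eqref{eq:edge_weight} gives $w(\tilde{u}_k^{i},u_\ell^{j})=0$ for every incident edge. The resulting augmented graph $\mathcal{G}'$ is then a complete $K$-partite graph with every part of size exactly $C$, which is by definition $T(CK,K)$.

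It remains to argue that the padding does not change the label-mapping problem in any essential way. I would verify this by observing that any orthogonal clique partition of $\mathcal{G}'$ restricts to a valid orthogonal clique partition of $\mathcal{G}$ after deleting the dummies, and conversely any partition of $\mathcal{G}$ can be extended to $\mathcal{G}'$ by distributing dummies arbitrarily among the cliques $V_1,\ldots,V_C$. In both directions the objective $w(\Phi)=\sum_c \sum_{e\in E(V_c)} w(e)$ is preserved, because all edges touching a dummy carry weight $0$ and therefore contribute nothing to the sum.

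The only real subtlety — and what I would highlight as the main obstacle — is making precise what ``equivalent'' means. Since $\mathcal{G}$ and $T(CK,K)$ have different vertex sets whenever some $c_k<C$, the claim cannot be literal graph isomorphism; the correct reading is equivalence as instances of the weighted orthogonal clique-partition problem from \eqref{eq:objective}. Once this is stated explicitly, the padding argument above supplies a weight-preserving bijection between feasible solutions and hence between optima, which is exactly the equivalence the rest of the chapter needs in order to apply Tur\'an-graph machinery to label mapping.
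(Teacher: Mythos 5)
Your proposal is correct and follows essentially the same route as the paper's own proof: pad each deficient hypothesis set with dummy nodes, declare all edges incident to dummies to have weight zero, and observe a weight-preserving correspondence between clique partitions of $\mathcal{G}$ and of the completed graph $T(CK,K)$. Your explicit remark that ``equivalent'' means equivalence of the weighted clique-partition instances (not graph isomorphism) is a welcome clarification the paper leaves implicit; just note that defining dummy weights via \eqref{eq:edge_weight} with $\Delta(\tilde{u})=\emptyset$ gives an indeterminate $0/0$ on dummy--dummy edges, so it is cleaner to simply assign those weights to be zero, as the paper does.
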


\begin{lemma}
\label{lemma:exponential}
$\mathcal{G}$ has an exponential number of maximal cliques.
\end{lemma}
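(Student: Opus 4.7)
The plan is to leverage Lemma~\ref{lemma:turan} directly and count maximal cliques in the Tur\'an graph $T(CK,K)$, which is the complete $K$-partite graph with $K$ parts each of size $C$ (call the parts $U_1,\ldots,U_K$, matching the hypotheses notation).

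First I would characterize the structure of any clique in $\mathcal{G}$. Since each part $U_k$ is an independent set (no edges within a hypothesis), any clique contains at most one vertex from each $U_k$. Conversely, any transversal that picks at most one vertex per part is a clique, because every pair of vertices from distinct parts is joined by an edge (by the completeness of the multipartite graph).

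Next I would pin down the maximality condition. A clique $Q$ is maximal iff no vertex outside $Q$ can be added while preserving cliquehood. If $Q$ misses some part $U_k$ entirely, then any vertex $v \in U_k$ is adjacent to every vertex of $Q$ (which all lie in other parts), so $Q \cup \{v\}$ is still a clique, contradicting maximality. Hence a maximal clique must contain exactly one vertex from each of the $K$ parts. Conversely, any such transversal is maximal, because any additional vertex would lie in some $U_k$ already represented and would create a non-edge with the vertex $Q$ already picked from $U_k$.

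Finally, I would count. Each maximal clique corresponds bijectively to a choice of one vertex from each of the $K$ parts, of which there are exactly
\begin{equation}
\prod_{k=1}^{K} |U_k| \;=\; C^K
\end{equation}
such choices. For any nontrivial combination scenario $C \ge 2$, this grows exponentially in the number of hypotheses $K$, proving the claim. The main (mild) obstacle is simply making the maximality argument airtight in both directions; there is no serious technical difficulty, since everything follows from the complete multipartite structure guaranteed by Lemma~\ref{lemma:turan}. This exponential blow-up is exactly what motivates the approximation algorithms developed in the remainder of Section~\ref{sec:dl_mapping}, since naive enumeration of cliques is infeasible.
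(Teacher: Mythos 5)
Your proof is correct and follows essentially the same route as the paper: using the complete $K$-partite structure (via Lemma~\ref{lemma:turan}) to show that every maximal clique contains exactly one vertex per part, giving $C^K$ maximal cliques. You simply spell out the maximality argument in both directions, which the paper's own proof leaves implicit.
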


\begin{proof}
The proof is through a simple combinatorial argument. 
Since $\mathcal{G}$ is a complete $K$-partite graph, any maximal clique of $\mathcal{G}$ contains exactly 1 vertex from all its $K$ independent sets. 
Since each independent set has $C$ vertices, there are $C^K$ possibilities for a maximal clique.
\end{proof}

\subsection{An exponential algorithm based on clique enumeration}

In the previous section, we proved that $\mathcal{G}$ has an exponential number of maximal cliques.
Nevertheless, if $K$ is small, we can still consider a brute-force solution for \eqref{eq:objective} by enumerating all the cliques in the graph.
Such an algorithm is presented in Algorithm~\ref{alg:dl_mapping}, and it roughly follows the following steps after the graph construction.

\begin{enumerate}[nolistsep]
    \item Enumerate all the maximal cliques in $\mathcal{G}$. Let this set be denoted by $S$.
    \item Find the clique $V_c$ with the maximum weight in $S$.
    \item Add $V_c$ to the partition $\Phi$. Remove the vertices in $V_c$ from $\mathcal{G}$ and the associated edges.
    \item Repeat from Step 1 until no vertices remain in $\mathcal{G}$.
\end{enumerate}

\begin{algorithm}[t]
\setstretch{1.35}
\DontPrintSemicolon
  
  \KwInput{Graph $\mathcal{G} = (V,E,w)$}
  \KwOutput{Partition $\Phi$ = $V_1,\ldots,V_C$}
  
  $\Phi = \{\}$ 
  
  \tcp{Loop until no vertices remaining}
  \While{$V \neq \phi$}{
    $S$ = set of all maximal cliques in $V$ \tcp*{Enumerate all maximal cliques}
    
    $V_c$ = max($S$, key=$\sum_{e\in S_i} w(e)$) \tcp*{Get maximum weighted clique}
    
    $\Phi = \Phi \cup \{V_c\}$ \tcp*{Add clique to partition}
    
    $V = V \setminus \{V_c\}$ \tcp*{Remove clique vertices from V}
  }

\caption{Exponential mapping based on clique enumeration}
\label{alg:dl_mapping}
\end{algorithm}

\begin{theorem}
Algorithm~\ref{alg:dl_mapping} has time complexity $\mathcal{O}(C^K)$, where $K$ is the number of input hypotheses and $C$ is the maximum number of speakers in any hypothesis.
\end{theorem}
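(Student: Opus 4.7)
The plan is to bound the running time by analyzing two nested quantities: the number of maximal cliques examined in each outer iteration, and the number of iterations itself. I would lean directly on Lemmas~\ref{lemma:turan} and~\ref{lemma:exponential}, since the first tells me that removing a transversal preserves the complete multipartite structure, and the second immediately bounds the per-iteration enumeration cost.

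First I would establish that the outer \texttt{while} loop runs at most $C$ times. At every iteration, the selected maximum-weight clique $V_c$ is a transversal of the $K$ independent sets (it contains exactly one vertex from each part, since edges only exist across parts), so its removal decreases the size of each part by exactly one. By Lemma~\ref{lemma:turan}, the residual graph $\mathcal{G}_c$ is again a complete $K$-partite graph, now with $C - c + 1$ vertices per part. After $C$ iterations there are no vertices left, and the loop terminates.

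Second I would apply Lemma~\ref{lemma:exponential} at each iteration. The residual graph $\mathcal{G}_c$ contains exactly $(C - c + 1)^K$ maximal cliques, which can be enumerated by taking one vertex from each part. Each clique weight requires summing $\binom{K}{2}$ precomputed edge weights from \eqref{eq:edge_weight}, and selecting the maximum is linear in the number of enumerated cliques. Hence iteration $c$ costs $\mathcal{O}(K^{2}(C-c+1)^{K})$. Summing over $c$,
\begin{equation*}
T(C,K) \;=\; \mathcal{O}\!\left( K^{2} \sum_{c=1}^{C} (C-c+1)^{K} \right) \;=\; \mathcal{O}\!\left( K^{2} \sum_{j=1}^{C} j^{K} \right),
\end{equation*}
whose dominant contribution comes from the $j=C$ term. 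Treating $K$ as a fixed constant (as is customary in system combination, where only a handful of hypotheses are combined), this simplifies to the claimed $\mathcal{O}(C^{K})$.

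The main obstacle is the case $c_k < C$ for some hypothesis $k$, since then $\mathcal{G}$ is $K$-partite but not complete $K$-partite, and a transversal clique may fail to exist. I would handle this by the standard trick of padding each under-populated hypothesis with dummy speakers whose active-time segments are empty; by \eqref{eq:edge_weight} the associated edges carry weight zero, so adding them neither changes the argmax in \eqref{eq:objective} nor the resulting partition. This reduces the general case to the uniform one analyzed above, completing the bound.
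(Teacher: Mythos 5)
Your overall structure---pad to a complete $K$-partite graph (the same device as in the proof of Lemma~\ref{lemma:turan}), argue that each selected maximum-weight clique is a transversal so the loop runs $C$ times, and charge $(C-c+1)^K$ maximal cliques per iteration via Lemma~\ref{lemma:exponential}---is sensible and considerably more detailed than the paper's own argument, which simply observes that the very first call to the enumeration step must already touch all $C^K$ maximal cliques and concludes that the algorithm is exponential of order $C^K$. However, your final arithmetic step does not deliver the claimed bound: $\sum_{j=1}^{C} j^{K}$ is \emph{not} $\mathcal{O}(C^{K})$; it is $\Theta\bigl(C^{K+1}/(K+1)\bigr)$ (already for $K=1$ the sum is $C(C+1)/2$). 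The sum is not dominated by its largest term here, because for fixed $K$ consecutive terms shrink only by a factor $(1-1/C)^{K}\approx 1$, so roughly $C/(K+1)$ of the terms are within a constant of $C^{K}$. Your own accounting therefore yields $\mathcal{O}\bigl(K^{2}C^{K+1}\bigr)$, an extra factor of $C$ beyond the statement, and ``treating $K$ as a fixed constant'' cannot repair this since the surplus factor is $C$, not anything depending on $K$.

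To recover exactly $\mathcal{O}(C^{K})$ you would need either to argue as the paper does---charge only the dominant enumeration and read the theorem as asserting that the run time is exponential, on the order of $C^{K}$ (note the paper's proof really only lower-bounds the cost of the first enumeration step)---or to exploit the implementation detail that the clique weights are materialized once as a cost tensor of size at most $C^{K}$ (cf.\ Fig.~\ref{fig:cost_tensor}), with subsequent iterations merely masking out the slices of already-used vertices rather than re-enumerating from scratch; under that reading the total work is a constant number of passes over a $C^{K}$-sized object. As written, your derivation establishes a weaker bound than the one claimed, so the last simplification is a genuine gap; the padding trick and the transversal/iteration-count argument themselves are correct.
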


\begin{proof}
From Lemma~\ref{lemma:exponential}, since $\mathcal{G}$ has $C^K$ maximal cliques, it takes at least $\mathcal{O}(C^K)$ time to simply enumerate all the cliques, as is required in the first step of the algorithm. 
Hence, the mapping algorithm has complexity exponential in the size of the input. 
\end{proof}

In practice, we use tensor broadcasting operations to efficiently compute the weights of all cliques in the graph.
We build a \textit{cost tensor} $\mathbf{C} \in \mathbb{R}^{N_1\times \ldots \times N_K}$, where each element of the tensor represents the weight of the clique corresponding to that speaker group.
\begin{figure}[t]
    \centering
    \includegraphics[width=0.8\linewidth]{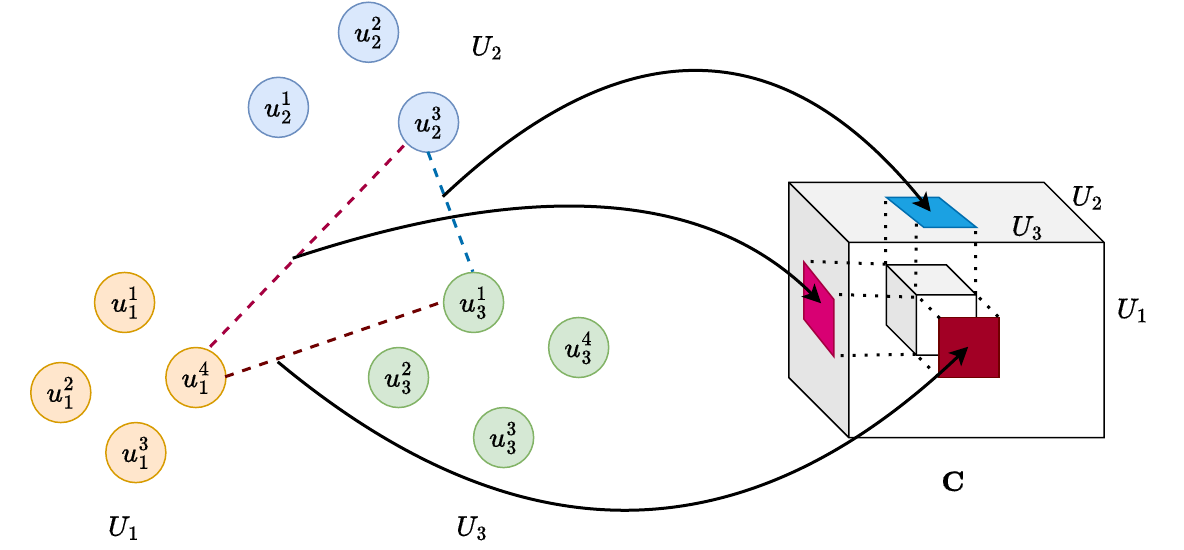}
    \caption{Computation of the \textit{cost tensor} $\mathbf{C}$ for the DOVER-Lap label mapping algorithm.}
    \label{fig:cost_tensor}
\end{figure}
This computation is illustrated in Fig.~\ref{fig:cost_tensor} for $K=3$. 
Since the hypotheses have 4, 3, and 4 speakers, respectively, $C \in \mathbb{R}^{4\times 3\times 4}$. 
In the figure, we show the computation of the tensor index corresponding to the tuple $(u_1^4,u_2^3,u_3^1)$. 
For this, we first compute all 3 pair-wise costs; the cost of mapping $u_1^4$ and $u_2^3$ together, for instance, is computed as the negative of the edge weight, which is the total overlapping duration between $u_1^4$ and $u_2^3$, divided by the sum of their speaking durations. 
Finally, $C(u_1^4,u_2^3,u_3^1)=-(M_{u_1^4,u_2^3}+M_{u_1^4,u_3^1}+M_{u_2^3,u_3^1})$.
This sum can be efficiently computed by first computing all the pairwise sums (i.e., the $K\choose{2}$ faces of the tensor, and then combining them using a broadcasting operation.

Nevertheless, due to the exponential dependency, the algorithm quickly becomes computationally intractable as $K$ increases. 
In Fig.~\ref{fig:mapping_time}, we computed the label mapping time for an increasing number of input hypotheses for the AMI and LibriCSS evaluation sets. 
For AMI (which contains 4 speakers; solid green line), the algorithm became infeasible beyond $K=10$. 
For LibriCSS (which contains 8 speakers; dotted green line), this limit was reached for an even smaller value of $K$, making combination impossible beyond 7 hypotheses.
As a comparison, we also show (in blue) the mapping time for the poly-time Hungarian algorithm we will describe in Section~\ref{sec:hungarian}.

\begin{figure}[t]
    \centering
    \includegraphics[width=0.8\linewidth]{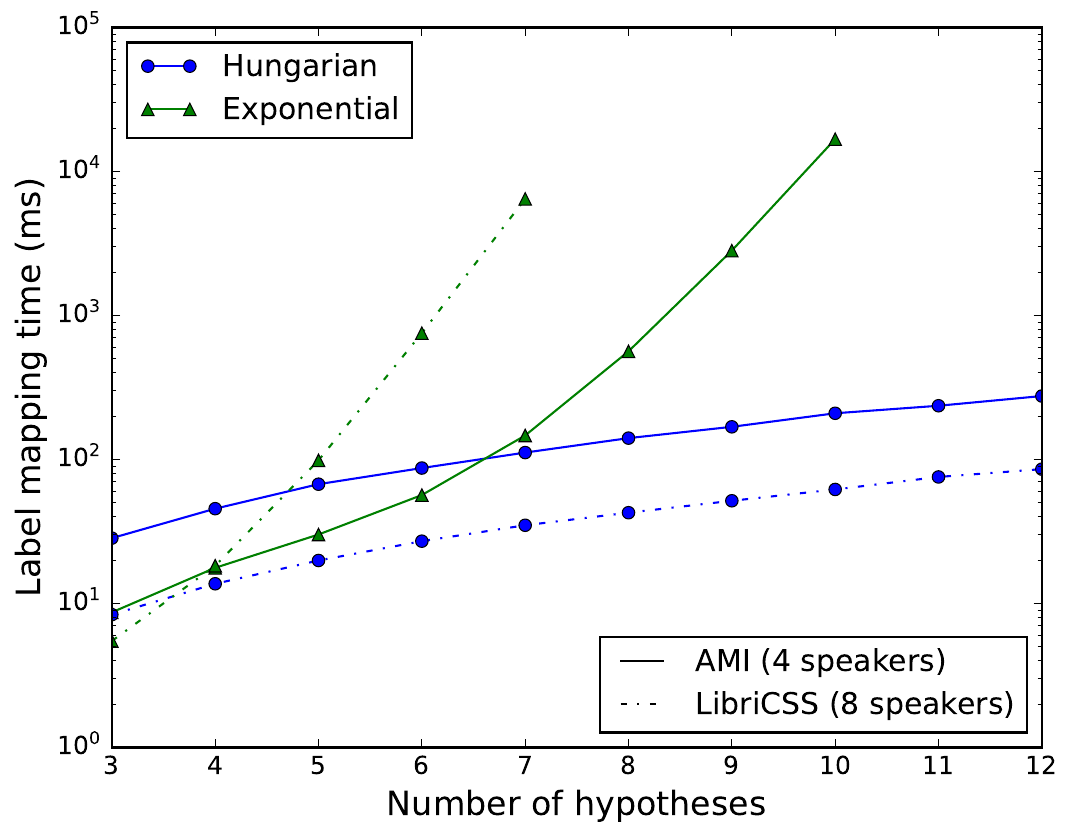}
    \caption{Label mapping time (in ms) for combining different number of hypotheses on the AMI and LibriCSS data. The y-axis is logarithmic.}
    \label{fig:mapping_time}
\end{figure}

\subsection{Polynomial-time Hungarian method}
\label{sec:hungarian}

In this section, we propose a poly-time algorithm based on greedy pair-wise combination of the hypotheses using the Hungarian method.
Let use first define some key concepts.

\subsubsection{Definitions and key concepts}

Consider the subgraph $\mathcal{G}_{ij}$ induced by considering all the vertices in sets $U_i$ and $U_j$, where $U_i$ and $U_j$ are two independent sets in $\mathcal{G}$. 
By construction, such a subgraph $\mathcal{G}_{ij}$ is a complete bipartite graph.
For our graphical formulation, we will first define a \textit{local} and \textit{global} mapping.

\begin{definition}
A \textit{matching} $M$ in $\mathcal{G}$ is a set of pairwise non-adjacent edges, none of which are loops; that is, no two edges share common vertices. A \textit{perfect matching} is a matching which covers all vertices.
\end{definition}

\begin{definition}
A \textit{local label mapping} $\psi$ is a matching on the bipartite graph $\mathcal{G}_{ij}$.
\end{definition}

\begin{definition}
A \textit{global label mapping} $\Psi$ is a function that assigns a vertex $u_k^i$ in the independent set $U_k$ to any one of the final cliques $V_c$.
\end{definition}
 
Our algorithm works by incrementally constructing the global label mapping using the local maps between pairs of independent sets. 
The local maps $\psi$ are computed using the Hungarian method as a subroutine, which is described later. 
We develop the following terminology to describe this method. 
For the sake of brevity, we drop the subscript $ij$ for the bipartite graph $\mathcal{G}_{ij}$, and just refer to it as $\mathcal{G}$.

\begin{definition}
A \textit{labeling} of $\mathcal{G}$ is a function $l: V \rightarrow \mathbb{R}$ such that 
$$\forall \{u,v\} \in E, \quad l(u) + l(v) \geq w(u,v).$$ 
\end{definition}

\begin{definition}
An \textit{equality subgraph} is a subgraph $\mathcal{G}_l = (V,E_l) \subseteq \mathcal{G} = (V,E)$, fixed on labeling $l$, such that
$$E_l = \{(u,v)\in E: l(u) + l(v) = w(u,v)\}.$$
\end{definition}

This means that $\mathcal{G}_l$ only includes those edges from the bipartite matching which allow the vertices to be perfectly feasible.

\begin{lemma}[The Kuhn-Munkres theorem]
\label{lemma:kuhn}
Given labeling $l$, if $\psi$ is a perfect matching on $\mathcal{G}_l$, then $\psi$ is a maximum-weight matching on $\mathcal{G}$.
\end{lemma}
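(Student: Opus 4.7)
The plan is to proceed with the classical duality argument underlying the Hungarian method: the sum of labels provides an upper bound on the weight of any matching, and a perfect matching living entirely inside the equality subgraph attains that bound.

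First, I would compute $w(\psi)$ exactly. Since $\psi \subseteq E_l$, every edge $(u,v) \in \psi$ satisfies $l(u) + l(v) = w(u,v)$ by definition of the equality subgraph. Because $\psi$ is a perfect matching, each vertex of $V$ appears in exactly one edge of $\psi$, so summing gives
\begin{equation}
w(\psi) \;=\; \sum_{(u,v)\in \psi} w(u,v) \;=\; \sum_{(u,v)\in \psi} \bigl(l(u) + l(v)\bigr) \;=\; \sum_{v \in V} l(v).
\end{equation}

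Second, I would upper bound the weight of any other matching $\psi'$ on $\mathcal{G}$ (not necessarily contained in $\mathcal{G}_l$). By the defining inequality of a labeling, $w(u,v) \leq l(u) + l(v)$ for every edge of $\mathcal{G}$. Since $\psi'$ is a matching, each vertex of $V$ is incident to at most one edge of $\psi'$, so
\begin{equation}
w(\psi') \;=\; \sum_{(u,v)\in \psi'} w(u,v) \;\leq\; \sum_{(u,v)\in \psi'} \bigl(l(u) + l(v)\bigr) \;\leq\; \sum_{v \in V} l(v) \;=\; w(\psi),
\end{equation}
which proves that $\psi$ is maximum-weight.

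The only real subtlety — and the step I would flag as the main obstacle — is the second inequality in the chain above, which requires that the label sum over vertices \emph{not} covered by $\psi'$ is non-negative. This is why the Hungarian method is always initialized with a non-negative labeling (e.g.\ $l(u) = \max_v w(u,v)$ on one side and $l(v) = 0$ on the other), and this non-negativity is preserved throughout its updates. I would therefore make the standing assumption of a non-negative labeling explicit, and note that this is automatically satisfied in our setting because the edge weights in \eqref{eq:edge_weight} lie in $[0,1]$, so the standard initialization applies. Alternatively, since the bipartite graph $\mathcal{G}_{ij}$ is complete with non-negative weights, any maximum-weight matching can be taken to be perfect, in which case the second inequality becomes an equality and the assumption on label signs can be dropped entirely.
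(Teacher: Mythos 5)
Your proof is correct and follows essentially the same duality argument as the paper: a perfect matching inside the equality subgraph attains the label sum $\sum_{v\in V} l(v)$, which the labeling inequality shows is an upper bound on the weight of any competing matching. The only difference is that the paper compares $\psi$ only against other \emph{perfect} matchings (so the bound needs no sign condition on $l$), whereas you extend the comparison to arbitrary matchings and correctly flag the non-negativity of labels needed for that stronger statement.
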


\begin{proof}
Let $\psi^{\prime}$ be any perfect matching in $\mathcal{G}$. 
By definition of a labeling function, and since $\psi^{\prime}$ is perfect,
$$w(\psi^{\prime}) = \sum_{(u,v)\in \psi^{\prime}} w(u,v) \leq \sum_{(u,v)\in \psi^{\prime}} l(u) + l(v) = \sum_{v\in V}l(v).$$

This means that $\sum_{v\in V}l(v)$ is an upper bound on the weight $w(\psi^{\prime})$ of any perfect matching $\psi^{\prime}$ of $\mathcal{G}$. 
Now, let us consider $\psi$.
$$ w(\psi) = \sum_{(u,v)\in \psi} w(u,v) = \sum_{(u,v)\in \psi} l(u) + l(v) = \sum_{v \in V} l(v) \geq w(\psi^{\prime}). $$

Thus, $\psi$ is a maximum-weight matching in $\mathcal{G}$.
\end{proof}

\begin{definition}
Given a bipartite graph $\mathcal{G}$ and a matching $\psi$, the graph is said to contain an \textit{alternating path} if there exists a path in the graph which has alternating edges in $\psi$.
\end{definition}

\begin{definition}
An \textit{augmenting path} is a path in such a graph which has its endpoints (start and end vertices) unmatched (i.e., not in the matching).
\end{definition}

\subsubsection{The Hungarian method}

An important component of our algorithm is the Hungarian method for computing the maximum-weighted matching in bipartite graphs, due to Kuhn and Munkres~\cite{Kuhn1955TheHM, Munkres1957AlgorithmsFT}. 
In our algorithm, we use this as a subroutine to compute the local label mapping $\psi$.
In this section, we describe the algorithm and prove that it returns a maximum bipartite matching in $\mathcal{O}(n^3)$ time. 

The idea behind the algorithm is to find a perfect matching on some labeling $l$ on an equality subgraph, and use Lemma~\ref{lemma:kuhn} to claim that it is a maximum-weighted matching on $\mathcal{G}$. 
To achieve this, we start with an empty matching $\psi = \phi$ and a valid $l$ given as
$$ l ::= \forall x\in X, y\in Y : l(y) = 0, l(x) = \max_{y' \in Y}w(x,y')$$

We then repeat the steps of \textit{augmenting the matching} and \textit{improving the labeling}, until we obtain a perfect matching. 
These two steps are outlined below.

\noindent
\textbf{Step 1: Augmenting the matching}

Given $\mathcal{G}_l$ and some matching $\psi$, we find unmatched vertices $u,v\in V$ such that there is an augmenting path $\alpha$ from $u$ to $v$. 
If such a pair of vertices exist, we create this augmenting path and flip the edges in the matching, i.e., we replace the edges in $\psi$ with edges in the augmenting path that are in $E_l \setminus \psi$. 
This process increases the size of the matching, since we added previously unmatched vertices.

\noindent
\textbf{Step 2: Improving the labeling}

Let $S \subseteq X$ and $T \subseteq Y$ represent the set of vertices on either side of an ``almost'' augmenting path in $\psi$. 
Let $N_l(S) = \{v: \forall u\in S, (u,v)\in E_l\}$. 
If $N_l(S) = T$, then we cannot increase the alternating path and augment, so we must improve the labeling. 

Let $\delta_l = \min_{u\in S, v \notin T}(l(u) + l(v) - w(u,v))$. 
We improve $l$ to $l'$ as
\begin{equation}
l'(r) = 
\begin{cases}
l(r) - \delta_l & \text{if}~~r\in S, \\
l(r) + \delta_l & \text{if}~~r\in T, \\
l(r) & \text{otherwise}.
\end{cases}
\end{equation}
It is easy to show that $l'$ is a valid labeling by examining all modified edges.

Using the two subroutines of augmenting and improving described above, the Hungarian method iterates until $\psi$ is a perfect matching for $\mathcal{G}_l$.

\begin{lemma}
\label{lemma:hungarian}
The Hungarian method runs in $\mathcal{O}(n^3)$ time, where $n$ is the number of vertices in $\mathcal{G}$.
\end{lemma}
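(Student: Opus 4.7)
The plan is to bound the total running time by counting three nested quantities: the number of augmenting phases, the number of label-improvement steps within a phase, and the work done per label improvement (including finding alternating paths). Because the algorithm terminates once the matching is perfect, and each augmenting step strictly increases $|\psi|$ by one, there are at most $n/2$ such augmenting phases; this gives the outermost factor of $O(n)$.

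Next, within a single phase I would analyze the alternating loop between improving the labeling and attempting to extend the alternating tree rooted at an unmatched vertex in $X$. Each label improvement adds at least one new vertex to $T$ (specifically, some $v \notin T$ for which the slack $l(u)+l(v)-w(u,v)$ becomes zero after subtracting $\delta_l$), so at most $n$ label improvements can occur before either a new augmenting path is found or the tree saturates all of $Y$. Thus each phase contains $O(n)$ label updates.

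The delicate part — and the main obstacle to getting $O(n^3)$ rather than $O(n^4)$ — is showing that the per-step cost is $O(n)$ amortized rather than $O(n^2)$. I would introduce slack variables $\mathrm{slack}(v) = \min_{u\in S}\bigl(l(u)+l(v)-w(u,v)\bigr)$ for $v\notin T$, maintained incrementally as $S$ grows. Then (i) computing $\delta_l = \min_{v\notin T}\mathrm{slack}(v)$ costs $O(n)$; (ii) updating labels on $S\cup T$ costs $O(n)$; and (iii) when a new vertex $u$ is added to $S$, refreshing the slacks costs $O(n)$. Extending or rebuilding the alternating tree after each label update also costs $O(n)$ because only one new tight edge is introduced. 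Hence each of the $O(n)$ inner steps runs in $O(n)$.

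Combining the three factors gives $O(n)$ phases $\times$ $O(n)$ inner iterations $\times$ $O(n)$ work per iteration, i.e.\ $O(n^3)$ overall. Correctness of termination at a maximum-weight perfect matching follows from Lemma~\ref{lemma:kuhn}, since the final $\psi$ is perfect on $\mathcal{G}_l$ for a valid labeling $l$. I would finish by remarking that the $O(n^2)$ bound on work-per-phase is tight when $\mathcal{G}$ is dense, as in our setting where $\mathcal{G}_{ij}$ is a complete bipartite graph between independent sets of size up to $C$, so no further improvement arises from sparsity.
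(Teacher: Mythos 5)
Your proof is correct and follows essentially the same decomposition as the paper's: $\mathcal{O}(n)$ augmenting rounds, $\mathcal{O}(n)$ labeling improvements per round, and $\mathcal{O}(n)$ work per improvement, giving $\mathcal{O}(n^3)$ overall. The only difference is that you explicitly maintain slack variables $\mathrm{slack}(v)$ to justify the $\mathcal{O}(n)$ cost of computing $\delta_l$, a detail the paper's proof asserts without elaboration (a naive recomputation would be $\mathcal{O}(n^2)$ per improvement), so your version is if anything slightly more complete.
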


\begin{proof}
Each of the subroutines of augmenting the matching and improving the labeling increases the size of the matching by 1 edge. 
Since there can be at most $\frac{n}{2}$ edges in a matching, it takes $\mathcal{O}(n)$ rounds. 
Augmenting the matching requires $\mathcal{O}(n)$ time to find the right vertex, if one exists, and another $\mathcal{O}(n)$ to flip the matching. 
Improving the labeling also requires $\mathcal{O}(n)$ to find $\delta_l$. 
However, it can occur $\mathcal{O}(n)$ times if no augmenting path is found, therefore it requires $\mathcal{O}(n^2)$ steps in a single round. 
Hence, the total running time is $\mathcal{O}(n^3)$.
\end{proof}

\subsubsection{Algorithm}

We now describe the whole algorithm for label mapping in Algorithm~\ref{alg:dover_mapping}. 
The algorithm starts with a pair of hypotheses (independent sets), and computes a matching (local map) for them using the Hungarian method described in the previous section. 
It then \textit{merges} the pair w.r.t. the map $\psi$. 
This \textit{merge} operation is described next.

\begin{algorithm}[t]
\setstretch{1.35}
\DontPrintSemicolon
  
  \KwInput{Graph $\mathcal{G} = (V,E,w)$, $U = \{U_1,\ldots,U_K\}$}
  \KwOutput{Partition $\Phi$ = $V_1,\ldots,V_C$}
  
  $\Psi = \{\}$
 
  $\upsilon = U_1$
 
  \For{$k$ in $[2,K]$}{
    
    $\psi$ = Hungarian($\upsilon$, $U_k$) \tcp*{Compute local map}
    
    $\upsilon$ = Merge($\upsilon, U_k, \psi$) \tcp*{Merge pair w.r.t. local map}
    
    $\Psi$ = Update($\Psi,\psi$) \tcp*{Update global map}
  }
  
  $\Phi$ = Partition($U$,$\Psi$) \tcp*{Compute partition using global map}
  
\caption{Hungarian label mapping}
\label{alg:dover_mapping}
\end{algorithm}

Let $\mathcal{G}_{ij}$ be a bipartite graph induced from the graph $\mathcal{G}$ by considering the independent sets $U_i$ and $U_j$, as described earlier. 
Let $\psi$ be a matching on $\mathcal{G}_{ij}$, s.t. $\psi = \{u_1v_1, u_2,v_2, \ldots, u_C,v_C\}$, where $u_c \in U_i$ and $v_c \in U_j$. 
We merge $U_j$ with $U_i$ by identifying the vertex pairs $v_c$ with $u_c$. 
By Lemma~\ref{lemma:turan}, we can assume that we have performed graph completion on $\mathcal{G}$ and so there are no unmatched vertices. 
Let $U_{i(j)}$ denote the new vertex set. 
We remove the loops and edges within $U_{i(j)}$, and replace multiple edges by a single edge with weight equal to the sum of weights of the edges it is replacing.
The new weighted graph ($\mathcal{G}_{i(j)},w'$) is called the \textit{merge} of $U_j$ to $U_i$ from $\mathcal{G}$ along $\psi$. 
Clearly, $\mathcal{G}_{i(j)}$ is a ($k$-1)-partite graph.

In each iteration of the algorithm, we reduce the number of independent sets by one because of this merge operation. 
Furthermore, after every merge, we update the global mapping $\Psi$ using the local map $\psi$. 
This process simply involves creating a transitive map of the form $\psi_K(\ldots(\psi_2(\psi_1(\cdot))$ by composing the local label mappings at each iteration. 

\begin{theorem}
Algorithm~\ref{alg:dover_mapping} runs in polynomial time.
\end{theorem}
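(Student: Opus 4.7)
The plan is to bound the cost of each iteration of Algorithm~\ref{alg:dover_mapping} and multiply by the number of iterations, using Lemma~\ref{lemma:hungarian} as the main workhorse. Let $C$ denote the maximum number of speakers in any hypothesis and $K$ the number of hypotheses, so the input graph $\mathcal{G}$ has at most $CK$ vertices.

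First I would observe that the outer loop executes exactly $K-1$ times. The key invariant I would establish up front is that the running merged independent set $\upsilon$ never exceeds $C$ vertices: initially $|\upsilon| = |U_1| \leq C$, and the \textsc{Merge} subroutine identifies each matched vertex of $U_k$ with a vertex of $\upsilon$, so by Lemma~\ref{lemma:turan} (after the conceptual graph completion that equalizes part sizes to $C$) the merged vertex set $\upsilon$ has at most $C$ vertices after the operation. Consequently, every call to \textsc{Hungarian} at iteration $k$ is made on a bipartite graph with at most $2C$ vertices.

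Next I would bound each sub-routine inside the loop. By Lemma~\ref{lemma:hungarian}, one call to \textsc{Hungarian} on a bipartite graph with $\mathcal{O}(C)$ vertices costs $\mathcal{O}(C^3)$. The \textsc{Merge} operation pairs up matched vertices, removes intra-class edges, and collapses multi-edges by summing weights; over a graph with $\mathcal{O}(C^2)$ edges this is $\mathcal{O}(C^2)$. The global-map update \textsc{Update}$(\Psi,\psi)$ just composes $\psi$ with the current $\Psi$, which amounts to updating the image of at most $C$ labels and is $\mathcal{O}(C)$. Summing over $K-1$ iterations gives $\mathcal{O}(K \cdot C^3)$ inside the loop. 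Finally, the concluding \textsc{Partition}$(U,\Psi)$ step assigns each of the $\mathcal{O}(CK)$ vertices in $V$ to its image clique in $\mathcal{O}(1)$, contributing $\mathcal{O}(CK)$. Adding the two contributions yields a total running time of $\mathcal{O}(K \cdot C^3)$, which is polynomial in the input size $|V| = \mathcal{O}(CK)$.

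The only non-routine step is justifying the invariant $|\upsilon| \leq C$ after each merge; without it, the per-iteration cost of the Hungarian subroutine could blow up as the running set accumulated vertices, and the overall bound would no longer be polynomial. I would address this explicitly by appealing to Lemma~\ref{lemma:turan} to pad every hypothesis to exactly $C$ vertices (adding dummy speakers with zero edge weights), so that the matching $\psi$ at iteration $k$ is necessarily perfect on $2C$ vertices and \textsc{Merge} produces a set of size exactly $C$. With that invariant in hand, the rest of the analysis is a direct accounting argument and Lemma~\ref{lemma:hungarian} does the heavy lifting.
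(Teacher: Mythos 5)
Your proposal is correct and follows essentially the same route as the paper: invoke Lemma~\ref{lemma:hungarian} to bound each of the $K-1$ loop iterations by $\mathcal{O}(C^3)$, giving $\mathcal{O}(C^3K)$ overall. The extra detail you supply (the size-$C$ invariant on the merged set via Lemma~\ref{lemma:turan}, and the accounting for the merge, update, and partition steps) is handled implicitly by the paper in its description of the merge operation, so your write-up is just a more explicit version of the same argument.
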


\begin{proof}
From Lemma~\ref{lemma:hungarian}, each iteration of the loop runs in $\mathcal{O}(C^3)$ time, and there are a total of $K-1$ iterations. 
Hence, the algorithm finishes in $\mathcal{O}(C^3K)$ time.
\end{proof}

\begin{theorem}
\label{thm:greedy}
Algorithm~\ref{alg:dover_mapping} is a $\frac{1}{C}$-approximation for the label mapping problem.
\end{theorem}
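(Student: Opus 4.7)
The plan is to bound both $\mathrm{OPT}$ and the algorithm's weight $\mathrm{ALG} = \sum_{k=2}^K w(\psi_k)$ against a common proxy, namely the total edge weight $W = \sum_{i<j} w(E_{ij})$ of the multipartite graph $\mathcal{G}$. The target inequalities are $\mathrm{OPT} \leq W$ and $\mathrm{ALG} \geq W/C$, which immediately imply $\mathrm{ALG} \geq \mathrm{OPT}/C$.

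The upper bound $\mathrm{OPT} \leq W$ is essentially free: by definition $\mathrm{OPT} = \sum_c w(V_c^*)$ sums the weights of a subset of edges in $E$, with no edge counted twice since the cliques $V_c^*$ are vertex-disjoint. The real work is the lower bound on $\mathrm{ALG}$. I would first track how weight is preserved by the merge operation. Because a merge replaces multiple edges between a merged vertex $u^*\in \upsilon_{k-1}$ and a vertex $v \in U_k$ by a single edge whose weight is the sum of the originals, the bipartite graph $\upsilon_{k-1}\times U_k$ carries total edge weight $W_k = \sum_{i<k} w(E_{ik})$. Telescoping over $k$ then yields $\sum_{k=2}^K W_k = W$.

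Second, I would lower bound $w(\psi_k)$. Using Lemma~\ref{lemma:turan}, we may view $\mathcal{G}$ as a complete Tur\'an graph so that $|\upsilon_{k-1}| = |U_k| = C$ and the bipartite graph at step $k$ has maximum degree at most $C$. By K\"onig's edge-coloring theorem, its edges decompose into $C$ matchings whose weights sum to $W_k$; by an averaging argument, at least one of these matchings has weight $\geq W_k/C$. Since the Hungarian method returns the maximum-weight matching, $w(\psi_k) \geq W_k/C$. Summing over $k$ gives $\mathrm{ALG} \geq W/C \geq \mathrm{OPT}/C$.

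The main obstacle is the weight-preservation step: one must verify that after the merge operation the weight of the resulting bipartite graph at step $k$ aggregates exactly as $\sum_{i<k} w(E_{ik})$, with no double-counting and no loss from removed intra-set edges. Once this invariant is established, K\"onig's theorem (a purely combinatorial statement on the underlying simple graph, independent of weights) applies directly, and the averaging argument closes the proof.
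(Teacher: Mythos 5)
Your proof is correct and follows essentially the same route as the paper's: both bound $\mathrm{OPT}$ above by the total graph weight $w(\mathcal{G})$, use the weight-preserving merge to account for $w(\mathcal{G})$ across iterations (your telescoping is just the paper's induction unrolled), and lower-bound each Hungarian matching by a $\frac{1}{C}$ fraction of the corresponding bipartite weight via an averaging argument over matchings. The only difference is in the micro-justification of that last step: you average over the $C$ perfect matchings of a K\"onig 1-factorization of $K_{C,C}$, whereas the paper averages over all $C!$ perfect matchings (each edge lying in $(C-1)!$ of them); both give $w(\psi_k)\geq W_k/C$, so the arguments coincide in substance.
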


\begin{sketch}
The proof is obtained by induction on the number of hypotheses, $K$. 
The base case holds trivially since the Hungarian algorithm returns a maximal matching.
The inductive case also holds by algebraic manipulation of the sum of weights over the first matching and the remaining matching.
For a detailed proof, please refer to Appendix~\ref{sec:appendix_greedy}.
\end{sketch}

It can be shown through a reduction from the $c$-way $k$-coloring problem that there is no efficient ``deterministic'' algorithm with a better approximation ratio~\cite{He2000ApproximationAF}. 
In Appendix~\ref{sec:appendix_randomized}, we describe a label mapping algorithm based on ``randomized'' local search, which obtains a close to optimal approximation ratio in expectation.

\section{Label voting in DOVER-Lap}
\label{sec:dl_voting}

Similar to DOVER, we perform weighted majority voting on ``regions'' of the input. 
However, unlike the former, DOVER-Lap can assign multiple speakers to a region.
Consider a region $T$ (cf. intervals in Fig.~\ref{fig:dover_vs_dl}), and suppose the hypotheses $U_1,\ldots,U_K$ contain $n_1^T,\ldots,n_K^T$ speakers, respectively, in this region. 
Then, we compute the weighted mean rounded to the nearest integer as
\begin{equation}
\label{eq:dl_weight}
    \hat{n}_T = \nint{\sum_{k=1}^K w_k n_k^T},
\end{equation}
where $w_k$ are DOVER-like rank-based weights obtained by ranking the hypotheses in increasing order of their total relative overlap duration with all other hypotheses.
The highest weighted $\hat{n}_T$ speakers are then assigned to the region $T$. 
In case of ties, we assign all the tied speakers to the regions. 
Since we use rank-based weighting of the hypotheses, such ties only occur in very few regions. 
This assignment strategy allows multiple overlapping speakers to be present in the combined diarization output.

\section{System combination experiments}
\label{sec:dl_results}

\subsection{Setup}

We performed experiments on two datasets: the AMI meeting corpus~\cite{Carletta2005TheAM}, and the LibriCSS data~\cite{Chen2020ContinuousSS} (Section~\ref{sec:intro_data}). 
For AMI experiments, we used the mixed-headset recordings, while for LibriCSS, we selected the the first channel of the array in each recording.

We combine diarization results from the following systems.

\begin{enumerate}[leftmargin=*]

\item \textbf{Overlap-aware spectral clustering (SC)}~\cite{Raj2021MulticlassSC}: This is our proposed method from Chapter~\ref{chap:oasc}. 
It performs overlap-aware diarization by reformulating spectral clustering as a constrained optimization problem, and then discretizing it under the overlap constraints. 
An oracle SAD is used to remove non-speech segments.

\item \textbf{VB-based overlap assignment (VB)}~\cite{Bullock2019OverlapawareDR}: This method leverages Variational Bayes (VB)-HMM used originally for diarization in \cite{Dez2018SpeakerDB}. 
Using the output of an externally trained overlap detector, overlapping frames are assigned the top two speakers from the posterior matrix computed using VB inference. 
The same SAD and overlap detector as in the system above were used.

\item \textbf{Region proposal networks (RPN)}~\cite{Huang2020SpeakerDW}: It combines segmentation and embedding extraction into a single neural network, and jointly optimizes them using an objective function that consists of boundary prediction and speaker classification components. 
For AMI, we trained the RPN on force-aligned data from the AMI training set; for LibriCSS, it was trained on simulated meeting-style recordings with partial overlaps generated using utterances from the LibriSpeech~\cite{Panayotov2015LibrispeechAA} training set. 
A post-processing step was applied using oracle SAD segments to filter non-speech.

\item \textbf{Target-speaker voice activity detection (TS-VAD)}~\cite{Medennikov2020TargetSpeakerVA}: It takes conventional speech features (e.g., MFCC) along with i-vectors for each speaker as inputs and produces frame-level activities for each speaker using a neural network with a set of binary classification output layers. 
The initial estimates for the speaker i-vectors were obtained using a spectral clustering system. 
For training the model for LibriCSS, we created simulated meeting-style data similar to that used for training the RPN model.
\end{enumerate}

\subsection{Diarization results on AMI}

Table~\ref{tab:doverlap_ami_results} shows the results on the AMI mix-headset data. 
We obtained diarization outputs using the VB, SC, and RPN models, and then combined them with our proposed DOVER-Lap method, using different label mapping algorithms.
The results are presented in terms of missed speech (MS), false alarm (FA), speaker confusion error (SE), and total diarization error rate (DER).
We used \texttt{spyder}\footnote{\url{https://github.com/desh2608/spyder}} for DER-based sorting in DOVER-Lap, and also for evaluating the final performances.

All three methods resulted in combined diarization outputs that outperformed the single best diarization system.
Since we were only combining 3 systems, we were able to use the exponential mapping algorithm, which resulted in the best DER of 19.86\% since it uses global clique weights to perform mapping.
The polynomial-time Hungarian mapping algorithm resulted in 20.46\% DER, which is still a 1\% absolute improvement over the best system.
Nevertheless, since it performs pair-wise mapping, the resulting SE was 0.6\% worse than that obtained by the exponential method.

\begin{table}[t]
\centering
\caption{System combination experiments using DOVER-Lap on the AMI evaluation set, reported in terms of MS, FA, SE, and DER. We combined 3 overlap-aware hypotheses: overlap-aware SC, VB-based overlap assignment, and regional proposal networks.}
\label{tab:doverlap_ami_results}
\begin{tabular}{@{}lrrrr@{}}
\toprule
\textbf{Method} & \textbf{MS} & \textbf{FA} & \textbf{SE} & \textbf{DER} \\
\midrule
Overlap-aware SC~\cite{Raj2021MulticlassSC} & 11.48 & 2.27 & 9.81 & 23.56 \\
VB-based overlap assignment~\cite{Bullock2020OverlapawareDR} & 9.84 & 2.06 & 9.60 & 21.50 \\
Region Proposal Networks~\cite{Huang2020SpeakerDW} & 9.49 & 7.68 & 8.25 & 25.42 \\
\midrule
\textbf{DOVER-Lap} & & & & \\
~~ Exponential mapping & 9.96 & 2.16 & 7.75 & 19.86 \\
~~ Hungarian mapping & 9.98 & 2.13 & 8.35 & 20.46 \\
~~ Randomized local search & 9.97 & 2.15 & 7.92 & 20.05 \\
\bottomrule
\end{tabular}
\end{table}

For randomized local search (RLS), we set $N$ and $M$ in Algorithm~\ref{alg:randomized} (Appendix~\ref{sec:appendix_randomized}) as 1000 and $2K+1$, respectively, where $K$ is the number of systems to be combined. 
As expected, RLS outperformed the Hungarian mapping algorithm, and the gains come primarily from lower speaker error (7.92\% compared with 8.35\%). 
However, this difference in performance is fairly small, especially when we consider that the RLS method requires a much longer processing time. 
This may be because the theoretical bounds are designed to hold in the setting when the size of inputs is fairly large. 
In our setting of combining diarization hypothesis, these ``large number'' assumptions are violated. 
Furthermore, while improving the objective in \eqref{eq:objective} typically leads to an improvement in DER, the relationship is not strictly monotonic as seen in Fig.~\ref{fig:weight_vs_der}.

Still, these results have important implications. 
Our experiments with the RLS method indicates that even with a theoretically stronger algorithm, it may not be possible to do much better than the fast Hungarian-based mapping algorithm due to the constraints of our setting. 
As such, it is unlikely that any further advances in combination performances under this framework would be obtained from better label mapping methods. 

\subsection{Diarization results on LibriCSS}

In Table~\ref{tab:dl_libricss}, we show the DER results on LibriCSS for four baseline diarization systems and DOVER-Lap with different label mapping methods, with a break down by overlap condition.
We also report a further break down by missed speech, false alarm, and speaker confusion, in Table~\ref{tab:libri_breakdown}.

\begin{table}[t]
\centering
\caption{Diarization performance on LibriCSS evaluation set (sessions 2-10), evaluated condition-wise, in terms of \% DER. 0S and 0L refer to 0\% overlap with short and long inter-utterance silences, respectively. The DL results are using rank-based weighting.}
\label{tab:dl_libricss}
\begin{adjustbox}{max width=\linewidth}
\begin{tabular}{@{}lccccccc@{}}
\toprule
\multicolumn{1}{c}{\multirow{2}{*}{\textbf{Method}}} & \multicolumn{6}{c}{\textbf{Overlap ratio in \%}} & \multicolumn{1}{c}{\multirow{2}{*}{\textbf{Average}}} \\
\cmidrule(r{4pt}){2-7}
\multicolumn{1}{c}{} & \multicolumn{1}{c}{\textbf{0L}} & \multicolumn{1}{c}{\textbf{0S}} & \multicolumn{1}{c}{\textbf{10}} & \multicolumn{1}{c}{\textbf{20}} & \multicolumn{1}{c}{\textbf{30}} & \multicolumn{1}{c}{\textbf{40}} & \multicolumn{1}{c}{} \\
\midrule
VB & 3.85 & 3.84 & 6.46 & 8.20 & 12.60 & 13.39 & 8.59 \\
SC & 2.57 & 3.41 & 6.80 & 10.03 & 13.86 & 15.17 & 9.34 \\
RPN & 4.45 & 9.11 & 8.33 & 6.68 & 11.59 & 14.21 & 9.50 \\
TS-VAD & 5.99 & 4.63 & 6.62 & 7.28 & 10.31 & 9.54 & 7.62 \\
\midrule
\textbf{DOVER-Lap} & & & & & & & \\
~~ Exponential & 1.77 & 1.74 & 3.47 & 4.09 & 6.66 & 6.83 & 4.38 \\
~~ Hungarian & 1.98 & 1.66 & 3.38 & 4.39 & 6.72 & 6.29 & 4.33 \\
~~ RLS  & 1.77 & 1.74 & 3.39 & 4.09 & 6.66 & 6.83 & 4.36 \\
\bottomrule
\end{tabular}
\end{adjustbox}
\end{table}

Similar to the results on AMI, we found that DOVER-Lap improved the average DER over the single best system (TS-VAD, in this case) significantly.
Specifically, we obtained a 43.2\% relative DER improvement, from 7.62\% to 4.33\%. 
This improvement was consistent across the different overlap conditions, even though the single best system themselves may differ depending on the condition. 
For instance, the clustering-based methods were better on low overlaps, while the supervised methods performed better on high overlap conditions.
DOVER-Lap was able to get the best of both techniques and obtain the best DERs across the board.
If we look at the performances on the overlapping regions (shown in the smaller font), we see that the Hungarian mapping outperforms other methods significantly, which results in better overall DER performance.

\begin{table}[t]
\centering
\caption{Diarization result break-down on LibriCSS evaluation set, in terms of \% missed speech (MS), false alarm (FA), and speaker confusion (Conf.). The numbers in smaller font are the corresponding error rates computed on only the overlapping regions.}
\label{tab:libri_breakdown}
\begin{tabular}{@{}lcccc@{}}
\toprule
\textbf{Method} & \textbf{MS} & \textbf{FA} & \textbf{Conf.} & \textbf{DER} \\ \midrule
VB & 1.68 {\scriptsize 8.77} & 0.48 {\scriptsize 0.00} & 6.42 {\scriptsize 19.86} & 8.59 {\scriptsize 28.63} \\
SC & 2.52 {\scriptsize 13.35} & 1.10 {\scriptsize 0.00} & 5.72 {\scriptsize 17.57} & 9.34 {\scriptsize 30.92} \\
RPN & 2.87 {\scriptsize 7.61} & 3.33 {\scriptsize 3.21} & 3.30 {\scriptsize 4.16} & 9.50 {\scriptsize 14.97} \\
TS-VAD & 3.24 {\scriptsize 11.72} & 1.52 {\scriptsize 0.78} & 2.86 {\scriptsize 4.88} & 7.62 {\scriptsize 17.37} \\
\midrule
\textbf{DOVER-Lap} & & & & \\
~~ Exponential & 1.73 {\scriptsize 8.52} & 0.77 {\scriptsize 0.02} & 1.88 {\scriptsize 4.81} & 4.38 {\scriptsize 13.35} \\ 
~~ Hungarian & \textbf{1.64} {\scriptsize 8.14} & 0.77 {\scriptsize 0.12} & 1.92 {\scriptsize 4.36} & \textbf{4.33} {\scriptsize 12.62} \\
~~ RLS & 1.73 {\scriptsize 8.52} & 0.77 {\scriptsize 0.02} & \textbf{1.86} {\scriptsize 4.81} & 4.36 {\scriptsize 13.36} \\
\bottomrule
\end{tabular}
\end{table}

\section{Late fusion for multi-microphone diarization}
\label{sec:dl_multi}

The DOVER-Lap algorithm combines diarization hypotheses, irrespective of the source of these hypotheses. 
In the experiments above, we have applied it for combining different diarization systems; however, it can also be applied to several other use cases. 
For instance, we may have a single-channel diarization system, but input signals from an array microphone. 
In such cases, the system can be independently run on each channel, and the outputs can be combined using DOVER-Lap --- this is a classic ``late fusion'' application (in contrast to early fusion techniques such as beamforming~\cite{HaebUmbach2019SpeechPF,Hain2007TheAS}).

In this section, we demonstrate the application of DOVER-Lap for late fusion on array microphones. 
We conducted our investigation on the LibriCSS dataset, which has 7 microphones arranged in a circular array. 
For our diarization system, we used the overlap-aware spectral clustering (SC)~\cite{Raj2021MulticlassSC} method described earlier. 
As shown in Table~\ref{tab:libri_breakdown}, the method obtained a DER of 9.34\% on LibriCSS using a single microphone.

Table~\ref{tab:fusion} shows the results for multichannel diarization using late fusion with DOVER-Lap (using the exponential mapping method). 
The single-channel system obtained a DER of 9.40\% on average (with a standard deviation of 0.23\%). 
For early fusion, we applied online weighted prediction error (WPE)~\cite{Nakatani2010SpeechDB} based dereverberation followed by delay-and-sum beamforming on the input channels. 
We used the Nara implementation~\cite{Drude2018NARAWPEAP} of WPE and the Beamformit tool~\cite{Mir2007AcousticBF} for beamforming. 
The corresponding DER was found to be 9.33\%, which is marginally better than the 7-channel average. 
Notably, simple beamforming without dereverberation degraded the DER to 9.71\%. 
Late fusion using DOVER-Lap improved over the average and best single system by achieving 9.02\% DER. 
Similar to our earlier results, we found that the improvement was mostly from reduced false alarms and speaker confusions.

\begin{table}[t]
\centering
\caption{Diarization results for multichannel LibriCSS evaluation set. Late fusion using DL achieved better performance compared to early fusion based on dereverberation and beamforming.}
\label{tab:fusion}
\begin{adjustbox}{max width=\linewidth}
\begin{tabular}{@{}lcccc@{}}
\toprule
\textbf{Method} & \textbf{MS} & \textbf{FA} & \textbf{Conf.} & \textbf{DER} \\ \midrule
7-channel avg. & \textbf{2.58} & 0.96 & 5.86 & 9.40 \\
7-channel best & 2.59 & 0.99 & 5.53 & 9.11 \\
\midrule
WPE + Beamforming & 2.91 & 0.96 & 5.86 & 9.33 \\
DL & 3.60 & \textbf{0.66} & \textbf{4.76} & \textbf{9.02} \\ \bottomrule
\end{tabular}
\end{adjustbox}
\end{table}

\section{Conclusion}

We proposed DOVER-Lap, a new method to combine the outputs from overlap-aware diarization systems. 
Our method was inspired by the label mapping and label voting approach in DOVER, but we modified the algorithms used in each of these stages. 
%
%
%
%
We demonstrated through experiments on AMI and LibriCSS that DOVER-Lap is effective at combining the outputs from different kinds of diarization systems, such as clustering-based, RPN, and TS-VAD. 
It provided consistent and significant improvements over the single best system for both datasets. 
We also showed its applicability to multi-channel diarization through late fusion, where it outperformed early fusion methods.
%

Since we proposed DOVER-Lap in \citet{Raj2021DOVERLapAM} and \citet{Raj2021ReformulatingDL}, it has become the de-facto approach for combining diarization systems, and has been used by the top teams in several community challenges pertaining to speaker diarization and multi-talker ASR.
We have summarized these in Table~\ref{tab:dl_usage}, where we see that teams have used DOVER-Lap for combining a variety of systems, such as clustering-based, EEND, TS-VAD, separation-guided diarization, and multi-channel systems.
Across all such usage, DOVER-Lap has been found to give consistent and significant DER improvements.

\begin{table}[t]
    \centering
    \caption{Summary of DOVER-Lap usage in major speaker diarization and multi-talker ASR community challenges.}
    \label{tab:dl_usage}
    \begin{tabular}{llc}
    \toprule
    \textbf{Challenge/Team} & \textbf{Systems combined} & \textbf{Position} \\
    \midrule
    \multicolumn{3}{l}{\textit{DIHARD-3}} \\
    USTC-NELSLIP~\cite{Wang2021USTCNELSLIPSD} & Clustering, ITS-VAD & 1 \\
    Hitachi-JHU~\cite{Horiguchi2021TheHD} & VBx, EEND & 2 \\
    BUT~\cite{Landini2021ButSD} & VBx, SC, EEND & 5 \\
    \midrule
    \multicolumn{3}{l}{\textit{VoxSRC 2021}} \\
    DKU-Duke-Lenovo~\cite{Wang2021TheDS} & AHC, SC, TS-VAD & 1 \\
    ByteDance~\cite{Wang2021TheBS} & Multi-scale clustering & 2 \\
    Tencent~\cite{Zheng2021TencentSD} & VBx (multiple extractors) & 3 \\
    Huawei~\cite{Wang2020TheHS} & SC, VBx & - \\
    \midrule
    \multicolumn{3}{l}{\textit{VoxSRC 2022}} \\
    DKU-SMIIP~\cite{Wang2022TheDS} & AHC, SC, TS-VAD & 1 \\
    GIST-AiTeR~\cite{Park2022GISTAiTeRSF} & Multi-scale clustering & 3 \\
    BUCEA~\cite{Zhou2022TheBS} & VBx, SC & - \\ 
    \midrule
    \multicolumn{3}{l}{\textit{M2MeT at ICASSP 2022}} \\
    DUKE-DKU~\cite{Wang2022CrossChannelAT} & Multi-channel TS-VAD & 1 \\
    CUHK-Tencent~\cite{Zheng2022TheCS} & VBx, TS-VAD, FFM-TSVAD & 2 \\
    USTC-Ximalaya~\cite{He2022TheUS} & Multi-channel TS-VAD & 3 \\
    RoyalFlush~\cite{Tian2022RoyalflushSD} & Separation-guided diarization & 5 \\
    \midrule
    \multicolumn{3}{l}{\textit{CHiME-7 DASR}} \\
    USTC-NERCSLIP~\cite{Wu2023SemisupervisedMS} & Multi-channel NSD-MA-MSE & 1 \\
    NTT~\cite{Tawara2023NTTSD} & Multi-channel EEND-VC & 3 \\
    \bottomrule
    \end{tabular}
\end{table}

In the context of the broader multi-talker ASR problem, DOVER-Lap may be used to combine channel-wise diarization outputs from multi-channel arrays, as was done in \citet{Wu2023SemisupervisedMS} and \citet{Tawara2023NTTSD}. 
This is relatively efficient for end-to-end diarization systems which can process batched inputs, and may be particularly useful when the channels are ad-hoc microphone devices, such as smartphones placed near the speakers.
%


\chapter{Target Speaker Extraction with Guided Source Separation}
\label{chap:gss}

So far, we have focused exclusively on segmentation of long recordings through overlap-aware diarization approaches.
While this step is important for \textit{identification} of speaker-homogeneous segments, we are still required to \textit{extract} the speaker's utterance from possibly mixed audio, such that a regular speech recognition system may be able to transcribe it.
In this chapter, we will describe this ``target speaker extraction'' (TSE) task, showing how the target speaker information can be obtained from various means.
For multi-channel recordings where we have already obtained time-segmented speaker boundaries (either through an oracle or a diarization system), guided source separation (GSS)~\cite{Boeddeker2018FrontendPF} allows for unsupervised modeling of TSE.
Towards our objective for building efficient and accurate multi-talker ASR systems, we will describe our GPU-accelerated implementation of GSS, which is inspired by modern deep learning pipelines.
This efficient implementation will enable us to analyze the factors affecting GSS performance in some detail.
We will show how a strong TSE module can result in far-field ASR performance similar to that obtained using close-talk microphones.
In subsequent chapters, we will combine this extraction step with an overlap-aware diarization system (described earlier) to build a complete meeting transcription pipeline.

\section{Target speaker extraction}

Target speaker extraction (TSE) is the problem of estimating the speech signal of a target speaker in a mixture of several speakers, given auxiliary cues to identify the target.
In literature, TSE has alternatively been referred to as informed source separation, personalized speech enhancement, or audio-visual speech separation, depending on the context and the modalities involved.
TSE is motivated from the human ability of auditory attention in cocktail party settings, which has been the subject of much research in the last several decades~\cite{Bronkhorst2015TheCP}.
As illustrated in Fig.~\ref{fig:gss_tse}, neural models for TSE often exploit cues such as pre-recorded enrolment recordings, spatial information that provides the direction of the target, or visual tracking of faces, in order to focus on the target speaker in a multi-source mixture~\cite{molkov2019SpeakerBeamSA,Wang2018VoiceFilterTV,Gu2019NeuralSF,Rivet2014AudiovisualSS,Michelsanti2020AnOO}.
For more details about neural TSE, we refer the reader to the excellent review by \citet{molkov2023NeuralTS}.

\begin{figure}
    \centering
    \includegraphics[width=0.9\linewidth]{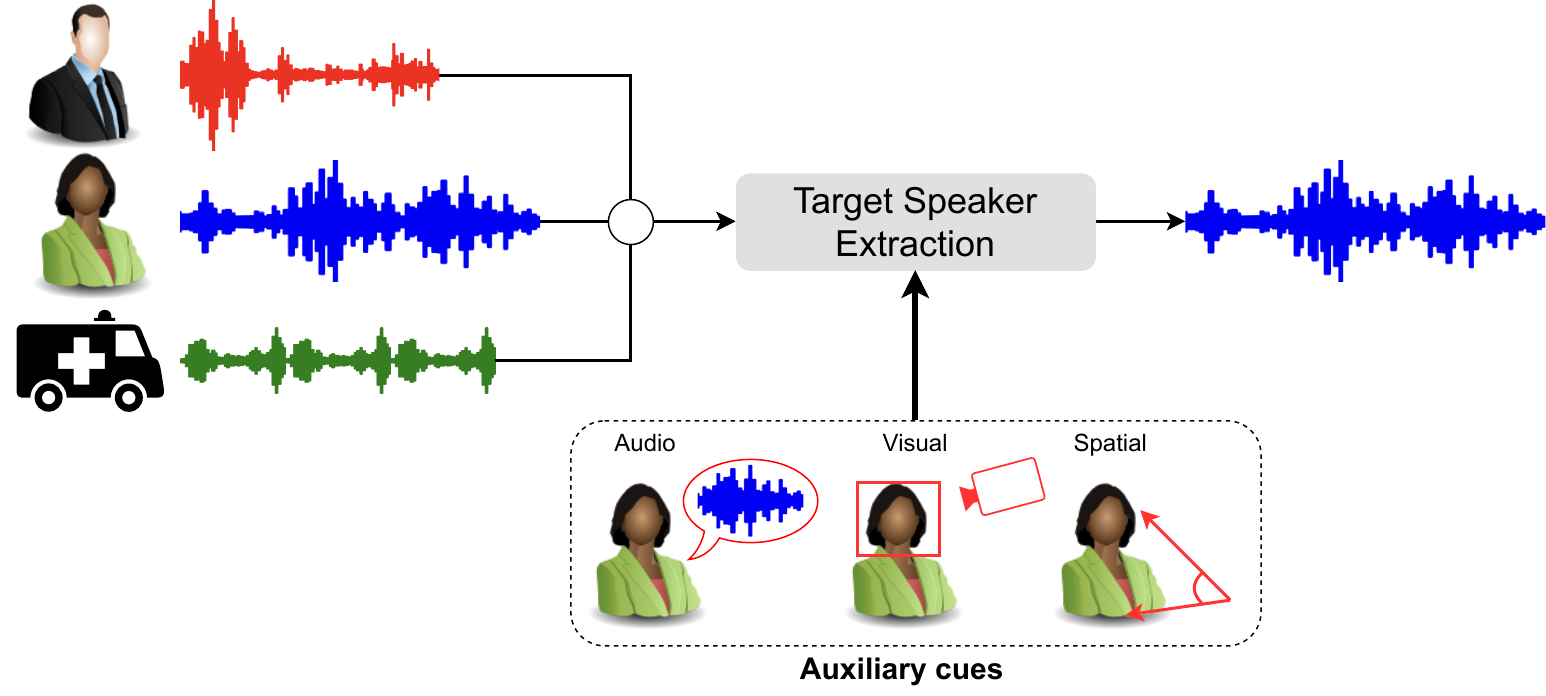}
    \caption{The target speaker extraction problem, and illustration of auxiliary cues. Adapted from \citet{molkov2023NeuralTS}.}
    \label{fig:gss_tse}
\end{figure}

Despite the large body of work on supervised neural TSE, current techniques have several limitations. 
First, the usage of auxiliary cues imposes additional data collection requirements.
For example, the target speaker may need to be registered with the device to enable enrollment recordings, or we may need to measure speaker locations to use spatial cues.
Often, such systems are trained on large amounts of synthetically mixed speech, or using multi-channel inputs assuming a specific array configuration.
This kind of training limits the model's performance when used in real-world settings, or makes it infeasible when the array configuration changes.

For these reasons, beamforming of multi-channel signals using unsupervised mask estimation remains a strong baseline for target-speaker ASR~\cite{Scheibler2020FastAS,Saijo2022SpatialLF,Ueda2021LowLO,Ikeshita2019AUF}. 
Among these, the recently proposed guided source separation (GSS) technique stands out as a particularly effective approach for handling noisy, overlapping speech using diarization information~\cite{Boeddeker2018FrontendPF,Kanda2019GuidedSS}. 
The method was first proposed for the CHiME-5 challenge, where it provided relative word error rate (WER) improvement of 21.1\% on the multi-array track using oracle diarization~\cite{Boeddeker2018FrontendPF}. 
It was later adopted as the challenge baseline for CHiME-6, and used by the winning systems on both oracle and unsegmented tracks~\cite{Watanabe2020CHiME6CT,Arora2020TheJM,Chen2020ImprovedGS,Medennikov2020TheSS}.

GSS relies on fundamental ideas from blind source separation (BSS), using spatial mixture models to model the sum of short-time Fourier transform (STFT) bins of multiple speakers~\cite{Comon2010HandbookOB}. 
It uses diarization information in two ways.
First, for the unsupervised mask estimation using BSS, speaker activities are used to
(i) estimate the number of mixture components, and (ii) avoid the speaker-frequency permutation problem when processing different frequency bins independently.
Once the speaker and noise masks are estimated, the speaker activities are again used to select the ``target'' speaker --- in particular, the speaker who is active throughout the segment is the target speaker.
We will describe the algorithm in detail in Section~\ref{sec:gss_gss} for the sake of completeness.

Despite its strong performance in the CHiME-5 and CHiME-6 challenges, GSS has seen limited adoption in other multi-talker benchmarks, most notably offline meeting transcription, primarily due to its significant computational cost. 
For instance, enhancing the CHiME-6 \texttt{dev} set using 80 CPU jobs requires approximately 20 hours with the original GSS implementation\footnote{\url{https://github.com/fgnt/pb_chime5}}. 
There have been some efforts to adapt the offline GSS algorithm for real-time enhancement by relying on limited right context~\cite{Horiguchi2021BlockOnlineGS}, but these are also CPU-bound.
These high inference times have also prevented detailed analysis of the method, particularly in terms of evaluation and impact of the various parameters.

In this chapter, we will describe our new, publicly-available GPU-accelerated implementation of GSS that aims to remove this computational bottleneck. 
We achieve this primarily by porting all the computations on to a GPU, and applying batching at several levels to maximize GPU memory utilization. 
Our implementation is inspired by modern deep learning pipelines where background CPU-based workers perform data loading of large tensors, while the data processing is performed by GPUs~\cite{Paszke2019PyTorchAI}. 
We describe our accelerated implementation in detail in Section~\ref{sec:gss_method}. 
The resulting 300x speedup allows us to perform ablation experiments using several benchmarks to analyze the importance of several factors that impact GSS performance such as WPE, noise class, context duration, number of BSS iterations, and number of channels, towards GSS performance.
While previous work has only evaluated GSS in terms of downstream ASR performance, we also measure signal-level enhancement metrics and the improvement in speaker information in the enhanced signal.

\section{Guided source separation}
\label{sec:gss_gss}

We first provide an overview of the GSS algorithm, as proposed in~\citet{Boeddeker2018FrontendPF}. 
Consider a multi-channel input recording provided in the form of STFT features $\mathbf{Y}_{t,f}\in \mathbb{C}^M$, where $t$ and $f$ are time and frequency bins, respectively, and $M$ is the number of channels. 
The GSS algorithm assumes the following model of the signal:

\begin{equation}
    \mathbf{Y}_{t,f} = \underbrace{\sum_{k\in K} \mathbf{X}_{t,f,k}^{\mathrm{early}}}_{\mathbf{X}_{t,f}^{\mathrm{early}}} + \underbrace{\sum_{k\in K} \mathbf{X}_{t,f,k}^{\mathrm{late}}}_{\mathbf{X}_{t,f}^{\mathrm{late}}} + \mathbf{N}_{t,f},
\end{equation}
where $K$ is the number of speakers in the recording, and ``early'' and ``late'' refer to components of the reverberation. 
For target-speaker extraction, the objective is to estimate the de-reverberated signal from a desired speaker $k$, i.e., $\widehat{\mathbf{X}}_{t,f,k}$. 
This estimation is performed in three steps, as described in Fig.~\ref{fig:gss}.

\begin{figure}[t]
    \centering
    \includegraphics[width=0.8\linewidth]{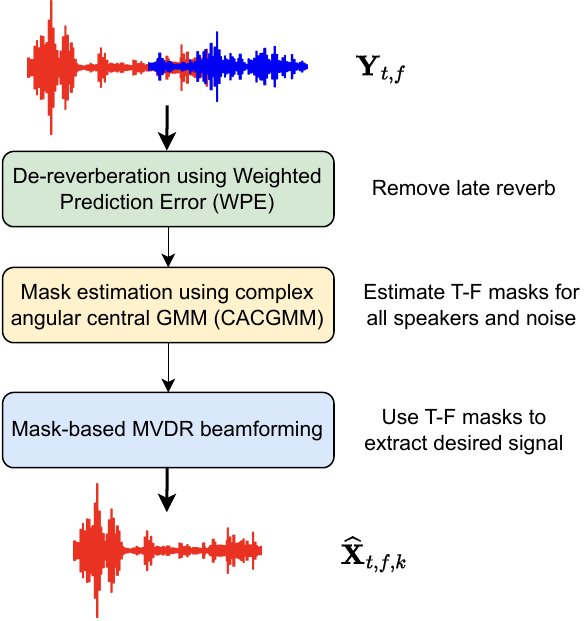}
    \caption{An overview of the guided source separation (GSS) method for target-speaker extraction.}
    \label{fig:gss}
\end{figure}

\subsection{De-reverberation using WPE}

First, we estimate $\mathbf{X}_{t,f}^{\mathrm{late}}$, i.e., the ``late'' part of the reverb, using the popular weighted prediction error (WPE) algorithm~\cite{Nakatani2008BlindSD,Nakatani2010SpeechDB}, which uses multi-channel linear prediction (MCLP).
%
%

WPE models the late reverberation component of the signal at the first microphone ($m=1$) using an auto-regressive model on the received signal across all $M$ microphones as
\begin{equation}
    \mathbf{X}_{t,f}^{\mathrm{late}}[1] = \sum_{m=1}^M (\mathbf{g}_f^m)^H \mathbf{Y}_{t-D,f}^m. 
\end{equation}
Here, $\mathbf{g}_f$ is a regression vector for frequency bin $f$, $(\cdot)^H$ denotes the Hermitian transpose, and $D$ is the signal delay beyond which speech is not expected to have any residual auto-correlation.
Each STFT coefficient of the desired signal is modeled as a zero-mean Gaussian random variable with variance $\lambda_{t,f}$.
Then, we can solve for $\mathbf{g}_f^M$ and $\lambda_{t,f}$ using maximum likelihood estimation by minimizing the following cost function:
\begin{equation}
    \mathcal{L}(\Theta_f) = \sum_{t=1}^T \left( \log \lambda_{t,f} + \frac{\lVert \mathbf{Y}_{t,f}^1 - (\mathbf{g}_f)^H \mathbf{Y}_{t-D,f} \rVert^2}{\lambda_{t,f}} \right)
\end{equation}
The function cannot be solved analytically, so WPE solves it in a two-step iterative process, by alternatively keeping $\mathbf{g}_f$ and $\lambda_{t,f}$ fixed and minimizing w.r.t the other.

Once we have obtained the regression vectors, we estimate the late reverberation components, i.e., $\mathbf{X}_{t,f}^{\mathrm{late}}$ and remove it from the signal, followed by normalization to get unit STFT vectors, i.e.,
\begin{equation}
    \Tilde{\mathbf{Y}}_{t,f} = \frac{\mathbf{Y}_{t,f}-\widehat{\mathbf{X}}_{t,f}^{\mathrm{late}}}{\Vert\mathbf{Y}_{t,f}-\widehat{\mathbf{X}}_{t,f}^{\mathrm{late}}\Vert}.
\end{equation}

\subsection{Mask estimation using CACGMMs}

In the second stage, STFT masks are estimated for each speaker (and noise). 
The mask estimation technique is based on the sparsity assumption, which posits that at most one speaker is active in each time-frequency bin. 
Using this assumption, the vector in each T-F bin can be assumed to have been generated from a mixture model where each component of the mixture belongs to a different speaker (or noise class). 
In the case of GSS, each mixture component is a complex angular central Gaussian (CACG), and hence the mixture model is a CACGMM~\cite{Ito2016ComplexAC}. 
A CACGMM models sums of unit-normalized complex-valued random variables, and the probability density function at a frequency index $f$ is given by
\begin{equation}
\label{eq:gss_mixture}
   p(\tilde{\mathbf{Y}}_{t,f}) = \sum_{k\in K} \pi_{f,k} \mathcal{A}(\tilde{\mathbf{Y}}_{t,f};\mathbf{B}_{f,k}),
\end{equation}
where $\pi_{f,k}$ is the mixture weight of source $k$ at frequency index $f$, and $\mathcal{A}(\mathbf{y};\mathbf{B})$ is a CACG distribution parameterized by a covariance matrix $\mathbf{B}\in \mathbb{C}^{M\times M}$ as
\begin{equation}
    \mathcal{A}(\mathbf{y};\mathbf{B}) = \left(\frac{1}{2\pi}\right)^M \frac{(M-1)!}{\lvert\mathbf{B}\rvert} (\mathbf{y}^H \mathbf{B}^{-1} \mathbf{y})^{-M}.
\end{equation}

Mixture model parameters are usually estimated using the EM algorithm that alternates between estimating the state posteriors (in the E-step) and the parameters ($\mathbf{B}$) of the component model (in the M-step). 
However, there are two problems in applying EM independently for each frequency bin: (i) the number of sources $K$ in equation~\eqref{eq:gss_mixture} is unknown; and (ii) the same mixture component may correspond to different sources in different frequency bins. 
GSS solves both of these problems by assuming that speaker activities are known for the recording, either through an oracle or a diarization system. 
Given the speaker activities $a_{t,k} \in \{0,1\}$, we convert the time-invariant mixture weights to time-varying weights as
\begin{equation}
    \pi_{t,f,k} = \frac{\pi_{f,k}a_{t,k}}{\sum_{k' \in K}\pi_{f,k'}a_{t,k'}}.
\end{equation}

There may still be a permutation problem between the mixture components for the target speaker and the noise signal, since noise is present throughout the recording. 
To solve this problem, the GSS algorithm adds a ``context window'' to each utterance in which $a_{t,k}$ is zero for only the target speaker $k$. 
We run the EM algorithm on the CACGMM until convergence to obtain the final state posteriors $\gamma_{t,f,k}$ as the estimated speaker masks.

\subsection{Mask-based MVDR beamforming}

Finally, we compute the spatial covariance matrices for the target signal and background signal as
\begin{align}
\Phi_k(f) &= \frac{1}{T}\sum_t \gamma_{t,f,k} \tilde{\mathbf{Y}}_{t,f} \tilde{\mathbf{Y}}_{t,f}^H,\\
\Phi_{\mathrm{bg}}(f) &= \frac{1}{T}\sum_t \left(\sum_{k' \neq k}\gamma_{t,f,k'}\right) \tilde{\mathbf{Y}}_{t,f} \tilde{\mathbf{Y}}_{t,f}^H,
\end{align}
which are then used to compute the minimum-variance distortionless response (MVDR) filter~\cite{Souden2010OnOF,Erdogan2016ImprovedMB} as
\begin{equation}
    \mathbf{h}(f) = \frac{\Phi_\mathrm{bg}^{-1}(f)\Phi_k(f)\mathbf{e}_{\mathrm{ref}}}{\mathrm{tr}\left(\Phi_\mathrm{bg}^{-1}(f)\Phi_k(f)\right)},
\end{equation}
where $\mathbf{e}_{\mathrm{ref}}\in\{0,1\}^M$ is a one-hot vector indicating the reference channel, selected to maximize the signal-to-noise ratio. Finally, the enhanced STFT signal is computed as
\begin{equation}
    \widehat{\mathbf{X}}_{t,f,k} = \mathbf{h}(f)^H \tilde{\mathbf{Y}}_{t,f}.
\end{equation}

\section{GPU-accelerated inference}
\label{sec:gss_method}


The original GSS implementation is slowed down by four key factors: 

\begin{enumerate}[label=(\roman*),wide,labelwidth=!,labelindent=0pt]
    \item All the segments are processed sequentially, so processing time for a recording increases linearly with number of identified segments. 
    \item A context window (usually 15s) is used for all segments regardless of the segment duration, resulting in a lot of wasted computation for short segments. 
    \item For each segment, the CACGMM-based mask estimation is performed by iterating over all frequency bins (usually 513) sequentially.
    \item All computations (i.e., feature extraction, WPE, mask estimation, beamforming, and iSTFT) are implemented on the CPU in NumPy~\cite{Harris2020ArrayPW}.
\end{enumerate}


We propose to accelerate GSS-based inference by leveraging the power of modern GPU hardware and pipelines inspired by neural network training. 
First, to address limitation (iv), we use CuPy arrays which speed up array operations significantly using CUDA kernels, compared with regular NumPy-based array operations~\cite{cupy_learningsys2017}. 
Since the most computationally intensive operations in the pipeline (such as CACG probability estimation) involve matrix multiplications (through \texttt{einsum}), GPU-based CUDA kernels are more efficient. 
However, simply transferring all arrays to CuPy is not sufficient --- for example, limitations (i)--(iii) still require sequential processing, which limits GPU utilization. 
To maximize GPU utilization and improve real-time factor (RTF), we perform the following additional optimizations (shown in Fig.~\ref{fig:gss_batch}).

\begin{figure}[t]
    \centering
    \includegraphics[width=\linewidth]{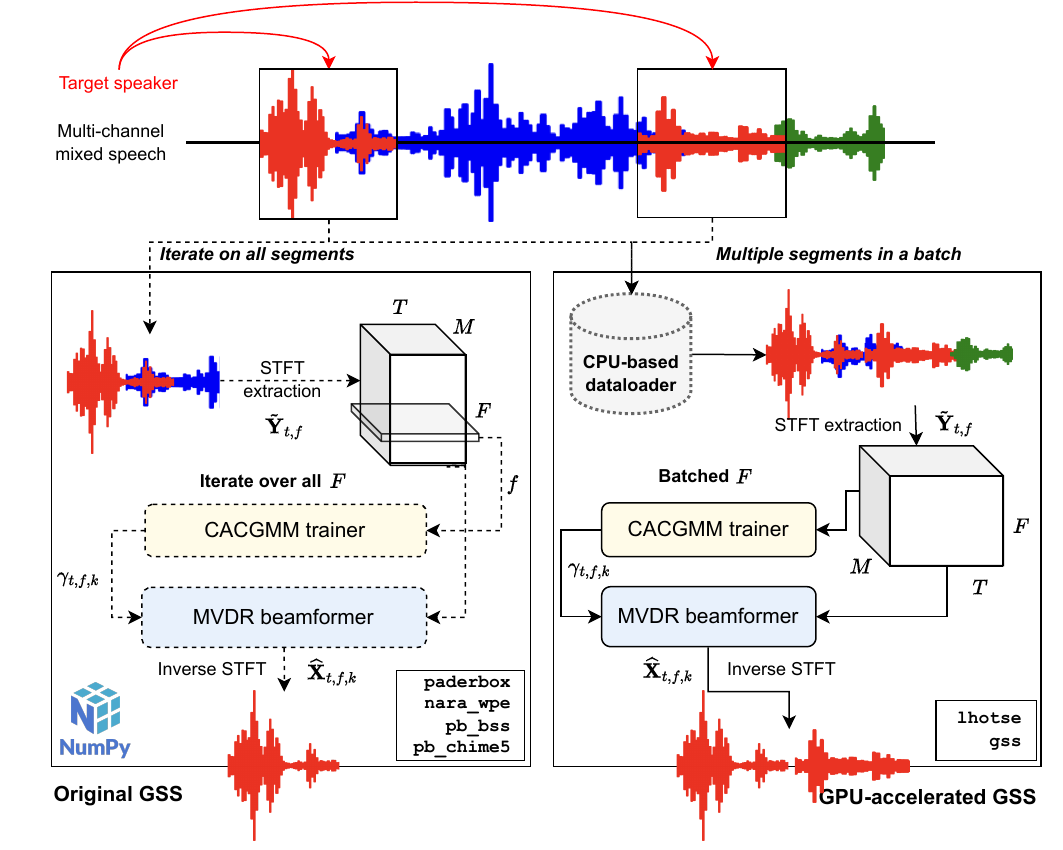}
    \caption{Overview of batch processing for GPU-accelerated GSS. Solid and dotted lines denote GPU-bound and CPU-bound operations, respectively. The WPE module is not shown.}
    \label{fig:gss_batch}
\end{figure}

\subsection{Segment batching}

Instead of processing each segment independently, we batch together multiple segments for inference. 
However, unlike neural network based training pipelines where batching is performed by stacking sequences in parallel, our batches are formed by concatenating segments sequentially along the time ($T$) axis to create ``super-segments.'' 
We choose this form of batching because (i) the \texttt{einsum}-based operations are designed to work with 3-D tensors, and (ii) parallel batching of segments with padding would result in wasted memory. 
Since multiple components of the inference (such as mask estimation and beamforming) compute statistics over the entire segment, we always create super-segments from segments of a recording that contain the same target speaker. 
Furthermore, we only use a single context window for the entire batch (instead of segment-wise context), which further amortizes the context window computations over multiple speaker segments. 
This batching technique should work well for the case when optimal reference channels do not vary over the duration of the recording (i.e., when speakers are stationary, which is common for meeting scenarios)\footnote{We also provide the option for using at most one segment per batch, for the case when speakers are not stationary (\S~\ref{sec:gss_speed}).}.

\subsection{CPU-based data-loaders}

We ensure that GPU idle time is minimized by off-loading the batch creation process to CPU-based data-loaders (possibly containing multiple workers), similar to deep learning pipelines.
These workers collect same-speaker segments in the background and load the audio from disk while the GPU is busy processing the previous batch.
Section~\ref{sec:gss_details} provides further details about our Lhotse-based data pipeline\cite{zelasko2021LhotseAS}.

\subsection{Frequency batching} 

To address (iii), we modified the CACGMM-based mask estimation to process 3-D tensors $(F,T,M)$ instead of 2-D arrays $(T,M)$. 
This simple change allows us to process all the frequency bins concurrently in a batch, significantly increasing GPU memory utilization.
Such a batching makes sense because all frequency bins are treated independently in the mask estimation process.

\subsection{Einsum path optimization} 

As mentioned above, several components in the GSS pipeline are implemented using \texttt{einsum}, which uses an optimal path contraction technique to find the path of minimum floating-point operations through the sequence (often resulting in up to 15x speed-up over a naive computation)~\cite{Smith2018opteinsum}. 
However, the optimal path finding itself is computationally demanding, with a complexity of $\mathcal{O}(N!)$ for $N$ arrays, and since it is performed several times during inference (for example, in each iteration of the CACGMM inference), it overshadows any speed-ups from the actual contracted sum. 
To remedy this, we cache the optimal computed path in the first iteration and re-use it in subsequent iterations.
In practice, since our tensor dimensions often have the same relative order across all batches (i.e., $M$<$F$<$T$), we can simply fix the optimal path for all \texttt{einsum} operations. This is because segment batching avoids very short segments that would otherwise result in $T$<$F$.

\begin{figure}[t]
    \centering
    \includegraphics[width=0.8\linewidth]{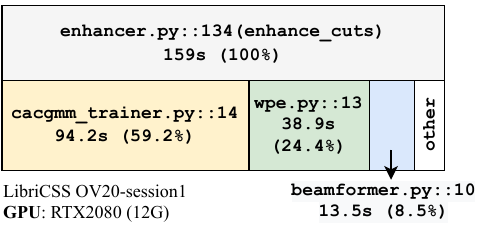}
    \caption{Representative output of profiler during enhancement of a single recording. Full stats are available at \href{https://github.com/desh2608/gss/blob/master/test.pstats}{this https url}.}
    \label{fig:profiling}
\end{figure}

Finally, once the enhanced waveform is obtained for the super-segment, we use background worker threads to chunk it into the original segments and save them to disk. 
With all these speed-ups, we were able to enhance a 10-minute LibriCSS recording in 159s (as shown in Fig.~\ref{fig:profiling}), of which mask estimation, WPE, and beamformer constituted 59.2\%, 24.4\%, and 8.5\% processing time, respectively. 
This is equivalent to a real-time factor (RTF) of approximately 0.3. We anticipate that further speed-ups could be obtained using GPUs with larger memory, by using bigger batches.

\section{Implementation details}
\label{sec:gss_details}

From an implementation perspective, we can divide the pipeline into two parts. 
The \textit{data processing} part is tasked with efficiently creating segments and corresponding speaker activities, while the \textit{inference} part performs the actual computations on GPU. 
We use Lhotse~\cite{zelasko2021LhotseAS} for all data processing, i.e., to store and read recording metadata, to represent speaker activities, and to perform segment batching to create super-segments. 
For batching we create buckets out of each speaker's segments on-the-fly, and sample speakers in a round-robin manner (using Lhotse's \texttt{DynamicBucketingSampler} and \texttt{RoundRobinSampler}), so that metadata from all segments do not need to be stored in memory. 
The super-segment obtained from the data-loader is converted to a CuPy array in-place, and all subsequent inference is performed on the GPU. 
Since we use Lhotse's supervision manifests to store speaker activities, it allows us to use either oracle segments, or read diarization output in the form of RTTM files with the same data processing pipeline. 
A typical recipe for enhancing a corpus using GSS is below:

\begin{lstlisting}[language=bash]
#!/bin/bash
# 1. Create Lhotse manifests for corpus
lhotse prepare libricss --type mdm $corpus_dir $data_dir
# 2. Prepare recording-level cuts
lhotse cut simple -r $data_dir/recordings.jsonl.gz -s $data_dir/supervisions.jsonl.gz $exp_dir/cuts.jsonl.gz
# 3. Prepare segment-level cuts
lhotse cut trim-to-supervisions --discard-overlapping $exp_dir/cuts.jsonl.gz $exp_dir/segments.jsonl.gz
# 4. Perform enhancement
gss enhance cuts --max-batch-duration 50.0 $exp_dir/cuts.jsonl.gz $exp_dir/segments.jsonl.gz $exp_dir/enhanced
\end{lstlisting}

\section{Experiments}

\subsection{Setup}
\label{sec:gss_setup}

We performed evaluations on three publicly-available meeting datasets: LibriCSS, AMI, and AliMeeting. 
Detailed statistics for all datasets are in Section~\ref{sec:intro_data}.
We used three different microphone settings for our experiments: IHM (individual headset microphone), SDM (single distant microphone), and GSS (GSS-enhanced multi-mic). 
Since LibriCSS does not provide headset recordings, we used the corresponding digitally-mixed LibriSpeech utterances to simulate IHM. 
For all datasets, the first channel of the first array was used for the SDM setting. 
For LibriCSS and AliMeeting, we used all available channels for GSS, whereas for AMI, we used the first of the two arrays. 

We evaluated the target speaker extraction quality using both intrinsic and extrinsic measures.
We consider the IHM recording as the clean signal, the SDM recording as the mixed signal, and the output of the GSS system as the enhanced signal.
We normalized the loudness of the signals to be equal to that of the clean signal, and also aligned the samples, before computing these metrics.
We used the \texttt{pyloudnorm}\footnote{\url{https://github.com/csteinmetz1/pyloudnorm}} and \texttt{fast-align-audio}\footnote{\url{https://github.com/nomonosound/fast-align-audio}} libraries for loudness normalization and sample alignment, respectively~\cite{steinmetz2021pyloudnorm}.
We report the signal-level enhancement performance in terms of PESQ, SI-SDR, and STOI, as described in Section~\ref{sec:intro_metrics}.
For extrinsic evaluation, we want to measure two aspects of the enhanced signal:
\begin{enumerate}
    \item Does the enhanced signal retain target speaker characteristics more than the mixed signal?
    \item Is the enhanced signal easier to transcribe than the mixed signal for an ASR system trained on single-speaker utterances?
\end{enumerate}

For the first question, we used embedding extractors to compute speaker embeddings for the mixed, clean, and enhanced signals --- let us denote the embeddings as $\mathbf{s}_{m}$, $\mathbf{s}_{c}$, and $\mathbf{s}_{e}$, respectively.
Next, we compute the following \textit{increase} in cosine similarity
\begin{equation}
\label{eq:gss_spk}
    \Delta_{\text{spk}} = cos(\mathbf{s}_{c}, \mathbf{s}_{e}) - cos(\mathbf{s}_{c}, \mathbf{s}_{m})
\end{equation}
as a measure of the improvement in speaker information when going from the mixed signal to the enhanced signal, where $cos(\mathbf{x},\mathbf{y})$ denotes the cosine similarity between $\mathbf{x}$ and $\mathbf{y}$.
Since the range of cosine similarity is [-1, 1], the range of $\Delta_{\text{spk}}$ is [-2, 2], with a higher value denoting a larger improvement in speaker information.
We used two off-the-shelf speaker embedding extractors from the SpeechBrain toolkit~\cite{Ravanelli2021SpeechBrainAG} to obtain the embeddings: \texttt{xvect} and \texttt{ecapa}.
The former is a conventional x-vector model using TDNN and stats pooling layers, and is trained with the regular softmax and cross-entropy loss~\cite{Snyder2018XVectorsRD}.
The latter is based on a newer ECAPA-TDNN model architecture and is trained with additive margin softmax (AM-softmax)~\cite{Desplanques2020ECAPATDNNEC}.
The models obtain equal error rates (EER) of 3.2\% and 0.8\% on the VoxCeleb-1 test set, respectively, meaning that \texttt{ecapa} has a better discriminative capability than \texttt{xvect}.

For the second question, i.e., we measured impact on ASR performance by training separate transducer-based ASR models for each benchmark.
We will define neural transducers briefly in Chapter~\ref{chap:modular}, when we describe the complete transcription pipeline.
Here, we provide the system configuration details for these models.
For LibriCSS, we used a pretrained Conformer-transducer~\cite{Gulati2020ConformerCT} trained on LibriSpeech.
In this model, the encoder is a 12-layer Conformer containing 8 attention heads, attention dimension of 256, and feed-forward dimension of 2048.
The prediction network is a 512-dimensional 1D convolution with a bi-gram context, and the joiner is a 512-dimensional feed-forward layer.
For AMI and AliMeeting, we trained a similar transducer on a combination of IHM, IHM with simulated reverb, SDM, and GSS-enhanced far-field recordings of the corresponding train set, and the resulting model was used to evaluate all microphone settings.
This model's configuration is similar to the one described earlier, with the exception that we used a Zipformer for the encoder~\cite{zipformer}.
The Zipformer encoder consists of 5 blocks, each containing multiple self-attention and feed-forward layers.
Each block is sub-sampled at different frame rates, with the intermediate ones more strongly down-sampled (up to 8x, for instance).
Within each block, self-attention weights are computed once and shared for all feed-forward layers in that block.
The result of these optimizations is that the encoder converges faster and better than the Conformer for the same model size.
In all cases, we applied three-fold speed perturbation and noise augmentation using MUSAN~\cite{Snyder2015MUSANAM} noises. 
We used a ``stateless'' decoder consisting of a convolutional layer with a bi-gram context.
The model was trained using a pruned RNN-T loss~\cite{Kuang2022PrunedRF} implemented in k2\footnote{\url{https://github.com/k2-fsa/{k2,icefall}}}. 
For decoding, we used a WFST-based parallel beam search method with beam size 4~\cite{Kang2022FastAP}.
The ASR performance is reported in terms of word error rate (WER).

\subsection{Results}

\subsubsection{Signal-level enhancement}

First, we measure the improvement in signal-level metrics: PESQ, SI-SDR, and STOI, as shown in Table~\ref{tab:gss_signal}.
For LibriCSS, we show a breakdown of these metrics by overlap ratio, in order to better understand the effect of GSS on target speaker extraction.
We see that GSS-based TSE provides a consistent improvement across all metrics, as seen in the $\Delta$ for the metrics.
The perceptual quality, as measured by PESQ, decreases with increase in overlap ratio, and the relative improvement provided by GSS also reduces.
The SI-SDR was found to be negative for both the mixture and enhanced signal, which means that the signal is weaker than the distortion.
However, GSS resulted in a relative increase in SI-SDR by 19.73 dB, 11.53 dB, and 17.92 dB, respectively, for LibriCSS, AMI, and AliMeeting.
Intelligibility for LibriCSS, as measured by STOI, decreased for input mixtures with increase in overlap ratio, but the corresponding output remained consistent.
For AMI and AliMeeting, the STOI improvement was found to be lower, which may be because real meetings may contain very short, often unintelligible utterances.

\begin{table}[t]
    \centering
    \caption{Signal-level enhacement performance of GSS in terms of PESQ, SI-SDR, and STOI metrics. For each metric, we report the measure on the input mixture (\texttt{in}), the enhanced signal (\texttt{out}), and the difference ($\Delta$).}
    \label{tab:gss_signal}
    \adjustbox{max width=\linewidth}{
    \begin{tabular}{@{}lcccccccccc@{}}
    \toprule
    \multirow{2}{*}{\textbf{Dataset}} & \multirow{2}{*}{\textbf{Subset}} & \multicolumn{3}{c}{\textbf{PESQ}} & \multicolumn{3}{c}{\textbf{SI-SDR}} & \multicolumn{3}{c}{\textbf{STOI}} \\
    \cmidrule(r{4pt}){3-5} \cmidrule(l{2pt}r{2pt}){6-8} \cmidrule(l{4pt}){9-11}
     & & \texttt{in} & \texttt{out} & $\Delta$ & \texttt{in} & \texttt{out} & $\Delta$ & \texttt{in} & \texttt{out} & $\Delta$ \\
    \midrule
    \multirow{6}{*}{LibriCSS} & 0L & $1.58$ & $2.12$ & $0.54$ & $-29.13$ & $-10.94$ & $18.19$ & $0.50$ & $0.85$ & $0.35$  \\
     & 0S & $1.54$ & $2.07$ & $0.53$ & $-30.40$ & $-12.46$ & $17.94$ & $0.47$ & $0.83$ & $0.36$ \\
     & OV10 & $1.40$ & $1.84$ & $0.44$ & $-30.24$ & $-10.44$ & $19.80$ & $0.47$ & $0.86$ & $0.39$ \\
     & OV20 & $1.35$ & $1.76$ & $0.41$ & $-31.60$ & $-10.50$ & $21.10$ & $0.45$ & $0.85$ & $0.40$ \\
     & OV30 & $1.27$ & $1.66$ & $0.39$ & $-30.76$ & $-11.21$ & $19.55$ & $0.45$ & $0.83$ & $0.38$ \\
     & OV40 & $1.24$ & $1.63$ & $0.39$ & $-32.37$ & $-10.59$ & $21.78$ & $0.43$ & $0.84$ & $0.41$ \\
    \hline \hline
    AMI & \texttt{test} & $1.20$ & $1.45$ & $0.25$ & $-17.87$ & $-6.34$ & $11.53$ & $0.61$ & $0.74$ & $0.13$ \\
    \hline \hline
    AliMeeting & \texttt{test} & $1.23$ & $1.52$ & $0.29$ & $-34.14$ & $-16.22$ & $17.92$ & $0.27$ & $0.57$ & $0.30$ \\
    \bottomrule
    \end{tabular}}
\end{table}

\subsection{Improvement in target-speaker information}

\begin{table}[t]
    \centering
    \caption{Improvement in target-speaker information for enhanced signal compared to mixed signal, measured in terms of $\Delta_{\text{spk}}$. The columns denote the speaker embedding extractor used.}
    \label{tab:gss_spk}
    \adjustbox{max width=\linewidth}{
    \begin{tabular}{@{}lccc@{}}
    \toprule
    \textbf{Dataset} & \textbf{Subset} & \texttt{xvect} & \texttt{ecapa} \\
    \midrule
    \multirow{6}{*}{LibriCSS} & 0L & $0.013$ & $0.055$ \\
     & 0S & $0.012$ & $0.057$ \\
     & OV10 & $0.012$ & $0.068$ \\
     & OV20 & $0.013$ & $0.089$ \\
     & OV30 & $0.015$ & $0.124$ \\
     & OV40 & $0.018$ & $0.167$ \\
    \hline \hline
    AMI & \texttt{test} & $0.025$ & $0.152$ \\
    \hline \hline
    AliMeeting & \texttt{test} & $0.034$ & $0.185$ \\
    \bottomrule
    \end{tabular}}
\end{table}

For our first extrinsic evaluation, we compute $\Delta_{\text{spk}}$ as defined in \eqref{eq:gss_spk} for all three datasets using \texttt{xvect} and \texttt{ecapa}, as shown in Table~\ref{tab:gss_spk}.
As expected, the $\Delta_{\text{spk}}$ values are positive across the board, indicating that the target-speaker information improves through enhancement.
On the LibriCSS data, as the overlap ratio increases, $\Delta_{\text{spk}}$ also increases, primarily because it is very hard to extract good speaker embeddings from mixed speech with high overlaps.
The improvement is also, in general, larger for AMI and AliMeeting compared to LibriCSS, since these datasets often contain very short phrases which may not contain enough speaker information, especially in the presence of background noise or interfering speakers.
Among the embedding extractors, we obtained larger improvements using $\texttt{ecapa}$ compared to $\texttt{xvect}$, perhaps since the former is trained with an additive margin softmax loss.
Nevertheless, the two embedding spaces are likely very different so the absolute values are not comparable between the extractors.

\subsubsection{Far-field ASR performance}

\begin{table}[t]
\centering
\caption{Comparison of close-talk and far-field ASR performance for meeting datasets. The GSS setting uses 7 channels for LibriCSS and 8 channels for AMI and AliMeeting. $^\dagger$LibriCSS IHM refers to the corresponding LibriSpeech utterances. $^\#$For AliMeeting, the numbers are CER.}
\label{tab:gss_asr}
\begin{tabular}{@{}llrrrr@{}}
\toprule
\textbf{Dataset} & \textbf{Setting} & \textbf{Ins.} & \textbf{Del.} & \textbf{Sub.} & \textbf{WER/CER} \\ \midrule
\multirow{3}{*}{LibriCSS} & IHM$^\dagger$ & 0.25 & 0.22 & 1.74 & 2.21 \\
 & SDM & 1.06 & 3.12 & 6.59 & 10.77 \\
 & GSS & 0.31 & 0.89 & 2.14 & 3.34 \\
\hline \hline
\multirow{3}{*}{AMI} & IHM & 2.22 & 4.51 & 11.31 & 18.04 \\
 & SDM & 4.01 & 9.59 & 18.50 & 32.10 \\
 & GSS & 2.43 & 6.07 & 14.33 & 22.83 \\
\hline \hline
\multirow{3}{*}{AliMeeting$^\#$} & IHM & 0.97 & 3.78 & 7.32 & 12.07 \\
 & SDM & 1.99 & 10.00 & 14.38 & 26.38 \\
 & GSS & 1.09 & 4.87 & 9.03 & 14.98 \\ 
\bottomrule
\end{tabular}
\end{table}

Finally, we measure the improvement in far-field ASR performance when using GSS with oracle segmentation, as shown in Table~\ref{tab:gss_asr}. 
The IHM and SDM settings may be considered as the lower and upper bounds on WER (or CER), respectively. 
We found that across all the datasets, GSS improved ASR performance significantly, with the \textbf{recovered error rates}, defined as $\left( W_{\mathrm{SDM}} - W_{\mathrm{GSS}} \right)/ \left( W_{\mathrm{SDM}} - W_{\mathrm{IHM}} \right)$,  being 86.8\%, 65.9\%, and 80.4\% for LibriCSS, AMI, and AliMeeting, respectively. 
As expected, most of the improvement was obtained from recovered deletion and substitution errors, possibly from better recognition of overlapped speech segments.
To verify this conjecture, we grouped all segments in the AMI \texttt{test} set by their overlap ratios to form bins of 20\% intervals, and computed the average WER for each such bin.
The resulting bar plot is shown in Fig.~\ref{fig:gss_ovl_wer}.
We can see that as the overlap ratio increases, the average WER increases more strongly for the SDM setting as compared to the GSS-enhanced utterances.

\begin{figure}[t]
    \centering
    \includegraphics[width=0.6\linewidth]{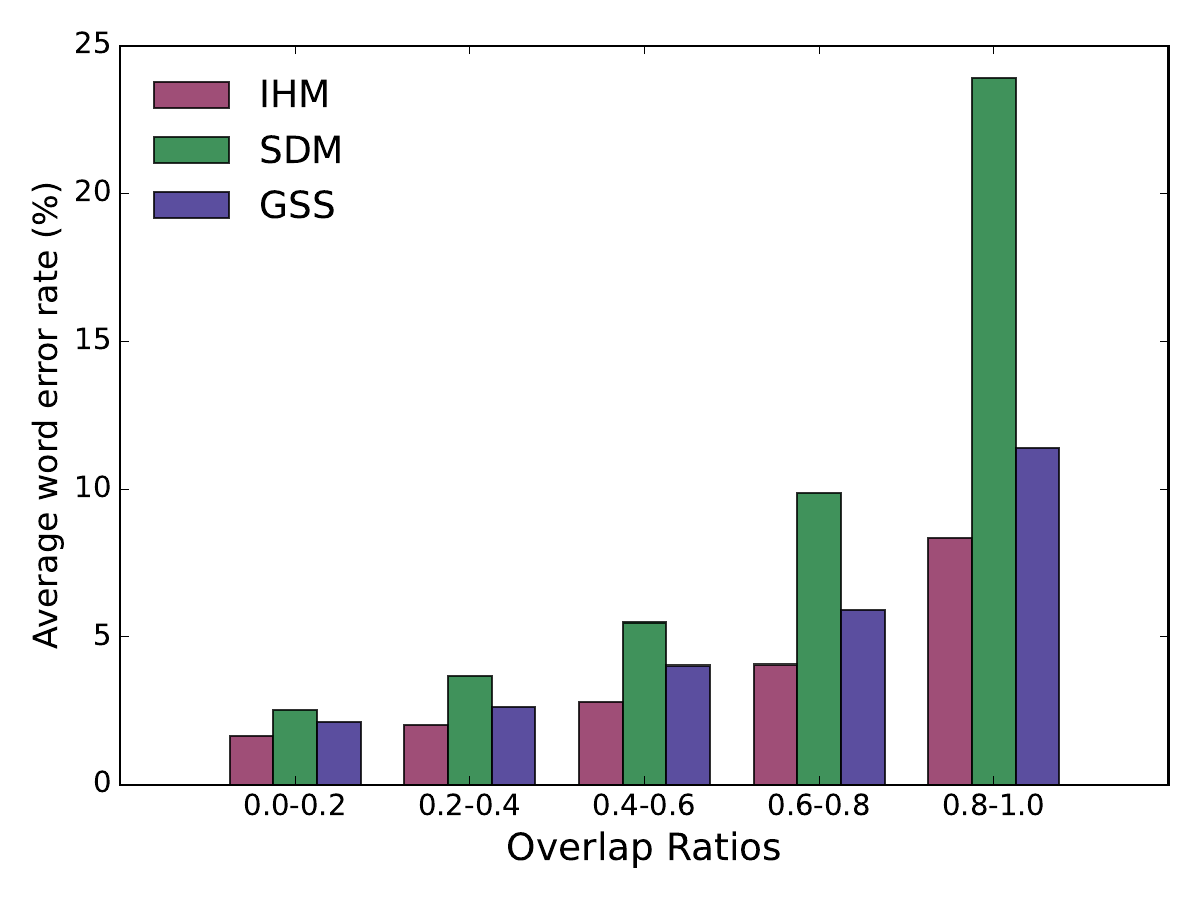}
    \caption{Average WER for different microphone settings and overlap ratios.}
    \label{fig:gss_ovl_wer}
\end{figure}

\subsubsection{Which factors are most important for GSS?}

\begin{figure}[t]
\begin{subfigure}{0.49\linewidth}
\centering
\includegraphics[width=\linewidth]{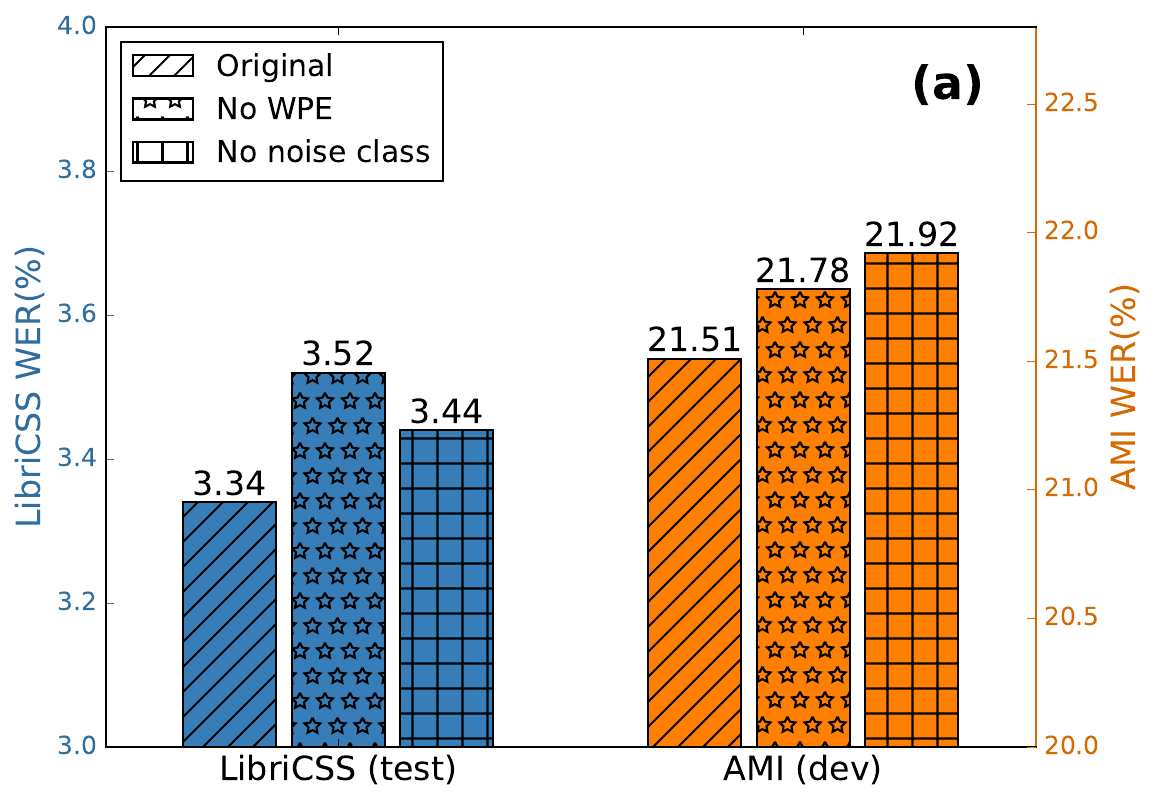}
\captionlistentry{}
\label{fig:wpe}
\end{subfigure}
\begin{subfigure}{0.49\linewidth}
\centering
\includegraphics[width=\linewidth]{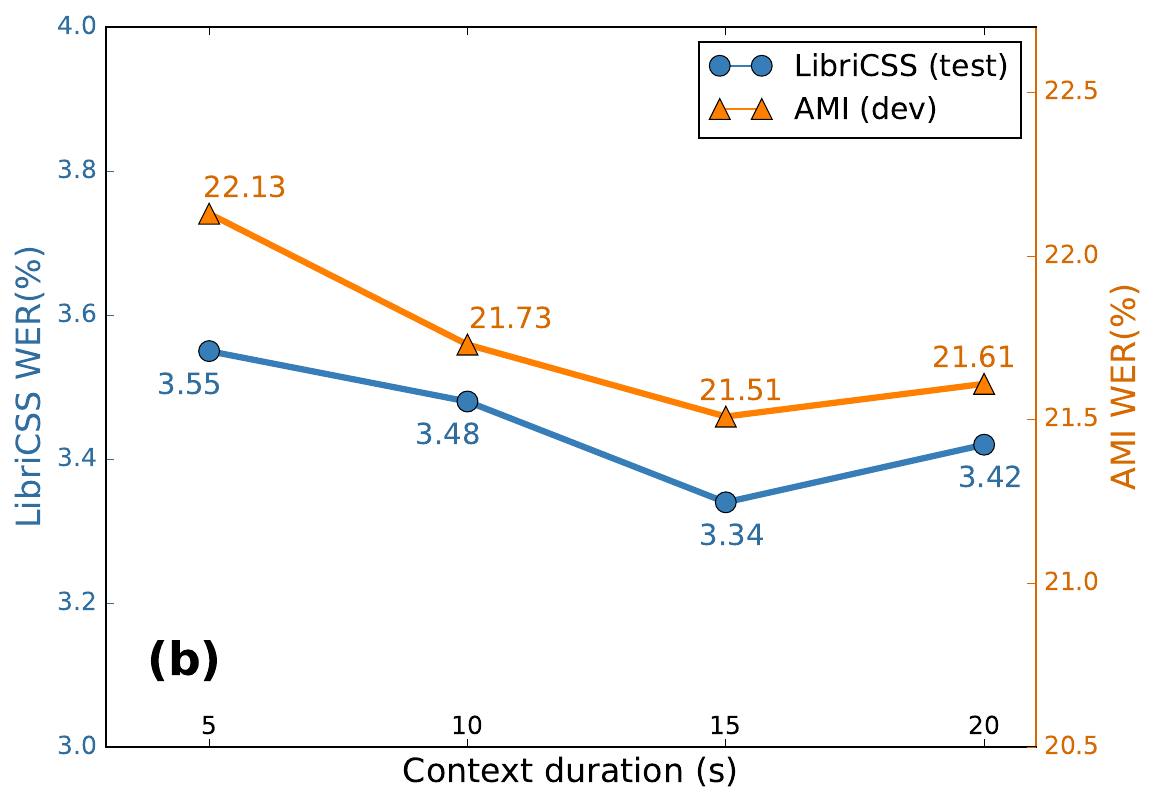}
\captionlistentry{}
\label{fig:context}
\end{subfigure}\hfill
\begin{subfigure}{0.49\linewidth}
\centering
\includegraphics[width=\linewidth]{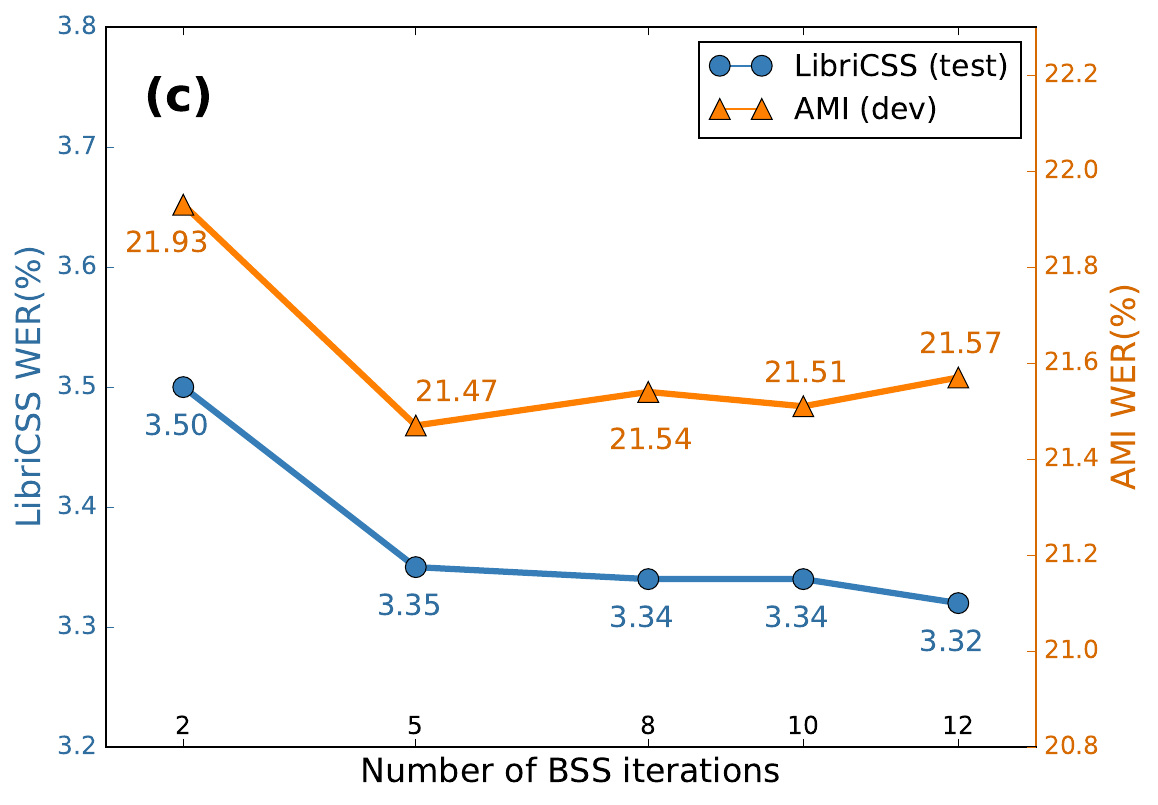}
\captionlistentry{}
\label{fig:iter}
\end{subfigure}
\begin{subfigure}{0.49\linewidth}
\centering
\includegraphics[width=\linewidth]{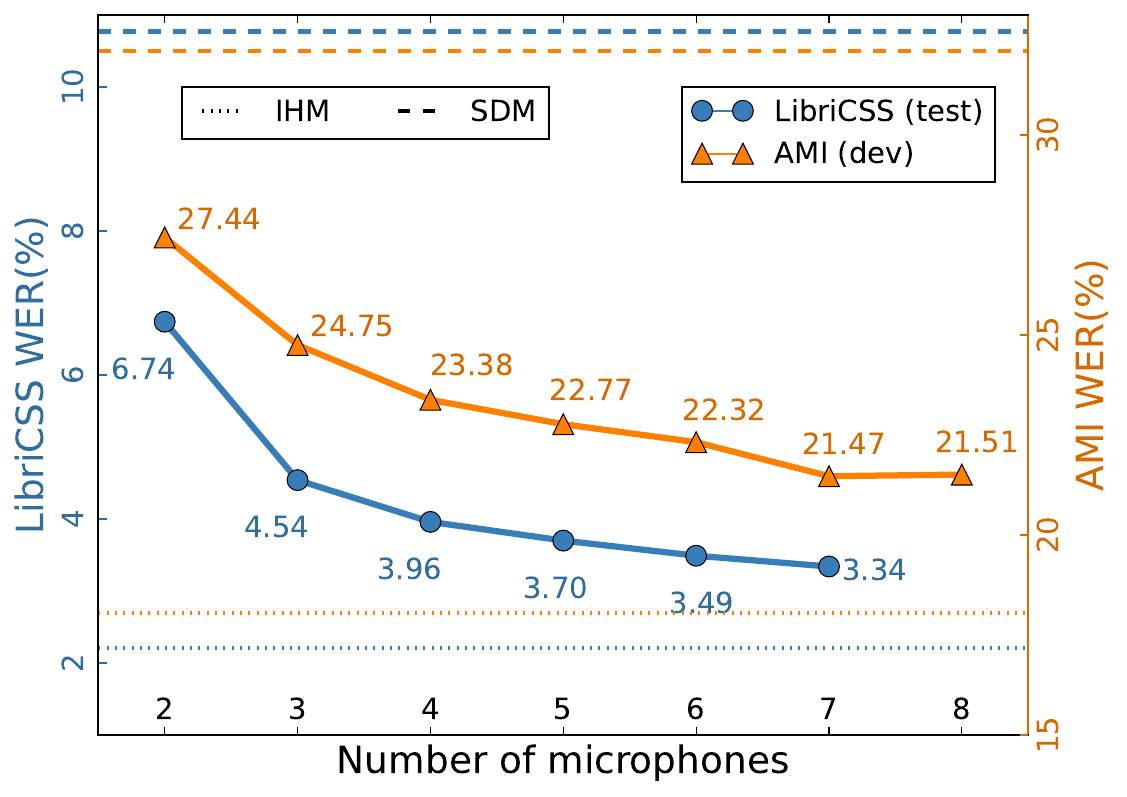}
\captionlistentry{}
\label{fig:channels}
\end{subfigure}\hfill
\caption{Impact of several factors on ASR performance. In each figure, the left and right y-axes denoten WERs for LibriCSS (test) and AMI (dev), respectively, with the axes scaled according to the range of corresponding WER values.}
\label{fig:gss_analysis}
\end{figure}

We performed ablation studies to investigate the effect of several GSS parameters --- WPE, noise class, context duration, number of iterations for CACGMM inference, and number of input channels --- on the downstream ASR performance, as shown in Fig.~\ref{fig:gss_analysis}. 
WPE was found to be more important for LibriCSS, while using an additional noise class was more important for AMI (Fig.~\ref{fig:wpe}). 
This may be because AMI contains occasional background noise, which is absent in LibriCSS. 
Increasing the context duration from 5s to 15s resulted in consistent WER gains (Fig.~\ref{fig:context}). 
\citet{Boeddeker2018FrontendPF} have previously made a similar observation for CHiME-5~\citep{Barker2015TheT}, where a 15s context resulted in better WER compared to a 2s context.
However, we found that adding further context beyond 15s degraded WER performance.
If the context is expanded too far, the context window may also contain target-speaker segments, which would contaminate the statistics of the noise class.

For both datasets, increasing the number of BSS iterations (for CACGMM inference) beyond 5 did not result in any WER improvements (Fig.~\ref{fig:iter}).
Finally, using more input channels was found to be the single most important factor for better WER performance. 
For example, using seven input channels resulted in relative WER reduction of 50.4\% and 21.8\% on LibriCSS and AMI, respectively, compared to using two channels. 
Nevertheless, it follows the law of diminishing returns, as evident by the rate of decay in Fig.~\ref{fig:channels}.

\subsubsection{Analysis of speed-up}
\label{sec:gss_speed}

We compared our GSS implementation with the original GSS on the CHiME-6 development set in terms of wall clock time and ASR performance, as shown in Table~\ref{tab:gss_speed}. 
For ASR inference, we used the publicly available Kaldi recipe and pretrained models from JHU-CLSP's submission to the CHiME-6 challenge\footnote{\url{https://github.com/kaldi-asr/kaldi/blob/master/egs/chime6/s5b_track1}}$^,$\footnote{\url{https://kaldi-asr.org/models/m12}}~\cite{Arora2020TheJM}. 
We found that our implementation obtained an effective speed-up of 290 without any degradation in WER. 
Since CHiME-6 has non-stationary speakers, we disabled segment batching for this experiment.
We can obtain even further speed-ups by enabling this for meeting-like data where speakers are stationary.

\begin{table}[t]
\centering
\caption{Compute time for our GSS implementation compared with original on CHiME-6 dev set, using all available channels, 15s context, and 20 BSS iterations. ``Time'' is the actual wall clock time (in hours), while ``cum. time'' is the effective total time for all jobs. Speedup is the ratio of the cumulative times.}
\label{tab:gss_speed}
\adjustbox{max width=\linewidth}{
\begin{tabular}{@{}lcrrrrr@{}}
\toprule
\textbf{GSS} & \textbf{Compute} & \textbf{Time} & \textbf{Cum. time} & \textbf{Speedup} & \textbf{WER} & \textbf{+RNNLM} \\
\midrule
\textbf{Original} & 80 x Xeon & 19.3 & 1542.6 & 1.0 & 44.7 & 43.5 \\
\textbf{Ours} & 4 x V100 & 1.3 & 5.3 & 292.2 & 44.2 & 43.1  \\
\bottomrule
\end{tabular}
}
\end{table}

\section{Conclusion}

In this chapter, we described the problem of target speaker extraction, and showed how to do it using guided source separation (GSS).
Although GSS was first proposed in \citet{Boeddeker2018FrontendPF} for the CHiME-5 challenge, it has seen limited use for meeting transcription, partially because of a slow and iterative implementation.
This has also made it challenging to perform detailed ablation studies for this method.
To solve this issue, we proposed a novel GPU-accelerated implementation of this technique, drawing inspiration from modern deep learning pipelines. 
On the CHiME-6 benchmark, it was found to be 300x faster than the original implementation, thus removing the computational bottleneck associated with this technique. 

Through experiments conducted on LibriCSS, AMI, and AliMeeting, we showed that GSS provides between 10 and 20 dB SI-SDR improvements compared to single-channel far-field recordings, when measured against close-talk recordings, and significant perceptual and intelligibility improvements.
Extrinsic evaluation on ASR and speaker similarity showed that GSS-based enhancement can recover up to 80\% of the WER difference, in going from close-talk to far-field conditions, and also drastically improve target-speaker information in the signal.
We performed several ablation studies to study the effect of GSS parameters, and showed that using more input channels is the single most important factor for better ASR performance.

With this implementation and analysis, we have developed another important component in our modular system for speaker-attributed multi-talker ASR.
A fast and accurate target-speaker extraction method is essential to be able to obtain single-speaker segments from multi-talker mixtures, which can then be transcribed by an ASR module.
Furthermore, using GSS for this task allows us to leverage the overlap-aware diarization system that we have developed earlier.
Our ablation studies show that we can obtain good TSE performance with GSS with few iterations, provided we have enough input channels.
This is useful because most devices, such as smart speakers, already contain array microphones which can provide such multi-channel inputs.
More importantly, the GSS method is invariant to the number of input channels --- this means that the same TSE module can be used across a wide variety of scenarios, and the output can be fed into a universal ASR component.
In the next chapter, we will tie together all of these components to build a complete pipeline for speaker-attributed transcription.


\chapter{A Modular Framework for Multi-talker Speech Recognition}
\label{chap:modular}

In the last several chapters, we formulated and analyzed individual components that make up the modular framework for multi-talker ASR.
With these components in place, we can build a system that takes as input a multi-channel recording and generates a speaker-attributed transcription of the recording.
In this chapter, we describe this modular system and use it to perform transcription for several meeting-style benchmark data.
We will then analyze the performance to understand the role of the different components in causing and propagating errors in transcription.

\section{Probabilistic formulation of multi-talker ASR}

Let $R$ be a long, multi-channel recording, recorded using $M$ microphones.
The objective of the multi-talker ASR (or ``who spoke what'') problem is to obtain a speaker-attributed transcription, $W(R)$, defined as
\begin{equation}
    W(R) = \{\mathbf{w}_1,\ldots,\mathbf{w}_K\},
\end{equation}
where $K$ is the number of speakers in the recording $R$, and $\mathbf{w}_k$ denotes the transcription for speaker $k$.
For evaluation, we will consider the concatenated minimum-permutation word error rate (cpWER) metric, described in Section~\ref{sec:intro_metrics}, and so all permutations of $\mathbf{w}_k$'s are equivalent under cpWER.

In the modular framework, instead of directly optimizing for $W(R)$, we will instead compute a proxy solution $h(R)$, which is defined as
\begin{equation}
\label{eq:modular_task}
    h(R) = \{(\Delta_j, u_j, \mathbf{y}_j): 1\leq j \leq N\},
\end{equation}
where $\Delta_j = (t_j^{\mathrm{st}},t_j^{\mathrm{en}})$ denotes the start and end times for segment $j$, $u_j \in [K]$ is the speaker label assigned to segment $j$, $N$ denotes the number of estimated \textit{segments}, and $\mathbf{y}_j$ denotes the predicted transcript for segment $j$.
It is easy to show that a deterministic function $\mathcal{Y}$ exists that maps $h(R)$ to some $W(R)$ by concatenating the transcripts of same-speaker segments in order of start time.
Let $\hat{h}$ maximize $P(h(R) \mid R)$ and $\hat{W}$ maximize $P(W \mid R)$.
Since $\mathcal{Y}$ is many-to-one, we cannot guarantee that $\hat{W} = \mathcal{Y}(\hat{h})$ in general.
Nevertheless, we will use this proxy task since it provides convenient factorization.

From a probabilistic perspective, finding the optimal $h(R)$ can be formulated as a maximum \textit{a posteriori} problem such that
\begin{equation}
    h(R) = \text{arg}\max_{h^{\prime}} P(h^{\prime} \mid R), \quad \text{where} \quad h^{\prime} = \{(\Delta_j,u_j,\mathbf{y}_j):1\leq j \leq N\}.
\end{equation}
We can simplify the $P(h^{\prime} \mid R)$ term through basic probability rules and conditional independence assumptions.
Suppose $\Delta_1^N = \{\Delta_j: 1\leq j \leq N\}$, and likewise for $u_1^N$ and $\mathbf{y}_1^N$.
We have
\begin{align}
P(h^{\prime} \mid R) &= P\left(\Delta_1^N,u_1^N,\mathbf{y}_1^N \mid R\right) \\
&= P\left(\Delta_1^N,u_1^N \mid R\right) P\left(\mathbf{y}_1^N \mid R, \Delta_1^N,u_1^N \right). \label{eq:modular_diar}
\end{align}
Here, the first term may be considered a probabilistic formulation of the speaker diarization task.
For the second term, let us define $N$ continuous random variables $\mathbf{X}_1,\ldots,\mathbf{X}_N$ representing the ``target speaker signal'' for the $N$ segments, respectively.
We can then marginalize over the $\mathbf{X}_1^N$ to get
\begin{align}
P\left(\mathbf{y}_1^N \mid R, \Delta_1^N, u_1^N \right) &= \int_{\mathbf{X}_1^N} P\left(\mathbf{X}_1^N,\mathbf{y}_1^N \mid R, \Delta_1^N, u_1^N \right) \\
&= \int_{\mathbf{X}_1^N} P\left(\mathbf{X}_1^N \mid R, \Delta_1^N, u_1^N \right) P\left(\mathbf{y}_1^N \mid R, \Delta_1^N,u_1^N,\mathbf{X}_1^N \right) \\
&= \int_{\mathbf{X}_1^N} \prod_{j=1}^N P\left(\mathbf{X}_j \mid R, \Delta_j, u_j \right) \prod_{j=1}^N P\left(\mathbf{y}_j \mid \mathbf{X}_j \right), \label{eq:modular_condasr}
\end{align}
where \eqref{eq:modular_condasr} is based on the following conditional independence assumptions.
\begin{enumerate}
    \item Given $R$, $\Delta_j$, and $u_j$, $\mathbf{X}_j$ is conditionally independent of $\mathbf{X}_i$, $\Delta_i$, and $u_i$, $\forall i \in [N]\setminus\{j\}$.
    \item Given $\mathbf{X}_j$, $\mathbf{y}_j$ is conditionally independent of all other random variables.
\end{enumerate}
We will discuss these assumptions in more detail in the following subsections.
Plugging \eqref{eq:modular_condasr} into \eqref{eq:modular_diar}, we get
\begin{equation}
    h(R) = \mathrm{arg}\max_{\Delta_1^N,u_1^N,\mathbf{y}_1^N} \left[ P\left(\Delta_1^N,u_1^N \mid R\right) \int_{\mathbf{X}_1^N} \prod_{j=1}^N P\left(\mathbf{X}_j \mid R, \Delta_j, u_j \right) \prod_{j=1}^N P\left(\mathbf{y}_j \mid \mathbf{X}_j \right) \right].
\label{eq:modular_obj}
\end{equation}
Under the ``modular'' framework, we make some further modeling choices for computational tractability.
Instead of solving for the joint optimum in \eqref{eq:modular_obj}, we first solve the diarization task to obtain
\begin{equation}
    \hat{\Delta}_1^N, \hat{u}_1^N = \underbrace{\arg \max_{\Delta_1^N, u_1^N} P\left(\Delta_1^N,u_1^N \mid R\right)}_{\text{speaker diarization}}.
\end{equation}

Next, we assume that $\mathbf{X}_j$ is approximated using a deterministic function $g(\cdot)$ which is estimated through the target speaker extraction task as
\begin{equation}
\label{eq:modular_tse}
    \hat{\mathbf{X}}_j = \underbrace{g(R,\hat{\Delta}_j,\hat{u}_j)}_{\text{target speaker extraction}}.
\end{equation}
Following this assumption, we can remove the marginalization (since we are using a point-wise estimate for $\mathbf{X}_1^N$) and the second term in \eqref{eq:modular_condasr} can be approximated as
\begin{equation}
\label{eq:modular_approx}
P\left(\mathbf{y}_1^N \mid R, \Delta_1^N, u_1^N \right) \approx \prod_{j=1}^N \underbrace{P\left(\mathbf{y}_j \mid \hat{\mathbf{X}}_j \right)}_{\text{speech recognition}}.
\end{equation}

\begin{figure}
    \centering
    \includegraphics[width=0.8\linewidth]{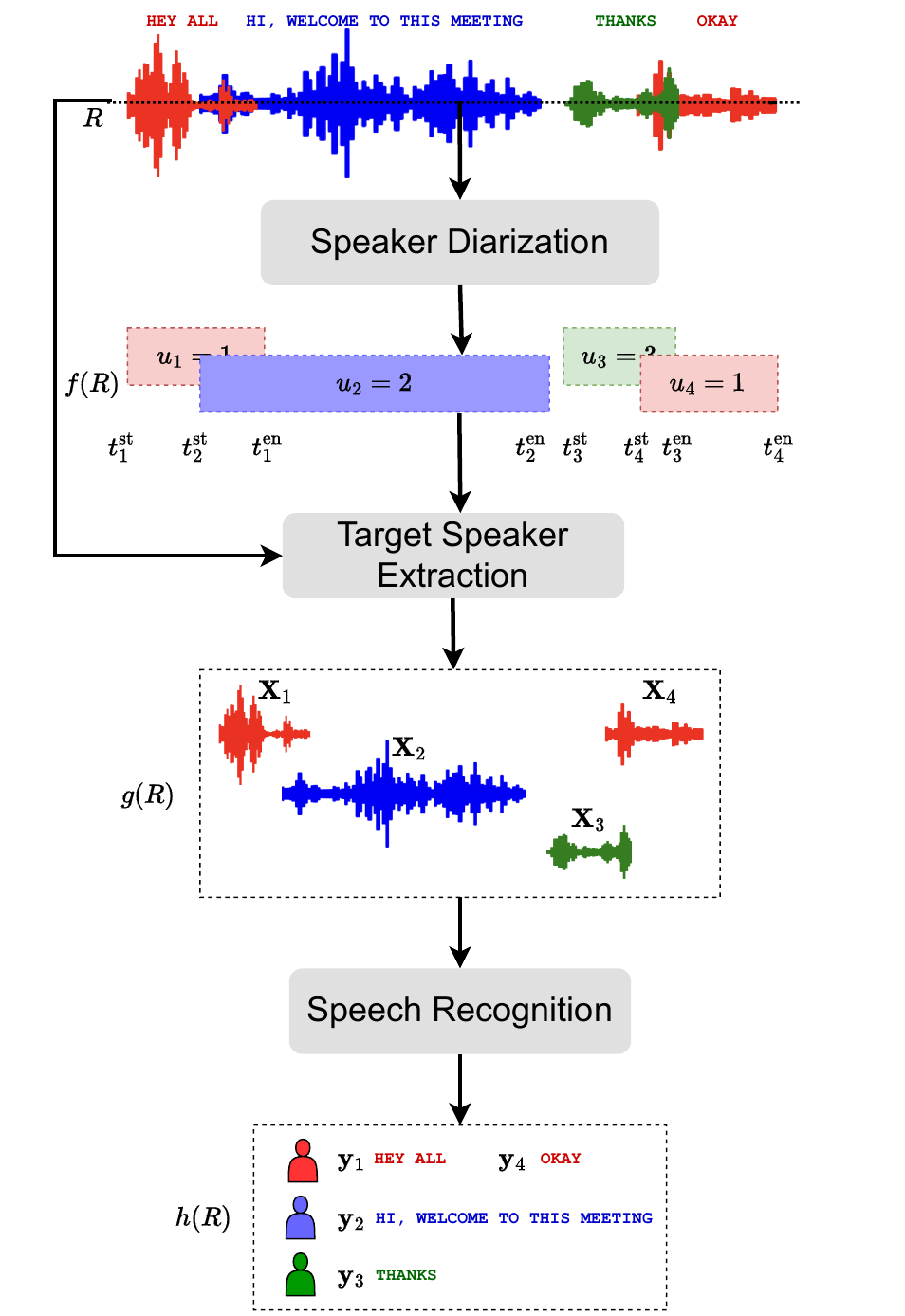}
    \caption{Flow diagram of the modular system for speaker-attributed transcription. The input is a multi-channel recording, and the output is speaker-wise transcripts.}
    \label{fig:modular_system}
\end{figure}

A block diagram representing this modular system is shown in Fig.~\ref{fig:modular_system}.
It consists of three components: (i) overlap-aware speaker diarization; (ii) target-speaker extraction; and (iii) automatic speech recognition (ASR).
In the previous chapters, we have described (i) and (ii) in some detail.
Next, we will summarize them in the context of the full system, and describe the ASR component, which is based on neural transducers.

\subsection{Overlap-aware diarization}

As described earlier in Section~\ref{sec:oasc_background}, a diarization system is first required to obtain $f(R) = \{(\Delta_j, u_j): 1\leq j \leq N\}$.
Although $R$ is a multi-channel recording, our diarization system only makes use of the first-channel to compute $f(R)$.
Specifically, we used the multi-class spectral clustering based diarization system with overlap assignment, described in Section~\ref{sec:oasc_method}, which consists of (i) a speech activity detector (SAD), (ii) a speaker embedding extractor, (iii) an overlap detector, and (iv) an overlap-aware spectral clustering module. 

As explained in Section~\ref{sec:intro_metrics}, performing well in diarization --- using the diarization error rate (DER) metric --- requires accurate estimation of $\Delta_1^N$.
However, our eventual objective is to perform speaker-attributed transcription, which is evaluated using cpWER.
While cpWER does not explicitly penalize incorrect estimates of $\Delta_1^N$, it is implicitly penalized since we need accurate time-stamps for two reasons.
First, accurate estimate of speaker activities and segmentation is required for target-speaker extraction to work well, as evident from \eqref{eq:modular_tse}.
Second, the final transcription is obtained by time-ordered concatenation of a speaker's transcripts, for which we also need reasonably accurate start times.
For these reasons, downstream cpWER performance is weakly correlated with the performance of the diarization system.

\subsection{Target-speaker extraction}

From \eqref{eq:modular_tse}, we have modeled $\mathbf{X}_j$ using target-speaker extraction as
\begin{equation}
\nonumber
\hat{\mathbf{X}}_j = g(R,\hat{\Delta}_j,\hat{u}_j).
\end{equation}

$\hat{\mathbf{X}}_j$ contains only the signal relevant to speaker $\hat{u}_j$, and suppresses background noise or interfering speakers.
In our pipeline, this target-speaker extraction is performed using the guided source separation (GSS) method described in Chapter~\ref{chap:gss}.
By using GSS, our conditional independence assumptions in \eqref{eq:modular_condasr} are valid, since the extraction is performed for each segment independent of other segments.

\subsection{Automatic speech recognition}

Finally, given the output of the targer-speaker extractor, an ASR model is applied on each segment $\hat{\mathbf{X}}_j$ to obtain the corresponding transcript $\mathbf{y}_j$.
These are then combined with $f(R) = \{(\Delta_j,u_j):1 \leq N\}$ to obtain the desired output, i.e., $h(R)=\{(\Delta_j,u_j,\mathbf{y}_j):1 \leq N\}$.

In general, any offline ASR model can be used for this component, such as hybrid HMM-DNN~\cite{Bourlard1996HybridCM}, connectionist temporal classification (CTC)~\cite{Graves2006ConnectionistTC}, or attention-based encoder-decoder (AED)~\cite{Chorowski2015AttentionBasedMF}. 
We used neural transducers in our implementation.
Neural transducers (using RNNs or transformers)~\cite{Graves2012SequenceTW} have become the dominant modeling technique in end-to-end on-device speech recognition~\cite{He2019StreamingES, Wu2020StreamingTA, Li2020TowardsFA, Shangguan2019OptimizingSR}, since they allow streaming transcription similar to CTC models~\cite{Zhang2021BenchmarkingLC, Li2019ImprovingRT}, while still retaining conditional dependence, like AEDs~\cite{Chan2016ListenAA,Kim2017JointCB}.
We will briefly describe the formulation of neural transducers for ASR.

In conventional single-talker ASR, audio features for a segmented utterance $\mathbf{X} \in \mathbb{R}^{T\times F}$ (where $T$ and $F$ denote the number of time frames and the input feature dimension, respectively) are provided as input to the system, and we are required to predict the transcript $\mathbf{y} = (y_1,\ldots,y_U)$, where $y_u \in \mathcal{V}$ denotes output units such as graphemes or word-pieces, and $U$ is the length of the label sequence. 
For the case of discriminative training, this requires computing the conditional likelihood $P(\mathbf{y} \mid \mathbf{X})$ (or its log for numerical stability). For inference, we search for $\hat{\mathbf{y}} = \text{arg}\max_{\mathbf{y}}P(\mathbf{y} \mid \mathbf{X})$, often in a constrained search space using greedy or beam search.
Transducers achieve this by marginalizing over the set of all alignments $\mathbf{a} \in \bar{\mathcal{V}}^{T+U}$, where $\bar{\mathcal{V}} = \mathcal{V}\cup \{\phi\}$ and $\phi$ is called the blank label.
Formally,
\begin{equation}
P(\mathbf{y} \mid \mathbf{X}) = \sum_{\mathbf{a}\in \mathcal{B}^{-1}(\mathbf{y})} P(\mathbf{a} \mid \mathbf{X}),
\label{eq:rnnt}
\end{equation}
where $\mathcal{B}$ is a deterministic mapping from an alignment $\mathbf{a}$ to an output sequence $\mathbf{y}$. 
In the original transducer, all alignments $\mathbf{a}$ consist of $\mathbf{y}$ interspersed with $T$ blank tokens, usually represented as a $T\times U$ lattice with $\phi$ on horizontal arcs and $\mathbf{y}$ on vertical arcs. 
Since there may be ${T+U \choose U}$ such paths on the lattice, some transducer variants (such as recurrent neural aligner~\cite{Sak2017RecurrentNA} or monotonic RNN-T~\cite{Tripathi2019MonotonicRN,Moritz2022AnIO}) restrict the number of non-blank tokens emitted per time step. 
External or internal alignments may also be used to further prune the lattice for marginalization~\cite{Mahadeokar2020AlignmentRS,Kuang2022PrunedRF}.  

\begin{figure}
    \centering
    \includegraphics[width=0.6\linewidth]{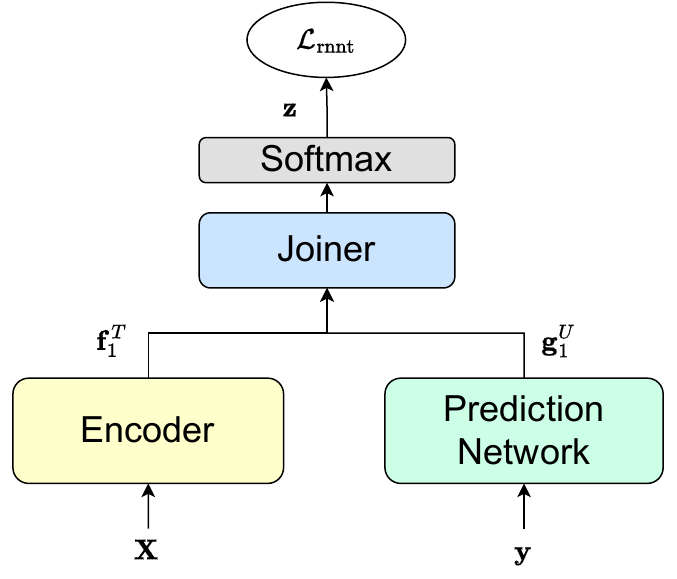}
    \caption{Block diagram of the neural transducer.}
    \label{fig:modular_transducer}
\end{figure}

Transducers parameterize $P(\mathbf{a} \mid \mathbf{X})$ with an encoder, a prediction network, and a joiner, as shown in Fig.~\ref{fig:modular_transducer}.
The encoder maps $\mathbf{X}$ into hidden representations $\mathbf{f}_1^T$, while the prediction network maps $\mathbf{y}$ into $\mathbf{g}_1^U$.
The joiner combines the outputs from the encoder and the prediction network to compute logits $\mathbf{z}_{t,u}$ which are fed to a softmax function to produce a posterior distribution over $\bar{\mathcal{V}}$ for each time step. 
Under assumptions of full context encoder, we can expand (\ref{eq:rnnt}) as
\begin{align}
    P(\mathbf{y} \mid \mathbf{X}) &= \sum_{\mathbf{a}\in \mathcal{B}^{-1}(\mathbf{y})} \prod_{t=1}^{T+U} P(\mathbf{a}_t \mid \mathbf{f}_1^T,\mathbf{g}_1^{u(t)-1}) \\
    &= \sum_{\mathbf{a}\in \mathcal{B}^{-1}(\mathbf{y})} \prod_{t=1}^{T+U} \mathrm{Softmax}(\mathbf{z}_{t,u(t)}),
\end{align}
where $u(t)\in\{1,\ldots,U\}$ denotes the index in the label sequence at time $t$.
%

\section{Experimental setup}

We show speaker-attributed transcription results on three meeting benchmarks: LibriCSS, AMI, and AliMeeting.
For these experiments, we report diarization error rates (DER) and concatenated minimum-permutation WER (cpWER) in order to analyze the impact of diarization errors on downstream ASR. 
In the case of AliMeeting, we report cpCER (instead of cpWER) since this is a Mandarin dataset.
We did not use any collars to compute DERs for LibriCSS and AMI, but a collar of 0.25 was used for AliMeeting following the original work. 

For speaker embedding extraction, we used a pre-trained ResNet101-based network~\cite{Landini2020BayesianHC}, which was trained on VoxCeleb~\cite{Nagrani2017VoxCelebAL,Nagrani2020VoxcelebLS} and CN-Celeb~\cite{Fan2020CNCelebAC}.
For SAD and overlap detection, we used an end-to-end segmentation method based on Pyannote~\cite{Bredin2021EndtoendSS}.
The segmentation model was fine-tuned on the training set annotations for AMI and AliMeeting, and used off-the-shelf for LibriCSS.
For the ASR component, we used neural transducers with the same configuration as the ones described in Section~\ref{sec:gss_setup}.
We have reproduced the key results from Table~\ref{tab:gss_asr} below for reference, as these denotes the minimum achievable cpWER if we have access to an oracle diarization system.

\begin{table}[t]
\centering
\caption{WER (or CER) results using an oracle diarization system and GSS. These represent the minimum achievable cpWER (or cpCER) under the modular framework using the same ASR.}
\label{tab:modular_asr}
\begin{tabular}{@{}lcccc@{}}
\toprule
\textbf{Dataset} & \textbf{Ins.} & \textbf{Del.} & \textbf{Sub.} & \textbf{WER} \\ \midrule
LibriCSS & 0.31 & 0.89 & 2.14 & 3.34 \\
AMI & 2.43 & 6.07 & 14.33 & 22.83 \\
AliMeeting & 1.09 & 4.87 & 9.03 & 14.98 \\ 
\bottomrule
\end{tabular}
\end{table}

\section{Results \& Discussion}
\label{sec:modular_asr}

\begin{table}[tbp]
\centering
\caption{Effect of diarization and GSS components on speaker-attributed ASR, as measured by cpWER (or cpCER).}
\label{tab:modular_diar}
\adjustbox{max width=\linewidth}{
\begin{tabular}{@{}lccccccccc@{}}
\toprule
\multirow{2}{*}{\textbf{Diarizer}} & \multicolumn{4}{c}{\textbf{DER}} & \multicolumn{1}{c}{\multirow{2}{*}{\textbf{GSS}}} & \multicolumn{4}{c}{\textbf{cpWER}} \\
\cmidrule(r{2pt}){2-5} \cmidrule(l{2pt}){7-10}
& \multicolumn{1}{c}{\textbf{FA}} & \multicolumn{1}{c}{\textbf{MS}} & \multicolumn{1}{c}{\textbf{Conf.}} & \multicolumn{1}{c}{\textbf{Total}} & \multicolumn{1}{c}{} & \multicolumn{1}{c}{\textbf{Ins.}} & \multicolumn{1}{c}{\textbf{Del.}} & \multicolumn{1}{c}{\textbf{Sub.}} & \multicolumn{1}{c}{\textbf{Total}} \\
\midrule
\multicolumn{10}{c}{\textbf{LibriCSS}} \\
\hline
\multirow{2}{*}{\textbf{Spectral}} & \multirow{2}{*}{1.19} & \multirow{2}{*}{10.37} & \multirow{2}{*}{3.37} & \multirow{2}{*}{14.93} & \xmark & 1.00 & 13.62 & 3.69 & 18.30 \\
 &  &  &  &  & \cmark & 0.73 & 12.33 & 2.80 & 15.86 \\
\multirow{2}{*}{\textbf{~~ + OVL}} & \multirow{2}{*}{2.22} & \multirow{2}{*}{3.79} & \multirow{2}{*}{5.33} & \multirow{2}{*}{11.34} & \xmark & 2.64 & 8.09 & 6.36 & 17.09 \\
 &  &  &  &  & \cmark & 1.62 & 7.13 & 3.38 & \textbf{12.12} \\
\hline \hline
\multicolumn{10}{c}{\textbf{AMI}} \\
\hline
\multirow{2}{*}{\textbf{Spectral}} & \multirow{2}{*}{3.24} & \multirow{2}{*}{18.15} & \multirow{2}{*}{4.14} & \multirow{2}{*}{25.53} & \xmark & 2.64 & 20.32 & 15.49 & 38.45 \\
 &  &  &  &  & \cmark & 2.59 & 18.00 & 12.96 & 33.55 \\
\multirow{2}{*}{\textbf{~~ + OVL}} & \multirow{2}{*}{7.39} & \multirow{2}{*}{9.63} & \multirow{2}{*}{6.67} & \multirow{2}{*}{23.69} & \xmark & 4.37 & 14.47 & 19.70 & 38.54 \\
 &  &  &  &  & \cmark & 3.57 & 12.24 & 15.21 & \textbf{31.02} \\
\hline \hline
\multicolumn{10}{c}{\textbf{AliMeeting}} \\
\hline
\multirow{2}{*}{\textbf{Spectral}} & \multirow{2}{*}{0.17} & \multirow{2}{*}{13.60} & \multirow{2}{*}{2.60} & \multirow{2}{*}{16.37} & \xmark & 1.16 & 26.38 & 10.05 & 37.59 \\
 &  &  &  &  & \cmark & 0.86 & 24.34 & 7.23 & 32.43 \\
\multirow{2}{*}{\textbf{~~ + OVL}} & \multirow{2}{*}{2.83} & \multirow{2}{*}{5.96} & \multirow{2}{*}{5.64} & \multirow{2}{*}{14.43} & \xmark & 2.32 & 18.82 & 14.30 & 35.44 \\
 &  &  &  &  & \cmark & 1.69 & 17.00 & 9.76 & \textbf{28.45} \\
\bottomrule
\end{tabular}}
\end{table}

Table~\ref{tab:modular_diar} presents the main results for this chapter. 
For each of the three meeting benchmarks, we show the DER and corresponding cpWER results when the diarization system is (or is not) overlap-aware.
Furthermore, for each of these systems, we compare the performance when using GSS-based target-speaker extraction, as opposed to simply extracting the corresponding segment from the input mixture.
These results demonstrate the effect of (i) the overlap-aware spectral clustering and (ii) GSS-based TSE modules, on downstream speaker-attributed ASR performance.

\subsection{Results for different benchmarks}

Let us first consider the results for the different benchmarks.
As expected, LibriCSS is the easiest dataset since it is simulated from LibriSpeech utterances and contains long utterances.
On the other hand, AMI and AliMeeting are significantly harder due to the presence of natural conversational speech (such as backchannels).
Overall, the breakdown of errors in cpWER follows that of the diarization errors, i.e., insertion, deletion, and substitutions are roughly proportional to false alarms, missed speech, and speaker confusion, respectively.
Most errors come from deletions, possibly due to missed detection of overlapped speech.
For the real meeting benchmarks, substitutions are also significantly higher than LibriCSS.
We conjecture that very short utterances, such as those found in these datasets, are hard for speaker attribution, since the speaker embedding extractors are usually trained on longer utterances.

\subsection{Effect of overlap-aware diarization}

Next, let us consider the impact of overlap-aware diarization, i.e., ``Spectral'' versus ``+OVL``.
Overlap assignment improves diarization performance significantly, as measured by DER.
However, when no GSS is performed, the impact on downstream cpWER is negligible.
For example, the cpWER on LibriCSS improves by 1.21\% absolute, while it actually degrades for AMI (from 38.45\% to 38.54\%).
This is due to the fact that although our system detects more overlapping segments, these cannot be transcribed correctly since the ASR cannot handle overlapping speech.
This finding corroborates the results of the winning CHiME-6 system~\cite{Medennikov2020TheSS}, which was able to substantially improve ASR performance on unsegmented recordings using TS-VAD based diarization~\cite{Medennikov2020TargetSpeakerVA} and GSS-based enhancement.

\subsection{Effect of target-speaker extraction}

Finally, using GSS results in significant improvements, with relative cpWER (or cpCER) reductions of 29.1\%, 19.5\%, and 19.7\% on LibriCSS, AMI, and AliMeeting, respectively. 
Nevertheless, the cpWER performance is still far from the WERs obtained using an oracle diarizer.
While part of the difference is due to incorrect speaker attribution of the ASR output, other reasons are undetected overlapping segments and poor mask estimation due to inaccurate speaker activities.

\subsection{Qualitative analysis}

\begin{table}[tp]
\centering
\caption{Examples of speaker-attributed transcription for a short segment of the session ES2011a from AMI \texttt{dev} set, between 817s and 833s. The first block shows the reference transcript for this segment. Each line in the transcript denotes a different speaker. Different segments of the same speaker are separated with a pipe (``|'') character.}
\label{tab:modular_example}
\adjustbox{max width=\linewidth}{
\begin{tabular}{@{}lcp{9cm}r@{}}
\toprule
\textbf{Diarizer} & \textbf{GSS} & \textbf{Transcript} & \textbf{cpWER} \\
\midrule
-- & -- & \scriptsize \begin{tabular}{@{}p{9cm}}
\texttt{I ALSO THINK THOUGH THAT IT SHOULDN'T HAVE TOO MANY BUTTONS 'CAUSE I HATE THAT WHEN THEY HAVE TOO MANY BUTTONS AND I MEAN I KNOW IT HAS TO HAVE ENOUGH FUNCTIONS BUT LIKE I DON'T KNOW YOU JUST HAVE LIKE EIGHT THOUSAND BUTTONS AND YOU'RE LIKE NO YOU NEVER USE HALF OF THEM | SO} \\
\texttt{YEAH I AGREE | B BUTTON AND THE F BUTTON THEY DON'T DO ANYTHING} \\
\texttt{UM OH WE JUST | YEAH} \\
\texttt{YEAH YEAH YEAH}
\end{tabular} & 0\% \\
\hline

Oracle & \cmark & \scriptsize \begin{tabular}{@{}p{9cm}}
\texttt{I ALSO THINK THOUGH THAT IT SHOULDN'T HAVE TOO MANY BUTTONS 'CAUSE I HATE THAT ONLY HAVE TOO MANY BUTTONS AND I MEAN I KNOW IT HAS TO HAVE ENOUGH FUNCTIONS BUT LIKE JUST HAVE LIKE EIGHT THOUSAND BUTTONS AND YOU'RE LIKE NO YOU NEVER USE HALF OF THEM} \\
\texttt{YEAH I AGREE M THE BUTTON ON THE OFF BUTTON THEY DON'T DO ANYTHING} \\
\texttt{UM OH WE'RE JUST USING THAT | YEAH} \\
\texttt{YEAH YEAH}
\end{tabular} &  20.3\% \\
\hline

Spectral & \cmark & \scriptsize \begin{tabular}{@{}p{9cm}}
\texttt{I ALSO THINK THOUGH THAT IT SHOULDN'T HAVE TOO MANY BUTTONS 'CAUSE I HATE THAT ONLY HAVE TOO MANY BUTTONS AND I MEAN I KNOW IT HAS TO HAVE MANY FUNCTIONS BUT LIKE | I DUNNO JUST HAVE LIKE EIGHT THOUSAND BUTTONS AND YOU'RE LIKE NO YOU NEVER USE HALF OF THEM} \\
\end{tabular} & 40.5\% \\
\hline

\multirow{2}{*}{~~ + OVL} & \cmark & \scriptsize \begin{tabular}{@{}p{9cm}}
\texttt{I ALSO THINK THOUGH THAT IT SHOULDN'T HAVE TOO MANY BUTTONS 'CAUSE I HAD THAT ONLY HAVE TOO MANY BUTTONS AND I MEAN I KNOW IT HAS TO HAVE ENOUGH FUNCTIONS BUT LIKE | I DUNNO JUST HAVE LIKE EIGHT THOUSAND BUTTONS AND YOU'RE LIKE NO YOU NEVER USE HALF OF THEM} \\
\texttt{S YEAH I AGREE M THE BUTTON ON F BUTTON THEY DON'T DO ANYTHING} \\
\end{tabular} &  29.1\% \\
\cline{2-4}

 & \xmark & \scriptsize \begin{tabular}{@{}p{9cm}}
\texttt{I ALSO THINK THAT IT SHOULDN'T HAVE TOO MANY BUTTONS 'CAUSE I HATED NOT ONLY HAVE TOO MANY BUTTONS AND THINGS BUT I MEAN I KNOW IT HAS TO HAVE NO MANY FUNCTIONS BUT LIKE | I DUNNO JUST HAVE LIKE EIGHT THOUSAND BUTTONS AND YOU'RE LIKE YOU KNOW YOU NEVER USE HALF THE TIME} \\
\texttt{IT SHOULDN'T HAVE TOO MANY BUTTONS 'CAUSE I HATE THAT ONLY HAVE TOO MANY BUTTONS AND THINGS BUT I MEAN I KNOW IT HAS TO HAVE NO MANY FUNCTIONS BUT LIKE} \\
\end{tabular} &  72.2\% \\ 
\bottomrule
\end{tabular}}
\end{table}

In the previous sections, we have quantitatively described the effect of overlap-aware diarization and GSS on transcription in terms of cpWER.
Let us now consider an illustrative example of the transcriptions produced by the various systems, in order to qualitatively understand the differences.
In Table~\ref{tab:modular_example}, we show the reference and hypotheses transcriptions for a 16-second segment (from 817s to 833s) of session ES2011a, which is part of the AMI \texttt{dev} set.
The reference contains several segments of speech from four different speakers, as shown in the first block of the table.
Two of the four speakers only have very short utterances or back-channels.

The best possible system, obtained using an oracle diarizer and GSS-based TSE, obtains a cpWER of 20.3\% on this segment.
We see that most of the segments are transcribed relatively well with small errors due to conversational speech and noise.
If we replace the oracle diarizer with spectral clustering, we only obtain a single speaker, which is the dominant speaker in the segment.
This is the most common pattern of error in such systems, where interfering speakers are completely missed.
The resulting cpWER for this segment is 40.5\%, of which 35.4\% is caused by deletion errors.

On using overlap-aware spectral clustering with GSS, the second speaker is recovered and transcribed correctly, resulting in a reduction of cpWER to 29.1\%.
As expected, the deletion error is reduced to 19.0\%.
However, it is still difficult to recover very short utterances and backchannels from the other speakers.
We believe that a more powerful diarizer, such as TS-VAD~\cite{Medennikov2020TargetSpeakerVA}, may be better for these difficult segments.
Finally, if we do not use GSS for target-speaker extraction (and simply cut out the segments from the mixture), the cpWER jumps to 72.2\%.
Most of this is caused by repeated transcription, where the first speaker is transcribed again even in the second speaker's segment, possibly because the first speaker is louder.
This example demonstrates the importance of a TSE component even when we have overlap-aware diarization.

\section{Limitations}

In this chapter, we described the complete pipeline for modular transcription of multi-talker conversations.
By formalizing the problem through a probabilistic lens and making appropriate conditional independence assumptions, we were able to identify key components of the system: speaker diarization, target-speaker extraction, and ASR.
We then leveraged the methods proposed in the previous chapters for these components, and stringed them together for the overall transcription process.
Empirical evaluations on synthetic and real meeting benchmarks demonstrated that this modular perspective can be a viable solution to the problem of multi-talker ASR.

Nevertheless, our experiments also revealed important limitations about the approach, most of which result from the approximations made in the probabilistic formulation.
Instead of optimizing the total probability distribution, we approximated the solution by optimizing the diarization component and using the result for TSE and ASR.
Such an approximation may result in ``error propagation'', where errors made due to inaccurate modeling of diarization result in degraded performance for the downstream components.

Even if the diarization results are perfect, the system can still produce suboptimal solutions due to the various conditional independence assumptions made in the TSE and ASR components.
For instance, the transcript $\mathbf{y}_{j+1}$ is very likely to depend on the dialog history, i.e., $\mathbf{y}_{1:j}$, but this dependence is not taken into account in the ASR modeling.
While such assumptions make the modeling more convenient (i.e., the components can be trained independently on easily available data), they may also make it impossible to obtain the optimal solution.

Another major limitation of this approach is its inability to perform streaming transcription.
The diarization task, i.e., computing $f(R)$, is formulated as a clustering problem, which makes it inevitable to have to wait until the end of the recording when all the $N$ segments are available, in order to assign the relative speaker labels $u_j$.
As a result, since the TSE and ASR components are dependent on the diarization output, they cannot produce streaming results either.

For these reasons, it is attractive to consider the alternative, ``end-to-end'' perspective of multi-talker ASR, which aims to directly optimize to address the original problem, while avoiding any approximations or modeling assumptions.
Such a framework is driven by the availability of data and computational resources at scale, and leverages neural networks to model the distribution.
In the second part of this dissertation, we will focus our attention on one such end-to-end framework ---- the Streaming Unmixing and Recognition Transducer (SURT) ---  which is based on neural transducers that we have so far used for ASR.
In particular, we will consider various challenges in designing and training such a model for the task of multi-talker ASR, and propose ways to jointly perform transcription and speaker attribution in the same model.


\chapter{Streaming Unmixing and Recognition Transducers}
\label{chap:surt}

In the last chapter, we described a modular pipeline for speaker-attributed transcription.
Despite showing promising results on real meeting benchmarks, the pipeline approach had several limitations, most notably error propagation and an inability to perform streaming transcription.
In the second part of this dissertation, we will approach the problem from an alternative perspective, inspired by end-to-end models for speech recognition.
In particular, building on the success of neural transducers for ASR, we will propose an extension known as the Streaming Unmixing and Recognition Transducer (SURT).
In this chapter, we will describe SURT in some detail, focusing specifically on aspects such as model design, training data simulation, loss functions, and training techniques.
We will constrain the problem to multi-talker transcription, delegating the speaker attribution aspect to the following chapter.
Instead, we will focus on problems in transcription arising due to sparsely overlapped speech, long sequences, and quick turn-taking. 

\section{Introduction}
\label{sec:intro}

The conventional modeling approach for multi-talker ASR is through a cascade of separation and recognition systems. 
This approach leverages advancements in speech separation research to obtain single-speaker audio~\cite{Wang2017SupervisedSS}, which can then be used with a regular ASR component~\cite{Raj2020IntegrationOS}. 
However, such a model may be sub-optimal since the components are independently optimized, and may also require greater engineering efforts for maintenance~\cite{Wu2021InvestigationOP}. 

Due to these limitations with cascaded systems, researchers have proposed jointly optimized models that combine separation and ASR and directly solve for the task of multi-talker transcription, often using a permutation-invariant training (PIT) objective~\cite{Yu2017RecognizingMS}. 
Such a paradigm has been explored in the context of hybrid HMM-DNN systems~\cite{Qian2017SingleChannelMS}, and more recently for end-to-end ASR~\cite{seki-etal-2018-purely}, with most research focusing on attention-based encoder-decoders (AEDs). 
For meeting transcription, a well-studied framework is \textit{serialized output training} (SOT), wherein multiple references including those that correspond to overlapped utterances are serialized into a single prediction sequence, using special tokens to demarcate speaker changes~\cite{Kanda2020SerializedOT}. 
A detailed review of related work is presented in Section~\ref{sec:surt_related}.

In this chapter, we describe a framework called Streaming Unmixing and Recognition Transducer (SURT), which is based on neural transducers.
This is beneficial since neural transducers (using RNNs or transformers) have become the standard modeling technique for on-device speech recognition~\cite{he2019streaming, Wu2020StreamingTA, Li2019RNNT} in single-speaker settings.
The key step in SURT is to separate overlapping speech into multiple simultaneous \textit{branches} (or channels), each of which is transcribed by a shared transducer.
SURT was originally proposed by \citet{Lu2020StreamingEM} for the case of two-speaker single-turn conversations, and later extended to handle long-form multi-turn recordings in \citet{Raj2021ContinuousSM}. 
The original SURT has also been used to jointly perform speaker identification~\cite{Lu2021StreamingEM}, endpointing~\cite{Lu2022EndpointDF}, and segmentation~\cite{Sklyar2022SeparatorTransducerSegmenterSR}, although these studies have all been restricted to the single-turn setting.
Concurrent to these explorations, a similar modeling strategy called Multi-turn RNN-T (MT-RNNT) has also been proposed in \citet{Sklyar2021MultiTurnRF}.
Here, we refer to these class of models as SURT, but the same ideas should also be applicable to MT-RNNT.

We will begin by describing the formulation of SURT in detail, focusing on its application to continuous, streaming, multi-talker ASR.
A naive implementation of the model may suffer from several limitations. 
For example, performance often degrades on multi-turn sessions due to \textit{omission} and \textit{leakage} related errors. 
Here, omission refers to the case when an utterance is missed by all output branches, whereas leakage happens when a non-overlapping segment is transcribed on multiple branches.
Additionally, SURT requires training on long sessions with the transducer loss, which may be computationally prohibitive, or even infeasible using typical academic computing resources. 
Furthermore, it is not clear whether the models trained using synthetic mixtures, as proposed in \citet{Lu2020StreamingEM} and \citet{Sklyar2021MultiTurnRF}, would transfer well to real-world settings.

We will then discuss several aspects of SURT relating to the model design, network architecture, training mixture simulation, loss functions, and training schemes, which are designed to solve one or more of the above challenges. 
We will conduct ablation studies to demonstrate the impact of each of these design choices, and show that SURT is a viable framework even in the case of real meeting transcription.
This will lay the foundation towards streaming speaker-attributed transcription in the next chapter.

\section{Related Work}
\label{sec:surt_related}

The problem of multi-talker speech recognition has traditionally been addressed using a cascade of separation and transcription systems~\cite{Raj2021IntegrationOS}.
Since these methods pose the same challenges as the modular system described in the first part of this dissertation, there has been increasing interest in joint modeling.
Early work on joint separation and ASR involved hybrid HMM-DNN models as the ASR backbone~\cite{Qian2017SingleChannelMS,Chang2018MonauralMS}. 
These models were often trained with auxiliary speaker information~\cite{Chang2018AdaptivePI} or using transfer learning from single-speaker acoustic models~\cite{Tan2018KnowledgeTI}. 
With the success and flexibility of end-to-end ASR systems~\cite{Graves2006ConnectionistTC, Graves2012SequenceTW,Lu2016OnTT, Chorowski2015AttentionBasedMF, Chiu2018StateoftheArtSR, he2019streaming, Li2019RNNT, li2021recent}, researchers quickly adapted these into jointly optimized multi-talker ASR pipelines~\cite{seki-etal-2018-purely, Chang2020EndToEndMS}. 
Similar training schemes --- speaker embeddings, curriculum learning, or knowledge distillation --- were used to improve these pipelines~\cite{Denisov2019EndtoEndMS,Zhang2020ImprovingES,Lin2022SeparatetoRecognizeJM}.
%
%
Nevertheless, these early models transcribed each speaker on a different output channel, and were thus limited by the knowledge of number of speakers in the mixture, as we will see in Section~\ref{sec:surt_main}.
Furthermore, they were usually evaluated on the synthetic setting of fully-overlapping two-speaker mixtures, which does not resemble the sparsely overlapped multi-talker speech in real settings.

A solution for transcribing arbitrary numbers of overlapping speakers was proposed by \citet{Kanda2020SerializedOT}, who used existing AED architectures with ``serialized output training''.
In this strategy, all the utterances in the mixture are transcribed on the same output channel by serializing them in order of their start times.
SOT was extended to perform joint speaker counting and speaker identification using an auxiliary speaker inventory~\cite{Kanda2020JointSC,Kanda2020InvestigationOE}.
A token-level variant of SOT (t-SOT), in conjunction with neural transducers, has shown good performance on streaming multi-talker ASR~\cite{Kanda2022StreamingMA,Kanda2022StreamingSA}, and has also been combined with multi-channel front-ends~\cite{Kanda2022VarArrayMT} and large-scale pre-training~\cite{Kanda2021LargeScalePO}. 
An advantage of t-SOT is that it allows the same model and training scheme to be used for both single and multi-talker settings. 
The recently concluded M2MeT challenge used SOT as the baseline system~\cite{Yu2021M2MetTI,Yu2022SummaryOT}.
Despite its promise, SOT usually requires large scale training on synthetic mixtures, and its streaming variant requires complex interleaving of tokens across overlapping utterances.

With the observation that real multi-talker conversations rarely contain overlaps of 3 or more speakers, researchers have recently proposed the task of continuous speech separation (CSS)~\cite{Chen2020ContinuousSS}.
CSS refers to the task of generating overlap-free speech signals from a continuous audio stream consisting of multiple potentially overlapped utterances spoken by different people. 
This task has its origins in early work on \textit{unmixing transducers}~\cite{Wu2020AnEA,Yoshioka2018RecognizingOS}. 
The original model (which used PIT-based supervised training of BLSTM encoders) has been improved by leveraging better architectures such as Conformer~\cite{Chen2021ContinuousSS}, by two-stage training~\cite{Wu2021InvestigationOP}, and by large-scale semi-supervised and self-supervised learning~\cite{Wang2022LeveragingRC,Chen2022SpeechSW}.
While the original CSS used PIT-based training, other training methods such as recurrent selective attention network (RSAN)~\cite{Kinoshita2013TheRC,Zhang2021ContinuousSS} and Graph-PIT~\cite{vonNeumann2021GraphPITGP,vonNeumann2023SegmentLessCS} have also been investigated. 
%
%
The CSS strategy mitigates the problem of transcribing arbitrary number of speakers by fixing the number of output channels.
Furthermore, unlike t-SOT, it does not require any serialization/deserialization of tokens.
However, performing multi-talker ASR using CSS as a front-end still presents the same issues as cascaded systems, mainly arising from error propagation.
In this chapter, we will describe the SURT model, which may be regarded as a jointly optimized version of CSS with transducer-based ASR.

\section{Multi-talker ASR with transducers}
\label{sec:surt_main}

In multi-talker ASR, the input $\mathbf{X} \in \mathbb{R}^{T\times F}$ is an unsegmented mixture containing $N$ utterances from $K$ speakers, i.e., $\mathbf{X} = \sum_{n=1}^N \mathbf{x}_n$, where $\mathbf{x}_n$ is the $n$-th utterance ordered by start time, shifted and zero-padded to the length of $\mathbf{X}$. 
The desired output is $\mathbf{Y} = \{\mathbf{y}_n: 1\leq n \leq N\}$, where $\mathbf{y}_n$ is the reference transcript corresponding to $\mathbf{x}_n$.
Suppose $\mathbf{y}_n = (y_1^n,\ldots,y_{U_n}^n)$, where $U_n$ is the length of the label sequence and $y$'s are output units, e.g. words or sub-word units, from some vocabulary $\mathcal{Y}$.

Recall, from Chapter~\ref{chap:modular}, that transducers compute $P(\mathbf{y}|\mathbf{X})$ by marginalizing over all possible alignments $\mathbf{a} \in \mathcal{B}^{-1}(\mathbf{y})$, where $\mathbf{a} = (a_1,\ldots,a_T)$, $y_t \in \mathcal{Y} \cup \{\phi\}$, and $\mathcal{B}$ is a mapping that removes the blank label $\phi$, i.e.,
\begin{equation}
    P(\mathbf{y}\mid \mathbf{X}) = \sum_{\mathbf{a}\in \mathcal{B}^{-1}(\mathbf{y})} P(\mathbf{a}\mid \mathbf{X}) = \sum_{\mathbf{a}\in \mathcal{B}^{-1}(\mathbf{y})} \prod_{t=1}^T P(a_t \mid \mathbf{X},\mathbf{a}_{1:t-1}).
\end{equation}

The problem we are interested in solving is known as \textit{continuous}, \textit{streaming}, multi-talker ASR.
As the name suggests, there are two constraints that our solution must satisfy.
First, the output should be continuous, i.e., no external segmentation should be required.
Second, the transcription must be streaming: (i) we are not allowed to use right context from the input, and (ii) overlapping utterances must be transcribed simultaneously.
The continuous requirement is satisfied through end-to-end modeling, i.e., we do not rely on any diarization/segmentation systems to generate the segments $\mathbf{x}_n$.
For the streaming requirement, we first constrain the encoders to only model the left context, which results in an approximation of $P(\mathbf{y}\mid \mathbf{X})$ as
\begin{equation}
    P(\mathbf{y}\mid \mathbf{X}) \approx \sum_{\mathbf{a}\in \mathcal{B}^{-1}(\mathbf{y})} \prod_{t=1}^T P(a_t \mid \mathbf{X}_{1:t-1},\mathbf{a}_{1:t-1}).
\end{equation}

More importantly, transcribing overlapping utterances at the same time necessitates a multi-branch architecture. 
Assuming conditional independence of the utterances given the input, we can approximate $P(\mathbf{Y}\mid \mathbf{X})$ as
\begin{equation}
    P(\mathbf{Y}\mid \mathbf{X}) = P(\mathbf{y}_1,\ldots,\mathbf{y}_N\mid \mathbf{X}) \approx \prod_{n=1}^N P(\mathbf{y}_n \mid \mathbf{X}).
\end{equation}

Alternatively, if we know the assignment of utterances to speakers, we can relax the above assumption slightly, and only assume that the transcript of one speaker is conditionally independent of other speakers.
Formally, this is given as
\begin{equation}
P(\mathbf{Y}\mid \mathbf{X}) = P(\Tilde{\mathbf{y}}_1,\ldots,\Tilde{\mathbf{y}}_K\mid \mathbf{X}) \approx \prod_{k=1}^K P(\Tilde{\mathbf{y}}_k \mid \mathbf{X}),
\end{equation}
where $\Tilde{\mathbf{y}}_k = \mathbf{y}_{1}^k \lozenge \ldots \lozenge \mathbf{y}_{N_k}^k$, $\lozenge$ denotes concatenation, and $\mathbf{y}_{n}^k$ are consecutive utterances of speaker $k$. 
This results in multi-output models such as MIMO-Speech~\cite{Chang2019MIMOSpeechEM}, where each output branch transcribes one speaker.
Such a formulation is limited by the requirement of knowing the number of speakers $K$ in advance.
The number of output branches also increases with $K$, resulting in an increase in model parameters.
Furthermore, if permutation-invariant training is used to resolve the problem of output permutation, this requires computing the loss for $K^2$ pairs, which is computationally expensive.
Instead, we want to constrain the number of output channels to a small number by making certain assumptions about the nature of overlapping speech in the mixture.
For this, let us formalize the problem in terms of graph coloring, as proposed by \citet{vonNeumann2021GraphPITGP}.

\subsection{Channel assignment as graph coloring}

Consider a graph $\mathcal{G} = (V,E)$ which contains $N$ nodes, each representing one of the utterances in the mixture.
For every pair of utterances, we connect the corresponding nodes with an edge if the utterances are overlapping.
An example of such a graph is shown in Fig.~\ref{fig:surt_assignment} (top).

\begin{figure}[t]
    \centering
    \includegraphics[width=0.8\linewidth]{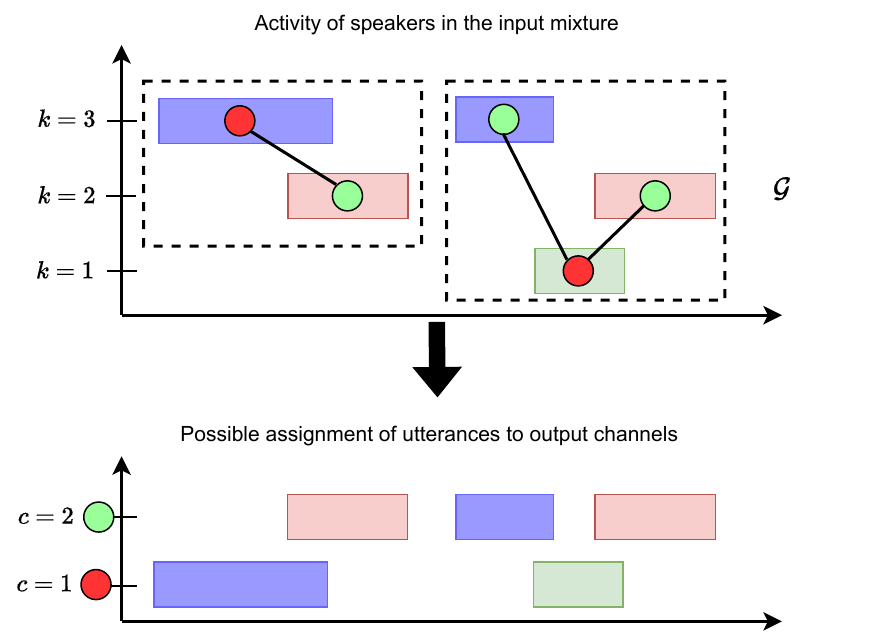}
    \caption{Assigning 5 utterances to 2 output channels. The black dashed boxes represent different utterance groups. Figure based on \citet{vonNeumann2021GraphPITGP}.}
    \label{fig:surt_assignment}
\end{figure}

Each connected component in $\mathcal{G}$ is called an utterance group.
In Fig.~\ref{fig:surt_assignment}, the black dashed boxes represent two utterance groups in the mixture.
We can also define an utterance group in non-graphical terms as follows.

\begin{definition}
An \textit{utterance group} is a set of utterances connected by speaker overlaps. If $\mathbf{U} = \{\mathbf{y}_1,\ldots,\mathbf{y}_N\}$ is an utterance group containing $N$ utterances ordered by start time, then we must have, $\forall n \in [1, N]$, $t_n^{\mathrm{st}} \leq \max_{n^{'} \in [1,n-1]} t_{n^{'}}^{\mathrm{en}}$.
\end{definition}

As mentioned previously, we want to assign utterances to a fixed number of \textit{channels}, $C$, such that no utterances within the same channel are overlapping.
Formally, we can define the channel assignment problem as follows.

\begin{definition}[Channel assignment]
Given a set of $N$ utterances represented by the graph $\mathcal{G}$, and a fixed number of channels (or colors) $C$, the objective of channel assignment is to find a $C$-vertex coloring of $\mathcal{G}$, i.e.,
\begin{align}
    &\zeta: N \rightarrow \{1,\ldots,C\},~~\text{such that} \\
    &\zeta(u) \neq \zeta(v),~~\forall u,v \in E.
\end{align}
\end{definition}

The minimum possible $C$ such that such an assignment exists is called the \textit{chromatic number} of the graph.
For our case, this is equal to the maximum number of simultaneously active speakers in the mixture.

Recall from Section~\ref{sec:intro_data} that most of the overlaps in meetings are between at most 2 speakers, and 3 or more speaker overlaps are very rare.
Therefore, for the rest of our discussion, we will set $C$ as 2, i.e., we will try to map the $N$ utterances to 2 channels.
Suppose each such $\zeta$ creates channel-wise references $\mathbf{Y}_1$ and $\mathbf{Y}_2$.
Then, for permutation-invariant training (PIT), we have
\begin{align}
P(\mathbf{y}_1,\ldots,\mathbf{y}_N \mid \mathbf{X}) &= 
    \max_{\zeta} P(\mathbf{Y}_1, \mathbf{Y}_2 \mid \mathbf{X}) \\
    &\approx \max_{\zeta} P(\mathbf{Y}_1\mid \mathbf{X}) P(\mathbf{Y}_2\mid \mathbf{X}),
\end{align}
where we have again assumed conditional independence of the outputs.
Such a channel assignment of references solves the problem of increase in model size with number of speakers or utterances.
Additionally, the conditional independence assumptions are less strong, since we only assume the channel outputs to be independent.
The corresponding PIT loss is given as
\begin{equation}
\label{eq:surt_pit}
\mathcal{L}_{\mathrm{pit}}(\mathbf{y}_{1:N},\mathbf{X};\Theta) = \min_{\zeta} \left[ -\log P_{\Theta}(\mathbf{Y}_1\mid \mathbf{X}) -\log P_{\Theta}(\mathbf{Y}_2\mid \mathbf{X}) \right].
\end{equation}

\subsection{Heuristic error assignment training (HEAT)}

\begin{figure}
    \centering
    \includegraphics[width=\linewidth]{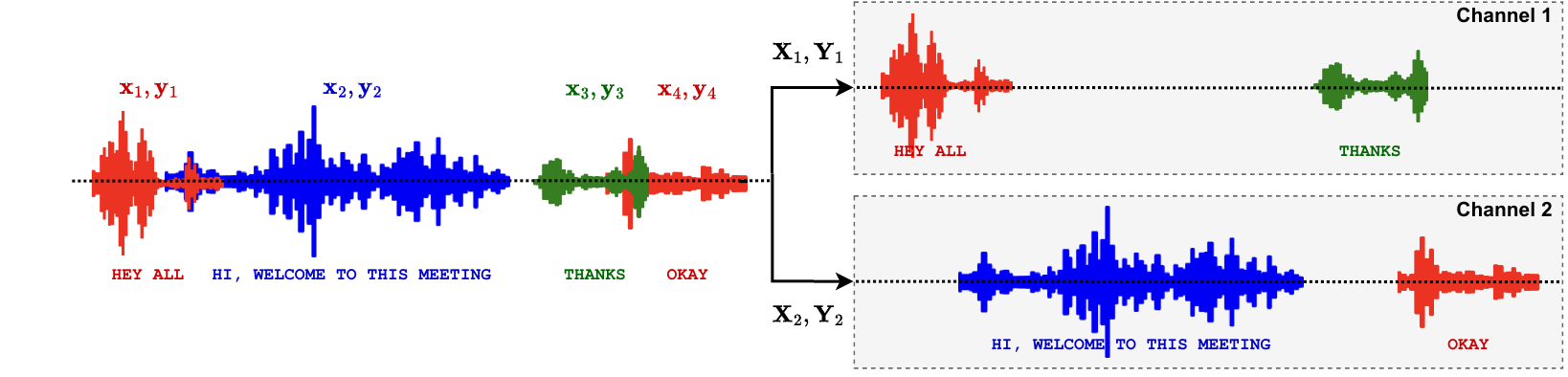}
    \caption{Example of heuristic error assignment training (HEAT). Each color represent a different speaker. The utterances ($\mathbf{x}_u$,$\mathbf{y}_u$) are assigned, in order of start times, to the next available channel. Such an assignment avoids the exponential complexity associated with permutation invariant training (PIT).}
    \label{fig:surt_heat}
\end{figure}

If there are $\Tilde{N}$ utterance groups (or connected components) in the graph, it is easy to see that we can have $2^{\Tilde{N}}$ possible mappings $\zeta$.
This can make it computationally prohibitive to evaluate \eqref{eq:surt_pit} when $\Tilde{N}$ is large.
To remedy this problem, we fix the order of the 2 channels and create channel-wise references $\mathbf{Y}_1$ and $\mathbf{Y}_2$ by assigning $\mathbf{y}_n$'s to the first available channel, in order of start time, as shown in Fig.~\ref{fig:surt_heat}. 
This technique is known as \textit{heuristic error assignment training} or HEAT, and was first introduced in \citet{Lu2020StreamingEM}.
Formally, if $t_n^{\mathrm{st}}$ is monotonically increasing in $n$, we have
\begin{equation}
\label{eq:heat_labels}
    \zeta_{\mathrm{heat}}(n) = 
    \begin{cases}
    1, &\text{if}~~t_n^{\mathrm{st}} \geq \max_{i\in \zeta^{-1}(1)}\theta_i^{\mathrm{en}} \\
    2, &\text{otherwise},
    \end{cases}
\end{equation}
and $\zeta_{\mathrm{heat}}(n)$'s are assigned sequentially.
Such an assignment allows us to avoid permutation-invariant training, reducing the computation significantly.
The corresponding HEAT loss is given as
\begin{equation}
\label{eq:surt_heat1}
\mathcal{L}_{\mathrm{heat}}(\mathbf{y}_{1:N},\mathbf{X};\Theta) =  -\log P_{\Theta}(\mathbf{Y}_1\mid \mathbf{X}) -\log P_{\Theta}(\mathbf{Y}_2\mid \mathbf{X}).
\end{equation}

\subsection{HEAT vs. PIT}
\label{sec:heat_vs_pit}

Since HEAT computes training loss for a fixed assignment of references to channels instead of minimizing over all permutations, one could argue that it may result in a suboptimal solution.
To investigate this phenomenon further, we set up simple experiments on 2-utterance mixtures generated from LibriSpeech \texttt{train-clean} set.
We prepared two kinds of mixtures with utterance delays of 2.0 and 0.0 seconds, respectively, and trained a vanilla SURT model (which will be described in detail in the next section) using both HEAT and PIT losses. 
In Fig.~\ref{fig:surt_pit}, we show the training dynamics for both the experiments and also plot the \% correct output assignment. 
This quantity represents how often the model assigns $\mathbf{Y}_1$ to output channel 1.

As expected, for the case of mixtures with 2.0s delay, HEAT quickly learned the output assignment order. 
In fact, even when training with PIT, the same heuristic was learned (albeit slower), and both models started to converge only after this point was reached (denoted by the vertical line in Fig.~\ref{fig:surt_delay2}. 
Thereafter, using PIT is wasteful, especially in our case of the expensive RNN-T loss computation. 
In the absence of utterance delay (Fig.~\ref{fig:surt_delay0}), PIT produced a random output assignment. 
Surprisingly, HEAT still learned the correct assignment, but on decoding with the trained model, we found that it learned a degenerate solution where both output channels produce the exact same hypothesis.

\begin{figure}[t]
\begin{subfigure}{0.49\linewidth}
\centering
\includegraphics[width=\linewidth]{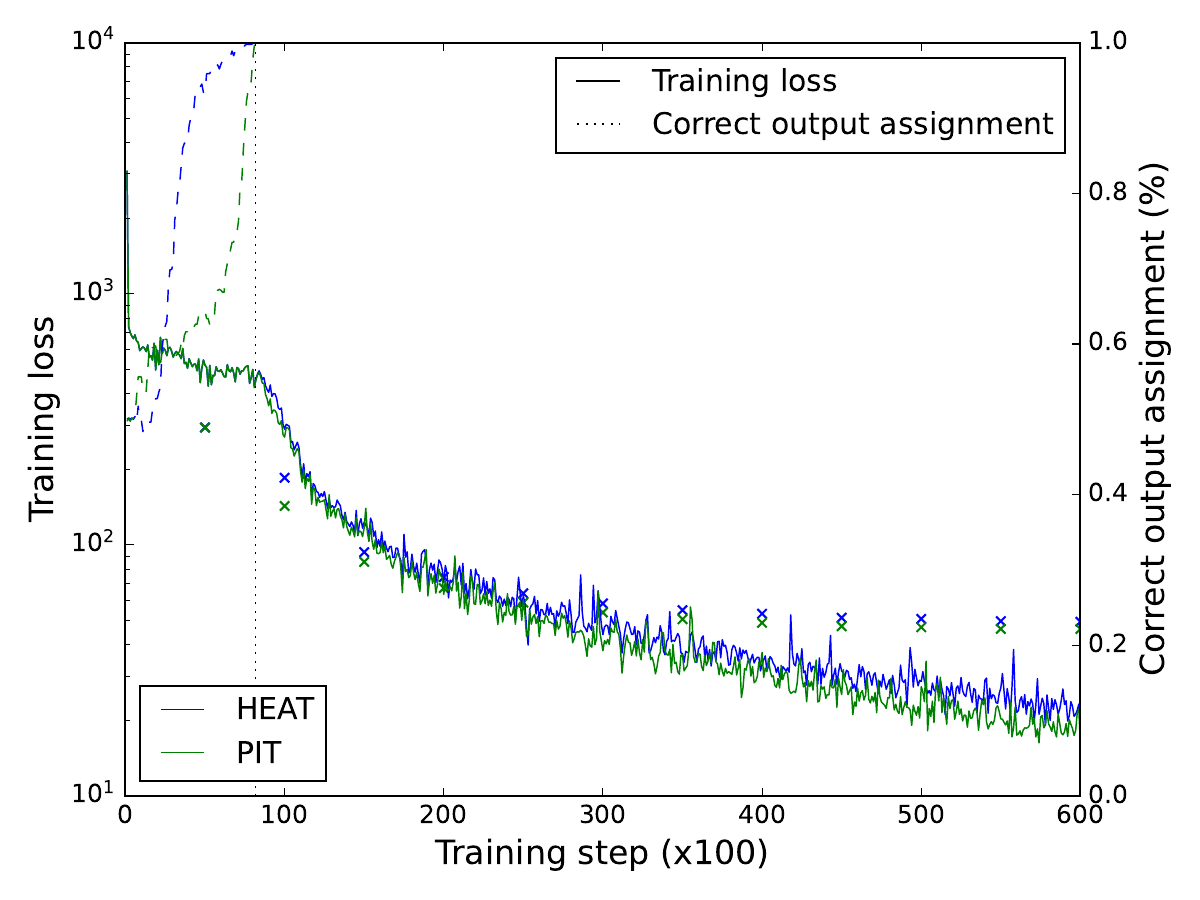}
\caption{Delay = 2.0 s}
\label{fig:surt_delay2}
\end{subfigure}
\begin{subfigure}{0.49\linewidth}
\centering
\includegraphics[width=\linewidth]{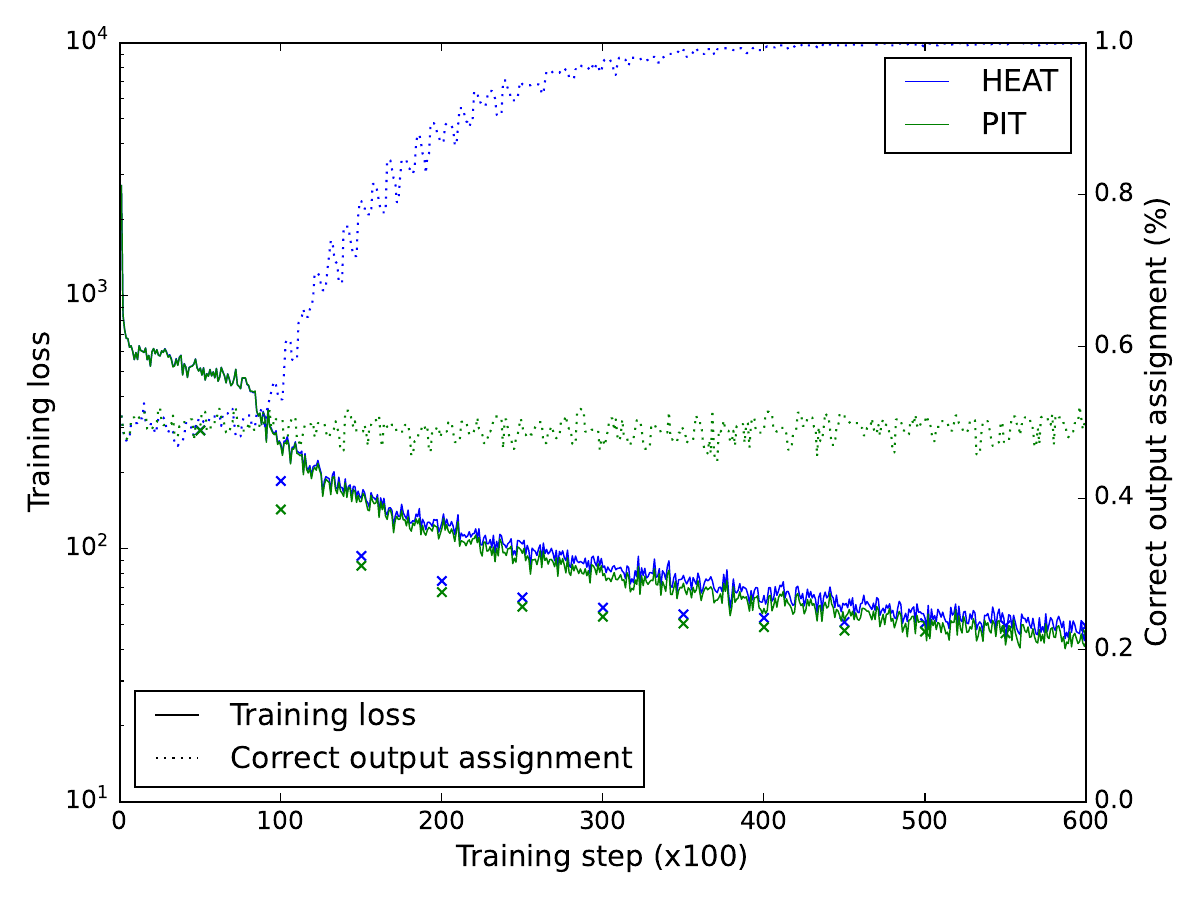}
\caption{Delay = 0.0 s}
\label{fig:surt_delay0}
\end{subfigure}\hfill
\caption{Training dynamics for HEAT versus PIT based loss for different utterance delays: (a) 2.0 s, and (b) 0.0 s.}
\label{fig:surt_pit}
\end{figure}

\section{The SURT model}
\label{sec:surt_model}

\begin{figure}[t]
\centering
\includegraphics[width=\linewidth]{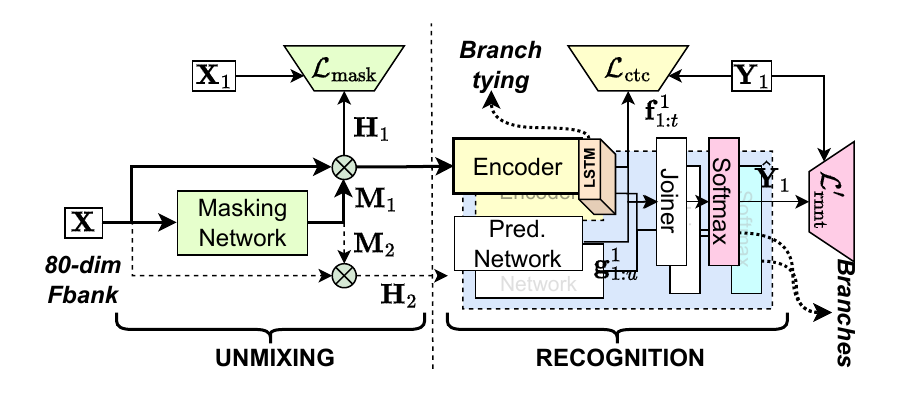}
\caption{Block diagram of SURT. $\mathbf{X}$ denotes the input mixture. $\mathbf{X}_1$ and $\mathbf{Y}_1$ are concatenated sources and references for the first branch (only required during training). The ``recognition'' component is pre-trained on single-speaker data.}
\label{fig:surt_model}
\end{figure}

SURT estimates $\hat{\mathbf{Y}} = [\hat{\mathbf{Y}}_1,\hat{\mathbf{Y}}_2] = f_{\text{surt}}(\mathbf{X})$ by combining an ``unmixing'' component and a ``recognition'' component, hence the name.
The overall block diagram of the model is shown in Fig.~\ref{fig:surt_model}.
The unmixing part first produces non-overlapping streams $\mathbf{H}_1$ and $\mathbf{H}_2$ from the input mixture $\mathbf{X}$.
The original formulation of SURT by \citet{Lu2020StreamingEM} used dual mix/mask encoders that projected input 257-dim STFTs into high dimensional representations as
\begin{align}
\label{eq:surt_unmixing}
& \mathbf{H}_1 = \mathbf{M} \ast \bar{\mathbf{X}}, \quad \mathbf{H}_2 = (\mathbbm{1} - \mathbf{M}) \ast \bar{\mathbf{X}}, \\
& \mathbf{M} = \sigma(\mathrm{MaskEnc}(\mathbf{X})) ~ \text{and} ~ \bar{\mathbf{X}} = \mathrm{MixEnc}(\mathbf{X}), \nonumber
\end{align}
where $\bar{\mathbf{X}},\mathbf{M},\mathbbm{1}\in \mathbb{R}^{T\times D}$ (for latent dim. $D$) is a matrix of ones, $\sigma$ is the sigmoid function, and $\ast$ is Hadamard product.
Clearly, this design constrains SURT to have exactly two output branches, and the separated features are not interpretable. 
Instead, we use 80-dimensional log Mel filter-banks as inputs, and replace the mix/mask encoders with a simple masking network that can generate arbitrary number of masks. 
Formally, given output channel count $C$, our unmixing module generates masks
\begin{equation}
[\mathbf{M}_1,\ldots,\mathbf{M}_C]^T = \mathrm{MaskNet}(\mathbf{X}),
\end{equation}
where $\mathbf{M}_c \in \mathbb{R}^{T\times F}$. 
These masks are applied to the input $\mathbf{X}$ to obtain channel-specific features: $\mathbf{H}_c = \mathbf{M}_c \ast \mathbf{X}$. 
Such a design has three advantages: (i) it allows the use of arbitrary number of output branches $C$, (ii) the masked representations $\mathbf{H}_c$ are interpretable as clean features, and (iii) it allows pre-training of the recognition module on single-speaker speech. 

The recognition module is similar to a conventional single-speaker ASR system based on transducers. 
Channel-wise features $\mathbf{H}_c$ are fed into an encoder which generates hidden representations $\mathbf{f}_{1:T}^c$. 
The corresponding label sequence $\mathbf{Y}_c$ (created according to the HEAT strategy) is fed into a prediction network, generating hidden representations $\mathbf{g}_{1:U}^c$. 
A joiner combines $\mathbf{f}_{1:T}^c$ and $\mathbf{g}_{1:U}^c$ to generate logits $\mathbf{Z}_c$ for each branch. 
The parameters of the encoder, prediction network, and joiner are shared among all the output branches, as shown in the figure.
Finally,
\begin{equation}
\label{eq:surt_heat2}
\mathcal{L}_{\text{heat}} = \mathcal{L}_{\text{rnnt}}(\mathbf{Y}_1, \mathbf{Z}_1) + \mathcal{L}_{\text{rnnt}}(\mathbf{Y}_2, \mathbf{Z}_2),
\end{equation}
where $\mathcal{L}_{\text{rnnt}}$ is the standard RNN-T loss~\cite{Graves2012SequenceTW}.
This is the same formulation as \eqref{eq:surt_heat1}, since $\mathcal{L}_{\text{rnnt}}(\mathbf{Y}, \mathbf{Z}) = -\log P_{\Theta}(\mathbf{Y}\mid \mathbf{X})$.

Multi-talker ASR with SURT requires the model to perform well on three challenging sub-tasks:
\begin{enumerate}[itemindent=0em]
\item continuous separation of sparsely overlapped speech; \item long-form speech recognition; and
\item modeling quick turn-taking among multiple speakers. 
\end{enumerate}

For most multi-talker ASR models, failure cases may be linked to model degeneration on one or more of these sub-tasks. 
For instance, \textit{omission} and \textit{leakage} errors may be attributed to (1), while high error rates on quick turn-taking scenarios with short silences (see Section~\ref{sec:surt_results}) may be caused by (3). 
Deletion errors on long sequences may be caused by both (1) and (2). 
In the following subsections, we will describe several SURT details related to the network architecture, training objective, mixture simulation, and training scheme, which are carefully designed to overcome the three challenges.

\subsection{Network architecture}
\label{sec:surt_arch}


\subsubsection{Unmixing module}

\begin{figure}[t]
\centering
\includegraphics[width=0.8\linewidth]{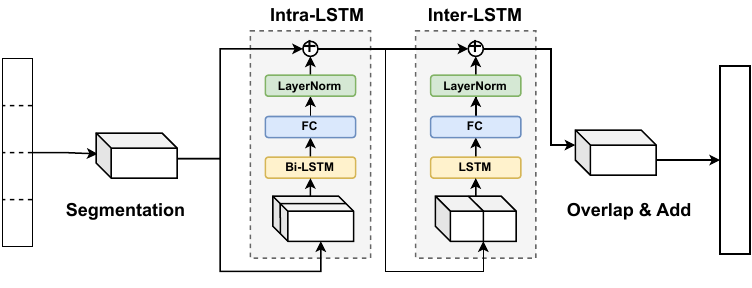}
\caption{Overview of the streaming dual path LSTM model. The intra-chunk LSTM is bidirectional, whereas the inter-chunk LSTM is unidirectional.}
\label{fig:surt_dplstm}
\end{figure}

For the masking network, we use dual-path LSTMs (DP-LSTMs) since they provide strong long range modeling capability for unmixing.
First proposed in \citet{Luo2020DualPathRE}, DP-LSTM consists of an \textit{intra} and an \textit{inter} LSTM per layer, as shown in Fig.~\ref{fig:surt_dplstm}. 
Input sequences are segmented into (overlapping) chunks and first fed into the bidirectional intra-LSTM, which processes each chunk independently. 
The output is then passed into the inter-LSTM which is unidirectional and performs strided processing over chunks. 
By choosing the chunk width to be approximately square root of the sequence length $l$, we can ensure that both the LSTMs get similar length inputs. 
Since the intra-LSTM is bidirectional, a latency equal to the chunk width is introduced in this model.

\subsubsection{Chunk width randomization}

Dual-path models trained with a fixed chunk width may not be suitable for evaluation on diverse sequence lengths due to mismatch in train-test input size for the inter block. 
We propose training with chunk width randomization (CWR), wherein we vary the CW between a minimum and maximum value for each mini-batch. 
CWR increases the train time diversity in sequence length for both the intra and inter blocks and makes the model robust to such variations at test time. 

\subsubsection{Recognition module}

As mentioned earlier, the recognition module is a conventional neural transducer, consisting of an encoder, a prediction network, and a joiner.
For the encoder, we use a recently proposed variant of the Conformer~\cite{Gulati2020ConformerCT}, known as the Zipformer~\cite{Yao2023ZipformerAF}. 
As shown in Fig.~\ref{fig:surt_zipformer}, the zipformer encoder contains multiple encoder blocks running at different rates, with the middle ones more strongly down-sampled (by up to a factor of 8). 
This makes training more efficient as there are fewer frames to evaluate. 
Each block may contain one or more encoder ``layers'' operating at the same frame rate.
Each layer performs self-attention twice with shared attention weights, and a trainable bypass is introduced for each layer dimension.

\begin{figure}[t]
    \centering
    \includegraphics[width=0.8\linewidth]{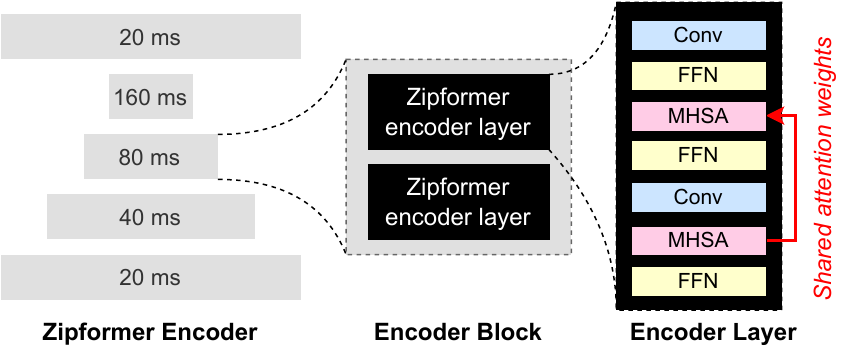}
    \caption{Illustration of the zipformer encoder architecture. The encoder contains multiple ``blocks'' running at different frame rates (left). Each block contains several ``layers'' (middle), and each layer performs self-attention twice with shared attention weights (right).}
    \label{fig:surt_zipformer}
\end{figure}

We also propose ``branch tying'' of the encoder outputs to \textit{jointly} learn representation across all branches, i.e.,
\begin{equation}
\label{eq:surt_bt}
    [\hat{\mathbf{h}}_1^{\text{enc}},\ldots,\hat{\mathbf{h}}_C^{\text{enc}}]^T = \mathrm{LSTM}\left( [\mathbf{h}_1^{\text{enc}},\ldots,\mathbf{h}_C^{\text{enc}}]^T \right).
\end{equation}
The motivation for branch-tied encoders is to reduce errors from \textit{omission} and \textit{leakage}, which usually happen when output branches do not communicate. 

Finally, we use a ``stateless'' prediction network~\cite{Ghodsi2020RnnTransducerWS} which uses a 1-D convolutional layer with a kernel of size 2, thus restricting the context to the last 2 units. 
In addition to improving computational efficiency, we conjecture that a stateless network should also be better suited to handle quick turn-taking (sub-task~(3) described earlier).
We set the segment length for the bi-directional intra-LSTM (of the DP-LSTM network) equal to the chunk size of the causal Zipformer encoder; this is the overall latency of the SURT model.
Further details about model hyperparameters are given in Section~\ref{sec:surt_implementation}.

\subsection{Training objective}
\label{sec:loss}


The full-sum transducer loss, which marginalizes $P(\mathbf{y}\mid \mathbf{X})$ over all possible alignments $\mathbf{a} \in \mathcal{B}^{-1}(\mathbf{y})$, suffers from high memory usage, since the marginalization is done over a logit tensor of size ($B,T,U,D$). 
Even with an efficient implementation such as the one described in \citet{Li2019RNNT}, this severely limits the training sequence length --- for example, \citet{Raj2021ContinuousSM} trained SURT with mixtures containing at most 4 speaker turns. 
To remedy this issue, we replace the full-sum transducer loss with the recently proposed \textit{pruned} transducer loss, which prunes the alignment lattice using a simple linear joiner before computing the full sum on the pruned lattice~\cite{Kuang2022PrunedRF}. 
This idea of pruning the lattice is similar to \citet{Mahadeokar2020AlignmentRS}, but it does not require external forced alignments.

Recall from Section~\ref{sec:surt_model} that the encoder and the prediction network generate hidden representations $\mathbf{f}_1^T$ and $\mathbf{g}_1^U$, respectively.
In a pruned transducer, these are first projected to $\mathbb{R}^{|\bar{\mathcal{Y}}|}$, where $\bar{\mathcal{Y}} = \mathcal{Y}\cup \{\phi\}$. 
Let us denote these projected representations as $\hat{\mathbf{f}}_1^T$ and $\hat{\mathbf{g}}_1^U$, respectively. 
A simple additive joiner is then used to compute \textit{trivial} logits $\hat{\mathbf{z}}_{t,u}$ as
\begin{align}
   & \hat{\mathbf{z}}_{t,u} = \hat{\mathbf{f}}_{t} + \hat{\mathbf{g}}_{u} - \hat{\mathbf{z}}^{\mathrm{norm}}_{t,u}, \\
    \text{where}~~ &\hat{\mathbf{z}}^{\mathrm{norm}}_{t,u} = \log \sum_{v} \exp \left( \hat{\mathbf{f}}_{t} + \hat{\mathbf{g}}_{u} \right). \label{eq:surt_log_matrix}
\end{align}
Here, \eqref{eq:surt_log_matrix} may be interpreted as log-space matrix multiplication, and is easy to implement through simple matmul operations. 
Gradients from this simple joiner are used to compute locally optimal pruning bounds for the lattice, and the full joiner output, $\mathbf{z}_{t,u}$, is only computed on the pruned lattice.
The pruning is ``locally'' optimal in the sense that the optimization is performed per-frame, as opposed to a ``globally'' optimal treatment which would consider the whole path through the lattice.
This local strategy may result in path discontinuity that is later resolved through adjustments.
We used the open-source implementation available in \texttt{k2}: \texttt{\url{https://github.com/k2-fsa/k2}}.

We use auxiliary loss functions to regularize SURT training and improve separation of sparsely overlapped speech. 
For the former, we add a connectionist temporal classification (CTC) loss~\cite{Graves2006ConnectionistTC}, which has been shown to provide regularization capabilities due to monotonicity in alignments~\cite{Kim2016JointCB,Sudo20224DAJ}. This is given as
\begin{equation}
\label{eq:ctc}
\mathcal{L}_{\text{ctc}} = \log \sum_{\mathbf{a}\in \mathcal{B}_{\text{ctc}}^{-1}(\mathbf{y})} \prod_t P(\mathbf{a}_t \mid \mathbf{f}_1^T),
\end{equation}
where $\mathbf{a}$ is a $T$-length sequence that deterministically maps to $\mathbf{y}$ through transformation $\mathcal{B}_{\text{ctc}}$, which removes repeated tokens and $\phi$.
To improve mask estimation, we apply a masking loss\footnote{Note that including masking loss in model training assumes the availability of the clean signals $\mathbf{X}_c$ and hence applicable only to training on synthetic mixtures.} directly on the outputs $\mathbf{H}_c$ generated by the mask encoder as
\begin{equation}
\label{eq:mask}
    \mathcal{L}_{\text{mask}} = \sum_{c \in C}\mathrm{MSE}(\mathbf{H}_c, \mathbf{X}_c),
\end{equation}
where $\mathrm{MSE}$ denotes mean-squared error, and $\mathbf{X}_c$ is obtained by summing clean inputs $\mathbf{x}_u$ assigned to branch $c$ (Fig.~\ref{fig:surt_heat}). 

The overall training objective is given as
\begin{equation}
    \label{eq:loss}
    \mathcal{L} = \mathcal{L}^{\prime}_{\text{rnnt}} + \lambda_{\text{ctc}}\mathcal{L}_{\text{ctc}} + \lambda_{\text{mask}}\mathcal{L}_{\text{mask}},
\end{equation}
where $\mathcal{L}^{\prime}_{\text{rnnt}}$ denotes the pruned transducer loss and $\lambda$'s are hyperparameters.

\subsection{Mixture simulation}
\label{sec:surt_sim}


\begin{algorithm}[tp]
\DontPrintSemicolon
  
  \KwInput{$\mathcal{X}$, $\mathcal{M}$, $K$, $T$}
  \KwOutput{$\mathcal{S}$}
  
  $D_{=\text{spk}}, D_{\neq\text{spk}}, D_{\text{ovl}}, \mathcal{S}$ = $\phi$
  
  \tcp*[l]{Fit distributions to $\mathcal{M}$}
  \For{$M$ in $\mathcal{M}$}{
    \For{$i$ in range($|M|$)}{
      $t$ = $M_i.start - M_{i-1}.end$
      
      \eIf{$M_i$.spk == $M_{i-1}$.spk}{
        $D_{=\text{spk}}$ = $D_{=\text{spk}} \cup \{t\}$ 
      }{\eIf{$t > 0$}{
          $D_{\neq\text{spk}}$ = $D_{\neq\text{spk}} \cup \{t\}$ 
        }{
          $D_{\text{ovl}}$ = $D_{\text{ovl}} \cup \{-t\}$ 
        }
      }
    }
  }

  $P_{\text{ovl}}$ = $\frac{|D_{\text{ovl}}|}{|D_{\neq\text{spk}}|+|D_{\text{ovl}}|}$; $D_{\ast}$ = histogram($D_{\ast}$)

  \tcp*[l]{Generate mixtures using $\mathcal{X}$}
  $\mathcal{X}$ = $\{\mathcal{X}_1,\ldots,\mathcal{X}_S\}$ \tcp*[l]{speaker wise bucketing of $\mathcal{X}$}

  \While{any($|\mathcal{X}_s| > 0$)}{
    \tcp*[l]{Select speakers}
    $k \leftarrow \text{sample}(K)$; $\mathcal{X}_{s_1},\ldots,\mathcal{X}_{s_k} \leftarrow \text{sample}(\mathcal{X})$

    \tcp*[l]{Select utterances for each speaker}
    \For{$i$ in range($k$)}{
      $U_{s_k} \leftarrow \text{sample}(\mathcal{X}_{s_k}), \text{s.t.} \left(\sum_{u\in U_{s_k}} u.dur\right) < T$ 

      $\mathcal{X}_{s_k} \leftarrow \mathcal{X}_{s_k}\setminus U_{s_k}$
    }

    $U = \text{shuffle}(U_{s_1},\ldots,U_{s_k})$; offset = 0

    \tcp*[l]{Get offsets for each utterance}
    $\mathcal{S}_{\mathrm{cur}}\leftarrow \phi$ \tcp*[l]{initialize empty mixture}
    
    \For{$i$ in range($|U|$)}{
      \eIf{$U_i$.spk == $U_{i-1}$.spk}{
        ot = sample($D_{=\text{spk}}$)
      }{\eIf{Bernoulli($P_{\text{ovl}}$>0.5)}{
        ot = --sample($D_{\text{ovl}}$)
      }{
        ot = sample($D_{\neq\text{spk}}$)
      }
      }
      offset = offset + ot
      
      $\mathcal{S}_{\mathrm{cur}} = \mathcal{S}_{\mathrm{cur}} \cup \{U_i,\text{offset}\}$
    }
    $\mathcal{S} = \mathcal{S} \cup \mathcal{S}_{\mathrm{cur}}$
  }

\caption{Training mixture simulation}
\label{alg:mixture}
\end{algorithm}

Multi-talker ASR models are often trained on synthetic mixtures of \textit{full} utterances, which may result in prohibitively long sequences~\cite{Raj2021ContinuousSM}; for e.g., LibriSpeech \texttt{train} has an average duration of 12.4s. 
Instead, we use \textit{sub-segments} instead of full utterances as the source for mixture simulation, so that resulting mixtures are shorter while retaining multiple turns of conversation.
These sub-segments are obtained using word-level alignment information, by breaking up the utterances at pauses longer than a threshold $\tau$.
As an example, using $\tau=0.2$ for LibriSpeech resulted in sub-segments that were 2.8s on average. 
This allowed each training session to contain up to 9 turns of conversation while still being 36.5\% shorter than those generated using full utterances.

For the simulation process itself, we learn histograms of pause/overlap distribution statistics from the target sessions, and sample from these distributions for mixing the segments.
Such a strategy has been successfully applied to improve end-to-end neural diarization~\cite{Landini2022FromSM}. 
Our mixture simulation algorithm is described in Algorithm~\ref{alg:mixture}, and is similar to the conversation simulation algorithm from \citet{Landini2022FromSM}. 
We assume that the input to the algorithm is the source segments $\mathcal{X}$, target sessions $\mathcal{M}$ (to learn statistics), maximum number of speakers $K$ in each mixture, and maximum duration $T$ of a speaker in a mixture. 
The algorithm returns the training mixtures $\mathcal{S}$. 
For training, the input speech mixture $\mathbf{X}$ is obtained by digitally adding the utterances in $\mathcal{S}$ with the specified offsets.
The utterance-wise labels $\mathbf{y}_n$, along with $t_n^{\mathrm{st}}$ and $t_n^{\mathrm{en}}$ (uniquely determined by the offset and duration of utterance $n$) are used to obtain the reference transcripts for both channels using \eqref{eq:heat_labels}.
We optionally convolve $\mathcal{S}$ with real room impulse responses (RIRs) to train models for far-field reverberant conditions.
The overall simulation workflow is shown in Fig.~\ref{fig:simulation}.

\begin{figure}[t]
    \centering
    \includegraphics[width=\linewidth]{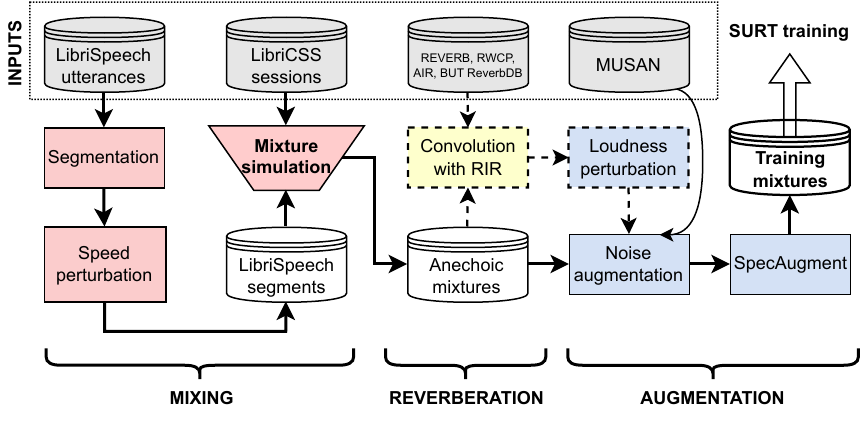}
    \caption{Mixture simulation workflow for LibriSpeech-based training. Gray cylinders denote external data used during simulation: LibriSpeech utterances, statistics from LibriCSS, real RIRs, and MUSAN noises. The simulation workflow can be conceptually divided into three phases: (i) \textit{mixing}, which creates anechoic meetings, (ii) \textit{reverberation}, which convolves with RIRs, and (iii) \textit{augmentation}, which perturbs the reverberant mixtures with various schemes. The dotted path is optional for anechoic training.}
    \label{fig:simulation}
\end{figure}

\subsection{Pre-training \& adaptation}
\label{sec:pretrain}


Using sub-segments for mixture simulation improves training efficiency while allowing multiple speaker turns; however, it creates a train-test mismatch for duration of individual segments, which could degrade the model's performance on sub-task (2), i.e., to recognize long utterances from the same speaker. 
We solve this problem by pre-training the transducer module on single-speaker utterances (e.g., on LibriSpeech \texttt{train} set).
Such a pre-training strategy also decouples the tasks of learning to separate from learning to transcribe, and helps the SURT model converge faster. 
Recall from Section~\ref{sec:surt_model} that this pre-training is possible in SURT (but not in earlier work such as \citet{Raj2021ContinuousSM}) because our masking network generates masked filter-banks instead of high-dimensional latent representations.

Despite convolving with real RIRs, the acoustic characteristics of the mixtures used to train SURT may still be mismatched from real meeting recordings. 
As a final step of SURT training, we perform model adaptation by training on in-domain data for a small number of iterations.
In this step, we adjust the training objective in \eqref{eq:loss} to omit the $\mathcal{L}_{\text{mask}}$ term.

\section{Experimental Setup}
\label{sec:setup}

\subsection{Evaluation}
\label{sec:surt_evaluation}

We performed evaluations on three publicly-available meeting datasets in English: LibriCSS, AMI, and ICSI. 
The summary statistics for these datasets are given in Section~\ref{sec:intro_data}.

From Fig.~\ref{fig:surt_heat}, it is clear that SURT generates speaker-agnostic transcription with no strict correspondence between speaker and channel. 
This makes it impossible to evaluate SURT with the conventional word error rate (WER) metric used for single-speaker ASR systems.
Instead, we use the optimal reference combination WER (ORC-WER) metric proposed independently in \citet{Sklyar2021MultiTurnRF} and \citet{Raj2021ContinuousSM}, and described in Section~\ref{sec:intro_metrics}. 
ORC-WER computes the minimum word error rate (WER) based on an optimal assignment of references to the channels, and can be considered a lower bound on cpWER (which we used in the previous chapter).
We used the polynomial-time implementation of ORC-WER using multi-dimensional Levenshtein distance, available in the \texttt{meeteval}\footnote{\url{https://github.com/fgnt/meeteval}} toolkit~\cite{vonNeumann2022OnWE}.
In the remainder of this paper, we will use the abbreviation WER to actually mean ORC-WER, unless explicitly mentioned.
For LibriCSS, we report WERs on the officially provided (approximately 1 min. long) ``segments,'' following prior work on continuous input evaluation~\cite{Chen2020ContinuousSS,Chen2021ContinuousSS,Raj2021ContinuousSM}. 
For AMI and ICSI, we use utterance-group based evaluation similar to \cite{Kanda2021LargeScalePO,Huang2022AdaptingSM}.
This results in \texttt{dev}/\texttt{test} segments of duration 7.0s/8.4s and 2.6s/2.9s for AMI and ICSI, respectively.

\subsection{Implementation details}
\label{sec:surt_implementation}

\noindent
\textbf{Network architecture.}
We experimented with two variants of SURT --- \textit{base} and \textit{large}. The \textit{base} model contains four 256-dim DP-LSTM layers as the masking network trained with chunk width randomization~\cite{Raj2021ContinuousSM}. 
The encoder consists of 5 zipformer blocks with 2 self-attention layers per block.
%
Each block consists of a 192-dim attention distributed across 8 heads, and a 768-dim feed-forward layer. 
Downsampling factors of (1,2,4,8,2) were used in the zipformer blocks. 
The prediction network contains a single 512-dim Conv1D layer. 
The \textit{large} model contains 6 layers in the masking network, and (2,4,3,2,4) self-attention layers in the 5 zipformer blocks.
The chunk size for the intra-LSTM and the Zipformer is set to 32 frames, resulting in a modeling latency of 320 ms.

\noindent
\textbf{Training data.}
For LibriCSS experiments, we first created anechoic mixtures, \textbf{LSMix-clean}, using each speed-perturbed LibriSpeech train sub-segment ($\tau=0.2$; cf. \S~\ref{sec:surt_sim}) once, resulting in approx. 2200h of training data.
For this data set, we set $\mathcal{M}$ as the LibriCSS \texttt{dev} set (excluding the 0L and OV10 sessions), $K=3$, and $T=15$ in Algorithm~\ref{alg:mixture}.
A reverberated copy of LSMix, named \textbf{LSMix-reverb}, was generated by convolving LSMix with real RIRs collected from the REVERB~\cite{Kinoshita2013TheRC} dataset.
We also added isotropic noises from the REVERB data, and perturbed the loudness between -20 dB and -25 dB using the \texttt{pyloudnorm} tool~\cite{steinmetz2021pyloudnorm}.
We will hereafter refer to the combination of LSMix-clean and LSMix-reverb as \textbf{LSMix-full}.
Ablation experiments were conducted on the \textit{anechoic} LibriCSS by training SURT on LSMix-clean. 
For these experiments, we used on-the-fly noise augmentation using noises from the MUSAN corpus~\cite{Snyder2015MUSANAM}.
For final evaluation, we trained SURT on LSMix-full ($\sim$4400h), so that the same model can be used on both anechoic and replayed LibriCSS.
All models were trained using on-the-fly SpecAugment~\cite{Park2019SpecAugmentAS}.
We found it beneficial to use high overlap mixtures in the warm-up stage of SURT training to encourage better mask estimation~\cite{Boeddeker2023TSSEPJD}. 
%
%
For pre-training on single-speaker data, we used LibriSpeech \texttt{train} set, optionally convolved with synthetic RIRs (for the final evaluation).
This pre-training was done for 10 epochs.

Since AMI and ICSI have similar characteristics, we trained a combined SURT model for them using synthetic mixtures created from close-talk utterances, again using sub-segments obtained from forced alignments ($\tau=0.5$).
These mixtures were obtained by setting $D_{=\text{spk}}$, $D_{\neq\text{spk}}$, $D_{\text{ovl}}$, and $P_{\text{ovl}}$ as 0.5, 0.5, 1.0, and 0.8, respectively, in Algorithm~\ref{alg:mixture}.
$K$ and $T$ were set to 3 and 15s, respectively, similar to LSMix-clean simulation.
We refer to these mixtures as \textbf{AIMix-clean}, their reverberant copy as \textbf{AIMix-reverb}, and the combination as \textbf{AIMix-full}.
The models were subsequently adapted by combining the real train sessions from all microphone settings.
%

\noindent
\textbf{Hyper-parameters.}
The auxiliary loss scales, $\lambda_{\text{ctc}}$ and $\lambda_{\text{mask}}$, were set to 0.2 each.
$\mathcal{L}_{\text{mask}}$ was not used during adaptation since ground-truth separated audio is not available for real data. 
We trained the models with the ScaledAdam optimizer following the standard zipformer-transducer recipes in \texttt{icefall}~\cite{zipformer}.
This is a variant of Adam where each parameter's update is scaled proportional to the norm of that parameter.
The learning rate was warmed up to 0.004 for 5000 iterations, and decayed exponentially thereafter.
All models were trained for 30 epochs using 4 GPUs.
We used either Titan RTX (with batch size 500s) or V100 (with batch size 650s) depending on availability on our compute cluster.

\noindent
\textbf{Decoding.}
Checkpoints from the last 9 epochs were averaged for decoding.
We conducted experiments with both greedy decoding and beam search. 
For the ablation experiments, we used greedy decoding for faster turn-around. 
For the final evaluation, we used a ``modified'' version of beam search with beam size of 4.
This variant constraints the emission of at most 1 non-blank token at each time step, which allows batched decoding. 
We normalized the replayed LibriCSS recordings to -23 dB using \texttt{pyloudnorm} for inference.

\section{Results \& Discussion}
\label{sec:surt_results}

\subsection{Results on LibriCSS}
\label{sec:surt_res_baseline}

First, we evaluated the base and large SURT models on the LibriCSS data for both anechoic and replayed conditions, and compared it against the MT-RNNT baseline~\cite{Sklyar2021MultiTurnRF}.
These comparisons are shown in Table~\ref{tab:surt_results}.
We cannot compare SURT with published results on t-SOT~\cite{Kanda2022StreamingMA} since the latter is evaluated using the \texttt{asclite}-based speaker-agnostic WER (SAg-WER) metric.
SAg-WER requires token-level time-stamp estimation.
More importantly, it does not penalize an utterance being split into multiple ``channels,'' which is penalized by ORC-WER, making them difficult to compare.
The models were trained on LSMix-full, and adapted to LibriCSS using the \texttt{dev} set. 
For adaptation, we segmented the \texttt{dev} sessions in 2 ways, by cutting at maximum pause durations of 0.1s and 0.5s, respectively.
This resulted in 381 sub-sessions of average duration 17.9s, totalling approximately 1.88h each from anechoic and replayed conditions and containing 18.6\% overlapped speech.
The model was then trained on the combined adaptation data for 8 epochs (for \textit{base}) or 15 epochs (for \textit{large}).
The learning rate was warmed up to 0.0004 for 2 (or 4) epochs and decayed thereafter.
A single V100 GPU was used for adaptation.

\begin{table}[tp]
\centering
\caption{Performance of SURT on (a) ``anechoic'' and (b) ``replayed'' versions of LibriCSS \texttt{test} set. The SURT 2.0 models were decoded using beam search with a beam size of 4.}
\label{tab:surt_results}

\begin{subtable}[h]{\linewidth}
\centering
\caption{Anechoic}
\adjustbox{max width=\linewidth}{
\begin{tabular}{@{}lcccccccr@{}}
\toprule
\textbf{Model} & \textbf{Size (M)} & \textbf{0L} & \textbf{0S} & \textbf{OV10} & \textbf{OV20} & \textbf{OV30} & \textbf{OV40} & \textbf{Avg.} \\ \midrule


SURT (Base) & 26.7 & 5.2 & 4.7 & 14.2 & 17.8 & 21.3 & 23.0 & 14.4\\
$\hookrightarrow$ w/ \texttt{dev} adaptation & 26.7 & 5.1 & 4.2 & 13.7 & 18.7 & 20.5 & \textbf{20.6} & 13.8 \\

SURT (Large) & 37.9 & \textbf{4.6} & \textbf{3.8} & 14.9 & 17.3 & 19.1 & 23.9 & 13.9 \\
$\hookrightarrow$ w/ \texttt{dev} adaptation & 37.9 & \textbf{4.6} & \textbf{3.8} & \textbf{12.7} & \textbf{14.3} & \textbf{16.7} & 21.2 & \textbf{12.2} \\
\bottomrule
\end{tabular}
}
\end{subtable}

\hfill

\begin{subtable}[h]{\linewidth}
\caption{Replayed}
\adjustbox{max width=\linewidth}{
\begin{tabular}{@{}lcccccccr@{}}
\toprule
\textbf{Model} & \textbf{Size (M)} & \textbf{0L} & \textbf{0S} & \textbf{OV10} & \textbf{OV20} & \textbf{OV30} & \textbf{OV40} & \textbf{Avg.} \\ \midrule

Multi-turn RNN-T~\cite{Sklyar2021MultiTurnRF} & 81.0 & 14.8 & 14.5 & 18.0 & 25.8 & 30.3 & 32.3 & 22.6 \\

SURT (Base) & 26.7 & 6.7 & 8.2 & 23.0 & 27.1 & 28.5 & 31.8 & 20.9 \\
$\hookrightarrow$ w/ \texttt{dev} adaptation & 26.7 & 6.8 & 7.2 & 21.4 & 24.5 & 28.6 & 31.2 & 20.0 \\

SURT (Large) & 37.9 & \textbf{5.9} & 7.8 & 21.2 & 25.7 & 27.8 & 29.9 & 19.7 \\
$\hookrightarrow$ w/ \texttt{dev} adaptation & 37.9 & 6.4 & \textbf{6.9} & \textbf{17.9} & \textbf{19.7} & \textbf{25.2} & \textbf{25.5} & \textbf{16.9} \\
\bottomrule
\end{tabular}
}
\end{subtable}

\end{table}

As the overlap ratio in Table~\ref{tab:surt_results} increases from 0\% to 40\%, the WER also increases, which is expected. 
On the anechoic setting, SURT-base obtains a WER of 14.4\% without any adaptation.
This may be because the anechoic training mixtures are well matched to the evaluation condition in the absence of far-field artifacts.
Unlike other multi-talker ASR models which degrade performance on single-speaker input, SURT obtained very low WERs on the 0L and 0S settings. 
We attribute this primarily to pre-training on single-speaker data, which allows the model to handle non-overlapping speech well.
On using model adaptation, the anechoic WER further improved by 0.6\% absolute, with consistent improvements across most overlap conditions.
The largest improvement was obtained for the OV40 sessions, where WER reduced from 23.0\% to 20.6\%.
When evaluated on replayed LibriCSS, SURT-base was better than MT-RNNT on average, but slightly worse on overlapped conditions like OV10 and OV20.
This may be because we used a limited set of real RIRs for simulating reverberant training mixtures, whereas SURT and MT-RNNT used on-the-fly simulated RIRs.
We experimented with using simulated RIRs, but we found that it consistently degraded WERs on the 0S condition, similar to the observation in \citet{Raj2021ContinuousSM}.
Adaptation on the \texttt{dev} set improved performance across all settings, with the resulting average WER reducing to 20.0\%.

The \textit{large} model followed similar trends as the \textit{base} model, but provided consistent improvements in WER across most conditions.
For unadapted models, larger improvement was observed on the replayed setting compared to the anechoic setting (5.7\% vs. 3.5\% relative).
We conjecture that the larger masking network (6 DP-LSTM layers) may be better suited for unmixing reverberant features, leading to improved WERs.
We also found that the \textit{large} model benefited more from adaptation on in-domain data, perhaps due to higher representation capacity.
The relative WER improvement from adaptation was 12.2\% and 14.2\% for the anechoic and replayed conditions, respectively, whereas for the \textit{base} model, the improvements were 4.2\% and 4.3\%.
Overall, our SURT-large model provided relative WER improvements of 25.2\% (22.6\%~$\rightarrow$~16.9\%) over an MT-RNNT baseline.

\subsection{Effect of network architecture}
\label{sec:surt_results_arch}

Recall from Section~\ref{sec:surt_arch} that the SURT network architecture contains several components which were carefully selected for various reasons. 
These include: (i) DP-LSTMs in the unmixing module, (ii) branch-tied encoders, and (iii) a stateless prediction network. 
We performed ablation experiments to evaluate the effect of each choice, as shown in Table~\ref{tab:surt_arch}. 
Each of the first three rows denote the performance when one of the components is changed (shown in red), while the last row shows the configuration of the final SURT model.
We selected model configurations such that all models have roughly the same number of parameters.
All models were trained on LSMix-clean until convergence,
and evaluated on the anechoic LibriCSS setting with greedy decoding. 
The recognition modules were pre-trained for all setups, but no auxiliary losses or adaptation were used.
We also tried replacing the zipformer encoder with a DP-LSTM, but this model did not converge.
This may be because the input sequences (utterance groups) for the encoder in SURT are relatively long, which may affect convergence in architectures that do not use self-attention.

\begin{table}[t]
\centering
\caption{Effect of architectural choices for various components, shown on ``anechoic'' LibriCSS \texttt{test} set. The last row denotes the final network architectures for SURT.}
\label{tab:surt_arch}
\adjustbox{max width=\linewidth}{
\begin{tabular}{@{}l@{\hspace{0.1\tabcolsep}}c@{\hspace{0.1\tabcolsep}}ccrrrrrrr@{}}
\toprule
\begin{tabular}{@{}l}\textbf{Masking}\\\textbf{network}\end{tabular} & \begin{tabular}{@{}c}\textbf{Branch}\\\textbf{tying}\end{tabular} &  \begin{tabular}{@{}c}\textbf{Pred.}\\\textbf{network}\end{tabular} & \begin{tabular}{@{}c}\textbf{Size}\\\textbf{(M)}\end{tabular} & \textbf{0L} & \textbf{0S} & \textbf{OV10} & \textbf{OV20} & \textbf{OV30} & \textbf{OV40} & \textbf{Avg.} \\
 \midrule
\textcolor{red}{Conv2D} & \cmark & Conv1D & 25.3 & 13.9 & 6.9 & 27.0 & 35.4 & 41.0 & 45.4 & 28.3 \\
DP-LSTM & \textcolor{red}{\xmark} & Conv1D & 24.6 & 6.6 & 5.4 & 21.3 & 26.6 & 33.1 & 41.2 & 22.4 \\
DP-LSTM & \cmark & \textcolor{red}{LSTM} & 28.1 & 7.6 & 6.3 & \textbf{17.2} & 26.7 & 26.8 & 34.7 & 19.9 \\
\midrule
DP-LSTM & \cmark & Conv1D & 26.7 & \textbf{6.4} & \textbf{5.1} & 17.5 & \textbf{23.5} & \textbf{25.4} & \textbf{33.3} & \textbf{18.5} \\
\bottomrule
\end{tabular}
}
\end{table}

The largest performance degradation was caused by replacing the DP-LSTM based masking network with Conv2D.
This is consistent with speech separation research where dual-path encoders usually provide large improvements due to their ability to model long sequences~\cite{Luo2020DualPathRE}.
Without the DP-LSTM masking, the unmixing module was effectively futile, as evident by the high WERs on the overlapping sessions.

For the encoder architecture, applying branch tying using equation~(\ref{eq:surt_bt}) provided significant improvements on high overlap conditions.
For example, the relative WER reduction with branch tying was 23.3\% and 19.2\% on the OV30 and OV40 settings, respectively.
The improvement was largely due to a reduction in deletion errors (at the cost of a small increase in insertions).
For example, on the OV40 setting, the deletion error rate reduced from 20.6\% to 14.8\%, with the insertion increasing from 3.5\% to 5.5\%.
This supports our conjecture that branch tying helps in alleviating errors caused by \textit{omissions} where parts of the transcripts ``fall through the cracks'', i.e., are skipped by both branches.
We will analyze these errors further in Section~\ref{sec:surt_results_other}.

Finally, we observed small but consistent improvements by replacing the LSTM-based prediction network (used in the original SURT) with a stateless network (i.e., using a Conv1D layer).
The largest relative improvement for this change was seen in the 0S setting, where WER improved by 19.0\% (compared to 7.0\% relative WER improvement overall).
As mentioned in Section~\ref{sec:surt_arch}, we conjecture that this may be because a stateless decoder is more suited to frequent context switching that is required for modeling quick turn-taking.

\subsection{Effect of auxiliary objectives}
\label{sec:surt_results_aux}

\begin{table}[t]
\centering
\caption{Effect of auxiliary objectives on ``anechoic'' LibriCSS \texttt{test} set. All models used the SURT-base architecture.}
\label{tab:surt_auxiliary}
\adjustbox{max width=\textwidth}{
\begin{tabular}{@{}ccccccccr@{}}
\toprule
$\mathcal{L}_{\mathrm{ctc}}$ & $\mathcal{L}_{\mathrm{mask}}$ & \textbf{0L} & \textbf{0S} & \textbf{OV10} & \textbf{OV20} & \textbf{OV30} & \textbf{OV40} & \textbf{Avg.} \\ \midrule
\xmark & \xmark & 6.4 & 5.1 & 17.5 & 23.5 & 25.4 & 33.3 & 18.5 \\
\cmark & \xmark & 6.0 & 5.2 & 17.9 & 23.7 & 22.8 & 29.6 & 17.5 \\
\xmark & \cmark & \textbf{5.6} & \textbf{4.9} & 16.3 & 21.3 & 24.6 & 29.5 & 17.1 \\
\cmark & \cmark & 6.1 & 5.0 & \textbf{13.6} & \textbf{19.0} & \textbf{21.1} & \textbf{26.5} & \textbf{15.2} \\
\bottomrule
\end{tabular}
}
\end{table}

We also performed ablation experiments to study the effect of $\mathcal{L}_{\text{ctc}}$ and $\mathcal{L}_{\text{mask}}$. 
Similar to Section~\ref{sec:surt_results_arch}, we trained SURT-base models on LSMix-clean for this investigation, and evaluated the models on anechoic LibriCSS using greedy search. 
The results are shown in Table~\ref{tab:surt_auxiliary}.

First, we see that adding CTC loss on the encoder improved WER mainly for highly overlapped sessions. 
We obtained relative improvements of 10.2\% and 11.1\% on the OV30 and OV40 sessions, respectively. 
For sessions with more overlaps and turn-taking, both output branches may contain several speech segments. 
We conjecture that an auxiliary CTC objective may be useful in aligning the segments to the corresponding audio during training, resulting in better modeling for high overlap sessions. 

\begin{figure}[tbp]
\centering
  \subfloat[Without $\mathcal{L}_{\text{ctc}}$\label{fig:no_ctc_grad}]{%
       \includegraphics[width=0.8\linewidth]{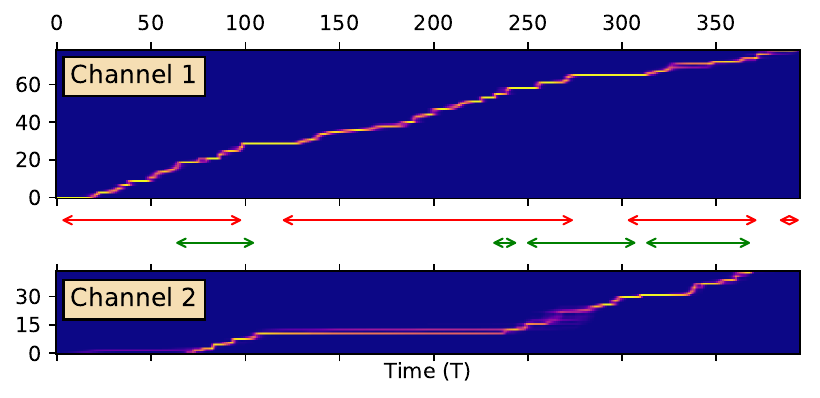}}
    \hfill
  \subfloat[With $\mathcal{L}_{\text{ctc}}$\label{fig:ctc_grad}]{%
        \includegraphics[width=0.8\linewidth]{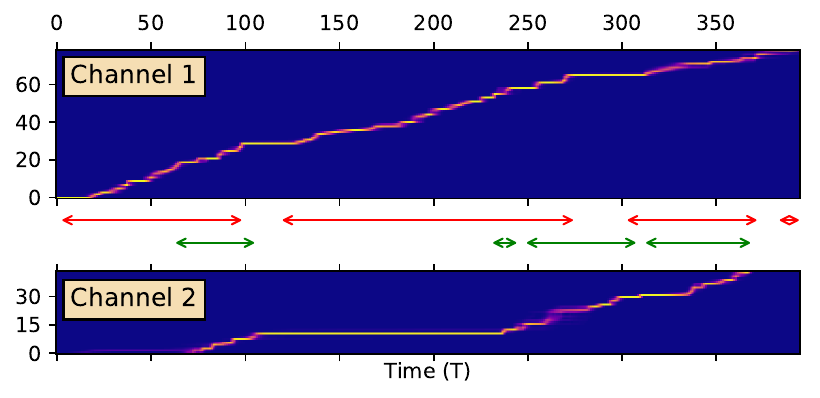}}
\caption{Occupation probabilities of nodes in the $T\times U$ lattice for both output channels, for models trained with and without auxiliary CTC loss ($\mathcal{L}_{\text{ctc}}$. $T$ and $U$ are on $x$ and $y$ axes, respectively. For this mixture, $T=395$, $U_1 = 79$, and $U_2 = 44$. The double arrows between the plots denote the reference segments: red for channel 1 and green for channel 2. Brighter colors denote higher occupation probabilities.}
\label{fig:grad}
\end{figure}

To validate our conjecture, we plotted the the occupation probabilities for the nodes in the RNN-T lattice (of shape $T \times U$), as obtained from the gradients of the simple additive joiner used in the pruned transducer loss~\cite{Kuang2022PrunedRF}. 
These values should correspond to a soft alignment between the input and the label sequence\footnote{It makes more sense to use this value instead of CTC alignments because (i) the model trained without $\mathcal{L}_{\text{ctc}}$ cannot provide corresponding alignments, and (ii) we use the transducer head for the ASR task.}.
In Fig.~\ref{fig:grad}, we show example plots for a randomly selected mixture from the training set, using models trained with and without $\mathcal{L}_{\text{ctc}}$.
When the auxiliary CTC loss was used, the model was able to better align the silence region (time frames 100 to 250) in channel 2, as discernible through the bright horizontal line.

Using auxiliary masking loss $\mathcal{L}_{\text{mask}}$, as defined in equation~(\ref{eq:mask}), again improved WER performance over the SURT-base model, as shown in the third row of Table~\ref{tab:surt_auxiliary}.
Surprisingly, we observed most improvements in the low-overlap conditions --- for instance, 12.5\% relative WER reduction for 0L.
Most of this improvement again resulted from reduction in deletion errors (2.0\%~$\rightarrow$~1.1\% for 0L).
We conjecture that this may be a result of fewer leakage-related errors in low overlap regions.
We also experimented with using a graph-PIT based masking loss~\cite{vonNeumann2021GraphPITGP} instead of the HEAT-based loss, but it did not provide similar improvements.
This may be because our final transducer objective uses the HEAT formulation.
Finally, the best WER results were obtained on combining both the auxiliary objectives.
The resulting model demonstrated a relative WER reduction of 17.8\% over the SURT-base model trained without these objectives.

\subsection{Effect of pre-training}
\label{sec:surt_results_pretrain}

\begin{figure}[tp]
\centering
\includegraphics[width=0.6\linewidth]{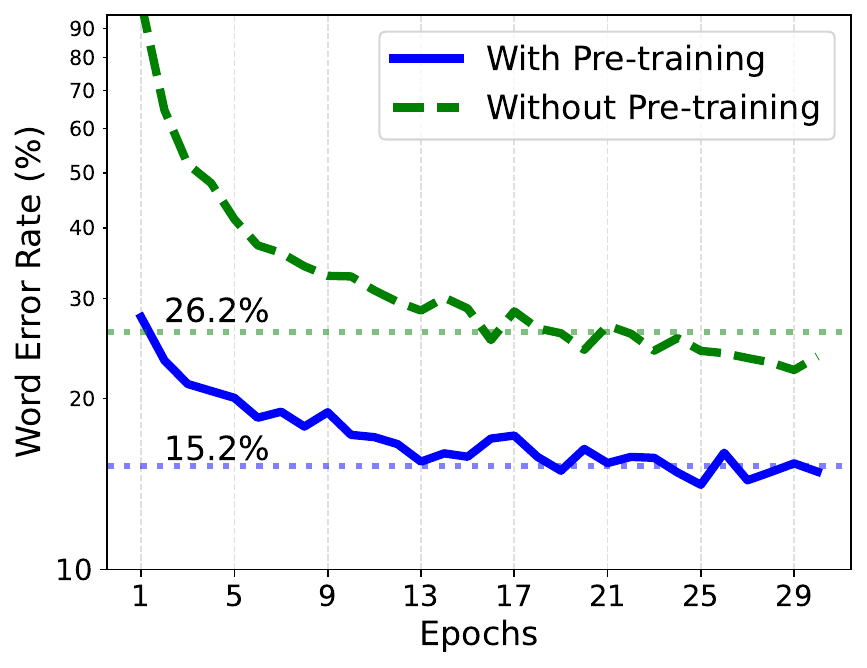}
\caption{Effect of transducer pre-training with LibriSpeech. WERs are shown for the LibriCSS \texttt{dev} set after each training epoch. The dotted horizontal lines show the final WERs on \texttt{test} set with model averaging.}
\label{fig:pretrain}
\end{figure}

Pre-training on single-speaker utterances was found to be one of the most effective strategies for faster and better convergence of SURT models.
Such a pre-training strategy has also been used in recent work on multi-channel serialized output training~\cite{Kanda2022VarArrayMT}.
To quantify this improvement, we computed the WER on the anechoic LibriCSS \texttt{dev} set (averaged across all overlap conditions) after each epoch of training a SURT-base model on the LSMix-clean data, as shown in Fig.~\ref{fig:pretrain}.
We observe that when pre-training was used, the SURT model converged much faster and to a better WER.
In fact, it surpassed the model without pre-training after just 5 epochs.
On the anechoic \texttt{test} set, the models obtained WER of 26.2\% and 15.2\%, respectively, as shown by the dotted horizontal lines.
The increase in computational time is marginal, since we only pre-train the transducer for 10 epochs (instead of training to convergence).
Alternatively, an off-the-shelf streaming transducer may also be plugged in for this purpose, since the \textit{branch tying} of the encoder is only added at the time of SURT training.
Such a pre-training scheme is analogous to a curriculum learning strategy using single-speaker utterances, with the masks fixed as $\mathbf{M}_1 = J_{T,F}$ and $\mathbf{M}_c = 0 \cdot J_{T,F}$, $\forall c \neq 1$, where $J_{t,f}=1$, $\forall t,f$.

\subsection{Measuring leakage and omission}
\label{sec:surt_results_other}

\begin{figure}[tp]
\centering
  \subfloat[\textbf{Anechoic}\label{fig:anechoic_beam}]{%
       \includegraphics[width=0.8\linewidth]{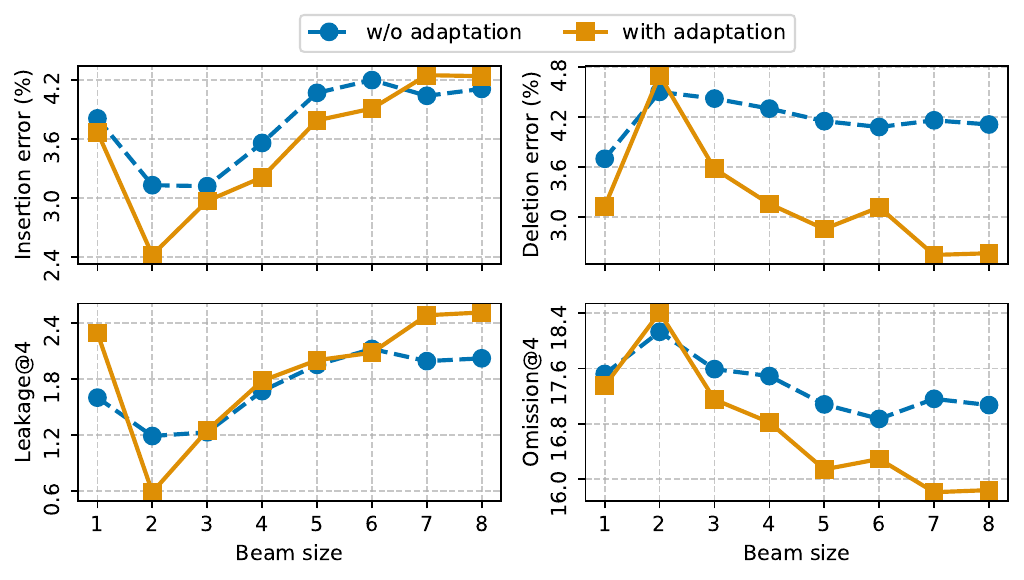}}
    \hfill
  \subfloat[\textbf{Replayed}\label{fig:replayed_beam}]{%
        \includegraphics[width=0.8\linewidth]{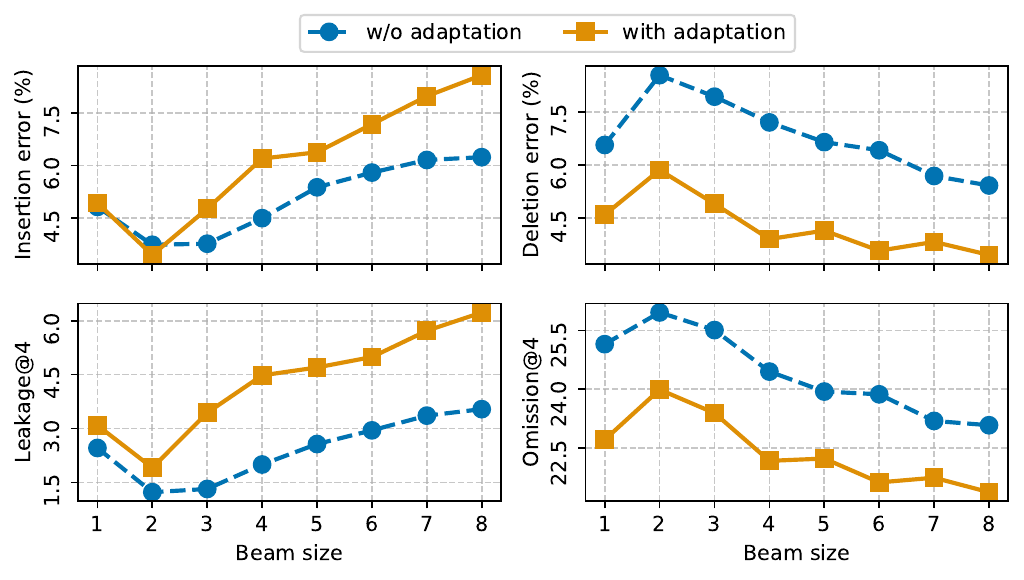}}
\caption{Effect of decoding beam size on insertion and deletion errors, versus effect on leakage and omission (for $n=4$), for SURT-base models with and without adaptation, for (a) ``anechoic'' and (b) ``replayed'' conditions, averaged across all overlap settings. The top row shows insertion and deletion errors, while the bottom row contains leakage@4 and omission@4.}
\label{fig:beams}
\end{figure}


Throughout this paper, we have mentioned \textit{leakage} and \textit{omission}, first identified in \citet{Raj2021ContinuousSM}, as major contributors of errors in SURT.
In this section, we provide a metric for quantifying these error sources, in terms of n-gram counts on the reference $\mathbf{Y}$ and hypotheses $\hat{\mathbf{Y}}$, as defined in Section~\ref{sec:surt_model}.

\begin{description}
    \item[omission@n] For some $n$, the fraction of all unique n-grams in $\mathbf{Y}$ not present in any $\hat{\mathbf{Y}}_c$.
    \item[leakage@n] For some $n$, the fraction of all unique n-grams in $\mathbf{Y}$ present in multiple $\hat{\mathbf{Y}}_c$.
\end{description}

If emission time-stamps are known, these definitions may be modified to include time windows for n-gram search, e.g., an n-gram is considered to be present in more than one $\mathbf{Y}_c$ only if the corresponding time-spans overlap.
In the absence of time marks, the leakage@n and omission@n rates defined above are upper and lower bounds on the actual leakage and omission, respectively.
Nevertheless, by quantifying these error types explicitly, we can gain some insights into model behavior.
Our SURT model in this chapter does not predict token time-stamps, so we estimate omission and leakage over the entire hypotheses.

In Table~\ref{tab:surt_results}, we showed WERs achieved by the SURT models with and without adaptation, when decoded using beam search with a beam of size 4.
Typically, ASR model performance improves by increasing the beam size, but we found that for SURT models, the WER first improved (up to a beam size of 4) and then degraded.
This trend can be explained by looking at the leakage and omission errors, as shown in Fig.~\ref{fig:beams}.
We used the SURT-base model (with and without adaptation) for decoding the anechoic and replayed sets, using beam sizes varying from 1 to 8.
In the figure, we show insertion and deletion error rates, as well as leakage@4 and omission@4.
We observe that leakage first decreased (from beam size 1 to 2) but then increased gradually, while the opposite trend was observed for omission.
Furthermore, adapted models reduce omissions significantly, at the cost of increase in leakage (particularly in the replayed setting).

In the analysis above, the overall trend for leakage and omissions followed those of insertion and deletion errors, which is expected.
However, this may not always be the case.
In Table~\ref{tab:surt_auxiliary}, we showed results for ablation experiments done to investigate the effect of auxiliary objectives.
We calculated the leakage@4 and omission@4 rates for those models on the anechoic \texttt{dev} set, and compared them with the corresponding insertion and deletion rates.
The comparison is shown in Table~\ref{tab:surt_aux2}.
We see that although the models trained with $\mathcal{L}_{\text{ctc}}$ and $\mathcal{L}_{\text{mask}}$ have similar L@4 and O@4, the difference in insertion and deletion errors is comparatively large.
This suggests that the additional insertions or deletions are not caused due to unmixing errors, and are most likely due to errors in the recognition module.
This hypothesis seems reasonable because $\mathcal{L}_{\text{ctc}}$ should be more effective than $\mathcal{L}_{\text{mask}}$ in reducing purely ASR-related errors.
When we further include $\mathcal{L}_{\text{mask}}$ in training (last row), both O@4 and deletion reduce by roughly 5\% relative, indicating that the improvement almost entirely results from recovered sub-segments.

\begin{table}[t]
\setlength{\tabcolsep}{5pt}
\centering
\caption{Comparison of leakage@4 and omission@4 with insertion and deletion errors on the ``anechoic'' LibriCSS \texttt{dev} set.}
\label{tab:surt_aux2}
\adjustbox{max width=0.7\textwidth}{
\begin{tabular}{@{}ccrrrr@{}}
\toprule
$\mathcal{L}_{\mathrm{ctc}}$ & $\mathcal{L}_{\mathrm{mask}}$ & \textbf{Ins.} & \textbf{Del.} & \textbf{L@4} & \textbf{O@4} \\ \midrule
\xmark & \xmark & 3.44 & 5.96 & 2.44 & 20.63 \\
\cmark & \xmark & 2.56 & 4.87 & 1.51 & 20.10 \\
\xmark & \cmark & 2.96 & 5.88 & 1.49 & 20.34 \\
\cmark & \cmark & 3.03 & 4.65 & 1.68 & 19.27 \\
\bottomrule
\end{tabular}
}
\end{table}

\subsection{Results on AMI and ICSI}
\label{sec:surt_results_ami}

Finally, we evaluated our SURT models on two meeting benchmarks, AMI and ICSI, to verify their efficacy on real multi-talker speech. 
For these experiments, we initialized SURT \textit{base} and \textit{large} models with the final checkpoint from the LSMix-full training, and continued training on the AIMix-full simulated mixtures for 30 epochs. 
The models were further adapted on the AMI and ICSI \texttt{train} sessions, by combining the IHM-Mix, SDM, and beamformed MDM recordings. 
For the adaptation, we cut the sessions at pauses of 0.0 and 0.5 seconds (thus creating two copies with different segmentations).
We trimmed the long sessions to approximately 30s each based on utterance end time marks. 
This process resulted in 330k train samples of average duration 6.4s, totaling 590h and 18.4\% overlapped speech.
We adapted the SURT models on these sub-sessions by training with a LR of 0.0001 for 20 epochs, and used the final checkpoint for inference. 
The results with and without in-domain adaptation are shown in Table~\ref{tab:surt_ami}.

We found that the \textit{large} model obtained consistently better WERs than the \textit{base} model, which is expected. 
All models obtained lower WERs on ICSI, which may be due to its lower overlapped speech ratio (13.6\%) compared to AMI (21.0\%). 
A similar increase in WERs with higher overlaps was also observed for LibriCSS. 
Without model adaptation, the SURT models obtained reasonable WERs for the IHM-Mix and beamformed MDM settings, but not for SDM. 
For instance, the SURT-base model obtained 64.3\% relatively worse WERs on SDM compared to IHM-Mix. 
This suggests that purely simulated mixtures cannot compensate for real, far-field training data. 
SDM performance improved significantly with in-domain adaptation, with a relative WER reduction of 28.3\% and 43.5\% for SURT-base on AMI and ICSI, respectively.

\begin{table}[t]
\centering
\caption{Results on the AMI and ICSI \texttt{test} sets, under different microphone settings, using SURT \textit{base} and \textit{large} models.}
\label{tab:surt_ami}
\adjustbox{max width=0.9\linewidth}{
\begin{tabular}{@{}lc@{\hspace{2\tabcolsep}}ccc@{\hspace{2\tabcolsep}}cc@{}}
\toprule
\multirow{2}{*}{\textbf{Model}} & \multirow{2}{*}{\textbf{Adapt.}} & \multicolumn{3}{c}{\textbf{AMI}} & \multicolumn{2}{c}{\textbf{ICSI}} \\
\cmidrule(l{1pt}r){3-5} \cmidrule(l{1pt}r{1pt}){6-7}
 & & IHM-Mix & SDM & MDM & IHM-Mix & SDM \\
\midrule
\multirow{2}{*}{\textbf{Base}} & \xmark & 39.8 & 65.4 & 46.6 & 28.3 & 60.0 \\
 & \cmark & 37.4 & 46.9 & 43.7 & 26.3 & 33.9 \\
\midrule
\multirow{2}{*}{\textbf{Large}} & \xmark & 36.8 & 62.5 & 44.4 & 27.8 & 59.7 \\
 & \cmark & 35.1 & 44.6 & 41.4 & 24.4 & 32.2 \\
\bottomrule
\end{tabular}
}
\end{table}

Error analysis revealed that most of the errors were caused by deletion, as shown in Table~\ref{tab:surt_wer}. 
For the unadapted models, deletion comprised 89.4\% and 83.2\% of the overall errors for the base and large SURT variants, respectively. 
We attribute this primarily to the model missing very short utterances and back-channels such as ``Okay,'' ``Hmm,'', and so on, which form a significant fraction of such meetings, and which may be useful for downstream dialog understanding tasks.
When we adapted the models using real, in-domain data, the deletion errors reduced significantly, with minor increase in insertion and substitution.
For the adapted models, deletion comprised 68.9\% and 66.8\% of the total errors for the base and large SURT, respectively.
In future work, it may be interesting to investigate models and training objectives which explicitly try to avoid suppression of short overlapping segments.

\begin{table}[t]
\centering
\caption{WER breakdown on the AMI \texttt{test} set for the SDM microphone condition, using SURT \textit{base} and \textit{large} models.}
\label{tab:surt_wer}
\adjustbox{max width=0.54\linewidth}{
\begin{tabular}{@{}lccccc@{}}
\toprule
\textbf{Model} & \textbf{Adapt.} & \textbf{Ins.} & \textbf{Del.} & \textbf{Sub.} & \textbf{WER} \\
\midrule
\multirow{2}{*}{\textbf{Base}} & \xmark & 0.8 & 55.8 & 8.8 & 65.4 \\
 & \cmark & 1.8 & 32.3 & 12.8 & 46.9 \\
\midrule
\multirow{2}{*}{\textbf{Large}} & \xmark & 0.9 & 52.0 & 9.7 & 62.5 \\
 & \cmark & 2.1 & 29.8 & 12.8 & 44.6 \\
\bottomrule
\end{tabular}
}
\end{table}

\section{Conclusion}

We performed a detailed investigation of the SURT model for multi-talker speech recognition.
By decomposing the challenges faces by this model in continuous, streaming, multi-talker ASR into the three components of sparsely overlapped speech separation, long-form ASR, and quick turn-taking modeling, we were able to identify model design strategies that improve performance on one or more of these sub-problems.

We modeled SURT's unmixing component as mask estimation on the original filter-bank inputs used in single-speaker ASR, which allowed the use of transducer pre-training on single-speaker utterances.

To improve training efficiency, we applied zipformer blocks in the encoder which aggressively subsample the input sequence and use shared attention masks within the blocks.

We also used sub-segments instead of full utterances to simulate training mixtures, which resulted in more frequent turn-taking without increasing the training sequence length.

For training, we used the recently proposed pruned transducer instead of the full-sum transducer loss to reduce memory requirement.

We found that using auxiliary objectives for the encoder and the masking network also improves the model's performance, for instance by producing better soft alignments of the input and output sequences during training, or by reducing leakage in single-speaker regions.

To further reduce errors caused by leakage and omission, we used dual-path LSTMs instead of convolutional layers in unmixing, and added branch tying of encoder outputs in the recognition component.

We trained SURT in multiple stages: (i) single-speaker pre-training, (ii) training on simulated mixtures, and (iii) adaptation on in-domain real data, to outperform the larger and computationally expensive MT-RNNT proposed previously on LibriCSS.

Finally, we also demonstrated the viability of these models for real meeting benchmarks, namely AMI and ICSI.

In this chapter, our focus with the SURT model design was to perform multi-talker ASR without concerning ourselves with speaker attribution of each utterance, as measured by ORC-WER.
Now that we have a framework for this in place, we can build upon this model to jointly perform transcription and speaker attribution.
In the next chapter, we will describe new methods for speaker attribution in the SURT framework, which will allow us to perform streaming, speaker-attributed ASR in an end-to-end fashion.

\chapter{Speaker Attribution in the SURT Framework}
\label{chap:surt2}

In the previous chapter, we introduced the SURT framework, and evaluated its efficacy in the task of speaker-agnostic multi-talker ASR.
The SURT model thus far is only capable of transcribing all the utterances in the input mixture and keeping consecutive words of an utterance together, but it cannot attribute non-overlapping utterances across a session to speakers.
In this chapter, we will fill this gap and describe how to modify SURT such that it can predict speaker labels jointly with ASR tokens.
This will enable us to perform streaming, speaker-attributed transcription using a single end-to-end model.

\section{Introduction}
\label{sec:surt2_intro}

In previous chapters, we have seen that modular systems for speaker-attributed transcription have several limitations, such as error propagation, expensive maintenance, etc.
Due to these limitations with modular systems, researchers have proposed jointly optimized models that combine diarization and ASR to directly solve for the task of speaker-attributed transcription. 
The most popular of these is the speaker-attributed ASR (SA-ASR) model based on attention-based encoder-decoders (AEDs)~\cite{Chorowski2015AttentionBasedMF}.
It uses serialized output training (SOT) to handle overlapped speech and registered speaker profiles (called a speaker inventory) to handle speaker attribution~\cite{Kanda2020SerializedOT,Kanda2020JointSC}.
Several modifications to this model have leveraged transformer-based encoders~\cite{Kanda2021EndtoEndSA} and large-scale pre-training~\cite{Kanda2021ACS}, and have proposed methods for inference on long recordings~\cite{Chang2021HypothesisSF} without the dependence on a speaker inventory~\cite{Kanda2020InvestigationOE}.
There have been further investigations of methods for speaker attribution within SA-ASR, and its extension to multi-channel and contextualized ASR~\cite{Yu2022ACS,Shi2022ACS,Shi2023CASAASRCS}.
By modifying SOT to be performed at the token-level (known as t-SOT), Kanda et al.~\cite{Kanda2022StreamingSA} performed streaming transcription of overlapping speech, which was not feasible in the original SA-ASR due to the use of AEDs and utterance-level serialization.
Enforcing monotonicity in this manner also allows these models to be built upon neural transducers~\cite{Graves2012SequenceTW} instead of AEDs.
Nevertheless, t-SOT requires complicated interleaving/deserialization of tokens based on timestamps to accommodate overlapping speech on a single output channel, and the use of several ``channel change'' tokens may impact ASR training adversely.
Other methods for speaker-attributed transcription have been proposed that jointly model ASR and speaker labels in the output unit~\cite{Shafey2019JointSR}, perform speaker-conditioned acoustic modeling with EEND~\cite{Chetupalli2021SpeakerCA}, or attach a sidecar separator for speech activity prediction~\cite{Meng2023UnifiedMO}.


Arguably, a more natural approach for continuous, streaming, multi-talker ASR, as described in the last chapter, involves transcribing overlapping utterances on parallel output channels by unmixing them inside the model.
We refer to this two-branch strategy as Streaming Unmixing and Recognition Transducer, or SURT~\cite{Lu2020StreamingEM}, and the model is shown in Fig.~\ref{fig:surt_model}.
In the literature, SURT has been extended to handle long-form multi-turn recordings~\cite{Raj2021ContinuousSM,Sklyar2021MultiTurnRF}, and to jointly perform endpointing and segmentation~\cite{Lu2022EndpointDF,Sklyar2022SeparatorTransducerSegmenterSR}.
\citet{Lu2021StreamingMS} also proposed joint speaker identification with SURT, but their model relied on a speaker inventory and was only used for single-turn synthetic mixtures.
As shown in Fig.~\ref{fig:surt_model}, the SURT model consists of an ``unmixing'' component that separates the mixed audio into non-overlapping streams, and a ``recognition'' component that transcribes each of these streams.
Since there is no explicit emission of speaker labels in this modeling scheme, SURT has thus far been limited to \textit{speaker-agnostic} transcription.
Specifically, words uttered by two speakers in a pair of overlapping utterances are indeed transcribed on separate channels, but there is no way to attribute two non-overlapping utterances of a speaker to that speaker, even within the same utterance group.
In this chapter, our objective is to extend the SURT model for \textit{speaker-attributed} transcription of an arbitrary number of speakers without any speaker inventory.

We achieve this by adding an auxiliary speaker transducer to the recognition module of SURT.
We constrain this branch to emit a speaker label corresponding to each ASR token predicted by using HAT-style~\cite{Variani2020HybridAT} blank factorization of the output logits.
This is intended to ensure that two non-overlapping utterances of a speaker within an utterance group are assigned the same speaker label.
We additionally propose a novel ``speaker prefixing'' method to ensure that the speaker labels are consistent across different utterance groups in the recording.
We validate our methods through ablation experiments on LibriSpeech mixtures, and finally demonstrate streaming speaker-attributed transcription on real meetings from the AMI corpus.

\section{Preliminary}
\label{sec:surt2_mt-asr}

\subsection{Speech recognition with neural transducers}

In single-talker ASR, audio features for a segmented utterance $\mathbf{X} \in \mathbb{R}^{T\times F}$, where $T$ and $F$ denote the number of time frames and the input feature dimension, respectively, are provided as input, and the system predicts the transcript $\mathbf{y} = (y_1,\ldots,y_U)$, where $y_u \in \mathcal{V}$ are output units such as graphemes or word-pieces, and $U$ is the length of the label sequence. 
For discriminative training, we achieve this by minimizing the negative conditional log-likelihood, $\mathcal{L} = -\log P(\mathbf{y}|\mathbf{X})$. 
Since the alignment between $\mathbf{X}$ and $\mathbf{y}$ is not known, transducers compute $\mathcal{L}$ by marginalizing over the set of all alignments $\mathbf{a} \in \bar{\mathcal{V}}^{T+U}$, where $\bar{\mathcal{V}} = \mathcal{V}\cup \{\phi\}$, $\phi$ is called the blank label, and the non-blank labels in $\mathbf{a}$ equal $\mathbf{y}$.
Formally,
\begin{equation}
P(\mathbf{y}|\mathbf{X}) = \sum_{\mathbf{a}\in \mathcal{B}^{-1}(\mathbf{y})} P(\mathbf{a}|\mathbf{X}),
\label{eq:surt2_rnnt}
\end{equation}
where $\mathcal{B}$ is the deterministic mapping from an alignment $\mathbf{a}$ to the sub-sequence of its non-blank symbols. 
Transducers parameterize $P(\mathbf{a}|\mathbf{X})$ with an encoder, a prediction network, and a joiner (see Fig.~\ref{fig:modular_transducer}).
The encoder maps $\mathbf{X}$ into hidden representations $\mathbf{f}_1^T$, while the prediction network maps $\mathbf{y}$ into $\mathbf{g}_1^U$.
The joiner combines the outputs from the encoder and the prediction network to compute logits $\mathbf{z}_{t,u}$ which are fed to a softmax function to produce a posterior distribution over $\bar{\mathcal{V}}$. 
Under the assumption of a streaming encoder, we can expand \eqref{eq:surt2_rnnt} as
\begin{align}
    P(\mathbf{y}|\mathbf{X}) &= \sum_{\mathbf{a}\in \mathcal{B}^{-1}(\mathbf{y})} \prod_{t=1}^{T+U} P(\mathbf{a}_t|\mathbf{f}_1^t,\mathbf{g}_1^{u(t)-1}) \\
    &= \sum_{\mathbf{a}\in \mathcal{B}^{-1}(\mathbf{y})} \prod_{t=1}^{T+U} \mathrm{Softmax}(\mathbf{z}_{t,u(t)}), \label{eq:surt2_rnnt_softmax}
\end{align}
where $u(t)\in\{1,\ldots,U\}$ denotes the index in $\mathbf{y}$ of the last non-blank symbol in $\mathbf{a}$ up to time $t$.
The negative log of this expression is known as the RNN-T or transducer loss.
In practice, to make training more memory-efficient, we often approximate the full sum, for example using the pruned transducer loss~\cite{Kuang2022PrunedRF}.
This loss function was denoted as $\mathcal{L}_{\text{rnnt}}$ in Chapter~\ref{chap:surt}.

\subsection{Multi-talker ASR with SURT}

In multi-talker ASR, the input $\mathbf{X}\in\mathbb{R}^{T\times F}$ is an unsegmented mixture containing $N$ utterances from $K$ speakers, i.e., $\mathbf{X} = \sum_{n=1}^N \mathbf{x}_n$, where $\mathbf{x}_n$ is the $n$-th utterance ordered by start time, shifted left and zero-padded to the length of $\mathbf{X}$. 
The desired output is $\mathbf{Y} = \{\mathbf{y}_n: 1\leq n \leq N\}$, where $\mathbf{y}_n$ is the reference transcript corresponding to $\mathbf{x}_n$. 
Assuming at most a two-speaker overlap, the \textit{heuristic error assignment training} (HEAT) paradigm~\cite{Lu2020StreamingEM} is used to create channel-wise references $\mathbf{Y}_1$ and $\mathbf{Y}_2$ by assigning $\mathbf{y}_n$'s to the first available channel, in order of start time. 
SURT estimates $\hat{\mathbf{Y}} = [\hat{\mathbf{Y}}_1,\hat{\mathbf{Y}}_2] = f_{\text{surt}}(\mathbf{X})$ as follows. 
First, an unmixing module computes $\mathbf{H}_1$ and $\mathbf{H}_2$ as
\begin{align}
\begin{split}
\label{eq:surt2_surt}
& \mathbf{H}_1 = \mathbf{M}_1 \ast \mathbf{X}, \quad \mathbf{H}_2 = \mathbf{M}_2 \ast \mathbf{X},~~\text{where} \\
& [\mathbf{M}_1,\mathbf{M}_2]^T = \mathrm{MaskNet}(\mathbf{X}),
\end{split}
\end{align}
$\mathbf{M}_c\in \mathbb{R}^{T\times F}$ is a soft mask per channel and $\ast$ is Hadamard product. 
$\mathbf{H}_1$ and $\mathbf{H}_2$ are fed into a transducer-based ASR, producing logits $\mathbf{Z}_1$ and $\mathbf{Z}_2$. 
Finally,
\begin{align}
\begin{split}
\label{eq:surt2_heat}
&\mathcal{L}_{\text{heat}} = \mathcal{L}(\mathbf{X}, \mathbf{Y}_1, \mathbf{Z}_1) + \mathcal{L}(\mathbf{X}, \mathbf{Y}_2, \mathbf{Z}_2),~~\text{where}\\
&\mathcal{L} = \mathcal{L}_{\text{rnnt}} + \lambda_{\text{ctc}}\mathcal{L}_{\text{ctc}} + \lambda_{\text{mask}}\mathcal{L}_{\text{mask}},
\end{split}
\end{align}
where $\mathcal{L}_{\text{ctc}}$ and $\mathcal{L}_{\text{mask}}$ denote auxiliary CTC loss on the encoder~\cite{Graves2006ConnectionistTC} and mean-squared error loss on the masking network, respectively, and $\lambda$'s are hyperparameters. 
In this formulation, SURT only performs \textit{speaker-agnostic} transcription, and is evaluated using ORC-WER, as described in Chapter~\ref{chap:surt}.

\section{SURT for speaker-attributed transcription}


For speaker-attributed ASR, the desired output is $\mathbf{Y}=\{(\mathbf{y}_n,s_n):1\leq n \leq N, s_n \in [1,K]\}$, where $K$ is the number of speakers in the mixture.
SURT estimates $\mathbf{y}_n$ by mapping the utterances to two channels $\hat{\mathbf{Y}}_1$ and $\hat{\mathbf{Y}}_2$, as described in Section~\ref{sec:surt2_mt-asr}.
A popular method for speaker attribution in multi-talker settings (other than SURT) is to predict speaker change tokens that segment the output into speaker-specific regions, followed by speaker label assignment to each segment.
However, this kind of training is prone to over-estimate the speaker change tokens, and may also adversely affect the ASR performance.
Instead, we want to perform speaker attribution without affecting the output of the ASR branch, for example by predicting a speaker label for each ASR token emitted.

In order to perform such a streaming speaker attribution jointly with the transcription, the following questions arise: 
\begin{enumerate}
    \item How do we deal with overlapping speech? 
    \item How do we synchronize speaker label prediction with ASR token prediction?
    \item How to reconcile relative speaker labels across utterance groups in a long recording?
\end{enumerate}
We will answer each of these questions in the following subsections.

\subsection{Auxiliary speaker transducer}

\begin{figure}[t]
    \centering
    \includegraphics[width=\linewidth]{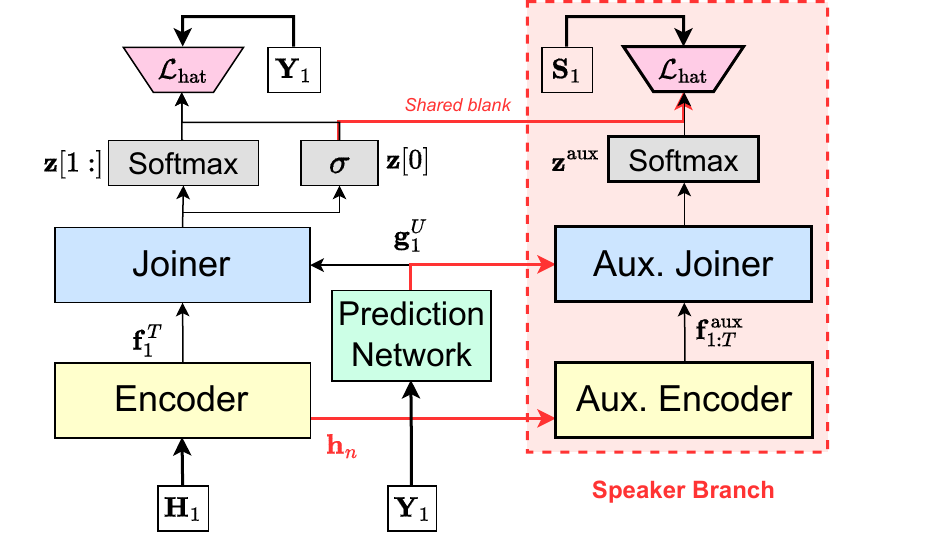}
    \caption{Auxiliary speaker transducer (red box) with shared blank label. The auxiliary encoder takes as input a hidden layer representation $\mathbf{h}_n$ from the main encoder, and generates $\mathbf{f}_{1:T}^{\mathrm{aux}}$. The blank logit $\mathbf{z}[0]$ from the main joiner is shared with the speaker branch to compute the HAT loss.}
    \label{fig:surt2_aux_branch}
\end{figure}

We map speaker labels $s_n$ to two channels according to the HEAT strategy, obtaining $\hat{\mathbf{S}}_1$ and $\hat{\mathbf{S}}_2$.
During training, we repeat $s_n$ as many times as there are tokens in $\mathbf{y}_n$, i.e., we want to predict the speaker label for each lexical token.
Thereafter, we use the non-overlapping streams $\mathbf{H}_c$ to estimate $\hat{\mathbf{S}}_c$ in the same two-branch approach as the ASR transducer.
For this, we add an auxiliary speaker transducer to each of the two branches in the recognition module, as shown in Fig.~\ref{fig:surt2_aux_branch}.
Intermediate representations $\mathbf{h}_n$ from the $n^{\text{th}}$ layer of the main encoder are fed into an auxiliary encoder, producing $\mathbf{f}_{1:T}^{\mathrm{aux}}$.
An auxiliary joiner combines $\mathbf{f}_{1:T}^{\mathrm{aux}}$ with $\mathbf{g}_{1}^U$ to produce auxiliary logits $\mathbf{z}_{t,u}^{\mathrm{aux}}$, which are used to obtain a distribution over the speaker labels and the blank label.
Combing the auxiliary encoder representation with representations from the ASR prediction network allows the speaker branch to leverage lexical content for predicting speaker labels.
Such a use of lexical information has been shown to be beneficial for speaker diarization using clustering-based~\cite{Flemotomos2019LanguageAS, Park2019SpeakerDW} or end-to-end neural approaches~\cite{Khare2022ASRAwareEN}.

\subsection{Synchronizing speaker labels with ASR tokens}

Since transducers perform frame-synchronous decoding with the blank label, the above formulation has several issues.
First, we cannot ensure that the number of ASR tokens $\lvert \hat{\mathbf{Y}}_c\rvert$ predicted on branch $c$ is equal to the the number of speaker labels $\lvert \hat{\mathbf{S}}_c \rvert$.
Even if we can ensure this, synchronizing the speaker labels with the ASR tokens can be hard, as illustrated in the following example containing two speakers saying the words ``hello'' and ``hi'', and the correct number and sequence of speaker labels.
\begin{figure}[ht]
    \centering
    \includegraphics[width=0.9\linewidth]{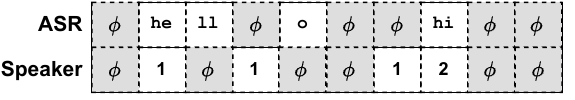}
\end{figure}

%
To solve these problems, we need to ensure that SURT emits blank labels on the same frames for both the ASR and speaker branches.
We achieve this by factoring out the blank label separately in the style of the hybrid auto-regressive transducer (HAT) model~\cite{Variani2020HybridAT}, i.e., we replace the alignment posterior $P(\mathbf{a}_t\mid \mathbf{f}_1^t, \mathbf{g}_1^{u(t)-1})$ in \eqref{eq:surt2_rnnt_softmax} with
\begin{equation}
\label{eq:surt2_hat}
P(\mathbf{a}_t\mid \mathbf{f}_1^t, \mathbf{g}_1^{u(t)-1}) = 
\begin{cases} 
b_{t,u}, ~~\text{if}~~ \mathbf{a}_t = \phi, \\
(1-b_{t,u})~\mathrm{Softmax(\mathbf{z}_{t,u}[1:])}, ~~\text{otherwise},
\end{cases}
\end{equation}
where $b_{t,u} = \sigma(\mathbf{z}_{t,u}[0])$, and $\sigma$ denotes the sigmoid function.
By setting $\mathbf{z}_{t,u}^{\mathrm{aux}}[0] = \mathbf{z}_{t,u}[0]$, i.e., by sharing the blank logit for the ASR and speaker outputs, we ensure that blank emission is synchronized between the two branches.
The speaker branch is trained with a similar HAT loss, i.e.,
\begin{equation}
\label{eq:surt2_spk_hat}
\mathcal{L}_{\mathrm{aux}} = \mathcal{L}_{\mathrm{hat}}(\mathbf{H}_1,\mathbf{Z}_1^{\mathrm{aux}}) + \mathcal{L}_{\mathrm{hat}}(\mathbf{H}_2,\mathbf{Z}_2^{\mathrm{aux}}).
\end{equation}
Such a synchronization strategy has also recently been proposed for performing word-level diarization using transducers~\cite{Huang2023TowardsWE}.
For both ASR and speaker branches, we use a pruned version of the HAT loss similar to pruned RNNT~\cite{Kuang2022PrunedRF}.

\subsection{Maintaining state across utterance groups}

\begin{figure}[t]
    \centering
    \includegraphics[width=0.9\linewidth]{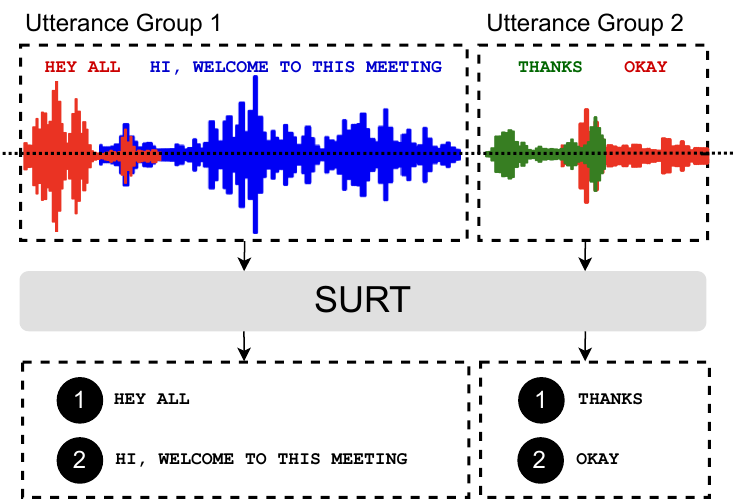}
    \caption{Illustration of problem caused by relative speaker label within utterance groups. For each group, speakers are assigned labels in FIFO order, which may result in speaker permutation errors for the recording.}
    \label{fig:surt2_utterance_groups}
\end{figure}

A common approach for inference of long-form audio is by chunking in some way (e.g., at silences or fixed-length chunks), processing each chunk separately, and then combining the outputs.
For SURT, we assume that the recording has been chunked at silences to create utterance groups, which are sets of utterances connected by speaker overlaps.
For multi-talker ASR methods such as SA-ASR (c.f. Section~\ref{sec:surt2_intro}) which predict \textit{absolute} speaker identities using external speaker profiles, combining chunk-wise outputs is relatively straightforward since there is no issue of speaker label permutation.
However, the auxiliary speaker branch in SURT is trained to predict \textit{relative} speaker labels in FIFO order within a chunk or utterance group, and these labels must be reconciled across all chunks within a recording in order to obtain the final speaker-attributed transcript.
Fig.~\ref{fig:surt2_utterance_groups} illustrates this problem with a simple example consisting of two chunks, where each color denotes a different speaker.

\subsubsection{What does the auxiliary encoder encode?}

Speaker label reconciliation across different chunks for long-form diarization or speaker-attributed ASR is often done through clustering of speaker embeddings estimated from the chunks.
For example, the EEND-VC model for speaker diarization extends EEND for diarization of long-recordings by applying clustering over chunk-wise speaker vectors~\cite{Kinoshita2020IntegratingEN}.
This method delays the output prediction at least until the end of the chunk so that the re-clustering may be done.
To remedy this issue, SA-ASR based on t-SOT estimates speaker change based on cosine similarity between consecutive speaker vectors, and applies re-clustering of all vectors every time a speaker change is detected~\cite{Kanda2022StreamingSA}.
Nevertheless, solving label permutation through such clustering requires that the chunk-wise speaker vectors should represent absolute speaker identities.
This requirement is not satisfied in the SURT model since the auxiliary speaker branch is trained to predict relative speaker labels in their order of appearance in the mixture.

\begin{figure*}[tbp]
\centering
    \captionsetup[subfigure]{labelformat=empty}
    \begin{subfigure}{0.49\linewidth}
    \centering
    \includegraphics[width=\linewidth]{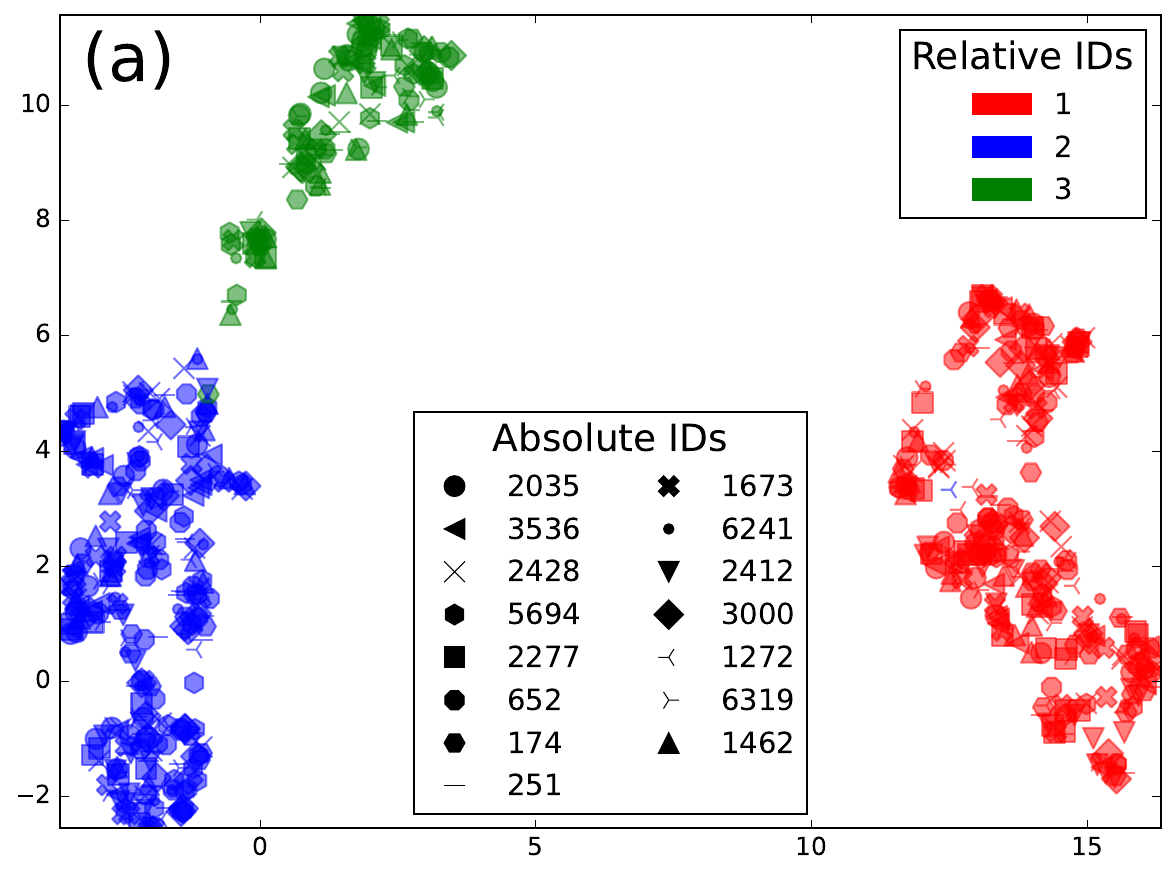}
    \caption{\label{fig:umap}}
    \end{subfigure}
    \begin{subfigure}{0.49\linewidth}
    \centering
    \includegraphics[width=\linewidth]{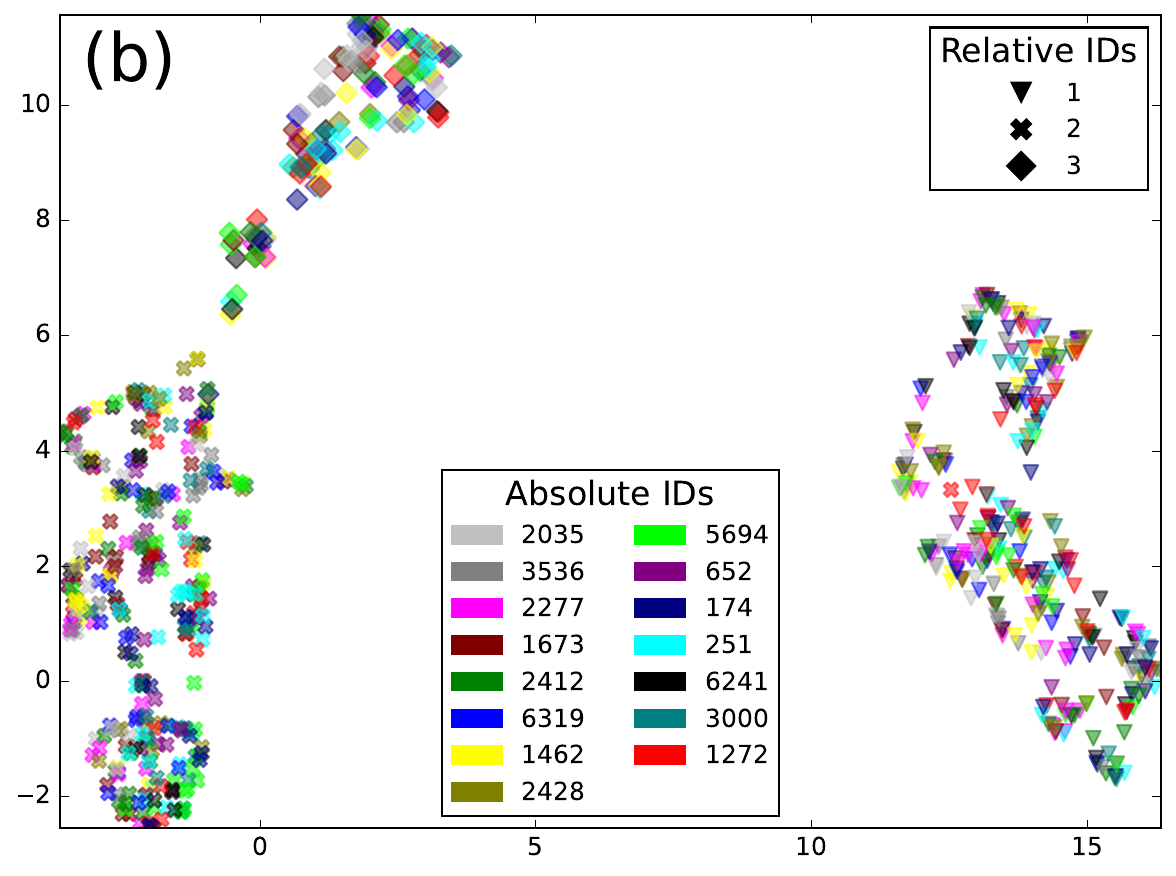}
    \caption{\label{fig:umap_inv}}
    \end{subfigure}
    \begin{subfigure}{0.49\linewidth}
    \centering
    \includegraphics[width=\linewidth]{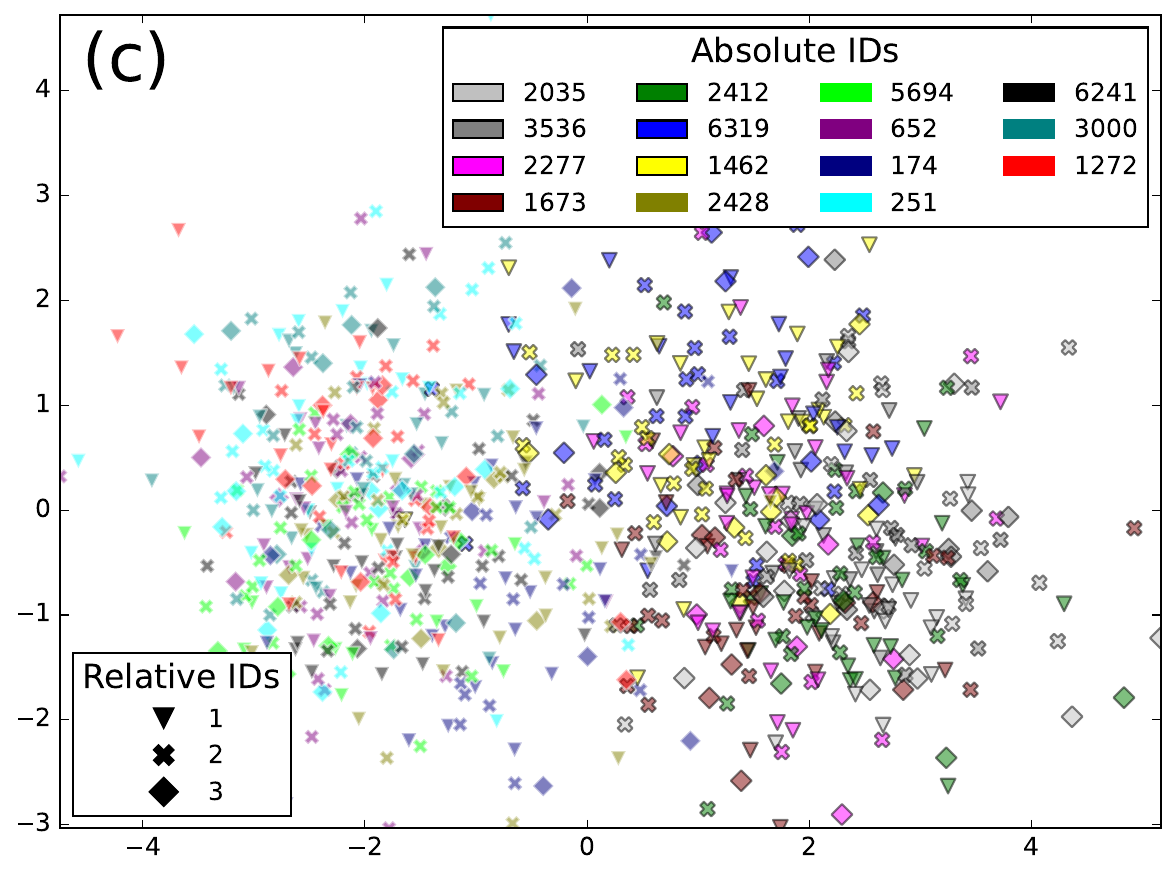}
    \caption{\label{fig:lda_inv}}
    \end{subfigure}
\caption{Projections of auxiliary encoder representations for a subset of LSMix \texttt{dev}. Each point denotes the representation of one speaker in a mixture, averaged over the frames on which the model emits a non-blank label. (a) and (b) denote UMAP projection, and (c) shows LDA projection using absolute speaker classes.}
\label{fig:surt2_spk_clusters}
\end{figure*}

To verify this, we applied SURT with the auxiliary speaker branch on synthetic mixtures of LibriSpeech utterances (described in Section~\ref{sec:surt2_data}) consisting of 2 or 3 speakers per mixture.
We collected the 256-dimensional encoder representations for the frames where the auxiliary branch predicts a speaker label, and averaged each speaker's embeddings over the mixture.
In Fig.~\ref{fig:surt2_spk_clusters}, we show UMAP and LDA projections of these embeddings for 15 different speakers in the LSMix \texttt{dev} set.
In Fig.~\ref{fig:umap}, the three colors denote the relative speaker label assigned to the speaker during SURT inference and the markers denote absolute speaker identities.
Fig.~\ref{fig:umap_inv} shows the same plot, but in this case colors denote absolute speaker identities and markers denote relative order within the chunk.
It is easy to see that the embeddings learned by the model of Fig.~\ref{fig:surt2_aux_branch} cluster by relative speaker labels instead of absolute speaker identities, validating our conjecture that the auxiliary encoder extracts relative speaker position in the chunk.
Even when LDA using absolute speaker labels is used for the low-dimensional projection (as shown in Fig.~\ref{fig:lda_inv}), we did not find clusters of absolute speaker labels.
Interestingly, the embeddings did retain information about the speaker's gender. 
In the figure, the points with and without a black border respectively denote female and male speakers, and they appear well-separated into gender-based clusters.

\subsubsection{The speaker prefixing method}

Inspired by the use of a speaker tracing buffer in the EEND model for online diarization~\cite{Xue2020OnlineEN}, we propose a novel \textit{speaker prefixing} strategy to solve the problem of speaker label permutation across utterance groups.
The key idea of speaker prefixing is to append to the beginning of each chunk or utterance group high-confidence frames of speakers we have seen so far in the recording, in the order of their predicted label.
Formally, let $\{\mathbf{X}_1,\ldots,\mathbf{X}_M\}$ be the input features corresponding to $M$ utterance groups in a recording, such that $\mathbf{X}_m \in \mathbb{R}^{T_m \times F}$.
For some chunk $m$, let $K_m \in [0,K]$ be the number of speakers seen so far in the recording.
We define some function $\mathscr{S}$ which selects frames of a given speaker in the previous chunks, i.e.,
\begin{equation}
    \mathscr{S}(\mathbf{X}_1,\ldots,\mathbf{X}_{m-1},k) = \mathbf{B}_k,
\end{equation}
where $k$ is one of the $K_m$ speakers and $\mathbf{B}_k \in \mathbb{R}^{\tau \times F}$, for some $\tau$ (which is a hyperparameter), is analogous to a speaker ``buffer.''
Then, the speaker-prefixed input for chunk $m$ is given as
\begin{equation}
    \Tilde{\mathbf{X}}_m = \left[\mathbf{B}_1^T;\ldots;\mathbf{B}_{K_m}^T;\mathbf{X}^T\right]^T.
\end{equation}

We use $\Tilde{\mathbf{X}}_m$ instead of $\mathbf{X}_m$ as input for this chunk, with the expectation that the speaker buffers would enforce a relative ordering of speakers in past chunks among speakers in the current chunk.
At the output of the main and auxiliary encoders, we remove the representation corresponding to the prefix, which is of length $\frac{K_m \times \tau}{s}$, where $s$ is the subsampling factor of the encoder.
During inference, we set $\mathscr{S}$ to select a sequence of $\tau$ frames (from the previous chunks) with the largest sum of confidence value, as predicted by its logit $\mathbf{z}^{\mathrm{aux}}[k]$.
During training, we randomly select $\tau$ from a specified list, and $\kappa$ speakers to prefix from all speakers in the batch.
Such a strategy mimics the expected inference time scenario, where not all prefixed speakers will be seen in every chunk.
For each selected speaker, we randomly sample a range of $\tau$ frames from all the segments of that speaker.


\section{Experimental Setup}

\subsection{Network architecture}

The main SURT model is similar to the one described in Chapter~\ref{chap:surt}.
The masking network comprises four 256-dim DP-LSTM layers~\cite{Luo2019DualPathRE}.
Masked features are reduced to half the original length through a convolutional layer, and the subsampled features are fed into a zipformer encoder~\cite{Yao2023ZipformerAF}.
The ASR encoder consists of 6 zipformer blocks subsampled at different frame rates (up to 8x in the middle).
Each such block consists of 2 self-attention layers with shared attention weights and separate feed-forward layers.
The encoder output is further down-sampled such that the overall subsampling factor is 4x.
The representations from an intermediate layer of the ASR encoder are passed to the auxiliary encoder.
This is another zipformer comprising 3 blocks with smaller attention and feed-forward dimensions.
Branch tying is used at the output of both encoders using unidirectional LSTM layers~\cite{Raj2023Surt20}.
The ASR prediction network, which is shared with the speaker branch, contains a single 512-dim Conv1D layer.
The complete SURT model contains 38.0M parameters, divided up into 6.0M, 23.6M, and 8.4M for the masking network, the ASR branch, and the speaker branch, respectively. 
The chunk size for the intra-LSTM and the Zipformer is set to 32 frames, resulting in a modeling latency of 320 ms.

\subsection{Data}
\label{sec:surt2_data}

\begin{figure}[tb]
    \begin{subfigure}{0.49\linewidth}
    \centering
    \includegraphics[width=\linewidth]{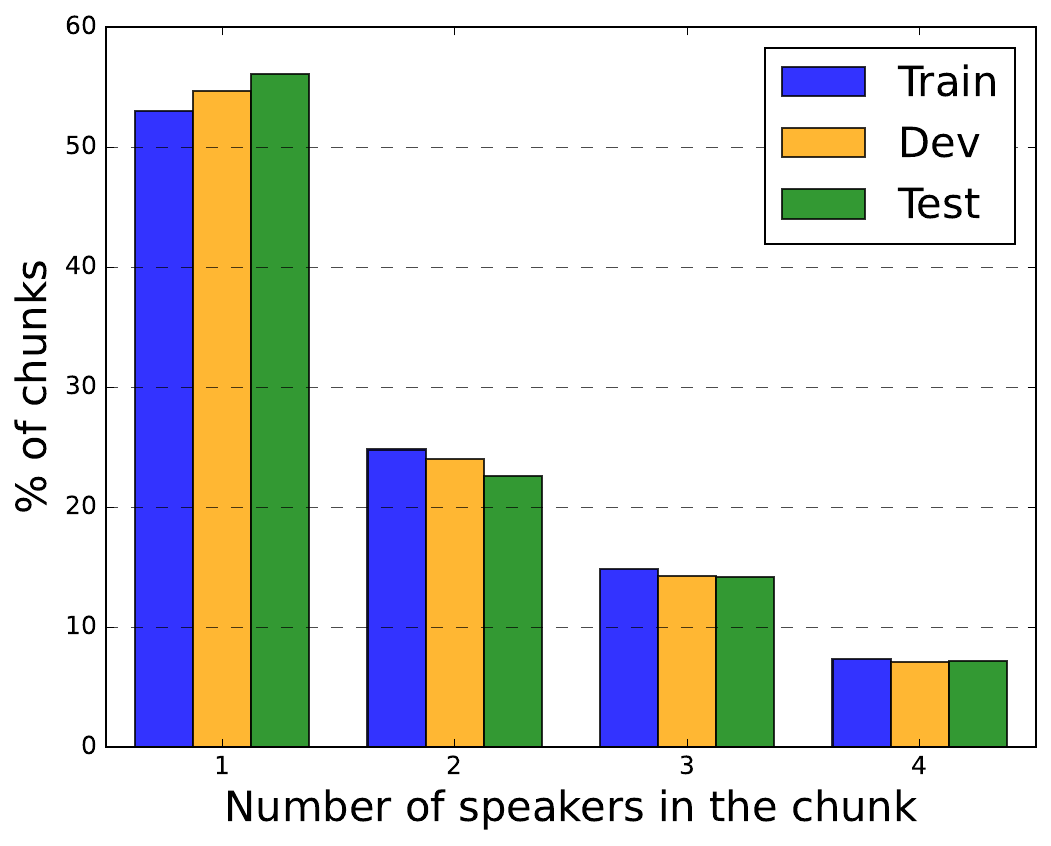}
    \label{fig:spk_chunk}
    \end{subfigure}
    \begin{subfigure}{0.49\linewidth}
    \centering
    \includegraphics[width=\linewidth]{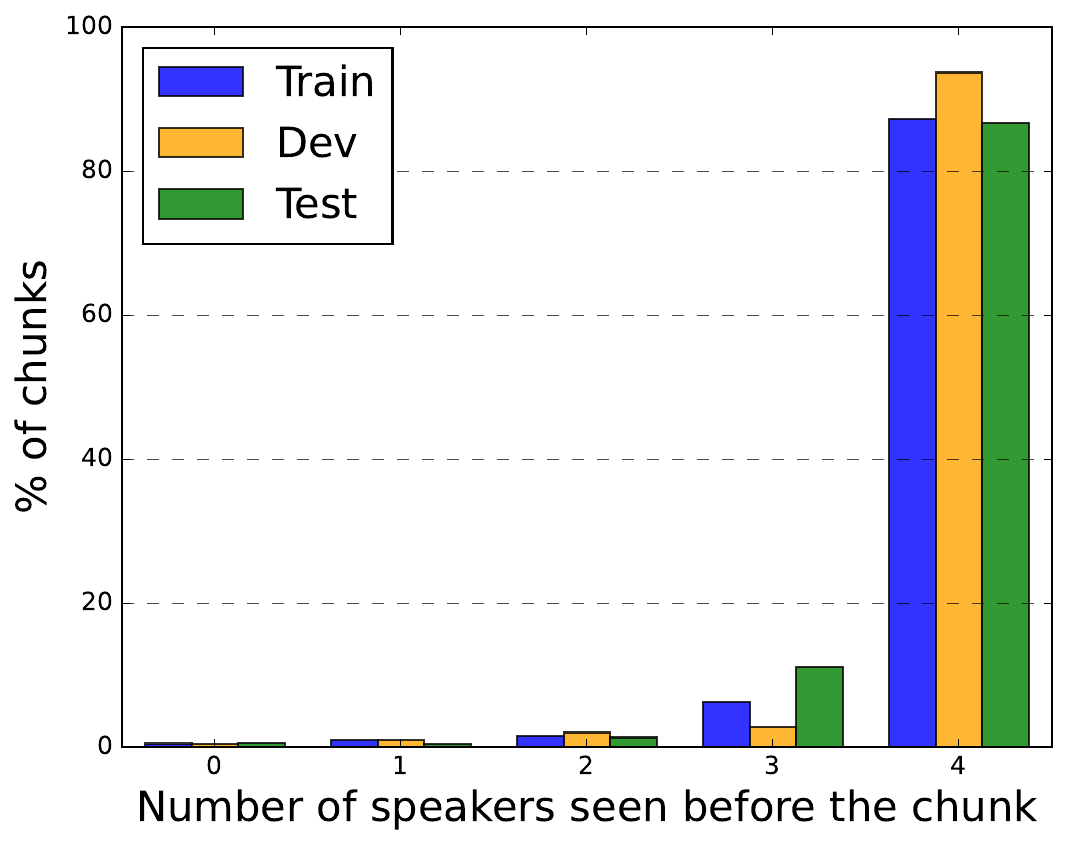}
    \label{fig:spk_seen}
    \end{subfigure}
\caption{Utterance group statistics of the AMI meeting corpus: (a) number of speakers in the group, and (b) number of speakers seen before the group.}
\label{fig:surt2_ami_stats}
\end{figure}

We conducted our experiments on synthetically mixed LibriSpeech utterances (called LSMix) and the AMI meeting corpus, and their statistics are shown in Table~\ref{tab:surt2_stats}\footnote{Detailed statistics for the AMI corpus are given in Table~\ref{tab:intro_stats}.}.
To create LSMix, we first cut LibriSpeech utterances at 0.2 second pauses, and then mixed speed-perturbed versions of these segments using Algorithm~\ref{alg:mixture}.
The resulting mixtures were 17s long on average, and contain 2--3 speakers and up to 9 turns of conversation.
We created \texttt{train} and \texttt{dev} splits of LSMix using the corresponding LibriSpeech partitions.
We used this evaluation data to perform ablations for developing the auxiliary speaker branch in SURT.

To train SURT models for AMI, we used synthetic mixtures of AMI and ICSI~\cite{Janin2003TheIM} utterances (known as AIMix) as described in Chapter~\ref{chap:surt}.
We first trained the models on 1841h of the AIMix data, and then adapted them on real AMI sessions.
For training with the speaker buffer, we randomly sampled $\tau$ from [64,96,128] frames for each batch, and chose $K_m$ to be less than or equal to 4 speakers.
From Fig.~\ref{fig:surt2_ami_stats}, it is clear that the during inference, most chunks will be processed with 4 speaker buffers.

\begin{table}[t]
\centering
\caption{Statistics of datasets used for evaluations. The overlap durations are in terms of fraction of total speaking time.}
\label{tab:surt2_stats}
\adjustbox{max width=\linewidth}{
\begin{tabular}{@{}lrrrrr@{}}
\toprule
\multirow{2}{*}{} & \multicolumn{2}{c}{\textbf{LSMix}} & \multicolumn{3}{c}{\textbf{AMI}} \\
\cmidrule(r{2pt}){2-3} \cmidrule(l{2pt}){4-6}
 & \textbf{Train} & \textbf{Dev} & \textbf{Train} & \textbf{Dev} & \textbf{Test} \\ 
\midrule
\textbf{Duration (h:m)} & 2193:57 & 4:19 & 79:23 & 9:40 & 9:03 \\
\textbf{Num. sessions} & 486440 & 897 & 133 & 18 & 16 \\
\textbf{Silence (\%)} & 3.4 & 3.2 & 18.1 & 21.5 & 19.6 \\
\textbf{Overlap (\%)} & 28.4 & 26.0 & 24.5 & 25.7 & 27.0 \\
\bottomrule
\end{tabular}}
\end{table}

\subsection{Training details}
\label{sec:surt2_details}

The hyperparameters and training setup for the ASR branch follows Chapter~\ref{chap:surt}.
For the auxiliary speaker branch, we tried sequential and joint training strategies, as described in Section~\ref{sec:surt2_seq_vs_joint}.
For the former, the SURT model was trained for 40 epochs.
For the latter, the ASR branch was first trained for 30 epochs; it was then frozen and the speaker branch was trained for 20 epochs.
In all cases, the ASR transducer was initialized from a pre-trained transducer model, trained for 10 epochs on LibriSpeech.
We averaged model checkpoints from the last 5 epochs for inference, and used greedy decoding for reporting all results.
For evaluation on AMI, we initialized the masking network and the ASR branch using the parameters from the SURT model trained on LSMix.
We then trained this model in a sequential process, i.e., ASR branch followed by speaker branch, on AIMix followed by adaptation on real AMI training sessions.

\subsection{Evaluation}
\label{sec:surt2_evaluation}

For speaker-agnostic transcription, SURT was evaluated using the optimal reference combination word error rate (ORC-WER) metric, proposed independently in \cite{Sklyar2021MultiTurnRF} and \cite{Raj2021ContinuousSM}.
ORC-WER computes the minimum total WER obtained using the optimal assignment of reference utterances to the output channels.
In this chapter, since we have extended SURT to perform speaker-attributed transcription, we measure its performance using the concatenated minimum-permutation WER (cpWER)~\cite{Watanabe2020CHiME6CT}.
This metric finds the best permutation of reference and hypothesis speakers which minimizes the total WER across all speakers.

We also want to measure speaker attribution errors independently of transcription errors. 
The conventional metric for this is known as diarization error rate (DER), and measures the duration ratio of speaking time for which the predicted speakers do not match the reference speakers.
However, since SURT is a streaming model, the ASR tokens and the respective speaker labels may be emitted with some latency compared to their actual reference time-stamp.
This can artificially escalate the DER even when there are few speaker attribution errors.
To circumvent this issue, we report a word-level diarization error rate (WDER) inspired by \citet{Shafey2019JointSR}.
Originally, WDER was defined as the fraction of correctly recognized words which have incorrect speaker tags.
We modify the metric for SURT by using the ORC-WER reference assignment to identify the correct words and the speaker mapping from the cpWER computation to check for speaker equivalence.


\section{Results \& Discussion}

Speaker attribution in SURT requires: (i) token-level speaker assignment within an utterance group, and (ii) speaker label reconciliation across utterance groups.
We will first demonstrate several ablation experiments on the LSMix \texttt{dev} set to identify optimal settings (for training strategy, auxiliary encoder position, and left context frames) for speaker attribution within utterance groups.
Once we have identified these settings, we will show the results of speaker-attributed transcription for AMI, where speaker attribution across utterance groups is performed by speaker prefixing.

\subsection{RNN-T vs. HAT for speaker-agnostic ASR}

Since our formulation requires replacing the conventional RNN-T loss, i.e. \eqref{eq:surt2_rnnt_softmax}, with the HAT loss given by \eqref{eq:surt2_hat}, we want to ensure that the speaker-agnostic ASR performance of the model does not degrade.
To verify this, we trained SURT (without an auxiliary branch) using $\mathcal{L}_{\mathrm{rnnt}}$ and $\mathcal{L}_{\mathrm{hat}}$ on the LSMix \texttt{train} set, and evaluated the resulting models on the \texttt{dev} set.
We found that the \textbf{HAT model obtained 8.53\% ORC-WER}, compared to 8.59\% using regular RNN-T.
The error breakdown showed marginally higher insertions but fewer deletions, which may be due to explicit modeling of the blank token.

\subsection{Sequential vs. joint training}
\label{sec:surt2_seq_vs_joint}

The auxiliary speaker branch of the SURT model can be trained in several ways, as shown in Table~\ref{tab:surt2_training_strategy}.
In ``sequential'' training, the main SURT model is first trained using \eqref{eq:surt2_heat} and then frozen while the auxiliary branch is trained using \eqref{eq:surt2_spk_hat}.
In ``joint'' training, the full model is trained from scratch with the multi-task objective.
Finally, we can combine the above approaches by first training the branches sequentially and then fine-tuning them jointly.
We found that \textbf{both sequential and joint training resulted in similar cpWER performance}, but joint training degrades WDER.
Furthermore, joint fine-tuning after sequential training degraded performance on both ASR metrics.
Since sequential training allows decoupling of ASR and speaker attribution performance, we used this strategy for the experiments in the remainder of this chapter.

\begin{table}[t]
\centering
\caption{Comparison of different training strategies for SURT with auxiliary speaker branch.}
\label{tab:surt2_training_strategy}
\adjustbox{max width=\linewidth}{
\begin{tabular}{@{}lrrr@{}}
\toprule
\textbf{Strategy} & \textbf{ORC-WER} & \textbf{WDER} & \textbf{cpWER} \\ \midrule
Sequential & 8.53 & \textbf{3.99} & 14.96 \\
Joint & \textbf{8.43} & 4.46 & \textbf{14.95} \\
Seq. + Joint & 9.17 & 4.25 & 15.33 \\
\bottomrule
\end{tabular}
}
\end{table}

\subsection{Auxiliary encoder position}

The input $\mathbf{h}_{0}^{\mathrm{aux}}$ to the auxiliary encoder is obtained from an intermediate representation of the main encoder.
We trained several SURT models with different positions for the auxiliary input, in order to find the optimal representation for the speaker branch, and the results are shown in Table~\ref{tab:surt2_aux_position}.
The models were trained sequentially and the ORC-WER was 8.53\% (same as earlier).
We found that both cpWER and WDER get progressively worse if we used representations from deeper layers, possibly because of loss in speaker information through the main encoder.
Interestingly, $\mathbf{h}_1$ (i.e. output of the first zipformer block) showed better performance than $\mathbf{h}_0$ (output from convolutional embedding layer).
We conjecture that the input to the \textbf{auxiliary encoder needs contextualized representations} since speaker labels need to be synchronized across the two branches.
These findings mirror recent studies showing that intermediate layers of the acoustic model are most suitable for extracting speaker information~\cite{Huang2023TowardsWE}.
Such analysis has also motivated ``tandem'' multi-task learning of ASR and speaker diarization using self-supervised encoders such as Wav2Vec 2.0~\cite{Zheng2022TandemMT}.

\begin{table}[t]
\centering
\caption{Speaker-attributed ASR performance on LSMix \texttt{dev} for different positions of the auxiliary encoder. $\mathbf{h}_l$ denotes the hidden representation at the $l^{\mathrm{th}}$ block of the main zipformer encoder, and $\mathbf{h}_0^{\mathrm{aux}}$ is the input to the auxiliary encoder.}
\label{tab:surt2_aux_position}
\adjustbox{max width=\linewidth}{
\begin{tabular}{@{}lcccrr@{}}
\toprule
$\mathbf{h}_{0}^{\mathrm{aux}}$ & \textbf{Ins.} & \textbf{Del.} & \textbf{Sub.} & \textbf{cpWER} & \textbf{WDER} \\ \midrule
$=\mathbf{h}_{0}$ & 3.34 & 6.04 & 7.28 & 16.66 & 5.36 \\
$=\mathbf{h}_{1}$ & \textbf{2.91} & \textbf{5.20} & \textbf{6.85} & \textbf{14.96} & \textbf{3.99} \\
$=\mathbf{h}_{2}$ & 4.58 & 6.77 & 8.24 & 19.59 & 6.73 \\
$=\mathbf{h}_{3}$ & 5.95 & 8.23 & 9.41 & 23.59 & 8.35 \\
\bottomrule
\end{tabular}
}
\end{table}

\subsection{Effect of left context}

The ASR encoder of the SURT model uses limited left context ($C_{\mathrm{left}}$=128 frames) in the self-attention computation during inference.
While ASR token prediction is usually a local decision, speaker label prediction requires looking at the full history in order to synchronize the relative FIFO labels.
We experimented with training and decoding with different histories, and the results are shown in Fig.~\ref{fig:surt2_left_context}.
For a model trained with infinite $C_{\mathrm{left}}$ (solid blue line), limiting it during inference quickly degraded WDER and cpWER performance.
When the model was trained with randomized $C_{\mathrm{left}}$ (solid green line), the degradation was less evident.
However, it was unable to make full use of infinite history at inference time, and only obtained a WDER of 6.12\%, versus 3.99\% for the model trained with infinite $C_{\mathrm{left}}$.
This indicates that using \textbf{infinite left context during training and inference is important} for the auxiliary speaker encoder.

\begin{figure}[tb]
    \begin{subfigure}{0.49\linewidth}
    \centering
    \includegraphics[width=\linewidth]{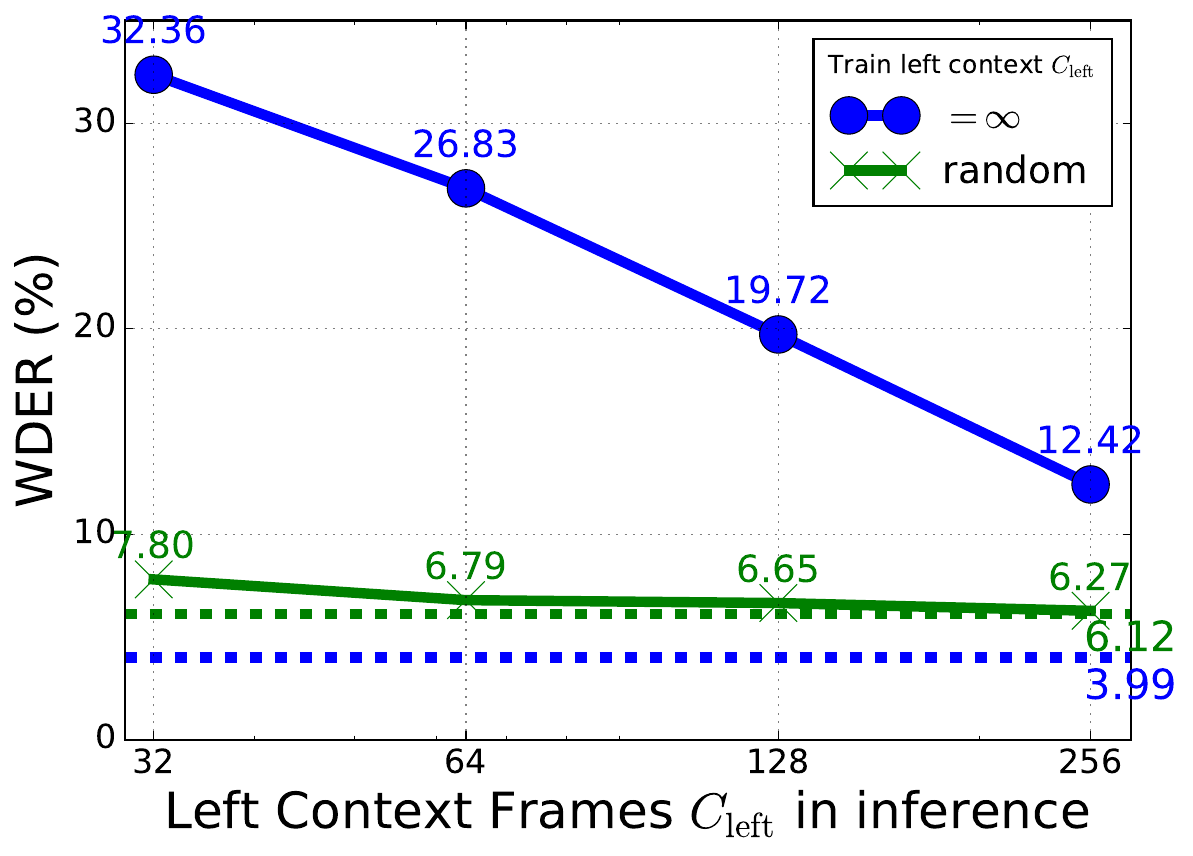}
    \label{fig:lc_der}
    \end{subfigure}
    \begin{subfigure}{0.49\linewidth}
    \centering
    \includegraphics[width=\linewidth]{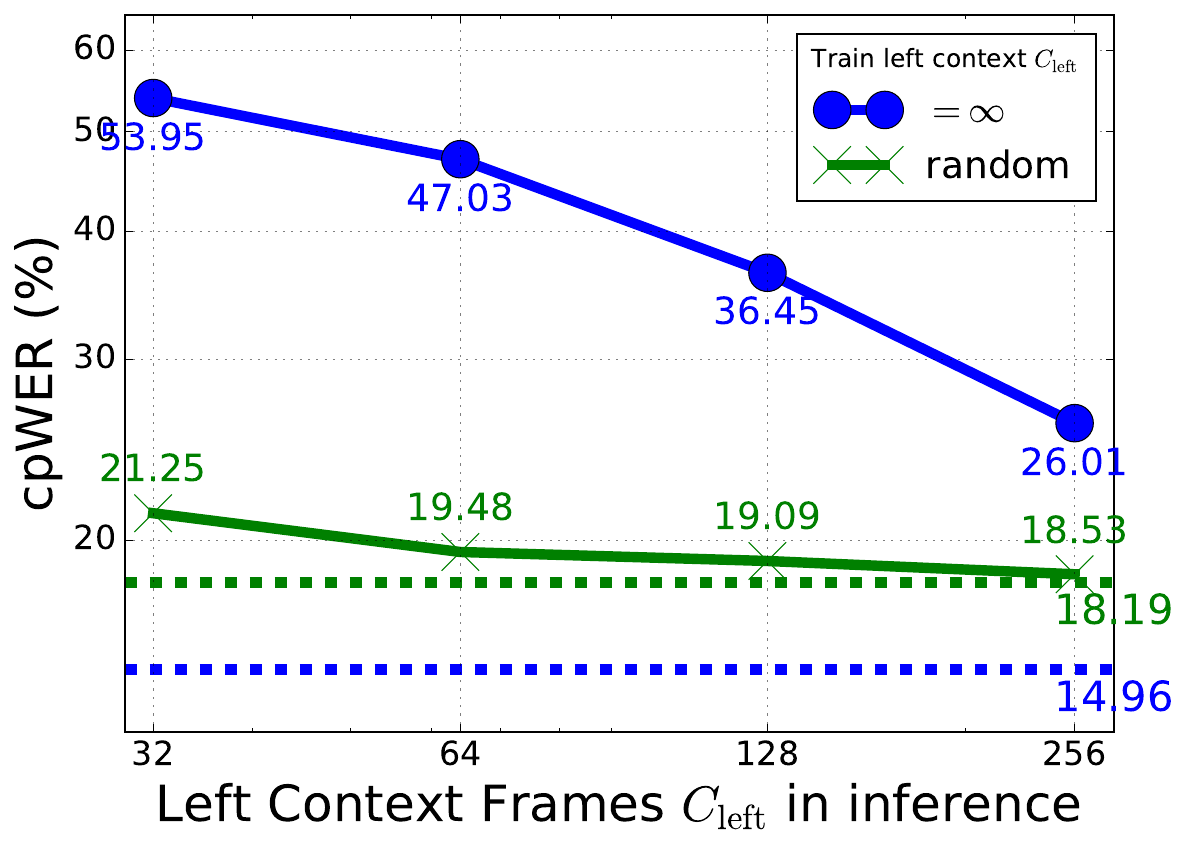}
    \label{fig:lc_wer}
    \end{subfigure}
\caption{Effect of auxiliary encoder left context on (a) WDER and (b) cpWER. Dotted lines show best performance using $\infty$ left context.}
\label{fig:surt2_left_context}
\end{figure}

\subsection{Utterance-group evaluation on AMI}

\begin{table}[t]
\centering
\caption{Performance of SURT models (with and without speaker prefixing) for different conditions on AMI \texttt{test} set, evaluated on utterance groups. ``ORC'' denotes the ORC-WER metric, and is the same for all models.}
\label{tab:surt2_ami_result}
\adjustbox{max width=\linewidth}{
\begin{tabular}{@{}lcccccccccc@{}}
\toprule
\multicolumn{2}{c}{\textbf{Prefix}} &
\multicolumn{3}{c}{\textbf{IHM-Mix}} & \multicolumn{3}{c}{\textbf{SDM}} & \multicolumn{3}{c}{\textbf{MDM}} \\
\cmidrule(r{5pt}){1-2} \cmidrule(l{2pt}r{2pt}){3-5} \cmidrule(l{2pt}r{2pt}){6-8} \cmidrule(l{4pt}){9-11}
\textbf{ID} & \textbf{Train/Decode} & ORC & WDER & cpWER & ORC & WDER & cpWER & ORC & WDER & cpWER \\
\midrule
\textbf{A} & \xmark ~/~  \xmark & 34.9 & \textbf{9.3} & \textbf{42.9} & 43.2 & \textbf{10.9} & \textbf{50.3} & 40.5 & \textbf{9.9} & \textbf{47.3} \\
\textbf{B} & \xmark ~/~ \cmark & 34.9 & 22.5 & 61.2 & 43.2 & 23.1 & 68.2 & 40.5 & 22.6 & 64.8 \\
\textbf{C} & \cmark ~/~ \cmark & 34.9 & 14.0 & 49.9 & 43.2 & 16.3 & 58.9 & 40.5 & 15.5 & 56.0 \\
\bottomrule
\end{tabular}}
\end{table}

We evaluated the SURT model on different microphone settings of the AMI meeting corpus in the utterance-group scenario, and the results are shown in Table~\ref{tab:surt2_ami_result} in terms of ORC-WER, WDER, and cpWER.
Since we froze the ASR transducers for all models while training the speaker branch, all models emit identical ASR tokens, and as such, the ORC-WER is equal.
Across the board, performance \textbf{degraded from IHM-Mix to SDM} settings, which is expected since SDM contains far-field artifacts in addition to overlaps.
Beamforming with multiple microphones partially removes background noise and reverberations, thus providing a slightly easier condition than SDM.
For system A, which was trained and decoded without speaker prefixing, we obtained a cpWER of 46.8\%, on average across the three conditions.
When we used the same model for decoding with speaker prefixes (system B), the cpWER performance degraded by 38.2\% relative to the former.
Since the model has not seen short speaker buffers at train time, the auxiliary encoder is not adept at using these for generating the contextualized representations.

\begin{figure}[t]
    \centering
    \includegraphics[width=\linewidth]{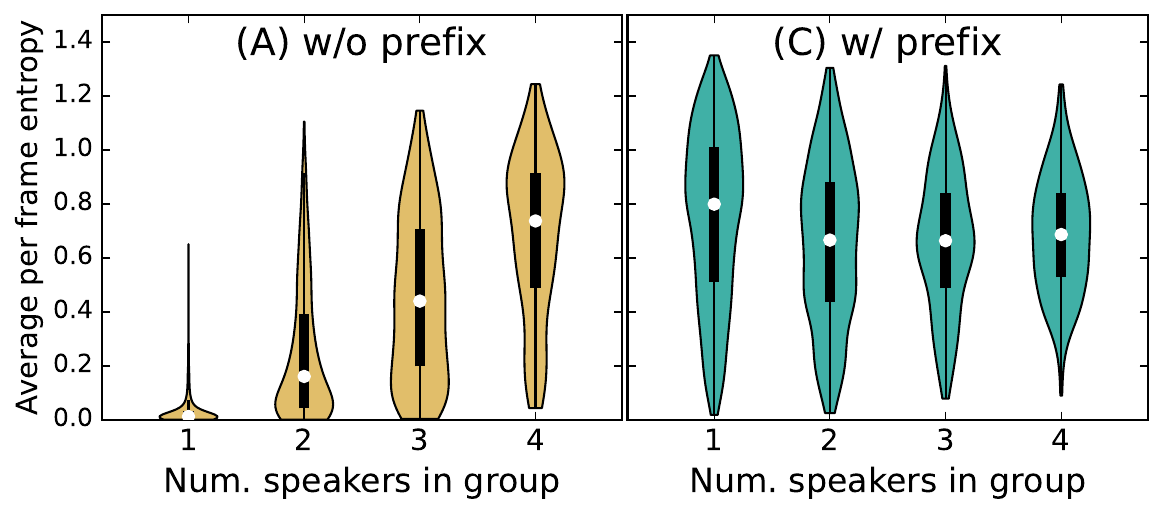}
    \caption{Average per-frame entropy for utterance groups with different number of speakers.}
    \label{fig:surt2_entropy}
\end{figure}

Next, we trained the same SURT model using speaker prefixing as described in Section~\ref{sec:surt2_details}, and found that it improved performance significantly due to matched train and test conditions.
Nevertheless, this model was \textbf{7-8\% worse than the original} model in terms of absolute cpWER performance.
To investigate this further, we computed the average framewise entropy over speaker labels for all utterance groups in the IHM-Mix \texttt{test} set, and grouped them by number of speakers in the group.
Fig.~\ref{fig:surt2_entropy} shows the distribution of these entropies for the SURT model with and without speaker prefixing.
We found that for the model without prefixing, the \textbf{entropy was very low for utterance groups with a single speaker}, and gradually increased with the number of speakers.
This indicates that the model was very confident in its prediction for few speaker cases.
The opposite trend was seen for the model with speaker prefixing, where the entropy was highest for the single-speaker case.
This is because for each frame, the model needs to decide which of the 4 prefixed speakers the frame should be assigned to, which may result in low confidence of prediction.
An implication of this is shown below, where the model with speaker prefixing is much more likely to predict speaker changes (either correct e.g., for \_\texttt{OK}), or incorrect (e.g., for \texttt{OR}).
\begin{figure}[htp]
    \centering
    \includegraphics[width=0.6\linewidth]{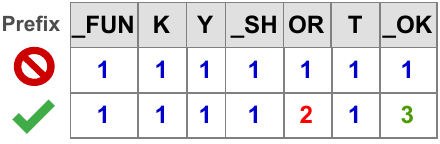}
\end{figure}

In general, we found that the performance of all models  \textbf{gets progressively worse} as the number of speakers in the group increases, as shown in Table~\ref{tab:surt2_ami_num_spk} for system A.
Interestingly, the degradation in WDER was small compared to that in the speaker counting accuracy.
This may be because of several utterance groups where some speakers participate with just a few words, which may be hard for the system to identify, but do not contribute much in overall speaker attribution error.

\begin{table}[t]
\centering
\caption{Breakdown of model (A)'s performance on IHM-Mix \texttt{test} set by number of speakers in the utterance group.}
\label{tab:surt2_ami_num_spk}
\adjustbox{max width=\linewidth}{
\begin{tabular}{@{}lrrrrr@{}}
\toprule
\textbf{\#spk} & \textbf{1} & \textbf{2} & \textbf{3} & \textbf{4} & \textbf{Avg.} \\ \midrule
WDER ($\downarrow$) & 0.1 & 3.4 & 13.0 & 23.9 & 9.3 \\
Count. ($\uparrow$) & 98.6 & 61.9 & 26.8 & 44.0 & 75.9 \\
cpWER ($\downarrow$) & 17.2 & 32.4 & 51.1 & 63.6 & 42.9 \\
\bottomrule
\end{tabular}}
\end{table}

\subsection{Full-session evaluation on AMI}

Finally, we performed inference on full AMI \texttt{test} sessions and the corresponding cpWERs are reported n Table~\ref{tab:surt2_ami_result2}.
Computing the ORC-WER and WDER for full sessions was not feasible since their computational complexity depends on the number of segments in the reference, which may be very large for the whole session.
Nevertheless, we measured the ORC-WER for the same outputs as evaluated on utterance groups, and these numbers indicate that transcription performance remained consistent across all models. 
This is expected since the same ASR branch is used in all models and the only difference is in how the speaker branch is trained.

First, we see that the model without speaker prefixing obtained very high error rates, since it failed at correctly reconciling speaker labels across different utterance groups.
With speaker prefixing, we obtained \textbf{15.1\% relative cpWER improvement} on average across the mic settings.
For the speaker prefixing, we trained and evaluated the model using $\tau$ of 128 frames or 1.28s per speaker.

In a meeting transcription setup, since the participants are known before-hand, we can usually obtain an enrollment utterance for each speaker.
Instead of selecting speaker prefixes from previous chunks, if we select them from these enrollment utterances, we obtain a further \textbf{relative cpWER improvement of 29.2\%}, on average.
We conjecture that when enrollment utterances are not used, speaker attribution errors in earlier chunks can adversely impact performance on current chunk, since the buffer frames are used to guide the relative order.
Nevertheless, there still exists a significant gap of about 10--12\% absolute cpWER between full session evaluation and utterance group evaluation (shown in Table~\ref{tab:surt2_ami_result}).

\begin{table}[t]
\centering
\caption{Full-session evaluation cpWER (\%) on AMI \texttt{test} set. We also report the ORC-WER for the same outputs (evaluated on utterance groups).}
\label{tab:surt2_ami_result2}
\adjustbox{max width=\linewidth}{
\begin{tabular}{@{}lrrrrrr@{}}
\toprule
\multirow{2}{*}{\textbf{Method}} & \multicolumn{2}{c}{\textbf{IHM-Mix}} & \multicolumn{2}{c}{\textbf{SDM}} & \multicolumn{2}{c}{\textbf{MDM}} \\
\cmidrule(r{5pt}){2-3} \cmidrule(l{2pt}r{2pt}){4-5} \cmidrule(l{4pt}){6-7}
 & \textbf{ORC-WER} & \textbf{cpWER} & \textbf{ORC-WER} & \textbf{cpWER} & \textbf{ORC-WER} & \textbf{cpWER} \\
\midrule
w/o prefix & 34.84 & 100.11 & 43.15 & 97.15 & 40.34  & 96.26 \\
w/ prefix & 34.84 & 82.77 & 43.15 & 83.94 & 40.34 & 82.28 \\
~~+ enrollment & 34.84 & 53.76 & 43.15 & 62.22 & 40.34 & 60.27  \\
\bottomrule
\end{tabular}}
\end{table}

\section{Conclusion}

The SURT framework allows continuous, streaming recognition of multi-talker conversations, but it may only be used for speaker-agnostic transcription.
In this chapter, we showed how to perform streaming word-level speaker labeling with SURT, thus enabling speaker-attributed transcription using the same model.
We achieved this by adding an auxiliary speaker encoder to the recognition component of the model, and used the same two-branch strategy to handle overlapping speech.
We solved the problem of synchronization between the ASR and speaker branch outputs by factoring out the blank logit and sharing it between the branches.
Since the model predicts relative speaker labels in FIFO order, reconciling the labels across utterance groups in a recording becomes a challenge.
We showed that a simple strategy of prefixing high-confidence speaker frames for the recognized speakers can partially alleviate this problem, but it would require further investigation to bring session-level error rates closer to those for utterance groups.


\chapter{Conclusion and Future Work}
\label{chap:conclusion}

Our objective in this work was to solve the problem of speaker-attributed multi-talker speech recognition.
Towards this goal, we proposed two alternative perspectives --- ``modular'' and ``end-to-end'' --- which formed the first and second parts of this dissertation, respectively.
The modular perspective, described in Chapters \ref{chap:oasc}, \ref{chap:doverlap}, \ref{chap:gss}, and \ref{chap:modular}, contained a pipeline of independently-trained components, each tackling a different sub-task, such as segmentation or enhancement.
The end-to-end perspective, comprising Chapters \ref{chap:surt} and \ref{chap:surt2}, proposed a jointly trained model that performs streaming speaker-attributed transcription of multi-talker conversations.

\section{Summary of contributions}

For the modular solution, we recognized that accurate identification of homogeneous speaker segments is crucial for good transcription performance.
To achieve this, we proposed an overlap-aware speaker diarization system in Chapter~\ref{chap:oasc} by incorporating overlap assignment into a spectral clustering system.
Since such clustering-based systems are robust and ubiquitous for the task, our method provides a convenient way to extend current systems to identify overlapping speaker segments.
On the AMI data, we were able to improve DER performance by 12\% relative without requiring supervised training.
Nevertheless, our error analysis showed that since speaker embedding extractors are trained on single-speaker utterances, they may result in high confusion when used to diarize overlapping speech.
Such limitations have been instrumental in the recent rise of end-to-end neural diarization (EEND) systems~\cite{Fujita2020EndtoEndND}.

While EEND-style models show good performance on overlapping speech, they usually require matched training on synthetic mixtures, and are not as robust to large variability in number of speakers.
Recognizing that speech and natural language tasks often benefit from ensemble techniques, we proposed the DOVER-Lap algorithm in Chapter~\ref{chap:doverlap} for combining outputs from diverse diarization systems.
We described several label mapping techniques inspired by approximation algorithms, and showed that the resulting DOVER-Lap can improve DER significantly over the single-best system.
Since its publication, DOVER-Lap has become the standard tool for ensembling diarization systems in challenges such as DIHARD, CHiME, and VoxSRC, and the associated \texttt{dover-lap} package has been downloaded over 15000 times from PyPi\footnote{\url{https://www.pepy.tech/projects/dover-lap}}.
In the context of our modular system, DOVER-Lap provides a convenient method to perform late fusion by ensembling the outputs of different channels.
We showed that such a combination outperforms a conventional delay-and-sum beamformer on the AMI data.

In Chapter~\ref{chap:gss}, we turned our attention from \textit{identifying} speaker segments to \textit{extracting} the corresponding target signals, a task known as target-speaker extraction (TSE).
We revamped the guided source separation (GSS) algorithm, first proposed for the CHiME-5 challenge, through a GPU-accelerated implementation inspired by modern deep learning pipelines.
Our new implementation provided 300x faster inference, and the speed-up allowed us to perform ablations over several factors that impact GSS performance.
We showed that the number of microphones is most critical for improved TSE performance, as measured by downstream WERs.
Our open-source implementation\footnote{\url{https://github.com/desh2608/gss}} has been adopted into the community baselines for the CHiME-7 DASR~\cite{Cornell2023TheCD} and the M2MeT 2.0~\cite{Liang2023TheSM} challenges.
While the unsupervised nature of GSS makes it an attractive choice for different array configurations, it is limited by the requirement of multi-channel inputs.
It is also an offline algorithm by design due to its reliance on pre-computed speaker activities.
As a result, there has been increased focus on single-channel neural TSE methods such as SpeakerBeam~\cite{molkov2019SpeakerBeamSA,molkov2023NeuralTS}.

In Chapter~\ref{chap:modular}, we combined our proposed diarization and TSE components with a transducer-based single-speaker ASR model to obtain the meeting transcription pipeline.
Through a probabilistic formulation of the speaker-attributed ASR problem, we showed that this pipeline can be derived as an approximate solution by optimizing in parts and making several conditional independence assumptions.
Although the resulting system shows good transcription performance on several meeting benchmarks such as LibriCSS, AMI, and AliMeeting, we found that it suffers from problems such as error propagation through the modules.
Nevertheless, we believe that our publicly available implementation~\footnote{\url{https://github.com/desh2608/icefall/tree/multi_talker}} provides a strong baseline for comparing future work on multi-talker ASR.
This chapter wrapped up the first part of the dissertation.

In the second part, we began in Chapter~\ref{chap:surt} by considering end-to-end multi-talker ASR methods from first principles, and showed that overlapping speech can be transcribed on to a fixed number of output channels through a graph coloring approach.
This resulted in the Streaming Unmixing and Recognition Transducer (SURT) model, which integrates an implicit separation component with neural transducers to address multi-talker ASR.
Since training such models is challenging (particularly on academic computing resources), we proposed several improvements in network architecture, loss function, training data simulation, and training strategies.
Our final SURT model demonstrated strong ORC-WER performance on AMI and ICSI meeting benchmarks.

With the backbone in place, we extended the SURT model for joint speaker attribution in Chapter~\ref{chap:surt2}.
By adding an auxiliary speaker transducer and sharing the blank logit between the ASR and speaker branches through HAT-style factorization, we were able to synchronize the ASR token emissions with the speaker label prediction, thus enabling streaming speaker attribution of the tokens.
Finally, we used speaker prefixing to solve the problem of relative speaker permutation across different utterance groups in a single session, and showed that using enrollment utterances for this purpose significantly improves cpWER performance.

While we demonstrated that it is possible to perform streaming, speaker-attributed transcription with SURT, the results indicate that there is a large room for improvement.
There are primarily three sources of errors for SURT in its current formulation.
First, the ORC-WER for overlapping mixtures remains significantly higher than that for single-talker audio, indicating deficiencies in the masking network.
In our models, we used simple dual-path LSTMs for this component, but these could be replaced with transformer architectures, which have shown strong performance recently in speech separation~\cite{Subakan2021AttentionIA}.
Pre-training of the masking network and the use of multi-channel inputs to improve separation performance through beamforming~\cite{Zhang2021AllNeuralBF} may be other sources for improvement.
Next, using self-supervised encoders in the ASR transducer may be considered low-hanging fruit to improve transcription performance~\cite{Chen2021WavLMLS}.
In this regard, such encoders cannot be plugged directly into the SURT model, since they are usually full-context (and therefore, non-streaming) and also computationally expensive, which makes them difficult to be deployed in real settings.
A better use of these models may be in a teacher-student setting, i.e., the SURT recognition branch may be trained to generate representations similar to those obtained by the SSL encoder.
Finally, while the SURT model demonstrates good speaker attribution within the utterance group (as measured by WDER), speaker permutation across different utterance groups is still a challenge.
Training and inference with larger prefixes may be an easy solution to this issue, but comes at the cost of increased computations.
A better solution may be to have an online speaker inventory containing fixed representations, but this would require significant modeling changes~\cite{Han2020ContinuousSS}.

Nevertheless, our public implementation of SURT available in the \texttt{icefall} framework is the first such open-source streaming multi-talker ASR system.
We are positive that this will encourage the community to build upon our work in one or more of the directions described above.

\section{Other open-source work}

Besides the new methods and associated software described above, this dissertation has also resulted in contributions to the development of the next-generation Kaldi set of tools, as part of the National Science Foundation
CCRI program via Grant No. 2120435.
In summary, this set of tools comprises four separate but related packages, as follows.
\begin{enumerate}
    \item \textbf{Lhotse:} Lhotse contains recipes for data preparation into standard manifests, and tools for PyTorch-compatible datasets, samplers, and data-loaders.

    \item \textbf{k2:} It contains implementations of popular ASR loss functions such as CTC, RNN-T, etc., along with fast WFST-based beam search.

    \item \textbf{Icefall}: It is a collection of Python scripts containing training and decoding pipelines for common ASR benchmarks.

    \item \textbf{Sherpa}: It is an open-source STT inference framework focusing on deployment.
\end{enumerate}

In order to train and evaluate our methods on a large variety of meeting benchmarks, we implemented Lhotse data preparation recipes for LibriCSS, AMI, ICSI, AliMeeting, and CHiME-6, among others.
These recipes have also been used as part of our \href{https://github.com/desh2608/diarizer}{\texttt{diarizer}} and \href{https://github.com/desh2608/gss}{\texttt{gss}} packages.
We also implemented several tools in Lhotse, such as for:
\begin{itemize}
    \item incorporating and manipulating alignments in \href{https://github.com/lhotse-speech/lhotse/pull/304}{\#304};
    \item simulating far-field audio by convolving with room impulse responses in \href{https://github.com/lhotse-speech/lhotse/pull/477}{\#477};
    \item creating synthetic meeting-style mixtures with pause/overlap statistics learned from real meetings in \href{https://github.com/lhotse-speech/lhotse/pull/951}{\#951};
    \item new PyTorch-style dataset for SURT, and samplers for batched inference of long-form multi-talker ASR in \href{https://github.com/lhotse-speech/lhotse/pull/929}{\#929}.
\end{itemize}

We implemented the hybrid auto-regressive transducer (HAT)~\cite{Variani2020HybridAT} objective in k2, since it is required for synchronization of branches in the SURT model (Chapter~\ref{chap:surt2}).
Besides its utility for SURT, the blank factorization in HAT can also be used for text-only adaptation of transducer-based ASR~\cite{Meng2022ModularHA} and for joint language identification (similar to joint speaker attribution in SURT).
We also developed single-talker ASR recipes in Icefall for AMI, ICSI, and AliMeeting, since these models are used in the modular system.
These are zipformer-based transducer models~\cite{Yao2023ZipformerAF} trained using multi-condition training, and the same model works across close-talk and far-field microphone settings.
Among systems that do not use any external corpora, our models obtain state-of-the-art performance on these far-field ASR benchmarks.


\appendix

\chapter{Dataset Details}
\label{chap:appendix_data}

\section{LibriCSS}

Originally proposed in \citet{Chen2020ContinuousSS}, LibriCSS consists of multi-channel audio recordings of 8-speaker \textit{simulated} conversations that were created by combining utterances from the LibriSpeech \texttt{test-clean} set~\cite{Panayotov2015LibrispeechAA} and playing in real meeting rooms. 
It comprises 10 one-hour long sessions, each made up of six 10-minute ``mini sessions'' that have different overlap ratios: 0L, 0S, OV10, OV20, OV30, and OV40.
Here, 0L and 0S refer to sessions containing no overlaps with long and short silence between utterances, respectively, while the others denote overlap ratio in the sessions.
The ``meetings'' were recorded using a 7-channel circular array microphone, containing 1 center mic and 6 mics around it.
The IHM setting for this dataset will refer to the corresponding original LibriSpeech audio for the utterances, and IHM-Mix will refer to its digitally mixed version.
The original corpus did not provide any splits, but we will use session 0 as the \texttt{dev} set and the remaining for \texttt{test}~\cite{Raj2020IntegrationOS}.

LibriCSS provides a controlled test-bed due to several overlap settings and real far-field artifacts since it is replayed in real meeting rooms.
At the same time, it foregoes the lexical challenges associated with multi-party conversations due to its use of LibriSpeech utterances for the source.
For these reasons, it has been widely used for evaluating multi-talker methods, following the increasing popularity of LibriSpeech for ASR benchmarking.
In this dissertation, we will use this dataset to validate several of our methods and to perform ablation experiments.

\section{AMI}

AMI (Augmented Multiparty Interactions) is a well-known dataset in the field of multimodal signal processing and human-computer interaction. 
It was created by \citet{Carletta2005TheAM} to support research on automatic analysis of meetings, including speech and non-verbal communication.
Approximately 100 hours of data was collected in real office meeting environments and includes both audio and video recordings.

The meetings contain 4 or 5 speakers per session, and were recorded on close-talk (headset and lapel) microphones, as well as 2 linear arrays each containing 8 microphones.
AMI provides both headset and lapel microphone recordings, but we use the headset ones for the IHM condition.
The beamformed MDM setting uses officially provided beamformed recordings from the first linear array~\cite{Mir2007AcousticBF}.
The official AMI documentation\footnote{\url{http://groups.inf.ed.ac.uk/ami/corpus/datasets.shtml}} recommends three different data partitions: \texttt{scenario-only}, \texttt{full-corpus}, and \texttt{full-corpus-asr}, based on the task that the data is used for.
In this dissertation, we will always use the \texttt{full-corpus-asr} partition, since it has a larger training set and the speakers in the dev and test sets are unseen.
We use the latest version of the official annotations (1.6.2), since the version used in Kaldi recipes (1.6.1) is known to have alignment and annotation issues). 

AMI has become a popular benchmark for multi-talker tasks such as speaker diarization, overlap detection, target-speaker ASR, and so on, since it provides real conversational data along with word-level annotations.
From Table~\ref{tab:intro_stats}, we see that the sessions contain silence for approximately 20\% of the duration.
Of the speaking time duration, overlapped speech accounts for roughly 20\% of the time, with most of it being 2-speaker overlaps.
The corpus is partitioned into train, dev, and test splits with a ratio of 80:10:10.
Our measure of success in this dissertation will be based on performing speaker-attributed transcription accurately on this benchmark.

\section{ICSI}

The ICSI Meeting Corpus is a collection of 75 meetings held at the International Computer Science Institute in Berkeley during the years 2000-2002. 
The meetings range in length from 17 to 103 minutes, but generally run just under an hour each, resulting in a total of 72 hours of natural, meeting-style overlapped speech.
The meetings were simultaneously recorded using close-talk microphones for each speaker, as well as six table-top microphones: 4 high-quality omni-directional PZM microphones arrayed down the center of the conference table, and 2 inexpensive microphone
elements mounted on a mock PDA.
For the SDM setting, we use the third PZM microphone.
The data was collected at 48 kHz sample-rate, down-sampled on-the-fly to 16 kHz. 
All meetings were recorded in the same (roughly, 13 x 25 foot) instrumented meeting room.
The meeting room contains whiteboards along three walls and is equipped with projection equipment; people writing on whiteboards or projecting slides can occasionally be heard during these recordings.
Meetings involved anywhere from 3 to 10 participants, averaging 6.
There are a total of 53 unique speakers in the corpus.
The original data did not provide any partitions, so we use the speaker-disjoint partitions suggested in \citet{Renals2014NeuralNF}.

From Table~\ref{tab:intro_stats}, we can see that the ICSI data contains a larger proportion of silence duration, and the overlapped speech ratio is significantly smaller than AMI.
This characteristic makes it easier to transcribe ICSI than AMI, and we use this as an additional benchmark in our experiments.
Furthermore, while AMI mostly contains 4 speakers in each meeting, ICSI contains between 3 to 10 participants, which makes speaker counting slightly more challenging for this dataset.

\section{AliMeeting}

The AliMeeting Mandarin corpus, originally designed for ICASSP 2022 Multi-channel Multi-party Meeting Transcription Challenge (M2MeT), is recorded from real meetings, including far-field speech collected by an 8-channel microphone array as well as
near-field speech collected by each participants' headset microphone~\cite{Yu2021M2MetTI}. 
The dataset contains 118.75 hours of speech data in total, divided into 104.75 hours for training, 4 hours for validation, and 10 hours as test, according to M2MeT challenge arrangement. 
Specifically, the partitions contain 212, 8 and 20 meeting sessions respectively, and each session consists of a 15 to 30-minute discussion by 2-4 participants. 
AliMeeting covers a variety of aspects in real-world meetings, including diverse meeting rooms, various number of meeting participants and different speaker overlap ratios.

We use this dataset as an additional, non-English benchmark to evaluate some of our proposed methods.
Since it contains fewer speakers than all the previous datasets, it may be easier in terms of speaker counting.
However, from Table~\ref{tab:intro_stats}, we note that AliMeeting contains significantly higher overlapping speech than AMI or ICSI, and so it can be a challenging task for diarization and ASR.

\chapter{Proofs for DOVER-Lap Label Mapping Algorithms}

\section{Proof of Lemma~\ref{lemma:turan}}
\label{sec:appendix_turan}

\begin{proof}
Given $\mathcal{G}$, if all independent sets $U_k$ contain exactly $C$ nodes, then the statement is trivially true. 
Otherwise, for any $U_k$ containing $c_k < C$ nodes, we add $C - c_k$ dummy nodes to $U_k$, and connect each dummy node with vertices in all other independent sets. 
We assign weight 0 to all the newly added edges. Let us call this complete graph $\mathcal{G}^{\prime}$. 
It is easy to see that $\mathcal{G}^{\prime}$ is $T(CK,K)$ since it contains $CK$ vertices divided equally into $K$ subsets (each independent set is such a subset).
Now, we only need to show that any solution $\Phi$ for the graph $\mathcal{G}$ is equivalent to some solution $\Phi^{\prime}$ to $\mathcal{G}^{\prime}$. 

Suppose $\Phi$ maximizes the objective in \eqref{eq:objective} for graph $\mathcal{G}$. 
By the pigeon-hole principle, $\Phi$ must contain exactly $C$ cliques, and each clique has size at most $K$. 
We can extend $\Phi$ to $\Phi^{\prime}$ by incrementally adding dummy nodes (and associated edges) to the cliques until they become maximal. 
Since all the added edges are zero-weighted, $w(\Phi^{\prime}) = w(\Phi)$.

Similarly, if we have a solution $\Phi^{\prime}$ for $\mathcal{G}^{\prime}$, we can obtain a solution $\Phi$ for $\mathcal{G}$ with the same total weight by simply removing the dummy nodes from all the cliques in $\Phi^{\prime}$.
\end{proof}

\section{Proof of Theorem~\ref{thm:greedy}}
\label{sec:appendix_greedy}

\begin{proof}
By Lemma~\ref{lemma:turan}, W.L.O.G, suppose $\mathcal{G}$ is complete. 
First we will show, by induction on $K$, that $w(\Phi) \geq \frac{w(\mathcal{G})}{C}$. 
For $K=2$, $\mathcal{G}$ is bipartite, so the Hungarian method provides an optimal solution, i.e., $\Psi$ is a maximum-weighted matching of $\mathcal{G}$. 
Since there are $C!$ perfect matchings in $\mathcal{G}$ and each edge appears in exactly $(C-1)!$ of these matchings, we have
\begin{equation*}
\text{average weight of matching} = \frac{\sum_{e \in E}w(e)(C-1)!}{C!} 
    = \frac{\sum_{e \in E}w(e)}{C} = \frac{w(\mathcal{G})}{C}.    
\end{equation*}

Since the Hungarian method returns the maximum-weight matching, we have
$$ w(\Phi) \geq \frac{w(\mathcal{G})}{C}.$$

For the inductive case, suppose the statement holds for some $K-1$. 
Let $\psi_1$ be the matching in the first iteration (i.e., between $U_1$ and $U_2$), and $\Phi^{\prime}$ be the remaining matching. 
Let $\mathcal{G}^{\prime}$ be the graph obtained after the first merge operation.
Then, by applying the statement on $\psi_1$ and $\Phi^{\prime}$, we have
\begin{align*}
w(\Phi) &= w(\psi_1) + w(\Phi^{\prime}) 
        \geq \frac{1}{C} \sum_{e\in [U_1,U_2]}w(e) + \frac{w(\mathcal{G}^{\prime})}{C} \\
        &= \frac{1}{C} \left( \sum_{e\in [U_1,U_2]}w(e) + \sum_{e\notin [U_1,U_2]}w(e) \right) 
        = \frac{w(\mathcal{G})}{C}.
\end{align*}

Now suppose $\Phi^*$ is an optimal solution. Then, since $w(\Phi^*)\leq w(\mathcal{G})$, we have
\begin{align*}
\frac{w(\Phi)}{w(\Phi^*)} &\geq \frac{w(\Phi)}{w(\mathcal{G})} \geq \frac{1}{C}  \\
\implies w(\Phi) &\geq \frac{w(\Phi^*)}{C}.
\end{align*}
Hence, the algorithm is a $\frac{1}{C}$-approximation.
\end{proof}

\section{Proof of Lemma~\ref{lemma:neighbor_weights}}
\label{sec:appendix_neighbor}

\begin{proof}
Consider some neighbor $\Phi'$ of $\Phi$, obtained by swapping the exchangeable pair $(u_i,u_j)$ in the independent set $U_k$. Suppose that $u_i$ and $u_j$ were originally in the cliques $V_i$ and $V_j$, respectively. This means that after swapping, in the partition $\Phi'$, $u_i$ is in $V_j$ and $u_j$ is in $V_i$.

If we consider the changes from $\Phi$ to $\Phi'$ in terms of edge weights, we removed the edges from $u_i$ to all other vertices in $V_i$ (and similarly from $u_j$ to all other vertices in $V_j$), and added edges from $u_i$ to other vertices in $V_j$ (and similarly for $u_j$ to other vertices in $V_i$). By slight abuse of notation, suppose $w(V_i\setminus\{u_i\})$ denotes the sum of all edge weights in $V_i$ excluding the node $u_i$. Then, we have

\begin{align*}
    w(\Phi') - w(\Phi) &= \left( \sum_{u \in V_j\setminus\{u_j\}}w(u_i, u) + \sum_{u \in V_i\setminus\{u_i\}}w(u_j, u) \right) \\
    &- \left( \sum_{u \in V_i\setminus\{u_i\}}w(u_i, u) + \sum_{u \in V_j\setminus\{u_j\}}w(u_j, u) \right)
\end{align*}

Summing over all neighbors $\Phi'$ of $\Phi$, we get

\begin{align*}
    \sum_{\Phi' \in N(\Phi)} \left( w(\Phi') - w(\Phi) \right) &= \sum_{\Phi' \in N(\Phi)} \biggl( \sum_{u \in V_j\setminus\{u_j\}}w(u_i, u) + \sum_{u \in V_i\setminus\{u_i\}}w(u_j, u) \\ 
    &- \sum_{u \in V_i\setminus\{u_i\}}w(u_i, u) - \sum_{u \in V_j\setminus\{u_j\}}w(u_j, u) \biggr) \\
    &= 2\sum_{c_i,c_j\in [C]}w(E(V_{c_i}, V_{c_j})) - 2(C-1) \sum_{c \in [C]}w(E(V_c)),
\end{align*}
where $E(V_{c_i}, V_{c_j})$ denotes the set of edges going between cliques $V_{c_i}$ and $V_{c_j}$ in the partition $\Phi$, and $E(V_c)$ denotes the edges inside the clique $V_c$. Since $w(\Phi) = \sum_{c\in[C]}w(E(V_c))$ and $w(\mathcal{G}) - w(\Phi) = \sum_{c_i,c_j\in [C]}w(E(V_{c_i}, V_{c_j}))$, we get

\begin{align*}
&\sum_{\Phi' \in N(\Phi)} \left( w(\Phi') - w(\Phi) \right) = 2 \left( w(\mathcal{G}) - w(\Phi) \right) - 2(C-1) w(\Phi) \\ 
\implies &\sum_{\Phi' \in N(\Phi)} w(\Phi') -  \sum_{\Phi' \in N(\Phi)} w(\Phi) = 2 w(\mathcal{G}) - 2(C-1) w(\Phi) \\ 
\implies &\sum_{\Phi' \in N(\Phi)} w(\Phi') -  K \cdot \binom{C}{2} w(\Phi) = 2 w(\mathcal{G}) - 2(C-1) w(\Phi)
\end{align*}

Finally, subtracting $\frac{w(\mathcal{G})}{C}$ from both sides and dividing by $|N(\Phi)|$ (from Lemma~\ref{lemma:neighbors}, we obtain
$$ \frac{1}{|N(\Phi)|}\sum_{\Phi' \in N(\Phi)}\left( w(\Phi') - \frac{w(\mathcal{G})}{C} \right) = \left(1 - \frac{2C}{|N(\Phi)|} \right) \left( w(\Phi) - \frac{w(\mathcal{G})}{C}\right). $$
\end{proof}

\section{Label Mapping with Randomized Local Search}
\label{sec:appendix_randomized}

In this section, we will develop a randomized algorithm using local search which obtains a better approximation in expectation compared with the Hungarian algorithm. 
We use the same graphical formulation that was developed in Section~\ref{sec:dl_mapping}, so we do not repeat the construction here. 
First, we will present a deterministic local search algorithm to understand the concept of local improvements in this setting, and then extend it to a randomized version.

\subsection{Preliminary: deterministic local search}

Again, W.L.O.G., we will assume that we have a complete graph $\mathcal{G}$, i.e., all the independent sets have exactly $C$ vertices. 
The algorithms and analysis can be easily extended to the more general case without breaking any guarantees, due to Lemma~\ref{lemma:turan}. 
Let us first define some terms which will allow us to talk about local optimality of partitions.

\begin{definition}
A partition $\Phi$ of $\mathcal{G}$ is called \textit{feasible} if it divides the graph into exactly $C$ cliques, each of size $K$.
\end{definition}

\begin{definition}
A pair of vertices $u_i$ and $u_j$ are called \textit{exchangeable} if both $u_i$ and $u_j$ belong to the same $U_k$ for some $k \in [K]$.
\end{definition}

\begin{definition}
A \textit{neighbor} $\Phi'$ of a partition $\Phi$ is a partition that can be obtained by swapping an exchangeable pair in $\Phi$. We denote the \textit{neighborhood} of $\Phi$ by $N(\Phi)$.
\end{definition}

\begin{definition}
We say that a feasible partition $\Phi$ is a \textit{local optimum} for label mapping if $w(\Phi) \geq w(\Phi')$ for any $\Phi' \in N(\Phi)$.
\end{definition}

\begin{lemma}
\label{lemma:neighbors}
For any feasible partition $\Phi$,
$$ \vert N(\Phi)\vert = K \cdot \binom{C}{2}. $$
\end{lemma}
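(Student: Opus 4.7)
The plan is to prove the bijection between exchangeable pairs and neighbors of $\Phi$. First I would unpack what feasibility implies for $\Phi$: since $\Phi$ partitions the $CK$ vertices of $\mathcal{G}$ into $C$ cliques each of size $K$, and since no two vertices within the same independent set $U_k$ can belong to the same clique (as there are no edges between them in $\mathcal{G}$, so they cannot form a clique), each clique $V_c$ contains exactly one vertex from each $U_k$. Equivalently, $\Phi$ induces, for every $k \in [K]$, a bijection $\sigma_k : U_k \to \{V_1, \ldots, V_C\}$ that sends each vertex to the clique containing it.

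Next I would count exchangeable pairs: by definition, these are unordered pairs $\{u_i, u_j\}$ of vertices lying in a common $U_k$. Since there are $K$ independent sets, each of size $C$, the total number of such pairs is exactly $K \cdot \binom{C}{2}$. Each pair $\{u_i, u_j\} \subseteq U_k$ gives rise to a well-defined neighbor $\Phi'$ obtained by swapping $\sigma_k(u_i)$ and $\sigma_k(u_j)$; the result is again feasible because it merely exchanges the membership of two vertices from the same $U_k$ between two cliques, leaving each clique with one vertex from each independent set.

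The main obstacle is to show that this correspondence from exchangeable pairs to neighbors is injective, i.e., distinct swaps yield distinct partitions. The argument is straightforward once framed correctly: if $\{u_i, u_j\} \subseteq U_k$ and $\{u_i', u_j'\} \subseteq U_{k'}$ are two different exchangeable pairs producing the same neighbor, then either $k = k'$ or $k \neq k'$. If $k \neq k'$, the first swap alters $\sigma_k$ but leaves $\sigma_{k'}$ unchanged, whereas the second alters $\sigma_{k'}$ but leaves $\sigma_k$ unchanged; since $\sigma_k \neq \sigma_k$ is required in one case but not the other, the resulting partitions differ. If $k = k'$ but the pairs are distinct, then at least one vertex, say $u_i$, lies in one pair but not the other; its image under $\sigma_k$ changes in the first swap but not the second (or vice versa), again yielding distinct partitions. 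Conversely, every neighbor arises from some exchangeable swap by definition, so the correspondence is also surjective.

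Combining surjectivity with injectivity, we obtain $|N(\Phi)| = K \cdot \binom{C}{2}$, as claimed. I expect the only delicate point to be making the uniqueness-of-swap argument precise; everything else follows from the structural observation that feasibility forces each clique to contain exactly one representative from each independent set.
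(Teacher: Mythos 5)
Your proof is correct and follows essentially the same counting argument as the paper, which simply chooses the independent set $U_k$ in $K$ ways and the exchangeable pair within it in $\binom{C}{2}$ ways; you merely make explicit the injectivity/surjectivity of the swap-to-neighbor correspondence that the paper leaves implicit. (One small slip: in your $k \neq k'$ case the phrase ``$\sigma_k \neq \sigma_k$'' should read that the first swap changes $\sigma_k$ while the second leaves it fixed, but the intended argument is clear and sound.)
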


\begin{proof}
To obtain a neighbor of a partition, we first choose a set $U_k$ in $K$ ways, and then choose an exchangeable pair in $U_k$ in $\binom{C}{2}$ ways. 
Thus, there are a total of $K\cdot \binom{C}{2}$ neighbors.
\end{proof}

\begin{lemma}
\label{lemma:neighbor_weights}
For any feasible partition $\Phi$, we have
$$ \frac{1}{|N(\Phi)|} \sum_{\Phi' \in N(\Phi)} \left(w(\Phi') - \frac{w(\mathcal{G})}{C}\right) = \left(1 - \frac{2C}{|N(\Phi)|} \right) \left( w(\Phi) - \frac{w(\mathcal{G})}{C}\right). $$
\end{lemma}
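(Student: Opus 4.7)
The plan is to first compute $w(\Phi') - w(\Phi)$ explicitly for a generic neighbor $\Phi'$, then sum this difference over all $|N(\Phi)| = K\binom{C}{2}$ neighbors, and finally perform algebraic manipulations to recast the result in the stated form involving the reference value $w(\mathcal{G})/C$.

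First I would fix an exchangeable pair $(u_i, u_j) \subseteq U_k$ and suppose that in $\Phi$ the vertex $u_i$ lies in clique $V_{c_i}$ and $u_j$ lies in $V_{c_j}$. Swapping these two vertices destroys the edges from $u_i$ to the other members of $V_{c_i}$ and from $u_j$ to the other members of $V_{c_j}$, while creating edges from $u_i$ to the other members of $V_{c_j}$ and from $u_j$ to the other members of $V_{c_i}$. Writing out these four sums yields a clean closed form for $w(\Phi') - w(\Phi)$ in terms of edge weights incident to $u_i$ and $u_j$ only.

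Next I would sum this difference over all $K\binom{C}{2}$ neighbors of $\Phi$. The hard part will be carefully counting combinatorial multiplicities: each intra-clique edge $e \in E(V_c)$ must be recognized to appear with a negative sign a fixed number of times (once per exchangeable swap that displaces an endpoint of $e$ out of $V_c$), while each inter-clique edge must appear with a positive sign a fixed number of times (once per swap that co-locates its endpoints). Matching these multiplicities, leveraging Lemma~\ref{lemma:neighbors}, and using that inter-clique weight equals $w(\mathcal{G}) - w(\Phi)$ while intra-clique weight equals $w(\Phi)$, I expect to land on the identity
\begin{equation*}
\sum_{\Phi' \in N(\Phi)} \bigl(w(\Phi') - w(\Phi)\bigr) \;=\; 2\bigl(w(\mathcal{G}) - w(\Phi)\bigr) - 2(C-1)\,w(\Phi).
\end{equation*}

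Once that identity is in hand, the remaining work is routine. Dividing both sides by $|N(\Phi)|$ and subtracting $w(\mathcal{G})/C$ puts the left-hand side in the desired averaged-deviation form, and the right-hand side simplifies to $w(\Phi)\bigl(1 - 2C/|N(\Phi)|\bigr) + w(\mathcal{G})\bigl(2/|N(\Phi)| - 1/C\bigr)$; factoring $-1/C$ from the second coefficient reveals the common factor $\bigl(1 - 2C/|N(\Phi)|\bigr)$, collapsing the expression into $\bigl(1 - 2C/|N(\Phi)|\bigr)\bigl(w(\Phi) - w(\mathcal{G})/C\bigr)$, as claimed.
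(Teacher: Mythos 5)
Your proposal is correct and takes essentially the same route as the paper's proof: you compute the swap-induced change $w(\Phi')-w(\Phi)$ edge by edge, sum over all $K\binom{C}{2}$ neighbors with the same multiplicity count (each intra-clique edge lost $2(C-1)$ times, each inter-clique edge gained twice) to reach the identity $\sum_{\Phi'\in N(\Phi)}\bigl(w(\Phi')-w(\Phi)\bigr)=2\bigl(w(\mathcal{G})-w(\Phi)\bigr)-2(C-1)w(\Phi)$, and then perform the same normalization by $|N(\Phi)|$ and subtraction of $w(\mathcal{G})/C$. The final factoring step you describe is exactly the algebra the paper carries out, so no gap remains.
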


For a detailed proof, please refer to Appendix~\ref{sec:appendix_neighbor}.

We can now describe a deterministic local search algorithm for label mapping. 
Given the graph $\mathcal{G}$, we initialize a feasible partition $\Phi$ and then follow the steps below.

\begin{enumerate}
    \item Search $N(\Phi)$ until we find $\Phi'$ such that 
    $$\left(w(\Phi') - \frac{w(\mathcal{G})}{C}\right) \geq \left( 1 - \frac{2C}{|N(\Phi)|}\right)\left(w(\Phi) - \frac{w(\mathcal{G})}{C}\right).$$ 
    \item Let $\Phi = \Phi'$.
    \item If $w(\Phi) < \frac{w(\mathcal{G})}{C}$, find a new feasible partition $\Phi' \in N(\Phi)$ s.t. $w(\Phi') > w(\Phi)$. Let $\Phi = \Phi'$.
\end{enumerate}

We repeat these steps until we find $\Phi$ such that $w(\Phi) \geq \frac{w(\mathcal{G})}{C}$. 
Note that Lemma~\ref{lemma:neighbor_weights} guarantees the existence of $\Phi'$ satisfying the conditions in both steps 1 and 3.

\begin{theorem}
\label{thm:ls}
Starting from an arbitrary initialization, the local search algorithm reaches a feasible partition $\Phi$ with weight $w(\Phi) \geq \frac{w(\mathcal{G})}{C}$ in polynomial number of steps.
\end{theorem}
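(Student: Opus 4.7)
My plan is to use a potential function argument centered on $\Delta(\Phi) := w(\mathcal{G})/C - w(\Phi)$, so that the termination criterion $w(\Phi) \geq w(\mathcal{G})/C$ becomes $\Delta(\Phi) \leq 0$. The two subroutines of the algorithm each guarantee distinct kinds of progress against this potential, and I would combine them to conclude polynomial termination.

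First I would verify that each step of the inner loop is well-defined whenever $\Delta(\Phi) > 0$. By Lemma~\ref{lemma:neighbor_weights}, the average of $w(\Phi') - w(\mathcal{G})/C$ over neighbors $\Phi' \in N(\Phi)$ equals $\alpha \cdot (w(\Phi) - w(\mathcal{G})/C)$, where $\alpha := 1 - 2C/|N(\Phi)|$. A simple averaging argument then produces some $\Phi'$ with $\Delta(\Phi') \leq \alpha \cdot \Delta(\Phi)$, establishing feasibility of step~1. An analogous application of the same lemma shows that the average of $w(\Phi')$ strictly exceeds $w(\Phi)$ whenever $\Delta(\Phi) > 0$, so a strictly improving neighbor always exists and step~3 is also well-defined.

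Next I would do a case analysis on $\alpha$. Substituting $|N(\Phi)| = K \binom{C}{2}$ from Lemma~\ref{lemma:neighbors} yields $\alpha = 1 - 4/(K(C-1))$. In the small-instance regime $K(C-1) \leq 4$, we have $\alpha \leq 0$ and a single invocation of step~1 already forces $\Delta(\Phi') \leq 0$, so the algorithm terminates at once. Otherwise $\alpha \in (0,1)$ and step~1 contracts $\Delta$ geometrically by the factor $\alpha$, so after $T$ iterations $\Delta \leq \alpha^T \Delta_0$ with $\Delta_0 \leq w(\mathcal{G})/C$.

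The main obstacle is turning this geometric-decay bound into a genuinely polynomial iteration count, because in principle $\Delta$ could approach zero from above without ever crossing it. My plan is to resolve this via a bit-complexity argument: the edge weights in~\eqref{eq:edge_weight} are rationals whose denominators are bounded by a polynomial in the input bit length $L$, so the set of attainable values of $w(\Phi)$ is finite and any two distinct values differ by at least some $\Delta_{\min} \geq 2^{-\mathrm{poly}(L)}$. Choosing $T = O(\alpha^{-1} \log(\Delta_0 / \Delta_{\min})) = O(K(C-1) \cdot \mathrm{poly}(L))$ drives $\Delta$ below $\Delta_{\min}$ purely through step~1, and the minimum-gap property then forces $\Delta \leq 0$. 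Step~3 serves as a monotone fall-back: whenever step~1 does not terminate the loop, step~3 still produces a strict improvement in $w(\Phi)$, preserving progress even if step~1's contraction is tight in some iteration. Combining the contraction, the gap argument, and the fall-back monotonicity yields the polynomial termination claim.
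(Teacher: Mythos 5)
Your proof follows essentially the same route as the paper's: use Lemma~\ref{lemma:neighbor_weights} (with $|N(\Phi)| = K\binom{C}{2}$ from Lemma~\ref{lemma:neighbors}) to extract, by averaging, a neighbor that contracts the potential $w(\mathcal{G})/C - w(\Phi)$ by the factor $\alpha = 1 - 2C/|N(\Phi)|$, and then count iterations of the resulting geometric decay. Where you go beyond the paper is the termination step: the paper asserts that once $f(\Phi_0)\,\alpha^i$ is ``diminishingly small'' we have $f(\Phi_i)\geq 0$, appealing only to $w(\mathcal{G})\leq |E|$, but a quantity decaying geometrically toward zero from below never actually crosses zero, so your explicit minimum-gap/bit-complexity argument is exactly what is needed to make that step rigorous --- with the small refinement that the relevant gap is between $w(\Phi)$ and the threshold $w(\mathcal{G})/C$ (both rationals with polynomially bounded denominators), not merely between attainable values of $w(\Phi)$. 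Your handling of the corner case $\alpha\leq 0$ is a harmless extra, and step~3's role as a strict-improvement fall-back matches the paper's reading. One minor slip: the iteration count should be $O\bigl((1-\alpha)^{-1}\log(\Delta_0/\Delta_{\min})\bigr)$ rather than $O\bigl(\alpha^{-1}\log(\Delta_0/\Delta_{\min})\bigr)$; since $1/(1-\alpha)=K(C-1)/4$, this is precisely the $O(K(C-1)\cdot\mathrm{poly}(L))$ bound you state, so your conclusion is unaffected.
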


\begin{proof}
In step 1 of the algorithm, we can find a feasible partition $\Phi'$ in at most $|N(\Phi)|$ searches, which is at most $K\cdot \binom{C}{2}$. 
So we now need to count how many such iterations are required until we obtain a solution satisfying the stopping criterion. 

Let us denote $f(\Phi) = w(\Phi) - \frac{w(\mathcal{G})}{C}$, for ease of notation.
This means that in step 1, our local improvements are of the form $f(\Phi') \geq \left( 1 - \frac{2C}{|N(\Phi)|} \right)f(\Phi)$. 
Suppose we initialize with a partition $\Phi_0$, and the consecutive partitions $\Phi^{\prime}$ obtained in step 1 are denoted as $\Phi_1$, $\Phi_2$, and so on. 
Then,
$$ f(\Phi_i) \geq f(\Phi_0)\left( 1 - \frac{2C}{|N(\Phi_0)|} \right)^i. $$

This means that for sufficiently large $i$, the RHS in the inequality becomes diminishingly small, and $f(\Phi_i) \geq 0$, which is our desired stopping criterion. 
Otherwise, step 3 of the algorithm will produce a different feasible solution $\Phi'$ with weight $w(\Phi') > w(\Phi)$. 
We can also solve for $i$ to bound the number of iterations as $\log_b w(\mathcal{G}) + 1$, where $b = \frac{|N(\Phi)|}{|N(\Phi)|-2C}$, and so $i^{\ast} \leq n\ln w(\mathcal{G})$, which implies that the algorithm is polynomial if the weights are bounded. 
In our case, $w(\mathcal{G}) \leq |E|$, so the algorithm is polytime.
\end{proof}

Clearly, the local search algorithm is also a $\frac{1}{C}$-approximation, using similar arguments as in Theorem~\ref{thm:greedy}. 
Next, we will present a randomized version of this algorithm that provides a better solution in expectation.

\subsection{Extension to randomized case}

\begin{algorithm}[t]
\setstretch{1.35}
\DontPrintSemicolon
  
  \KwInput{Graph $\mathcal{G} = (V,E,w)$}
  \KwOutput{Partition $\Phi$ = $V_1,\ldots,V_C$}
  
  $\Phi = \{\}$
  
  \tcp{Repeat for $N$ epochs}
  \For{$n$ in $[N]$}{
    
    $\Tilde{\Phi}$ = Random($V$) \tcp*{Initialize a partition at random}
    
    \tcp{Repeat for $M$ iterations}
    \For{$m$ in $[M]$}{
        
        $u_p v_q$ = Sample($E(\Tilde{\Phi}^C)$, $p = \frac{w(u_p v_q)}{w(\Tilde{\Phi}^C)}$) \tcp*{Select edge}
        
        Swap $u_p$ and $u^{\prime}_p$ with probability $\frac{1}{2}$ \tcp*{Swap incident vertex}
        
        Swap $v_q$ and $v^{\prime}_q$ with probability $\frac{1}{2}$ \tcp*{Swap incident vertex}
    }
    
    $\Phi$ = max($\Phi$, $\Tilde{\Phi}$) \tcp*{Update if weight increases}
  }

\caption{Randomized local search}
\label{alg:randomized}
\end{algorithm}

To improve the approximation ratio of local search, we make two changes. 
First, instead of arbitrarily selecting a neighbor $\Phi'$ of $\Phi$ by choosing some exchangeable pair to swap, we make this selection based on some probability distribution. 
Second, we repeat the local search process for enough iterations and choose the solution with the maximum weight.
We will show that this process gives us a $(1-\epsilon)$-approximate solution with high probability.

First, we describe the selection of an exchangeable pair. 
Let $E(\Phi^C)$ denote the complement set of the matching, i.e., the set of edges which are not in the matching $\Phi$. 
Let $e = (u_{ik},u_{j\kappa})$ denote some edge in $E(\Psi^C)$, s.t. $u_{ik} \in U_k$ and $u_{j\kappa} \in U_{\kappa}$. 
We select edge $e$ with probability $\frac{w(e)}{w(\Phi^C)}$. 
This means that a higher-weighted edge not in the matching is more likely to get selected. 
Once we have selected an edge $e$, we choose either of its incident vertices $u_{ik}$ or $u_{j\kappa}$ with probability $\frac{1}{2}$. 
If $u_{ik}$ is selected, we select a vertex $v_k \in U_k$ ($v_k \neq u_{ik}$) with uniform probability distribution (and analogously for $u_{j\kappa}$). 
Then, our exchangeable pair is ($u_{ik},v_k$).

With this selection of exchangeable pair, our entire randomized local search algorithm for label mapping is shown in Algorithm~\ref{alg:randomized}. The local search runs for $M$ \textit{iterations}, and we repeat the whole process for $N$ \textit{epochs}. 
Finally, we return the partition with the maximum weight among all the $N$ epochs.

\begin{theorem}
\label{thm:randomized}
With probability $1-\frac{1}{e}$, Algorithm~\ref{alg:randomized} returns a $(1-\epsilon)$-approximate solution for
\begin{align*}
N &= \frac{(C!)^{K-1}}{\left(1 + \epsilon^2/(4(C-1)^2 (1-\epsilon)^2)\right)^{C(K-1)}} \\
\text{and} \quad M &= C(K-1) \cdot \frac{(C !)^{K-1}}{\left(1+\epsilon^{2} /\left(4(C-1)^{2}(1-\epsilon)^{2}\right)\right)^{C(K-1)}}.
\end{align*}
\end{theorem}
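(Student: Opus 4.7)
The plan is to combine a per-iteration improvement bound with standard probability amplification over the $N$ epochs. Concretely, I would first analyze what one iteration of the inner loop does to the current partition $\tilde\Phi$ in expectation, then iterate this bound over $M$ steps to argue that a single epoch produces a $(1-\epsilon)$-approximate partition with at least some explicit probability $p$, and finally choose $N \approx 1/p$ so that the union bound over epochs makes the overall failure probability at most $1/e$.

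For the single-iteration analysis, fix a current feasible partition $\tilde\Phi$ with $w(\tilde\Phi) < (1-\epsilon)\,w(\Phi^{\ast})$, and let $\Phi^{\ast}$ be a fixed optimum. The edge $u_p v_q$ is drawn from $E(\tilde\Phi^C)$ proportional to its weight, and then each endpoint is (independently, with probability $1/2$) swapped with the unique vertex of its own independent set that currently shares a clique with the \emph{other} endpoint's target clique. The key computation would show that this biased swap moves a high-weight out-of-clique edge \emph{into} the partition while removing in-clique edges of comparatively smaller weight. Quantifying this yields an inequality of the form
\begin{equation}
    \mathbb{E}\!\left[ w(\Phi^{\ast}) - w(\tilde\Phi') \,\big|\, \tilde\Phi \right] \;\le\; \frac{1}{1+x}\bigl( w(\Phi^{\ast}) - w(\tilde\Phi) \bigr),
\end{equation}
with $x = \epsilon^2/\bigl(4(C-1)^2(1-\epsilon)^2\bigr)$, where the precise $x$ comes from combining the weighted-sampling bias with the worst-case gap between in-clique and out-of-clique edge weights under the assumption $w(\tilde\Phi) < (1-\epsilon)\,w(\Phi^{\ast})$. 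This is where Lemma~\ref{lemma:neighbor_weights} should be reused, since it tells us how the average neighbor weight relates to $w(\tilde\Phi)$ and $w(\mathcal G)$.

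Iterating this contraction $M$ times along a trajectory $\tilde\Phi_0 \to \tilde\Phi_1 \to \cdots \to \tilde\Phi_M$ gives
\begin{equation}
    \mathbb{E}\!\left[ w(\Phi^{\ast}) - w(\tilde\Phi_M) \right] \;\le\; (1+x)^{-M}\bigl(w(\Phi^{\ast}) - w(\tilde\Phi_0)\bigr).
\end{equation}
Picking $M$ of the stated order makes the right-hand side at most $\epsilon\,w(\Phi^{\ast})/(C!)^{K-1}$. A Markov-style argument, together with the observation that the number of feasible partitions is $(C!)^{K-1}$, then shows that the probability that a single epoch ends in a $(1-\epsilon)$-approximate partition is at least $1/N$ for the stated $N$. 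The exponent $C(K-1)$ appears because one has to contract along each of the $C(K-1)$ ``coordinates'' that parametrize a feasible partition relative to a fixed $U_1$.

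Finally, since epochs are independent (each starts from a fresh random partition), the probability that \emph{all} $N$ epochs fail is at most $(1-1/N)^N \le 1/e$, so Algorithm~\ref{alg:randomized} outputs a $(1-\epsilon)$-approximate partition with probability at least $1 - 1/e$. The main obstacle I expect is the single-step contraction: one must carefully bound the conditional expectation of the post-swap weight in terms of both the current gap to optimum and the complement-edge weight distribution, and in particular show that the constant $x$ can be taken as small as $\epsilon^2/\bigl(4(C-1)^2(1-\epsilon)^2\bigr)$. Secondary technical care is needed to make the Markov inequality yield success probability $\ge 1/N$ for the stated $N$ (rather than something slightly weaker), which drives the precise form of the bounds on $M$ and $N$.
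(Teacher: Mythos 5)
There is a genuine gap in your proposal, and it sits exactly where you flagged the "main obstacle": the single-step contraction $\mathbb{E}\left[ w(\Phi^{\ast}) - w(\tilde\Phi') \mid \tilde\Phi \right] \le \tfrac{1}{1+x}\left( w(\Phi^{\ast}) - w(\tilde\Phi) \right)$ is not established and is almost certainly false for this random walk. The move in Algorithm~\ref{alg:randomized} samples a complement edge proportional to weight and then swaps each endpoint only with probability $\tfrac12$; such a swap can just as well destroy heavy in-clique edges as create new ones, so there is no drift toward the optimum in expectation. Lemma~\ref{lemma:neighbor_weights} cannot rescue this: it averages uniformly over \emph{all} neighbors and relates the result to $w(\mathcal{G})/C$ (it is the engine behind the deterministic $\tfrac1C$-approximation), not to the gap to $w(\Phi^{\ast})$ under the weighted sampling scheme. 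There is also a structural inconsistency that signals the problem: if the gap really contracted geometrically at every step, a single epoch of length $M$ would succeed with probability close to one, and there would be no reason for the number of restarts $N$ to be exponential in $C(K-1)$. The exponential $N$ exists precisely because each epoch succeeds only with exponentially small probability.

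The paper's argument is of a different (Schöning-style lucky-trajectory) type, and your proposal would need to be rebuilt around it. One shows that whenever the current weight $u$ satisfies $u < (1-\epsilon)l$ (with $l$ the optimum), the weight of the wrongly placed complement edges $\Phi^C\setminus\Phi_{OPT}^C$ exceeds $\tfrac{\epsilon u}{1-\epsilon}$, so a single iteration selects such an edge and corrects a misassigned vertex with probability at least $\tfrac{\epsilon}{2(C-1)(1-\epsilon)}$. A random initial partition has $t_k$ wrongly assigned vertices in $U_k$ with an explicitly computable probability, and the probability that the walk performs the required sequence of corrections is at least $\prod_{k}\left(\tfrac{\epsilon}{2(C-1)(1-\epsilon)}\right)^{t_k}$; summing over all $(t_2,\ldots,t_K)$ gives per-epoch success probability at least $1/N$ for the stated $N$ (this is where the factors $(C!)^{K-1}$ and the exponent $C(K-1)$ arise, not from a Markov inequality over $(C!)^{K-1}$ feasible partitions). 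Your final amplification step, $(1-1/N)^N \le 1/e$ over independent restarts, does match the paper and is fine as stated.
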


\begin{proof}
Consider an optimum partition $\Phi_{OPT}$ with weight $l$. If, at some moment, $w(\Phi)=u\geq (1-\epsilon)l$, then we are done; otherwise $w(\Phi)=u < (1-\epsilon)l$. Let $\Phi_{OPT}^C$ and $\Phi^C$ denote the complement edge sets of the optimal partition and the current partition, respectively. We have,
 \begin{align*}
     & u < (1-\epsilon)l \\
     \implies & w(\mathcal{G}) - u > w(\mathcal{G}) - (1-\epsilon)l \\
     \implies & w(\Phi^C) > (w(\mathcal{G}) - l) + \epsilon l \\
     \implies & w(\Phi^C) > w(\Phi_{OPT}^C) + \epsilon l \\
     \implies & w(\Phi^C) - w(\Phi_{OPT}^C) > \frac{\epsilon u}{1-\epsilon}.
 \end{align*}

Here, we used the fact that $w(\mathcal{G}) - w(\Phi) = w(\Phi^C)$, i.e., the weights of a partition and its complement add up to the total weight of the graph.
We note that any edge in the set $\Phi^C \setminus \Phi_{OPT}^C$ is \textit{wrongly} assigned, since it is present within cliques in the optimal partition. If such an edge is selected in line 5 of the algorithm, then the algorithm would correct it with probability $\frac{1}{2}\times \frac{1}{C-1}$. Since each edge is chosen with probability $\frac{w(e)}{w(\Phi^C)}$, the probability that the algorithm corrects at least one wrong entry is

\begin{align*}
    \sum_{e \in \Phi^C\setminus \Phi_{OPT}^C} \frac{1}{2(C-1)}\frac{w(e)}{w(\Phi^C)}
    &= \frac{1}{2(C-1)}\frac{w(\Phi^C \setminus \Phi_{OPT}^C)}{w(\Phi^C)} \\
    &> \frac{1}{2(C-1)}\frac{\epsilon u}{(1-\epsilon)u} \\
    &= \frac{\epsilon}{2(C-1)(1-\epsilon)}.
\end{align*}

Suppose in some epoch, we choose a initial partition $\Phi$ with $t_k$ wrongly assigned vertices in set $U_k$, for $2 \leq k \leq K$ (i.e., $t_k$ vertices are assigned to wrong cliques). This happens with probability $\prod_{k=2}^K p_k$, where
\begin{equation*}
p_k = \begin{cases}
\binom{C}{t_k}\frac{(t_k-1)!}{C!}, ~~t_k \geq 2 \\
\frac{1}{C!} ~~ t_k = 0
\end{cases}
\end{equation*}

For such a partition, the algorithm finds an optimum partition with probability at least
$$ \prod_{k=2}^K \left( \frac{\epsilon}{2(C-1)(1-\epsilon)}  \right)^{t_k}.$$

Summing over all possible choices of $(t_2,\ldots,t_K)$, we conclude that the probability of using local search to find one initial partition is at least 
\begin{equation*}
\begin{array}{l}
\frac{1}{(C!)^{K-1}} \prod_{k=2}^{K}\left(1+\sum_{t_{k}=2}^{K}\left(\binom{K}{t_k}\right)\left(t_{k}-1\right) !\left(\frac{\epsilon}{2(C-1)(1-\epsilon)}\right)^{t_{k}}\right) \\
\geq \frac{1}{(C !)^{K-1}}\left(1+\frac{\epsilon^{2}}{4(k-1)^{2}(1-\epsilon)^{2}}\right)^{C(K-1)}=\frac{1}{N}
\end{array}
\end{equation*}

Since we choose $N = \frac{(C!)^{K-1}}{\left(1 + \epsilon^2/(4(C-1)^2 (1-\epsilon)^2)\right)^{C(K-1)}}$ initial independent partitions, the probability of failing is at most
$$ (1-\frac{1}{N})^N < \frac{1}{e}.$$

Hence, we obtain a $(1-\epsilon)$-approximate solution with probability at least $1-\frac{1}{e}$.

\end{proof}

\printbibliography[title={References}]

\end{document}